\definecolor{americanrose}{rgb}{1.0, 0.01, 0.24}
\newtheorem{theorem}{Theorem}[section]
\newtheorem{proposition}{Proposition}[section]
\newtheorem{corollary}{Corollary}[section]
\newtheorem{lemma}{Lemma}[section]
\newtheorem{definition}[theorem]{Definition}
\newtheorem{assumption}[theorem]{Assumption}
\begin{document}

\date{}

\title{Sliced R\'enyi Pufferfish Privacy: Tractable Privatization Mechanism and Private Learning with Gradient Clipping
}


\author{
{\rm Tao Zhang}\\
Washington University in St. Louis
\and
{\rm Yevgeniy Vorobeychik}\\
Washington University in St. Louis
} 

\maketitle

\begin{abstract}

We study the design of a privatization mechanism and privacy accounting in the Pufferfish Privacy (PP) family. 
Specifically, motivated by the curse of dimensionality and lack of practical composition tools for iterative learning in the recent R\'enyi Pufferfish Privacy (RPP) framework, 
we propose \textit{Sliced R\'enyi Pufferfish Privacy (SRPP)}. SRPP preserves PP/RPP semantics (customizable secrets with probability-aware secret–dataset relationships) while replacing high-dimensional R\'enyi divergence with projection-based quantification via two sliced measures, \textit{Average SRPP} and \textit{Joint SRPP}. We  develop \textit{sliced Wasserstein mechanisms}, yielding sound SRPP certificates and closed-form Gaussian noise calibration. For iterative learning systems, we introduce an SRPP-SGD scheme with gradient clipping and new accountants based on \textit{History-Uniform Caps (HUC)} and a \textit{subsampling-aware} variant (sa-HUC), enabling decompose-then-compose privatization and additive composition under a common slicing geometry. Experiments on static and iterative privatization show that the proposed framework exhibits favorable privacy–utility trade-offs, as well as practical scalability.

\end{abstract}

\section{Introduction}

Differential Privacy (DP) \cite{dwork2006calibrating} is the de facto standard for measuring data privacy risk. Under DP semantics, the secret is an individual record’s presence or absence, and an adversary with arbitrary side information cannot draw substantially different conclusions from the released output about that inclusion. DP quantifies risk via $\epsilon$-DP or $(\epsilon,\delta)$-DP, i.e., worst-case indistinguishability between neighboring datasets, equivalently expressed through bounds on max divergence and hockey-stick divergence, respectively. R\'enyi DP (RDP) \cite{Mironov2017,papernot2021hyperparameter} preserves DP semantics but measures privacy using $(\alpha,\epsilon)$-R\'enyi divergence.

Pufferfish Privacy (PP) \cite{kifer2014pufferfish} generalizes DP semantics by allowing customizable secrets and modeling probabilistic secret–dataset relationships through \textit{priors} that may represent subjective adversarial beliefs or an objective secret-data-generation process \cite{zhang2025differential}. $\epsilon$-PP and $(\epsilon,\delta)$-PP \cite{Zhang2022} adopt the same indistinguishability measures as DP, while R\'enyi Pufferfish Privacy (RPP) \cite{pierquin2024renyi} shares PP semantics but uses $(\alpha,\epsilon)$-R\'enyi divergence.

Beyond semantics and a quantification measure, a useful privacy framework should provide implementable \textit{privatization mechanisms}. For DP, Laplace \cite{dwork2006calibrating} and Gaussian \cite{dwork2006our} mechanisms calibrate noise to query sensitivity. 
However, computing sensitivity is generally NP-hard \cite{xiao2008output}.
DP and RDP composition properties underpin DP-SGD with gradient clipping \cite{Abadi2016}, a \textit{decompose-then-compose} approach for deep learning without sensitivity computation: each iteration enforces bounded sensitivity via gradient clipping, adds subsampled Gaussian noise, and a moments accountant converts per-iteration RDP costs into an overall $(\epsilon,\delta)$-DP guarantee.

The key technical difference between DP and PP is that PP involves a \textit{probabilistic secret-dataset relationship}.
Because a query only takes the realized dataset as input, PP analysis must account for the induced mapping from secret to output through the prior distributions.
Consequently, practical PP calibration depends on prior-driven distributional quantities and is typically carried out from finite samples and estimators, so error-aware, assumption-dependent guarantees are often unavoidable for useful privacy-utility trade-offs.

For Pufferfish privatization, Song et al. \cite{Song2017} provide the first general-purpose privatization mechanism, the \textit{Wasserstein mechanism}, which uses the $\infty$-Wasserstein distance as an analogue of sensitivity and generalizes the Laplace mechanism. 
Ding \cite{ding2022kantorovich} proposes a Kantorovich-based exponential (and Gaussian) additive-noise mechanism for PP calibrated to 1-Wasserstein sensitivity. 
RPP~\cite{pierquin2024renyi} introduces the General Wasserstein Mechanism (GWM), which uses the shift-reduction lemma~\cite{feldman2018privacy,altschuler2022privacy} to calibrate additive noise via $\infty$-Wasserstein distance. Extensions of GWM include a distribution-aware variant (DAGWM) using $p$-Wasserstein distances to capture average-case data geometry.
While these Wasserstein mechanisms significantly advance the applicability of the PP family frameworks, they inherit the curse of dimensionality from optimal transport (OT): estimating Wasserstein distances and sensitivities remains statistically and computationally expensive in high dimensions, limiting practical applicability.
The tractable general mechanism design principles for PP family remain largely open~\cite{jajodia2025encyclopedia}.

Moreover, PP/RPP lack practical, general-purpose composition theorems for privacy accounting in stochastic optimization algorithms such as SGD. Existing results rely on restrictive structure (e.g., universally composable evolution scenarios \cite{kifer2014pufferfish} or Bayesian networks via Markov quilts \cite{Song2017}). Recent work \cite{zhang2025differential} shows the obstacle is fundamental: probabilistic secret–dataset relationships induce inter-mechanism dependencies that break standard composition; their inverse composition requires solving an infinite-dimensional convex program, making integration with iterative learning nontrivial. RPP’s Privacy Amplification by Iteration (PABI) \cite{pierquin2024renyi} enables privacy guarantees for iterative convex optimization under contractivity, but still depends on Wasserstein distance computation, reintroducing high-dimensional OT costs.

Therefore, making PP-family guarantees operational in practice requires resolving two fundamental obstacles due to the probabilistic secret-dataset relationship captured by priors: (i) \textit{curse of dimensionality in mechanism design}, i.e., tractable calibration without high-dimensional Wasserstein computations; and (ii) \textit{lack of graceful composition and accounting for iterative optimization and learning}, i.e., enabling decompose-then-compose privatization for stochastic iterative methods (e.g., SGD) under PP-family semantics.

\paragraph{Contributions}
To address these two obstacles, we propose a new class of privacy definition termed \textit{Sliced R'enyi Pufferfish Privacy (SRPP)} (Sec.\ref{sec:SRPP}) together with \textit{two complementary, practical privatization mechanisms} equipped with rigorous, provable guarantees:
\begin{enumerate}
\item[(i)] a direct privatization mechanism, termed the \textit{Sliced Wasserstein Mechanism (SWM)} (Sec.\ref{sec:SWM}), calibrated by an efficiently computable \textit{sliced Wasserstein (distance-based) sensitivity} (\textit{SW-sensitivity}) (Sec.~\ref{sec:sliced_Wasserstein}), which circumvents the curse of dimensionality of OT;
\item[(ii)] a decompose-then-compose privatization scheme for iterative learning (e.g., SGD) based on \textit{History-Uniform Caps (HUC)} and their \textit{subsampling-aware} variant \textit{(sa-HUC)} under gradient clipping, together with matching \textit{subsampling-aware (sa-) SRPP} accountants.
This avoids Wasserstein sensitivity computation altogether.
\end{enumerate}

For (i), the key idea is to leverage \textit{slicing} to overcome the scalability bottleneck of OT that arises when calibrating Wasserstein mechanisms via full-dimensional Wasserstein sensitivity (as in RPP’s GWM). Building on the \textit{sliced Wasserstein distance} \cite{rabin2011wasserstein,deshpande2018generative,kolouri2019generalized,nietert2022statistical,bonet2025sliced}, we introduce the SW-sensitivity and design the \textit{SWM} (Sec.\ref{sec:SWM}), which replaces high-dimensional OT sensitivity with a computable sensitivity surrogate based on one-dimensional (1-D) projections.

To align the privacy quantification with SW-sensitivity, we introduce two SRPP notions, \textit{Average SRPP} (Sec.\ref{sec:ave_srpp}) and \textit{Joint SRPP} (Sec.\ref{sec:joint_srpp}), that share the same Pufferfish semantics as PP/RPP, but quantify privacy using \textit{Average Sliced R'enyi Divergence (Ave-SRD)} and \textit{Joint Sliced R'enyi Divergence (Joint-SRD)}, respectively.

For (ii), we develop a general SRPP privatization route for iterative learning via HUC and sa-HUC, enabling DP-SGD-style privatization with gradient clipping \textit{without} requiring contractivity of the update map. 
A central element of the contribution is that HUC/sa-HUC isolate a single scenario-dependent quantity—the \textit{discrepancy cap} (in deterministic form $K_t$ and subsampling-aware, mean-square form $K_t^2$)—that governs the per-iteration secret-induced shift. 
This identification (i) avoids the utility-draining worst-case "group-privacy" \cite{Song2017,pierquin2024renyi} bound (see App.~\ref{app:worst_case_group_privacy} for a formal discussion), and (ii) enables a decompose-then-compose accountant for SGD with gradient clipping, without computing (sliced) Wasserstein sensitivities.
Moreover, when multiple mechanisms are trained independently and privatized under a common slicing geometry, we prove graceful additive composition: the total Ave-/Joint- (sa-Ave/sa-Joint-) SRPP cost is bounded by the sum of the per-mechanism costs, yielding modular, plug-and-play privacy accounting for pipelines and cascades.

Experiments on both static query release and iterative learning (SRPP-SGD) demonstrate favorable privacy–utility trade-offs and practical scalability.
Formal proofs, detailed discussion and analysis are provided in the appendix.

\subsection{Related Work}

\textbf{Privacy Beyond Individual Data Records.}
One-sided DP~\cite{kotsogiannis2020one} protects a designated sensitive subset of the dataset while allowing minimal or no protection for the remaining records.
Fernandes et al.~\cite{fernandes2019generalised} use generalized DP with Earth Mover’s Distance on word embeddings to obfuscate authorship.
Hardt and Roth~\cite{hardt2013beyond} study private singular vector computation under entry-level privacy, where each matrix entry is a secret.
PAC Privacy~\cite{xiao2023pac,sridhar2025pac} and Residual PAC Privacy~\cite{zhang2025breaking} bound the information-theoretic hardness of inference for general secrets, enabling simulation-based measurement and control.
Pufferfish Privacy (PP)~\cite{kifer2014pufferfish,Song2017} and R\'enyi Pufferfish Privacy (RPP)~\cite{pierquin2024renyi} generalize DP and RDP to arbitrary secrets within the same indistinguishability framework.
Related distribution-level notions protect properties of the data distribution rather than individual records~\cite{kawamoto2019local,chen2023protecting}.

\noindent\textbf{Instantiations of Pufferfish Privacy.}
PP has been instantiated for correlated data~\cite{Song2017}, attribute privacy~\cite{Zhang2022}, and confounding settings where secrets are data-derived~\cite{zhang2025differential}.
Related lines include Blowfish, which uses policy constraints to flexibly specify what is protected and what is known~\cite{He2014}, and mutual-information PP, an information-theoretic formulation based on conditional mutual information with improved composition~\cite{nuradha2023pufferfish}.
Applied deployments include smart-meter time-series release and local energy markets~\cite{kessler2015deploying} and temporally correlated time-series trading (HORAE)~\cite{niu2019making}.

\section{Preliminaries: R\'enyi Pufferfish Privacy}

In this section, we review R\'enyi Pufferfish Privacy (RPP) and Wasserstein mechanisms.
Appendix \ref{app:DP} introduces Differential Privacy (DP) and DP-SGD. 

The Pufferfish Privacy (PP) framework~\citep{kifer2014pufferfish} generalizes DP by protecting arbitrary \textit{secrets} linked to the data, rather than only individual records. Let $\mathcal{S}$ be the space of secrets and $\mathcal{Q} \subseteq \mathcal{S}\times\mathcal{S}$ the set of secret pairs to be made indistinguishable. Unlike DP, which conditions on a realized dataset $X$, PP models $X$ as a random variable drawn from a distribution $\theta \in \Theta$, where each $\theta$ is a \textit{prior}, i.e., a probabilistic model capturing either adversarial beliefs about the data or an admissible data-generating scenario (including secret–data relationships).
Crucially, PP is not a prior-dependent relaxation of DP: privacy is defined uniformly over an admissible family of priors $\Theta$ (and over secret pairs in $\mathcal Q$).
Accurate guarantees therefore require specifying (or estimating) $\Theta$.
For each $\theta\in\Theta$, let $P_{\theta}(S,X)$ denote the joint distribution over secrets $S$ and datasets $X$, with marginals $P^{S}_{\theta}$ and $P^{X}_{\theta}$, and conditional $P_{\theta}(x\mid s) = P_{\theta}(s,x)/P^{S}_{\theta}(s)$. We write $\Pr[\cdot|s,\theta]$ for probability $P_{\theta}(\cdot|s)$, and refer to $(\mathcal{S}, \mathcal{Q}, \Theta)$ as the \textit{Pufferfish scenario}.

Recently, Pierquin et al. \cite{pierquin2024renyi} introduced RPP by extending PP using R\'enyi divergence, in the spirit of R\'enyi differential privacy (RDP)~\citep{Mironov2017}, yielding a more flexible analysis of Pufferfish-type guarantees.

\begin{definition}[R\'enyi Pufferfish Privacy (RPP)~\citep{pierquin2024renyi}]
    Let $\alpha > 1$ and $\epsilon\geq 0$. 
    A mechanism $\mathcal{M}$ is $(\alpha,\epsilon)$-RPP in $(\mathcal{S},\mathcal{Q},\Theta)$ if for all $\theta\in\Theta$ and $(s_i,s_j)\in\mathcal{Q}$ with $P^S_{\theta}(s_i),P^S_{\theta}(s_j)>0$,
    \[
        \mathtt{D}_{\alpha}\!\left(\Pr[\mathcal{M}(X)\mid S=s_i, \theta]\,\big\|\,\Pr[\mathcal{M}(X)\mid S=s_j, \theta]\right)
        \;\leq\; \epsilon,
    \]
    where $\mathtt{D}_{\alpha}$ is the R\'enyi divergence of order $\alpha$:
    \begin{equation}\label{eq:renyi_divergence}
        \mathtt{D}_{\alpha}\!\left(\mathrm{P}\,\|\,\mathrm{Q}\right)
        :=
        \tfrac{1}{\alpha-1}\log \mathbb{E}_{Z\sim \mathrm{Q}}\!\left[\left(\tfrac{\mathrm{P}(Z)}{\mathrm{Q}(Z)}\right)^{\alpha}\right],
    \end{equation}
    for distributions $\mathrm{P}$ and $\mathrm{Q}$ on a common measurable space.
\end{definition}

PP~\citep{kifer2014pufferfish,Zhang2022} has the same semantics (the same scenario
$(\mathcal S,\mathcal Q,\Theta)$ and secret indistinguishability requirement), but uses
$(\varepsilon,\delta)$-indistinguishability with $\varepsilon\geq 0$ and $\delta\in[0,1]$ in place of a R\'enyi divergence bound: for all $\theta \in \Theta$, all $(s_i,s_j)\in\mathcal{Q}$ with $P^S_{\theta}(s_i),P^S_{\theta}(s_j)> 0$, and all measurable $\mathcal{T}\subseteq \mathcal{Y}$,
\[
\Pr\bigl(\mathcal{M}(X)\in \mathcal{T} \mid S=s_i, \theta\bigr)\leq e^{\epsilon}\Pr\bigl(\mathcal{M}(X)\in \mathcal{T} \mid S=s_j, \theta\bigr) + \delta.
\]
If $\delta=0$, we say that $\mathcal{M}$ satisfies $\epsilon$-PP.
As $\alpha\to\infty$, $(\alpha,\epsilon)$-RPP recovers $\epsilon$-PP (analogous to RDP recovering DP). 
RPP enjoys post-processing immunity, but it does not provide the graceful composition behavior of DP family, inheriting this limitation from the PP family~\cite{pierquin2024renyi}.

\noindent\textbf{Wasserstein Mechanisms. }
Wasserstein mechanisms are the first general class of privatization mechanisms for PP~\cite{Song2017} and RPP~\cite{pierquin2024renyi}. 
They are calibrated to a Wasserstein-distance-based analogue of DP sensitivity, which we refer to as the \textit{Wasserstein sensitivity}.
We first recall the optimal-transport notions underlying Wasserstein mechanisms: couplings and the $\infty$-Wasserstein distance.

\begin{definition}[Coupling]\label{def:coupling}
Let $\nu, u$ be probability measures on $(\mathbb{R}^d,\mathcal{B}(\mathbb{R}^d))$. 
A \textit{coupling} of $(\nu, \mu)$ is any probability measure $\pi$ on $(\mathbb{R}^d\times\mathbb{R}^d,\mathcal{B}(\mathbb{R}^d)\otimes\mathcal{B}(\mathbb{R}^d))$ whose marginals are $p$ and $q$. 
We denote the set of all couplings by $\Pi(\nu, \mu)$.
\end{definition}

\begin{definition}[$\infty$-Wasserstein Distance]\label{def:WD}
Fix a norm $\|\cdot\|$ on $\mathbb{R}^d$. For probability measures $\nu, u$ on $\mathbb{R}^d$, the $\infty$-Wasserstein distance ($\infty$-WD) is
\[
W_{\infty}(\nu,\mu)=\inf_{\pi\in\Pi(\nu,\mu)} \sup_{(x,y)\in\mathrm{supp}(\pi)}\|x-y\|.
\]
\end{definition}

For a \textit{query function} $f:\mathcal X\to\mathbb R^d$, the \textit{$\infty$-Wasserstein sensitivity} is then defined as
\begin{equation}\label{eq:full_D_WS}
    \Delta_\infty := \max_{(s_i,s_j)\in\mathcal Q,\ \theta\in\Theta}
W_\infty\!\big(\Pr(f(X)\mid s_i, \theta),\,\Pr(f(X)\mid s_j, \theta)\big).
\end{equation}

Let $\zeta$ be the noise distribution.
Pierquin et al. \cite{pierquin2024renyi} provides the General Wasserstein Mechanism (GWM), a general formulation to represent how $\Delta_\infty$, $\zeta$, and $\alpha$ together pin down an uppder bound of R\'enyi divergence of $\mathcal{M}(X)=f(X)+N$, with $N\sim \zeta$, based on the \textit{shift reduction lemma}~\citep{feldman2018privacy}.
When $\zeta=\mathcal{N}(0, \tfrac{\alpha \Delta^{2}_\infty }{2\varepsilon}I_{d})$ with $\Delta_\infty$ computed w.r.t. $\ell_{2}$ norm, $\mathcal{M}(X)$ satisfies $(\alpha, \varepsilon)$-RPP.
The main limitation of Wasserstein mechanisms is the scalability bottleneck of optimal transport (OT), which is exacerbated in high dimensions. 
In particular, computing $W_{\infty}$ (or more generally $W_{p}$; see Definition~\ref{def:p_W} in App.~\ref{app:background_DAGWM_RPP}) between empirical distributions requires solving large-scale OT problems whose complexity grows superlinearly with the sample size. 
This motivates replacing $\Delta_\infty$ by a sliced Wasserstein sensitivity, introduced in Sec. \ref{sec:sliced_Wasserstein}.

\noindent\textbf{Threat model. }
In this work, we consider a standard trusted-curator setting: the data holder runs a randomized mechanism $\mathcal{M}$ on the dataset and releases its output (e.g., a query answer or trained model).
The adversary observes the release and may know $\mathcal{M}$ and the Pufferfish scenario $(\mathcal{S},\mathcal{Q},\Theta)$. 
Potential side information is captured by $\Theta$.
Privacy is required for all $(s_i,s_j)\in\mathcal{Q}$ uniformly over $\theta\in\Theta$.
In the SGD setting, the adversary observes the model/updates released by our algorithm, but not internal randomness beyond what the release reveals.

Beyond DP, several alternative privacy frameworks incorporate prior information to support context-aware analysis and different utility/operational trade-offs, including approaches that exploit intrinsic data-generating structure (e.g., PAC privacy \cite{xiao2023pac,zhang2025breaking}) or explicit assumptions on adversarial beliefs (e.g., Bayesian DP \cite{triastcyn2020bayesian}). While PP \textit{can} interpret $\Theta$ as a set of admissible adversarial priors, in this paper \textit{we treat the Pufferfish scenario $(\mathcal{S},\mathcal{Q},\Theta)$ as the intrinsic modeling setting and focus on the resulting mechanistic and computational challenges of privatization.}

\section{Sliced Wasserstein Sensitivity}\label{sec:sliced_Wasserstein}

In this section, we introduce the \textit{Sliced Wasserstein sensitivity (SW-sensitivity)}, obtained by slicing the $\infty$-WD, to mitigate the computational bottlenecks of Wasserstein mechanisms in high dimensions. 

We first introduce notation.
Let $f:\mathcal{X}\to\mathbb{R}^d$ be a \textit{query} function.
For each $\theta\in\Theta$ and $s\in\mathcal S$, let
$P_{\theta}^{f,s}$ denote the conditional distribution of the (non-perturbed) query output $f(X)$ given $S=s$, i.e.,
$P_{\theta}^{f,s} := P_{\theta}(f(X)\mid s)$.
For a measurable map $g:\mathcal{W}\to\mathcal{C}$ and a Borel measure
$\mu$ on $\mathcal{W}$, the \textit{pushforward} $g_{\#}\mu$ on $\mathcal{C}$ is
defined by
\begin{equation}\label{eq:pushforward}
    g_{\#}\mu(A)\;=\;\mu\bigl(g^{-1}(A)\bigr),
    \qquad \forall A\in\mathcal{B}(\mathcal{C}),
\end{equation}
where $\mathcal{B}(\mathcal{C})$
denotes the Borel $\sigma$-algebra on $\mathcal{C}$.
Let $\mathbb{S}^{d-1}=\{u\in\mathbb{R}^d:\|u\|_2=1\}$ be the unit sphere and
let $\omega$ be a probability measure on $\mathbb{S}^{d-1}$.
For each \textit{direction} $u\in\mathbb{S}^{d-1}$, define the projection
$\Psi^{u}:\mathbb{R}^d\to\mathbb{R}$ by $\Psi^{u}(a)=\langle a,u\rangle$.
Then, for any Borel probability measure $\mathrm{P}$ on $\mathbb{R}^d$,
the pushforward $\Psi^{u}_{\#}\mathrm{P}$ is the 1-D
distribution of $\langle \hat{X},u\rangle$ when $\hat{X}\sim\mathrm{P}$.
For a direction u, we refer to the induced 1-D projection $\Psi^u$ as a \textit{slice}.

Next, we define the \textit{sliced Wasserstein distance}, which defines a metric for comparing probability measures on $\mathbb{R}^d$ by integrating WD over all 1-D projections.

\begin{definition}[Sliced Wasserstein Distance (SWD) \cite{rabin2011wasserstein}]\label{def:SWD}
    Let $\mathcal{P}_p(\mathbb{R}^d)$ denote the set of all Borel probability measures on $\mathbb{R}^d$ with finite $p$-th moment, i.e., $\int_{\mathbb{R}^d} \|x\|^p \,\mathrm{d}\mu(x) < \infty$ for all $\mu\in \mathcal{P}_p(\mathbb{R}^d)$.
    For $p \geq 1$ and $\mu,\nu \in \mathcal{P}_p(\mathbb{R}^d)$, the $p$-SWD is defined by
    \begin{equation}
        \mathrm{SW}_p^p(\mu,\nu) \; := \;
        \int_{\mathbb{S}^{d-1}} 
        W_{p}^{p}\!\left(\Psi^{u}_{\#}\mu,\ \Psi^{u}_{\#}\nu\right)\,\mathrm{d}\lambda(u),
    \end{equation}
    where $\lambda$ is the uniform probability measure on $\mathbb{S}^{d-1}$.
\end{definition}

The SWD is appealing because it replaces a high-dimensional OT comparison by an expectation over 1-D projections, where $W_{p}$ admits a closed form (including $p=\infty$) in terms of quantile functions.
By the Cram\'er-Wold theorem \cite{cramer1936some}, a probability measure on $\mathbb R^d$ is uniquely determined by its
1-D projections. 
Consequently, the SWD is a proper metric.
In practice, $\mathrm{SW}_p$ is typically estimated by Monte Carlo sampling of directions and sorting
the projected samples, reducing the computation to projections and one-dimensional Wasserstein
evaluations. This yields \textit{finite} slicing (or empirical slice distribution) and thus an
approximate SWD estimator (See Sec. \ref{sec:finite_sample_SWM} for formal finite-sample analyses). 
The resulting finite-slice quantity remains a
pseudometric (it is nonnegative, symmetric, and satisfies the triangle inequality). 

In one dimension, $W_p(\hat{\mu},\hat{\nu})$ admits the representation \citep[Remark 2.30]{peyre2019computational}:
\begin{equation}
    W_p^p(\hat{\mu},\hat{\nu})
    =
    \int_0^1 \big|\mathsf{F}_{\hat{\mu}}^{-1}(u)-\mathsf{F}_{\hat{\nu}}^{-1}(u)\big|^p\,\mathrm{d}u,
\end{equation}
where $\mathsf{F}_{\hat{\mu}}^{-1}$ and $\mathsf{F}_{\hat{\nu}}^{-1}$ denote the quantile functions of $\hat{\mu}$ and $\hat{\nu}$.
The map $\hat{\mu}\mapsto \mathsf{F}_{\hat{\mu}}^{-1}$ (the same for $\hat{\nu}\mapsto \mathsf{F}_{\hat{\nu}}^{-1}$) gives an isometric embedding of $(\mathcal{P}_p(\mathbb{R}),W_p)$ into $L_p([0,1])$.
For $p=2$ the metric is induced by the $L_2$ inner product on this embedded subset, yielding a simple, essentially linear geometry compared to the much richer geometry in higher dimensions.

$\mathrm{SW}_{\infty}$ can be viewed as the $p\to\infty$ endpoint of the sliced Wasserstein family $\mathrm{SW}_p$.\footnote{Formally, this corresponds to taking the $p\to\infty$ limit of the per-slice $L_p$ aggregation of 1-D Wasserstein distances.}
At the slice level, $\mathrm{SW}_{\infty}$ provides a natural analogue of the $W_\infty$ (Definition~\ref{def:WD}) for sensitivity analysis in PP/RPP: it captures the largest possible shift in the projected query output induced by changing the secret.

Given a Pufferfish scenario $(\mathcal{S},\mathcal{Q},\Theta)$, for each direction $u\in\mathbb{S}^{d-1}$, we define the \textit{per-slice $\infty$-Wasserstein sensitivity (SW-sensitivity)} as
\begin{equation}\label{eq:per_slice_WS}
    \Delta^{u}_{\infty}
    :=
    \max_{(s_i,s_j)\in\mathcal{Q},\ \theta\in\Theta}
    W_{\infty}\!\Bigl(\Psi^{u}_{\#}P_{\theta}^{f,s_i},\ \Psi^{u}_{\#}P_{\theta}^{f,s_j}\Bigr),
\end{equation}
which is the worst-case $\infty$-WD between the 1-D projections of the conditional data distributions $P_{\theta}^{s_i}$ and $P_{\theta}^{s_j}$ along direction $u$. 
Equivalently, it quantifies the maximal shift in the projected query output when the secret changes from $s_i$ to $s_j$ under any admissible prior in $\Theta$.

Conceptually, we might hope to replace the direction-dependent sensitivity $\Delta_\infty^u$ in~\eqref{eq:per_slice_WS} by a single scalar derived from $\mathrm{SW}_p$. 
However, $\mathrm{SW}_p$ is an \textit{average} over directions of 1-D transport costs and therefore does not provide a uniform, worst-case bound across all $u\in\mathbb{S}^{d-1}$. In contrast, our per-slice R\'enyi envelope~\eqref{eq:sliced_RE} (Sec.~\ref{sec:SWM}) requires controlling each slice, and within each slice taking the worst case over admissible secret pairs and priors.
In particular, an average can be small even when a subset of directions has large sensitivity.

Moreover, an adversary observes the full $d$-dimensional query output, not a one-dimensional projection. 
Thus, calibrating noise using a sliced quantity (e.g., $\Delta_\infty^u$ or an average over $u$) while retaining the original (unsliced) RPP definition introduces a mismatch between the geometry used for calibration and the divergence geometry in the privacy definition, and can therefore lead to unsound privacy bounds.

In particular, for any $u\in\mathbb{S}^{d-1}$, the projection $\Psi^{u}(x)=\langle x,u\rangle$ is $1$-Lipschitz with respect to the Euclidean norm, i.e., $|\Psi^{u}(x)-\Psi^{u}(y)|\le \|x-y\|_2$. Hence, for any probability measures $\mu,\nu$ on $\mathbb{R}^d$,
\[
  W_{\infty}\!\bigl(\Psi^{u}_{\#}\mu,\ \Psi^{u}_{\#}\nu\bigr)
  \;\leq\;
  W_{\infty}(\mu,\nu),
\]
where $W_\infty$ on $\mathbb{R}^d$ is computed with the $\ell_2$ norm.
Indeed, for any coupling $\gamma\in\Pi(\mu,\nu)$ we have $|\Psi^{u}(x)-\Psi^{u}(y)| \leq \|x-y\|_2$ $\gamma$-almost surely. 
Taking the essential supremum over $(x,y)\in\mathrm{supp}(\gamma)$ and then the infimum over $\gamma$ yields the inequality.
Applying this bound to $\mu=P_{\theta}^{s_i}$ and $\nu=P_{\theta}^{s_j}$, and then taking the maximum over $(s_i,s_j)\in\mathcal{Q}$ and $\theta\in\Theta$, gives $\Delta^{u}_{\infty}\leq \Delta_{\infty}$, where $\Delta_{\infty}$ is defined in~\eqref{eq:full_D_WS}.

This raises a natural question: which privacy notion is compatible with sliced Wasserstein sensitivity?
We answer it by defining \textit{Sliced R\'enyi Pufferfish Privacy} in the next section.

\section{Sliced R\'enyi Pufferfish Privacy}\label{sec:SRPP}

\begin{figure*}[!t]
    \centering

    \begin{subfigure}[b]{0.19\textwidth}
        \centering
        \includegraphics[width=\textwidth]{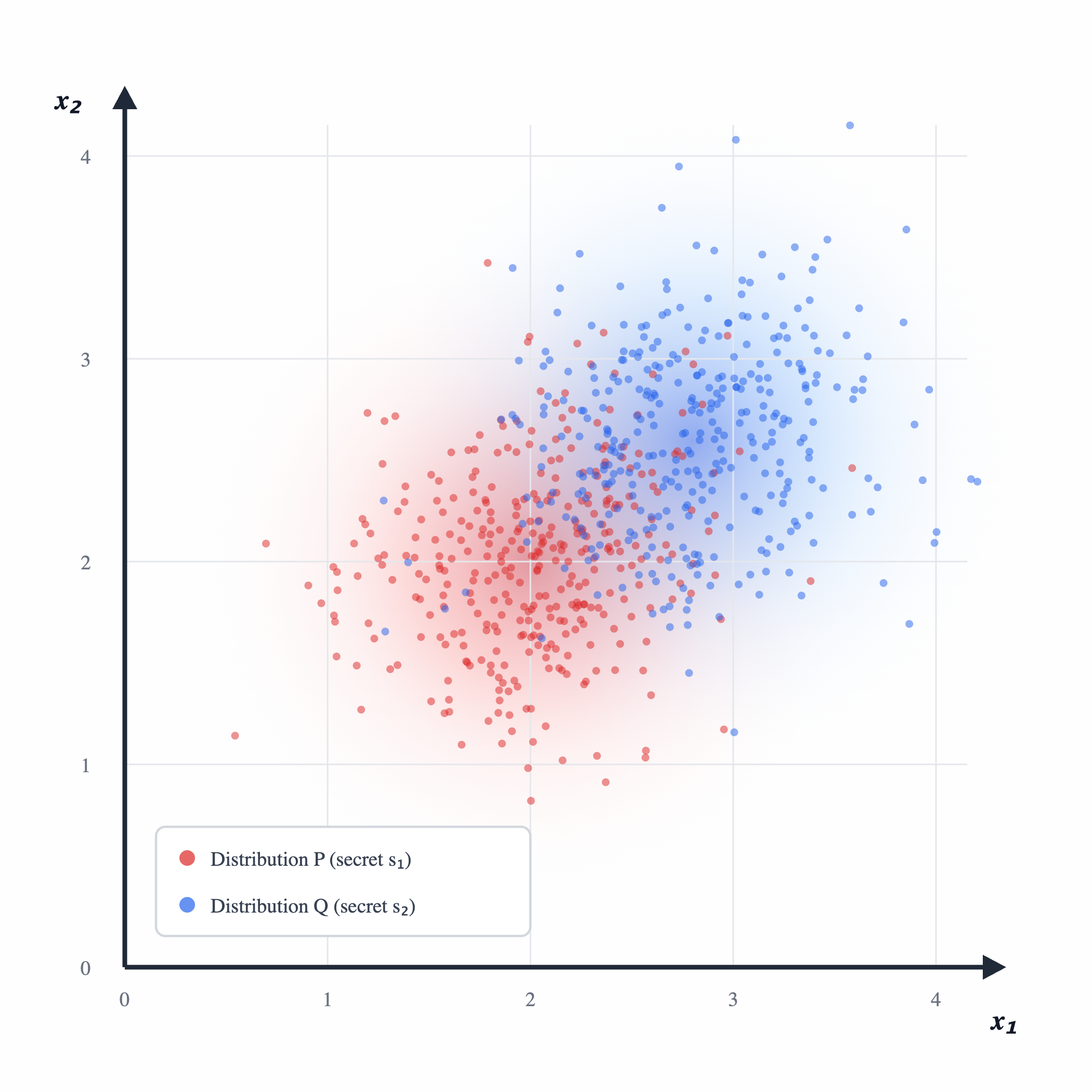}
        \caption{Two 2-D distributions}
        \label{fig:srd_2d}
    \end{subfigure}
    \hfill
    \begin{subfigure}[b]{0.19\textwidth}
        \centering
        \includegraphics[width=\textwidth]{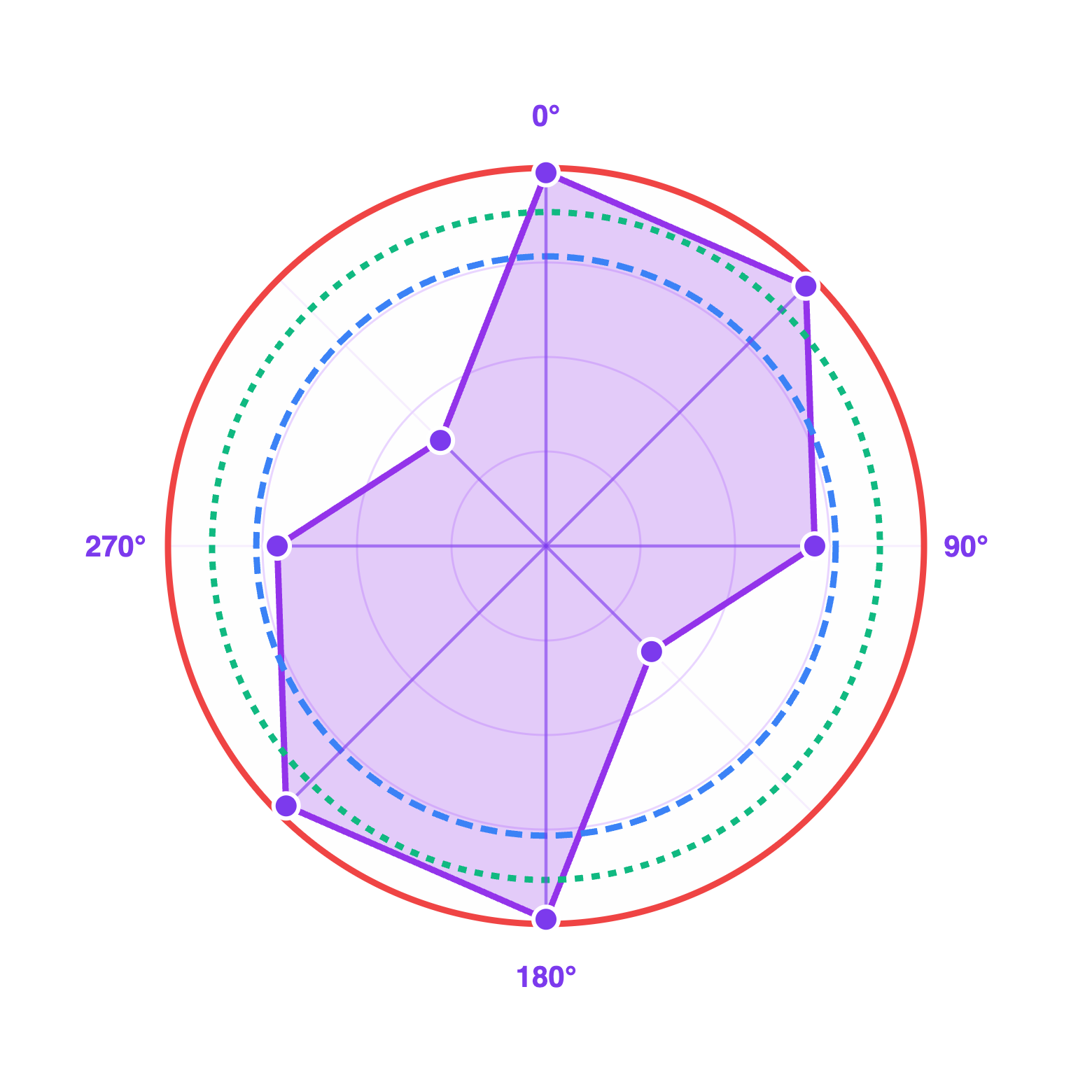}
        \caption{$m=8$}
        \label{fig:srd_m8}
    \end{subfigure}
    \hfill
    \begin{subfigure}[b]{0.19\textwidth}
        \centering
        \includegraphics[width=\textwidth]{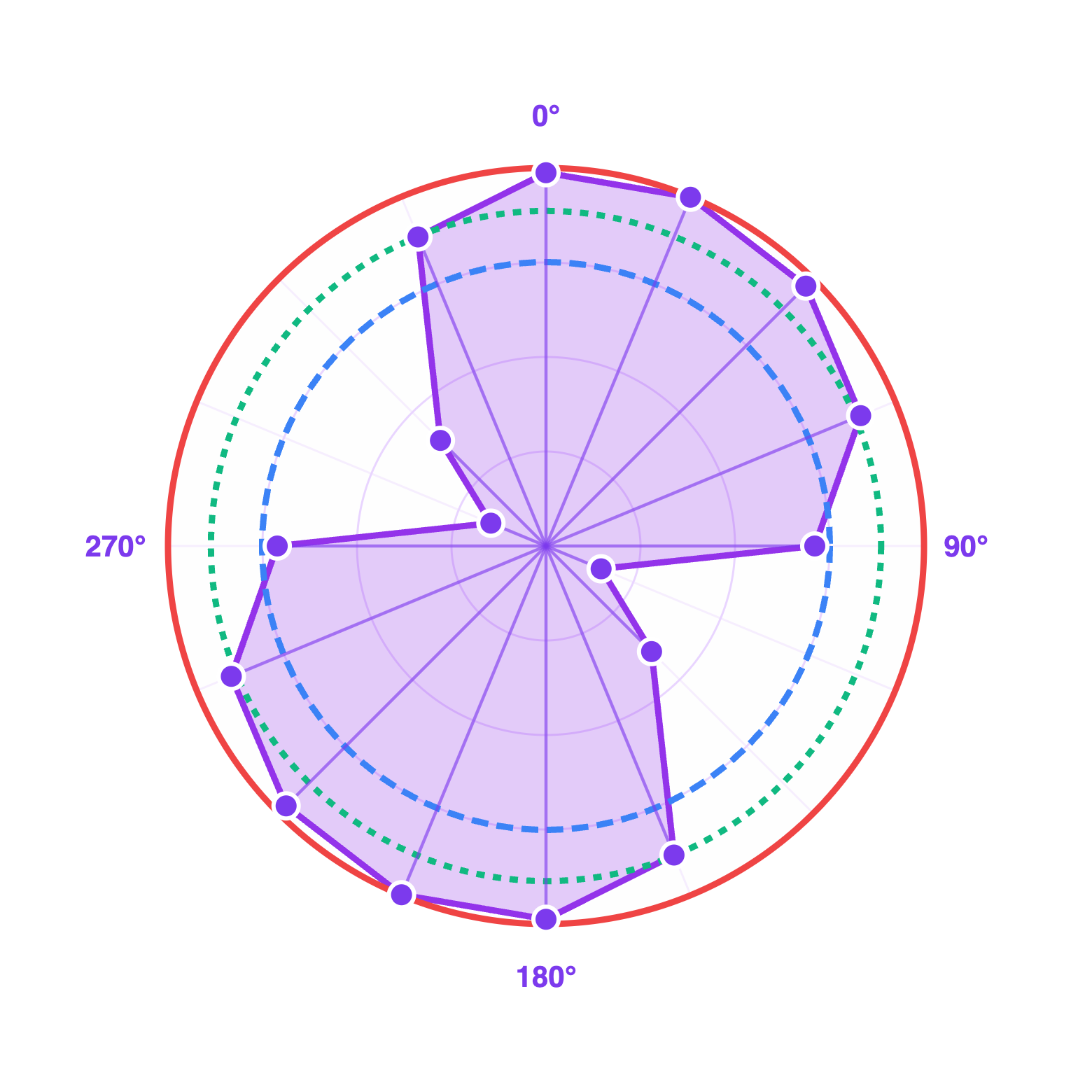}
        \caption{$m=16$}
        \label{fig:srd_m16}
    \end{subfigure}
    \hfill
    \begin{subfigure}[b]{0.19\textwidth}
        \centering
        \includegraphics[width=\textwidth]{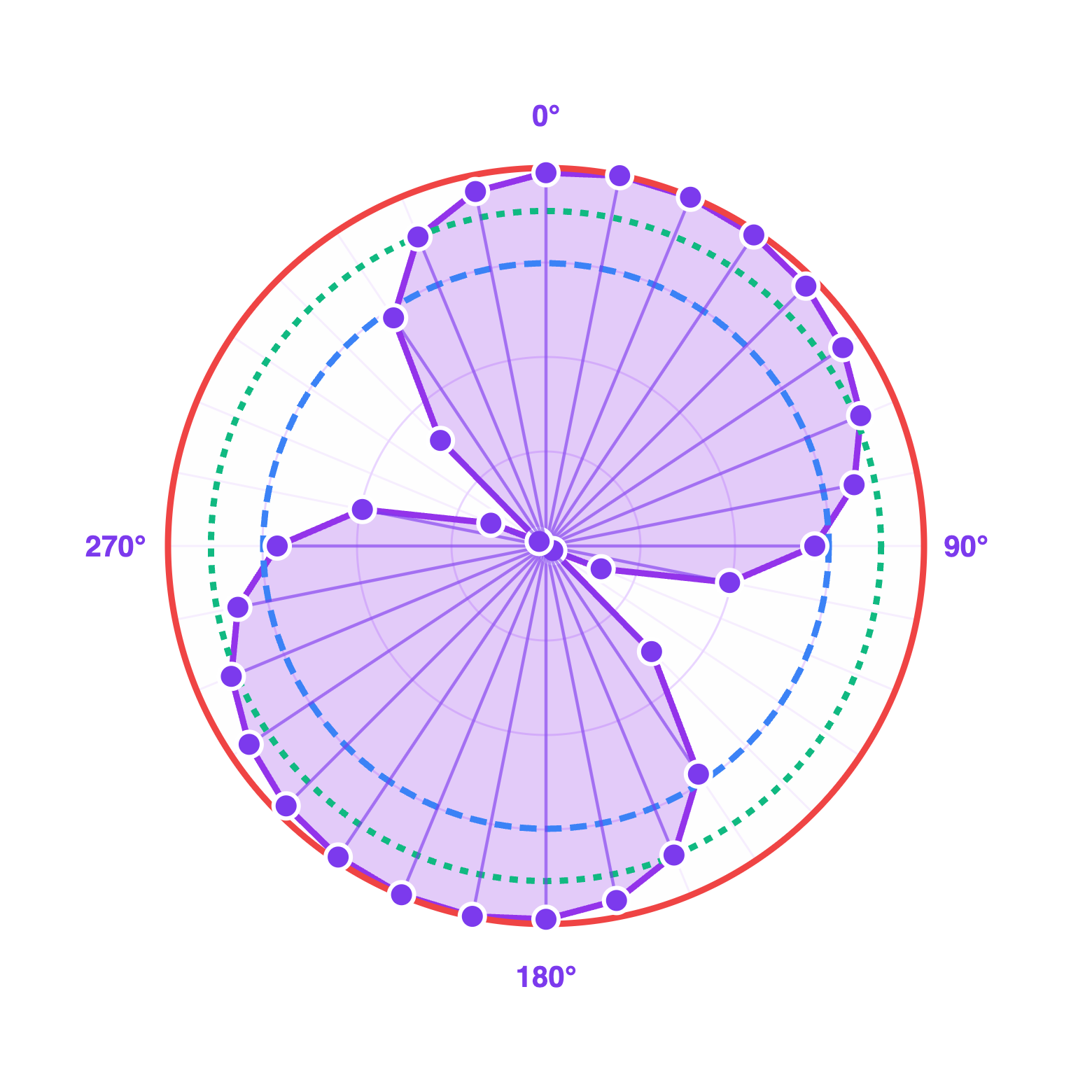}
        \caption{$m=32$}
        \label{fig:srd_m32}
    \end{subfigure}
    \hfill
    \begin{subfigure}[b]{0.19\textwidth}
        \centering
        \includegraphics[width=\textwidth]{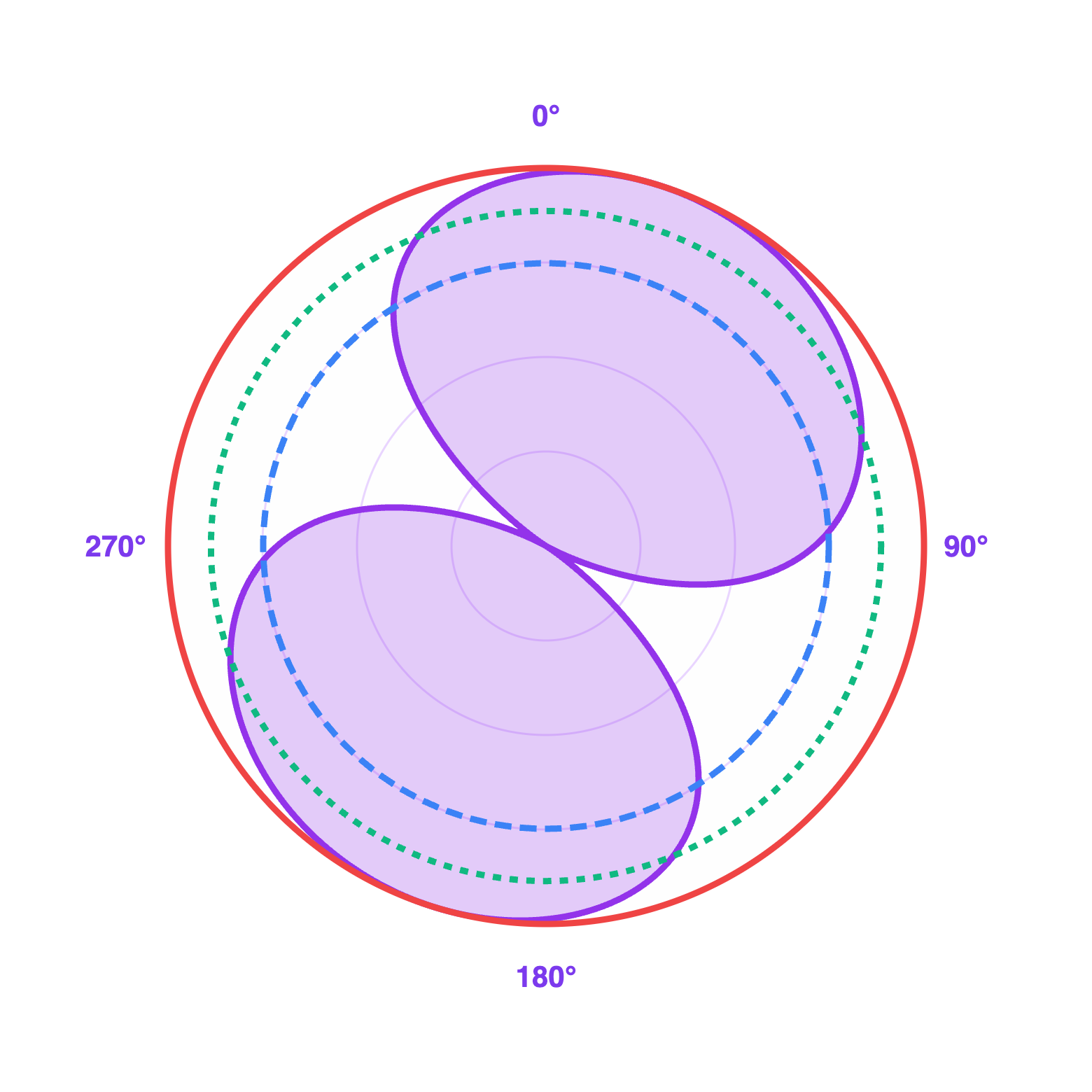}
        \caption{$m=\infty$}
        \label{fig:srd_minf}
    \end{subfigure}

    \vspace{2pt}
    \par\smallskip
    \begin{tikzpicture}[baseline]
        \draw[fill=violet!25, draw=violet, line width=0.9pt]
            (0,0) -- (0.35,0.12) -- (0.28,0.35) -- (0.05,0.30) -- cycle;
        \node[anchor=west] at (0.45,0.18) {\small per-slice $\mathtt{D}_\alpha(\Psi^u_{\#}P\|\Psi^u_{\#}Q)$};

        \draw[blue, dashed, line width=0.9pt] (6.0,0.18) -- (6.7,0.18);
        \node[anchor=west] at (6.8,0.18) {\small Ave-SRD};

        \draw[green!60!black, dotted, line width=1.0pt] (8.6,0.18) -- (9.3,0.18);
        \node[anchor=west] at (9.4,0.18) {\small Joint-SRD};
    \end{tikzpicture}
    \par\smallskip
    \caption{SRD profiles in $\mathbb{R}^2$ at order $\alpha=4$ under a uniform slice distribution $\omega$. (a) Two 2-D Gaussian distributions $P$ (red) and $Q$ (blue). (b)–(d) Discrete slice profiles with $m\in\{8,16,32\}$ directions: the filled purple polygon traces the per-direction divergences $\mathtt{D}_\alpha(\Psi^u_{\#}P\|\Psi^u_{\#}Q)$. The blue dashed circle is Ave-SRD (directional mean), and the green dotted circle is Joint-SRD (log-moment aggregation, emphasizing large divergences). (e) The continuous-direction limit $m=\infty$. 
    }
    \label{fig:SRD}
\end{figure*}

In this section, we introduce \textit{Sliced R\'enyi Pufferfish Privacy} (SRPP), a directional refinement of RPP defined relative to a \textit{slice profile} $(\mathcal U,\omega)$, where $\mathcal U \subseteq \mathbb S^{d-1}$ is a set of directions and $\omega$ is a probability measure on $\mathbb S^{d-1}$ supported on $\mathcal U$. In the finite case, $\mathcal U={u_1,\ldots,u_m}$ and $\omega$ is discrete with weights $(\omega_1,\ldots,\omega_m)$, so that $\int_{\mathbb S^{d-1}} f(u)\,\mathrm d\omega(u)=\sum_{i=1}^m \omega_i f(u_i)$.

We define two variants: Average SRPP (Ave-SRPP; Sec.~\ref{sec:ave_srpp}) and Joint SRPP (Joint-SRPP; Sec.~\ref{sec:joint_srpp}). 
Ave-SRPP bounds the $\omega$-average of per-slice R\'enyi divergences. 
Joint-SRPP aggregates per-slice divergences through a joint log-moment bound, making it more sensitive to rare high-risk directions. 
Sec.~\ref{sec:SWM} formally introduce \textit{Sliced Wasserstein Mechanisms} using SW-sensitivity defined by Sec.~\ref{sec:sliced_Wasserstein}.
We defer additional discussions of SRPP, including the properties such as post-processing immunity and the conversion to \textit{sliced PP}, and the difference between Ave- and Joint-SRPP in App.~\ref{sec:property_srpp}.

\subsection{Average SRPP}\label{sec:ave_srpp}

Our Ave-SRPP aggregates per-slice R\'enyi divergences by averaging, where the quantification measure is \textit{Average Sliced R\'enyi Divergence.}

\begin{definition}[Average Sliced R\'enyi Divergence (Ave-SRD)]\label{def:ave_srd}
For two probability distributions $\mathrm{P}$ and $\mathrm{Q}$ on $\mathbb{R}^{d}$, and a slice profile $(\mathcal{U}, \omega)$, the Ave-SRD of order $\alpha > 1$ with respect to $\omega$ is
\begin{equation}
  \mathtt{AveSD}^{\omega}_{\alpha}\left(\mathrm{P}\;\|\; \mathrm{Q}\right)
  := \int_{\mathbb{S}^{d-1}}
  \mathtt{D}_{\alpha}\!\left(\Psi^{u}_{\#} \mathrm{P} \; \| \; \Psi^{u}_{\#} \mathrm{Q} \right)
  \,\mathrm{d} \omega(u),
\end{equation}
where $\mathtt{D}_{\alpha}$ is the standard R\'enyi divergence.
\end{definition}

Ave-SRD averages the 1-D R\'enyi divergences between the projected measures
$\Psi^{u}_{\#}\mathrm{P}$ and $\Psi^{u}_{\#}\mathrm{Q}$ over directions $u\sim\omega$.
When $d=1$, 
$\mathtt{AveSD}^{\omega}_{\alpha}$ reduces to the standard
$\mathtt{D}_{\alpha}(\mathrm{P}\|\mathrm{Q})$.

\begin{definition}[$(\alpha,\epsilon,\omega)$-Ave-SRPP]\label{def:ave_srpp}
Let $\alpha>1$ and $\epsilon\ge 0$. 
Given a slice profile $(\mathcal{U},\omega)$, a mechanism $\mathcal{M}$ satisfies $(\alpha,\epsilon,\omega)$-Ave-SRPP in $(\mathcal{S},\mathcal{Q},\Theta)$ if for all $\theta\in\Theta$ and all $(s_i,s_j)\in\mathcal{Q}$ with $P^S_{\theta}(s_i),P^S_{\theta}(s_j)>0$,
\[
\mathtt{AveSD}^{\omega}_{\alpha}\!\left(\Pr[\mathcal{M}(X)\mid s_i,\theta]\,\big\|\,\Pr[\mathcal{M}(X)\mid s_j,\theta]\right)\le \epsilon.
\]
\end{definition}

Thus, Ave-SRPP bounds the $\omega$-average of per-slice R\'enyi divergences between the output distributions for any admissible secret pair (uniformly over $\theta\in\Theta$).

\subsection{Joint SRPP}\label{sec:joint_srpp}

Joint-SRPP aggregates per-slice divergences through a joint log-moment (R\'enyi) bound. Its quantification measure is the \textit{Joint Sliced R\'enyi Divergence}.

\begin{definition}[Joint Sliced R\'enyi Divergence (Joint-SRD)]\label{def:joint_srd}
For two probability distributions $\mathrm{P}$ and $\mathrm{Q}$ on $\mathbb{R}^{d}$ and a slice profile $(\mathcal U,\omega)$, the Joint-SRD of order $\alpha>1$ is
\begin{equation}
\begin{aligned}
\mathtt{JSD}^{\omega}_{\alpha}\!\left(\mathrm{P}\,\|\, \mathrm{Q}\right)
&:= \frac{1}{\alpha-1}\log
\int_{\mathbb S^{d-1}} \\
&\quad \exp\Big((\alpha-1)\mathtt{D}_{\alpha}\!\left(\Psi^{u}_{\#}\mathrm{P}\,\|\,\Psi^{u}_{\#}\mathrm{Q}\right)\Big)\,
\mathrm d\omega(u).
\end{aligned}
\end{equation}
\end{definition}

Joint-SRD aggregates per-slice divergences via a log-moment transform, making it more sensitive than Ave-SRD to rare directions with large 1-D divergence (see Figure~\ref{fig:SRD}). 
When $d=1$, $\mathtt{JSD}^{\omega}_{\alpha}(\mathrm{P}\|\mathrm{Q})$ reduces to the standard $\mathtt{D}_{\alpha}(\mathrm{P}\|\mathrm{Q})$.

\begin{definition}[$(\alpha,\epsilon,\omega)$-Joint-SRPP]\label{def:joint_srpp}
Let $\alpha>1$ and $\epsilon\ge 0$. 
Given a slice profile $(\mathcal U,\omega)$, a mechanism $\mathcal{M}$ satisfies $(\alpha,\epsilon,\omega)$-Joint-SRPP in $(\mathcal{S},\mathcal{Q},\Theta)$ if for all $\theta\in\Theta$ and all $(s_i,s_j)\in\mathcal{Q}$ with $P^S_{\theta}(s_i),P^S_{\theta}(s_j)>0$,
\[
\mathtt{JSD}^{\omega}_{\alpha}\!\left(\Pr(\mathcal{M}(X)\mid s_i,\theta)\,\big\|\,\Pr(\mathcal{M}(X)\mid s_j,\theta)\right)\le \epsilon.
\]
\end{definition}


\subsection{Sliced Wasserstein Mechanism}\label{sec:SWM}

In this section, we present the \textit{Sliced Wasserstein Mechanism (SWM)} to obtain SRPP guarantee by leveraging the computational tractability afforded by SW-sensitivity of Sec. \ref{sec:sliced_Wasserstein}.

Let $N\sim\zeta$ denote additive noise with distribution $\zeta$ on $\mathbb{R}^d$. 
For $a\in\mathbb{R}^d$, write $\zeta_{-a}$ for the shifted distribution (i.e., the distribution of $U-a$ for $U\sim\zeta$), where $\zeta_{-a}(w)=\zeta(w-a)$. 
For any $\pi:\mathbb{S}^{d-1}\to \mathbb{R}_{+}$ and direction $u\in \mathbb{S}^{d-1}$, we define the \textit{per-slice R\'enyi envelope} in as
\begin{equation}\label{eq:sliced_RE}
    R_{\alpha}(\zeta,\pi(u))
=
\sup_{\|a\| < \pi(u)} \mathtt{D}_{\alpha}\bigl(\zeta_{-a},\zeta\bigr).
\end{equation}
Recall $P_{\theta}^{f,s} := P_{\theta}(f(X)\mid s)$ for a query function $f$.
When $\mathcal{M}(X) = f(X) + N$ with $N\sim \zeta$ independent of $f(X)$, let $\mathcal{M}^\theta_s = P_{\theta}^{f,s} * \zeta$ be the distribution of $\mathcal{M}(X)$ where $*$ is the convolution operator.

\begin{lemma}\label{lemma:SRPP-envelope}
Fix a Pufferfish scenario $(\mathcal S,\mathcal Q,\Theta)$ and a slice profile $(\mathcal U,\omega)$.
Let $\mathcal M(X)=f(X)+N$ with $N\sim\zeta$ independent of $X$.
For any $\theta\in\Theta$ and any $(s_i,s_j)\in\mathcal Q$ with
$P_\theta^S(s_i)>0$ and $P_\theta^S(s_j)>0$, we have for all $\alpha>1$ and all $u\in\mathcal U$,
\[
\mathtt D_\alpha\!\left(\Psi^u_\#\mathcal M^\theta_{s_i}\;\big\|\;\Psi^u_\#\mathcal M^\theta_{s_j}\right)
\;\le\;
R_\alpha\!\left(\zeta,\, z_{\infty}^u(\theta,s_i,s_j)\right),
\]
where $z_\infty^u(\theta,s_i,s_j)
:=
W_\infty\!\left(\Psi^u_{\#}P_{\theta}^{f,s_i},\ \Psi^u_{\#}P_{\theta}^{f,s_j}\right).$
\end{lemma}

Lemma \ref{lemma:SRPP-envelope} states that, along each direction $u$, the per-slice R\'enyi envelope evaluated at the shift radius $z^{u}_{\infty}$ upper-bounds the actual R\'enyi divergence between the projected output distributions under secrets $s_i$ and $s_j$.

We then define the \textit{averaged R\'enyi envelope} as
\begin{equation}
    \mathtt{AR}^{\infty}_{\alpha, \omega}\bigl(\mathcal{M}^{\theta}_{s_{i}} \,\big\|\, \mathcal{M}^{\theta}_{s_{j}}\bigr)
    :=
    \int_{\mathbb{S}^{d-1}}  R_{\alpha}\bigl(\zeta, z^{u}_{\infty}\bigr)\,\mathrm{d}\omega(u).
\end{equation}
This quantity aggregates the per-slice R\'enyi envelope bounds over slice $u\sim\omega$ and will serve as our sliced R\'enyi envelope surrogate for the true averaged sliced R\'enyi divergence.
Similarly, we define the \textit{joint sliced R\'enyi envelope} by
\begin{equation}
    \begin{aligned}
        & \mathtt{JR}^{\infty}_{\alpha, \omega}\bigl(\mathcal{M}^{\theta}_{s_{i}} \big\| \mathcal{M}^{\theta}_{s_{j}}\bigr)\\
        &:= \tfrac{1}{\alpha-1}\log\left(\int_{\mathbb{S}^{d-1}}\exp\bigl((\alpha-1)\,R_{\alpha}(\zeta, z^{u}_{\infty})\bigr) d\omega(u)\right).
    \end{aligned}
\end{equation}
Here, instead of averaging the per-slice envelope values linearly, we aggregate them via an exponential moment over $V \sim \omega$, followed by a log and $1/(\alpha-1)$ rescaling. This construction mirrors the passage from an average of R\'enyi divergences to a joint R\'enyi divergence for the pair $(V,\Psi^V(\mathcal{M}(X)))$. As a result, $\mathtt{JR}^{\infty}_{\alpha, \omega}$ is more sensitive to rare directions with large envelope values than $\mathtt{AR}^{\infty}_{\alpha, \omega}$.

\begin{corollary}\label{corollary:SRPE_implies_SRPP}
    Fix a continuous or finite slice profile $\{\mathcal{U}, \omega\}$.
    Let $\alpha > 1$ and $\epsilon \geq 0$. 
    For a query function $f$, let $\mathcal{M}(X) = f(X) + N$ with $N\sim \zeta$. 
    Then, (i) $\mathcal{M}(X)$ satisfies $(\alpha,\epsilon,\omega)$-Ave-SRPP if $\mathtt{AR}^{\infty}_{\alpha, \omega}\bigl(\mathcal{M}^{\theta}_{s_{i}} \,\big\|\, \mathcal{M}^{\theta}_{s_{j}}\bigr)
        \leq \epsilon$; and (ii) $\mathcal{M}(X)$ satisfies $(\alpha,\epsilon,\omega)$-Joint-SRPP if $\mathtt{JR}^{\infty}_{\alpha, \omega}\bigl(\mathcal{M}^{\theta}_{s_{i}} \,\big\|\, \mathcal{M}^{\theta}_{s_{j}}\bigr)
        \leq \epsilon$.
\end{corollary}

Corollary~\ref{corollary:SRPE_implies_SRPP} follows immediately from Lemma~\ref{lemma:SRPP-envelope}. In particular, an $\varepsilon$-bounded Rényi envelope under $(\alpha,\omega)$ is sufficient to guarantee $(\alpha,\varepsilon,\omega)$-SRPP, for both the average and joint variants. The envelopes $\mathtt{AR}^{\infty}_{\alpha,\omega}$ and $\mathtt{JR}^{\infty}_{\alpha,\omega}$ are especially useful operational targets: our Sliced Wasserstein Mechanism (SWM), calibrated via SW-sensitivity, provides a direct and tractable way to enforce $\mathtt{AR}^{\infty}_{\alpha,\omega}\!\bigl(\mathcal{M}^{\theta}_{s_i}\,\|\,\mathcal{M}^{\theta}_{s_j}\bigr)\le \varepsilon$ and $\mathtt{JR}^{\infty}_{\alpha,\omega}\!\bigl(\mathcal{M}^{\theta}_{s_i}\,\|\,\mathcal{M}^{\theta}_{s_j}\bigr)\le \varepsilon$ as we detail next.

\subsubsection{Gaussian Sliced Wasserstein Mechanism}\label{sec:GSWM}

We specialize to Gaussian noise, $N\sim\mathcal{N}(0,\sigma^{2} I_{d})$.
Given a slice profile $(\mathcal{U},\omega)$ (continuous or finite) and the per-slice SW-sensitivity $\Delta_{\infty}^{u}$ in~\eqref{eq:per_slice_WS}, define
\[
\overline{\Delta}^2
:=
\int_{\mathbb{S}^{d-1}} \bigl(\Delta_{\infty}^{u}\bigr)^2\,\mathrm{d}\omega(u),
\qquad
\Delta_{\star}^2
:=
\sup_{u\in\mathcal{U}} \bigl(\Delta_{\infty}^{u}\bigr)^2.
\]
Here, $\overline{\Delta}^2$ is the second moment of SW-sensitivity and $\Delta_{\star}^2$ is the largest squared directional sensitivity over $u\in\mathcal{U}$.

\begin{theorem}\label{thm:gaussian-ave-SRPEp}
Fix a continuous or finite slice profile $\{\mathcal{U}, \omega\}$.
    Let $\alpha > 1$ and $\epsilon \geq 0$. 
    A mechanism $\mathcal{M}(X) = f(X) + N$ with $N\sim \zeta$ satisfies $( \alpha, \epsilon,\omega)$-Ave-SRPP in $(\mathcal{S}, \mathcal{Q}, \Theta)$ if $\zeta = \mathcal{N}(0, \tfrac{\alpha \overline{\Delta}^2 }{2\epsilon} I_{d})$.
\end{theorem}

Theorem \ref{thm:gaussian-ave-SRPEp} provides a closed-form Gaussian noise calibration for Ave-SRPP under a fixed slice profile.
Specifically, setting the isotropic variance to $\sigma^2 =\tfrac{\alpha\overline{\Delta}^2}{2\epsilon}$ ensures that the averaged R\'enyi envelope $\mathtt{AR}^{\infty}_{\alpha, \omega}$ is bounded by $\varepsilon$ uniformly over $\theta\in\Theta$ and $(s_{i}, s_{j})\in\mathcal{Q}$.
By Corollary~\ref{corollary:SRPE_implies_SRPP}, this envelop bound is sufficient to guarantee $(\alpha, \epsilon,\omega)$-Ave-SRPP.
Moreover, the per-slice R\'enyi envelope is attained by the Gaussian SWM.

\begin{theorem}\label{thm:gaussian-joint-SRPEp}
Fix a continuous or finite slice profile $\{\mathcal{U}, \omega\}$.
    Let $\alpha > 1$ and $\epsilon \geq 0$. 
    A mechanism $\mathcal{M}(X) = f(X) + N$ with $N\sim \zeta$ satisfies $( \alpha, \epsilon,\omega)$-Joint-SRPP in $(\mathcal{S}, \mathcal{Q}, \Theta)$ if $\zeta = \mathcal{N}(0, \tfrac{\alpha\,\Delta_{\star}^2 }{2\epsilon}I_{d})$.
\end{theorem}

Theorem \ref{thm:gaussian-joint-SRPEp} gives the analogous closed-form noise calibration for Joint-SRPP.
Taking $\sigma^2 = \tfrac{\alpha\Delta_{\star}^2}{2\epsilon}$ ensures that the joint R\'enyi envelope $\mathtt{JR}^{\infty}_{\alpha, \omega}$ is at most $\varepsilon$ uniformly over $\theta\in\Theta$ and $(s_{i}, s_{j})\in\mathcal{Q}$, and hence $\mathcal{M}$ satisfies $( \alpha, \epsilon,\omega)$-Joint-SRPP by Corollary~\ref{corollary:SRPE_implies_SRPP}.
Compared to the Gaussian SWM for the Ave-SRPP, this choice depends on the worst directional sensitivity $\Delta_{\star}^2$, rather than the profile average $\overline{\Delta}^2$, making it more conservative and typically requiring larger noise to protect against rare high-sensitivity directions.

\subsubsection{Computational Properties of SWM}
\label{sec:finite_sample_SWM}

The SRPP guarantees of the SWM in Theorems \ref{thm:gaussian-ave-SRPEp} and \ref{thm:gaussian-joint-SRPEp} hold for both \textit{continuous} slice profiles
$\omega$ on $\mathbb{S}^{d-1}$ and \textit{finite} profiles.
In practice, even when $\omega$ is conceptually continuous, SWM is instantiated by sampling finitely many
directions and the conditional distributions $P^{f,s}_{\theta}$ are typically accessed only through samples. This induces
(i) a \textit{finite-slice} (Monte Carlo) approximation of the profile aggregates and (ii) a \textit{finite-sample}
approximation of the per-slice sensitivities $\Delta_\infty^{u}$.

\noindent\textbf{Notations. }
Fix a query function $f:\mathcal{X}\to\mathbb{R}^{d}$ and a Pufferfish scenario $(\mathcal{S}, \mathcal{Q}, \Theta)$.
Fix finite instantiated families $\widehat{\Theta}\subseteq \Theta$ and $\widehat{\mathcal{Q}}\subseteq \mathcal{Q}$.
For each $(s,\theta)\in \mathcal{S}\times \widehat{\Theta}$, draw $n$ i.i.d.\ samples
$Z^{(\theta, s)}_{1:n} \sim P^{f,s}_{\theta}$.
For a direction $u$, write $V^{\theta, s}_{k}(u):= \langle Z^{(\theta,s)}_k, u\rangle$, and let $\widehat{F}^{u}_{\theta,s}$
denote the empirical CDF of $\{V^{\theta, s}_{k}(u)\}^{n}_{k=1}$, with generalized inverse
$(\widehat F)^{-1}(t):=\inf\{x:\widehat F(x)\ge t\}$.
Let $\omega$ be a continuous slice profile on $\mathbb{S}^{d-1}$ and draw i.i.d.\ directions
$u_1,\ldots,u_{m}\sim \omega$.
Define the empirical profile $\omega_{m} := \tfrac{1}{m}\sum_{\ell=1}^{m} \delta_{u_\ell}$ and the profile aggregates
$\overline{\Delta}^2(\omega):=\mathbb E_{U\sim\omega}\!\big[(\Delta_\infty^{U})^2\big]$ and
$\overline{\Delta}^2(\omega_m)= \tfrac{1}{m}\sum_{\ell=1}^{m} (\Delta_\infty^{u_\ell})^2$.

For $\widetilde{\rho}\in(0,1)$, define
$\mathsf{e}_{n}(\widetilde{\rho}):=\sqrt{\frac{\log(2/\widetilde{\rho})}{2n}}$
\footnote{$\sup_x|\widehat F(x)-F(x)|\leq \smash{\mathsf{e}_n(\widetilde\rho)}$ holds with probability at least $\smash{1-\widetilde\rho}$ due to the Dvoretzky--Kiefer--Wolfowitz inequality.}.
For $u\in\mathbb S^{d-1}$ and $(\theta,(s_i,s_j))\in\widehat{\Theta}\times\widehat{\mathcal{Q}}$, define
\[
\begin{aligned}
\widehat{W}^{\theta,s_i, s_j}_{\infty}\!\big(u, \widetilde{\rho}\big)
&:=
\sup_{t\in[\mathsf{e}_{n}(\widetilde{\rho}/2),\,1-\mathsf{e}_{n}(\widetilde{\rho}/2)]}
\Big|
\big(\widehat F_{\theta,s_i}^{\,u}\big)^{-1}\!\big(t+\mathsf{e}_{n}(\widetilde{\rho}/2)\big)\\&
-
\big(\widehat F_{\theta,s_j}^{\,u}\big)^{-1}\!\big(t-\mathsf{e}_{n}(\widetilde{\rho}/2)\big)
\Big|.
\end{aligned}
\]
For each sampled direction $u_{\ell}$, define
$\widehat{\Delta}_{\infty}(u_{\ell})
:= \max_{(\theta,(s_i,s_j))\in \widehat{\Theta}\times \widehat{\mathcal{Q}}}
\widehat{W}^{\theta,s_i, s_j}_{\infty}\!\big(u_{\ell}, \widetilde{\rho}\big)$.

Theorems~\ref{thm:mc_finite_gaussian_ave_srpp} and~\ref{thm:mc_finite_gaussian_joint_srpp} show that SWM admits
PAC-style calibrations under \textit{both} Monte Carlo slicing and finite-sample estimation of $P^{f,s}_{\theta}$,
yielding $(\alpha,\varepsilon,\omega)$-Ave/Joint-SRPP guarantee with probability at least $1-\gamma$.

\begin{theorem}
\label{thm:mc_finite_gaussian_ave_srpp}
Fix a Pufferfish scenario $(\mathcal S,\mathcal Q,\Theta)$, a query $f:\mathcal X\to\mathbb R^d$, and a continuous slice
distribution $\omega$ on $\mathbb{S}^{d-1}$. Draw i.i.d.\ directions $U_{1:m}\sim\omega$.
Assume $0\leq \Delta_\infty^{u}\le \Delta_0$ for all $u$.
Fix $\alpha>1$, $\varepsilon\ge 0$, and $\gamma\in(0,1)$.
Let $\widehat\Delta_\infty(U_\ell)$ be random quantities satisfying $\Pr\!\Big(\Delta_\infty^{U_\ell}\leq \widehat\Delta_\infty(U_\ell)\ \text{for all }\ell\in[m]\Big)\ge 1-\gamma/2$,
where the probability is over the conditional sampling used to form $\widehat\Delta_\infty(U_\ell)$.
If Gaussian SWM uses $N\sim\mathcal N(0,\sigma^2 I_d)$ with
\[
\sigma^2 \;\geq\; \frac{\alpha}{2\varepsilon}\left(
\frac{1}{m}\sum_{\ell=1}^{m}\big(\widehat\Delta_\infty(U_\ell)\big)^2
+\Delta_0^2\sqrt{\frac{\log(4/\gamma)}{2m}}
\right),
\]
then with probability at least $1-\gamma$ (over both the conditional sampling and the draw of $U_{1:m}$),
$\mathcal M(X)=f(X)+N$ satisfies $(\alpha,\varepsilon,\omega)$-\textit{Ave-SRPP}.
\end{theorem}

\begin{theorem}
\label{thm:mc_finite_gaussian_joint_srpp}
Fix a Pufferfish scenario $(\mathcal S,\mathcal Q,\Theta)$, a query $f:\mathcal X\to\mathbb R^d$, and a continuous slice distribution $\omega$ on $\mathbb S^{d-1}$.
Draw i.i.d.\ directions $U_{1:m}\sim\omega$.
Assume $0\le \Delta_\infty^{u}\le \Delta_0$ for all $u$.
Fix $\alpha>1$, $\varepsilon\ge 0$, and $\gamma\in(0,1)$.
Let $\widehat\Delta_\infty(U_\ell)$ be random quantities satisfying $\Pr\!\Big(\Delta_\infty^{U_\ell}\le \widehat\Delta_\infty(U_\ell)\ \text{for all }\ell\in[m]\Big)\ge 1-\gamma/2$,
where the probability is over the conditional sampling used to form $\widehat\Delta_\infty(U_\ell)$.
For $\sigma^2>0$, define
$c(\sigma):=\frac{\alpha(\alpha-1)}{2\sigma^2}$, $ \widehat\mu_m(\sigma)
:=\frac{1}{m}\sum_{\ell=1}^{m}
\exp\!\Big(c(\sigma)\,\big(\widehat\Delta_\infty(U_\ell)\big)^2\Big)$, and $b(\sigma):=\exp\!\big(c(\sigma)\Delta_0^2\big)$.
Gaussian SWM uses $N\sim\mathcal N(0,\sigma^2 I_d)$ with $\sigma^2$ satisfying
\begin{equation}\label{eq:finite_sample_joint_sigma}
    \frac{1}{\alpha-1}\log\!\left(
\widehat\mu_m(\sigma)
+
\big(b(\sigma)-1\big)\sqrt{\frac{\log(4/\gamma)}{2m}}
\right)
\leq
\varepsilon,
\end{equation}
then with probability at least $1-\gamma$ (over both the conditional sampling and the draw of $U_{1:m}$),
$\mathcal M(X)=f(X)+\mathcal N(0,\sigma^2 I_d)$ satisfies $(\alpha,\varepsilon,\omega)$-\textit{Joint-SRPP}.
\end{theorem}

Since $c(\sigma)=\alpha(\alpha-1)/(2\sigma^2)$ is decreasing in $\sigma$, both $\widehat\mu_m(\sigma)$ and $b(\sigma)$ are
decreasing in $\sigma$, hence the left-hand side of (\ref{eq:finite_sample_joint_sigma}) is decreasing.
Therefore the Joint-SRPP calibration condition $\Phi(\sigma)\le \varepsilon$ can be enforced by a one-dimensional bisection
search over $\sigma$.

\noindent\textbf{Finite $\omega$. }
If $\omega$ is discrete on $\mathcal U=\{u_1,\ldots,u_m\}$ with weights $\{\omega_\ell\}$ (no Monte Carlo directions),
for Ave-SRPP, we can replace $\tfrac{1}{m}\sum_{\ell=1}^m(\cdot)$ by $\sum_{\ell=1}^m \omega_\ell(\cdot)$ and drop the
Hoeffding term $\Delta_0^2\sqrt{\log(4/\gamma)/(2m)}$.
For Joint-SRPP, the same statement applies with $U_\ell=u_\ell$.
Thus, only the finite-sample (distribution-estimation) layer remains.

\begin{proposition}[Computational Advantage of SMD, Informal]\label{prop:comp_advantage_swm}
    With $n$ samples per conditional distribution $P^{f,s}_{\theta}$, query dimension $d$, $\mathsf{M}:=|\widehat\Theta|\,|\widehat{\mathcal Q}|$ instances, and $m$ slice directions, the runtime of computing SW-sensitivity is $O\big(\mathsf{M}\,m\,(dn+n\log n)\big)$.
    When the unsliced WM is computed by the standard discrete OT (Kantorovich) formulation on empirical measures, the runtime is $O\big(\mathsf{M}\,(dn^2+n^3\log n)\big)$.
\end{proposition}

Proposition \ref{prop:comp_advantage_swm} shows that slicing converts d-dimensional OT into $m$ 1-D computations, reducing runtime from quadratic/cubic in n to nearly linear (up to sorting) (see App. \ref{app:computational_advantage} for the formal version). 
Beyond SWD, another mainstream scalable OT approximation is entropic OT via the Sinkhorn algorithm \cite{cuturi2013sinkhorn,luise2018differential,wong2019wasserstein,chizat2020faster}. 
App.~\ref{app:Sinkhorn} describes a Sinkhorn–Wasserstein mechanism for RPP and summarizes why it is less suitable for privacy calibration and privatization.

\section{SRPP-Learning with Gradient Clipping}\label{sec:SRPP-SGD}

For stochastic iterative algorithms such as SGD, directly bounding the end-to-end sensitivity of an entire training run is typically intractable under both DP and Pufferfish scenarios. 
Following the decompose–then–compose paradigm of DP-SGD with gradient clipping (Sec.~\ref{sec:preli_DP_SFO}), we extend SRPP to stochastic learning by introducing SRPP-SGD with gradient clipping, an SRPP analogue of DP-SGD in the Pufferfish setting. We begin by fixing notation.

\noindent\textbf{Dataset. }
Let $x=(x_1,\dots,x_n)\in \mathcal{X} = \bar{\mathcal{X}}^n$ be the finite samples of data points, where each $x_{i} = (a_{i}, b_{i})\in \bar{\mathcal{X}}$ consists of features $a_{i}$ and label $b_{i}$.
In this section, we take $P_{\theta}^{X}$ as the empirical (marginal) distribution of the data samples $x=(x_1,\dots,x_n)\in \mathcal{X} = \bar{\mathcal{X}}^n$ for all $\theta\in\Theta$.
At iteration $t$, we draw a random mini-batch $\mathsf{I}_t \subseteq [n]$ using a subsampling scheme $\{\eta, \rho\}$ characterized by a rate $\eta$ ($\eta=\eta_t$ may vary with $t$) and a scheme type $\rho$, where $\rho \in \{\textsf{WR}, \textsf{WOR}, \textsf{Poisson}\}$ denotes sampling with-replacement (\textsf{WR}), without-replacement (\textsf{WOR}), or via a Poisson process (\textsf{Poisson}), respectively.
Let $X^{t} = (X_{i})_{i\in \mathsf{I}_{t}}$ denote the random mini-batch subsampled from $x$, with $x^{t}$ representing a particular realization.

\noindent\textbf{Sampling randomness. }
Given a subsampling scheme $\{\eta,\rho\}$, let $(\mathcal{R},\mathscr{R},\mathbb{P}_{\eta,\rho})$ denote its probability space, with $\mathbb{P}_{\eta,\rho}$ the corresponding probability measure.
For a sampling draw $r\sim \mathbb{P}_{\eta,\rho}$, let $\mathsf{I}_t(r)\subseteq[n]$ denote the (multi)set of selected indices at iteration $t$ and $B_t(r) := |\mathsf{I}_t(r)|$ its size.
For $\rho \in \{\textsf{WR},\textsf{WOR}\}$, we have $B_t(r) \equiv B$ (deterministic).
For Poisson subsampling ($\rho = \textsf{Poisson}$; see App. \ref{app:poisson_subsampling} for details), $B_t(r)$ is random.
We write $\mathcal{B} : \bar{\mathcal{X}}^{n}\times\mathcal{R}\to \bar{\mathcal{X}}^{B_t(r)}$ and $x^{t} = \mathcal{B}(x;r) := (x_{i})_{i\in \mathsf{I}_{t}(r)}$ for the (mini-batch) subsampling operator.

\noindent\textbf{Empirical Risk Minimization. }
In supervised learning, each example is $x_i = (a_i,b_i)$ with features $a_i$ and label $b_i$.
Let $\ell(\xi;x_i)$ denote the per-sample loss of a model with parameter $\xi$.
The empirical risk minimization (ERM) problem is $\min_{\xi} F(\xi) := \frac{1}{n}\sum_{i=1}^n \ell(\xi;x_i)$.

\noindent\textbf{(Clipped) SGD as a history-dependent query.}
At iteration $t$, a subsampling scheme $\{\eta,\rho\}$ draws randomness
$r_t\sim\mathbb P_{\eta,\rho}$ and selects a minibatch
$\mathcal B_t(x;r_t)= (x_i)_{i\in\mathsf I_t}$ of size $B$.
Let $y_{<t}\in\mathcal Y_{<t}$ denote the history available at iteration $t$;
for standard SGD, $y_{<t}=\xi_{t-1}$.
Define the per-example gradients
\[
g_t(x_i;\xi_{t-1}):=\nabla_\xi \ell(\xi_{t-1};x_i),
\]
and the clipped batch gradient $\bar{g}_t(r_t):=\frac{1}{B}\sum_{i\in\mathsf{I}_t}\tilde{g}_t(x_i;\xi_{t-1})$,
where $\tilde{g}_t(x_i)$ is the per-example clipped gradient (Line~4 of Algorithm~\ref{alg:srpp-sgd}).
The pre-noise update (query) with step size $\kappa_{t}>0$ is
\[
f_t(x,y_{<t};r_t)
:=
\xi_{t-1}-\kappa_t\,\bar g_t(r_t),
\]
and we suppress $r_t$ when unambiguous.




\begin{algorithm}[t]
  \caption{SRPP-SGD with gradient clipping}
  \label{alg:srpp-sgd}
  \begin{algorithmic}[1]
    \REQUIRE Data $x_1,\dots,x_n$; loss $\ell(\xi;x)$; subsampling scheme $\{\eta, \delta\}$;
    clip norm $C$; update maps $(T_t)_{t=1}^T$; noise covariances $(\Sigma_t)_{t=1}^T$.
    \STATE Initialize parameters $\xi_0$; set history $y_{<1} \gets \xi_0$
    \FOR{$t = 1,\dots,T$}
      \STATE Sample minibatch index set $\mathsf{I}_t \subseteq [n]$ via scheme $\{\eta, \delta\}$; let $B_t \gets |\mathsf{I}_t|$
      \STATE For each $i \in \mathsf{I}_t$:
      \[
      \begin{aligned}
      g_t(x_i) &\gets \nabla_{\xi}\,\ell(\xi_{t-1}; x_i),\\
      \tilde g_t(x_i) &\gets g_t(x_i)\cdot
      \min\!\left\{1,\; \frac{C}{\lVert g_t(x_i)\rVert_2}\right\}.
      \end{aligned}
      \]
      \STATE $\bar{g}_t \gets \frac{1}{B_t} \sum_{i \in \mathsf{I}_t} \tilde{g}_t(x_i)$
      \STATE Draw $N_t \sim \mathcal{N}(0, \Sigma_t)$ and set $\hat{g}_t \gets \bar{g}_t + N_t$
      \STATE $\xi_t \gets T_t(\hat{g}_t; y_{<t})$ 
      \STATE Update history $y_{<t+1}$ 
    \ENDFOR
    \STATE \textbf{return} $\xi_T$; compute SRPP parameters from $(\Sigma_t)$ via HUC/sa-HUC accountant.
  \end{algorithmic}
\end{algorithm}

In DP-SGD, composition is key to the effectiveness of the DP privatization \citep{Abadi2016,chen2020understanding}. 
But PP/RPP generally lack graceful composition \citep{kifer2014pufferfish,Song2017,pierquin2024renyi,zhang2025differential} because conditioning on secrets via the prior can induce dependencies between the outputs of mechanisms that are otherwise independent \citep{zhang2025differential}.
The slicing process of SRPP does not circumvent such induced independence, and thus SRPP inherits the RPP's lack of proper composition properties.
To address this and make SRPP applicable to stochastic learning, we introduce the \textit{History-Uniform Cap.} 

\subsection{History-Uniform Cap}\label{sec:HUC}

We define HUC for a fixed Pufferfish scenario $(\mathcal S,\mathcal Q,\Theta)$ and a slice profile $(\mathcal U,\omega)$.
For any $(s_i,s_j)\in\mathcal Q$ and $\theta\in\Theta$, let $\Pi(\mu^\theta_{s_i},\mu^\theta_{s_j})$ denote the set of couplings between $\mu^\theta_{s_i}$ and $\mu^\theta_{s_j}$ (Definition~\ref{def:coupling}), where $\mu^\theta_{s_i}$ and $\mu^\theta_{s_j}$ are the distributions of $X$ given $(S=s_i,\theta)$ and of $X'$ given $(S=s_j,\theta)$, respectively.

\begin{definition}[History-Uniform Cap (HUC)]\label{def:HUC}
A vector $h_{t} = (h_{t,i})_{i=1}^{m} \in \mathbb{R}^{m}_{+}$ is a HUC for $\mathcal{U}=\{u_i\}^{m}_{i=1}$ at iteration $t$ if, for all $\theta\in\Theta$, $(s_{i}, s_{j})\in\mathcal{Q}$, history $y_{<t}$ in the support, and every coupling $\gamma\in \Pi(\mu^\theta_{s_i},\mu^\theta_{s_j})$, we have
$\gamma\times\mathbb{P}_{\eta,\rho}$-a.s. in $((X,X'),R_t)$,
\begin{equation}\label{eq:HUC}
\big| \langle f_t(X,y_{<t}; R_t)-f_t(X',y_{<t}; R_t), u_{i}\rangle  \big|
 \leq \sqrt{h_{t,i}},
 \forall i\in[m].
\end{equation}
\end{definition}

Thus, a HUC $h_t=(h_{t,i})_{i\in[m]}$ provides a \textit{pathwise} bound that is uniform over histories:
for every admissible coupling and every history $y_{<t}$ in support, the directional displacement
along $u_i$ is almost surely bounded by $\sqrt{h_{t,i}}$, and the constants $h_{t,i}$ do not depend
on the particular value of $y_{<t}$.
In matrix form, we may represent these caps by a PSD \textit{HUC matrix} $H_t\succeq 0$ such that,
$\gamma\times\mathbb{P}_{\eta,\rho}$-a.s.\ in $((X,X'),R_t)$,
\[
\smash{\big|\big\langle f_t(X,y_{<t};R_t)-f_t(X',y_{<t};R_t), u\big\rangle\big|^2
\;\leq\;
u^\top H_t u,
\quad \forall u\in\mathcal{U},}
\]
and in particular $h_{t,i}=u_i^\top H_t u_i$ for all $i\in[m]$.
We view $H_t$ as a deterministic, history-uniform sensitivity ellipsoid at iteration $t$, while $h_t$
records its values on the slice set $\mathcal{U}$.

We next show that a finite HUC always exists. 
We start by introducing necessary notions. 

\noindent\textbf{Discrepancy cap. }
A general (probabilistic) secret–dataset relation may alter attributes, individual data points, or global properties of datasets input to algorithms as the secret varies.
Under minibatch subsampling from a fixed finite dataset $x=(x_1,\dots,x_n)$, its effect on the iteration-$t$ update is fully captured by \textit{how many sampled points differ} between two coupled datasets.\footnote{ App.~\ref{app:group-dp-vs-srpp} clarifies how this coupling-based viewpoint differs from \textit{group} DP's worst-case $k$-neighbor model.}
For a sampling draw $r\sim\mathbb{P}_{\eta,\rho}$ with index (multi)set $\mathsf{I}_t(r)$ and size $B_t=|\mathsf{I}_t(r)|$\footnote{We replace $B_t$ by $B^{\mathrm{max}}_{t}:= \smash{\operatorname*{ess\ sup}_{R_{t}\sim \mathbb{P}_{\eta,\rho}}|\mathsf{I}_{t}(R_{t})|}$ if the size is random.}, define the \textit{discrepancy}
\begin{equation}\label{eq:discrepancy_K}
    K_t(x,x';r)
    :=\sum_{j\in\mathsf{I}_t(r)}\mathbf{1}\{x_{j}\neq x'_{j}\}.
\end{equation}
Given a coupling $\gamma \in \Pi(\mu^{\theta}_{s_{i}}, \mu^{\theta}_{s_{j}})$, define the \textit{iteration-$t$ discrepancy cap} as
\begin{equation}\label{eq:discrepancy_cap}
    K_{t}(\gamma)
:=
\operatorname*{ess\ sup}_{((X,X'),R_t) \sim \gamma\times \mathbb{P}_{\eta, \rho}}
K_{t}(X,X'; R_t).
\end{equation}
If a pair $(s_i, s_j)\in \mathcal{Q}$ guarantees that $X$ and $X'$ differ in at most $\overline{K}$ coordinates $\gamma$-a.s., then
$K_{t}(\gamma)\leq \min\{\overline{K}, B_{t}\}$ at iteration $t$ for the draw $r$.
We say that a value $K_t$ is a \textit{feasible discrepancy cap} at iteration $t$ (for the realized $r$) if $K_t \in [K_t(\gamma),\,B_t]$ for all $\theta\in\Theta$ and $(s_i,s_j)\in\mathcal{Q}$.
Intuitively, a feasible discrepancy cap provides a uniform upper bound on the number of differing sampled points that can arise at round $t$ across all priors and admissible secret pairs.
See App.~\ref{app:estimate_caps} for how to estimate $K_t(\gamma)$.

\noindent\textbf{Update map and $L_t$-Lipschitz.}
At iteration $t$, we write the pre-perturbation update as a map
$T_t:\mathbb{R}^d\times \mathcal{Y}_{<t}\to \mathbb{R}^d$ such that, for any realized sampling draw $r$,
\[
f_t(x,y_{<t};r)
=
T_t\bigl(\bar g_t(\mathcal{B}(x;r));\,y_{<t}\bigr),
\]
where $\bar g_t(\mathcal{B}(x;r))\in\mathbb{R}^d$ is the clipped and averaged minibatch gradient (Line~5 of Algorithm~\ref{alg:srpp-sgd}).
We say that $T_t$ is \textit{$L_t$-Lipschitz} in its first argument if there exists $L_t<\infty$ such that, for all $z,z'\in\mathbb{R}^d$ and all histories $y_{<t}$ in the support,
\[
\bigl\|T_t(z; y_{<t}) - T_t(z'; y_{<t})\bigr\|_{2}
\;\leq\;
L_t \,\|z - z'\|_{2}.
\]
Appendix~\ref{app:L_t_Update_example} gives two common examples of $L_t$-Lipschitz update maps.

\begin{assumption}[Slice-wise Lipschitz updates]\label{assp:slicewise_Lipschitz}
For each iteration $t$ and each direction $u_i\in\mathcal{U}$, there exists a finite constant $L_{t,i}<\infty$ such that, for all $z,z'\in\mathbb{R}^d$ and all histories $y_{<t}$ in the support,
\[
\bigl|u_i^{\top}\bigl(T_t(z; y_{<t}) - T_t(z'; y_{<t})\bigr)\bigr|
\;\leq\;
L_{t,i}\,\bigl|u_i^{\top}(z - z')\bigr|.
\]
\end{assumption}

Assumption~\ref{assp:slicewise_Lipschitz} enforces a directional non-expansiveness property: for each $u_i\in\mathcal U$, the $u_i$-component of the update cannot amplify the $u_i$-component of a gradient perturbation by more than $L_{t,i}$.
This holds for standard SGD and for preconditioned updates with bounded linear preconditioners.
Assumption~\ref{assp:slicewise_Lipschitz} is a convenient sufficient condition for a fixed finite direction set $\mathcal U$.
One may alternatively assume global $\ell_2$-Lipschitzness with constant $L_{t}$ of $T_t$, which implies Assumption~\ref{assp:slicewise_Lipschitz} with $L_{t,i}\le L_t$ for all $i$, or impose other geometric constraints. Appendix~\ref{app:perlayer_clipping} discusses structured (e.g., per-layer) clipping/perturbation variants that yield analogous slice-wise bounds with modified constants.

\begin{proposition}\label{prop:exist_HUC}
Fix a slice profile $(\mathcal U,\omega)$.
Suppose per-example gradients are $\ell_2$-clipped at threshold $C>0$.
Let $K_t$ be a feasible discrepancy cap at iteration $t$, and let $B_t\ge 1$ denote the (deterministic) minibatch size.
Suppose Assumption~\ref{assp:slicewise_Lipschitz} holds.
Then, the following is a valid HUC for each $u_i\in\mathcal{U}$,
\begin{equation}\label{eq:feasible_HUC_slicewise}
h_{t,i}
\;:=\;
\Bigl(\frac{2K_t L_{t,i} C}{B_t}\Bigr)^2.
\end{equation}
\end{proposition}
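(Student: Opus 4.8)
I would prove the statement by a pathwise, direction‑by‑direction chain of three elementary bounds, each holding $\gamma\times\mathbb{P}_{\eta,\rho}$‑almost surely and uniformly in the history $y_{<t}$. Fix $\theta\in\Theta$, $(s_i,s_j)\in\mathcal{Q}$, a history $y_{<t}$ in the support, a coupling $\gamma\in\Pi(\mu^\theta_{s_i},\mu^\theta_{s_j})$, and a slice $u_i\in\mathcal{U}$; then work with a generic realization $((X,X'),R_t)$ of $\gamma\times\mathbb{P}_{\eta,\rho}$. The first step is a \emph{Lipschitz reduction}: writing the update as $f_t(x,y_{<t};r)=T_t(\bar{g}_t(\mathcal{B}(x;r));y_{<t})$ and invoking Assumption~\ref{assp:slicewise_Lipschitz} with the (random) gradient vectors $z=\bar{g}_t(\mathcal{B}(X;R_t))$ and $z'=\bar{g}_t(\mathcal{B}(X';R_t))$ reduces the left‑hand side of~\eqref{eq:HUC} to a slice‑projected difference of averaged clipped minibatch gradients:
\[
\bigl|\langle f_t(X,y_{<t};R_t)-f_t(X',y_{<t};R_t),\,u_i\rangle\bigr|
\;\leq\;
L_{t,i}\,\bigl|u_i^{\top}\bigl(\bar{g}_t(\mathcal{B}(X;R_t))-\bar{g}_t(\mathcal{B}(X';R_t))\bigr)\bigr|.
\]

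The second step bounds the averaged‑gradient difference. I would expand $\bar{g}_t(\mathcal{B}(x;r))=\tfrac1{B_t}\sum_{j\in\mathsf{I}_t(r)}\tilde{g}_t(x_j)$; the summands indexed by $j$ with $X_j=X'_j$ cancel, so only the $K_t(X,X';R_t)$ discrepant indices from~\eqref{eq:discrepancy_K} survive. Using $\|u_i\|_2=1$ (so $|u_i^{\top}v|\leq\|v\|_2$ by Cauchy--Schwarz), the triangle inequality over the surviving terms, and the clipping bound $\|\tilde{g}_t(x)\|_2\leq C$ for every $x$ (whence each surviving summand has $\ell_2$‑norm at most $2C$), this gives
\[
\bigl|u_i^{\top}\bigl(\bar{g}_t(\mathcal{B}(X;R_t))-\bar{g}_t(\mathcal{B}(X';R_t))\bigr)\bigr|
\;\leq\;
\frac{2C}{B_t}\,K_t(X,X';R_t).
\]
The third step replaces the random discrepancy count by its cap: since $K_t$ is a feasible discrepancy cap at iteration $t$, we have $K_t(X,X';R_t)\leq K_t$ $\gamma\times\mathbb{P}_{\eta,\rho}$‑a.s., for every admissible $\theta$, $(s_i,s_j)$, and coupling $\gamma$. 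Chaining the three displays yields, almost surely, $\bigl|\langle f_t(X,y_{<t};R_t)-f_t(X',y_{<t};R_t),\,u_i\rangle\bigr|\leq 2K_tL_{t,i}C/B_t=\sqrt{h_{t,i}}$; since every constant on the right is independent of $y_{<t}$, of $\gamma$, and of the secret pair/prior, $h_t=(h_{t,1},\dots,h_{t,m})$ satisfies Definition~\ref{def:HUC}.

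The only part I expect to require any care is the measure‑theoretic bookkeeping rather than the estimate itself: one must check that the essential supremum defining $K_t(\gamma)$, the a.s.\ clipping bound, and the (deterministic) slice‑wise Lipschitz bound can all be asserted under the common product measure $\gamma\times\mathbb{P}_{\eta,\rho}$ and combined without loss, and — crucially for the HUC notion — that the resulting constant is genuinely uniform over \emph{all} couplings and \emph{all} histories simultaneously, not just for a fixed one. Once this uniformity is made explicit, the remainder is a triangle‑inequality computation controlled by the clipping radius $C$ and the per‑round discrepancy cap $K_t$.
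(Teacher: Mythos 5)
Your proof is correct and takes essentially the same route as the paper's: clipping plus the discrepancy cap controls the $\ell_2$ (hence directional) difference of averaged minibatch gradients, and Assumption~\ref{assp:slicewise_Lipschitz} carries this through the update map. The only cosmetic difference is that you apply the Lipschitz step before the gradient-difference bound while the paper does these two steps in the opposite order, but the two compose identically.
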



Proposition~\ref{prop:exist_HUC} provides an explicit worst-case, history-uniform bound on the directional displacement of the iteration-$t$ update.
If at most $K_t$ sampled examples differ in a minibatch of size $B_t$, then the averaged clipped gradient can change by at most $\tfrac{2K_{t}C}{B_{t}}$ in any direction.
Assumption~\ref{assp:slicewise_Lipschitz} further implies that the $u_i$-component of the update can expand this change by at most a factor $L_{t,i}$.
Thus, the directional shift along $u_i$ is bounded by $\tfrac{2K_{t} L_{t,i} C}{B_{t}}$, and hence the squared shift is bounded by $(\tfrac{2K_{t} L_{t,i} C}{B_{t}})^{2}$, which matches the cap $h_{t,i}$ in~\eqref{eq:feasible_HUC_slicewise}.
Appendix~\ref{app:attainable_HUC} characterizes HUC's \textit{minimality} and \textit{attainability}.

\subsection{Subsampling-Aware HUC}\label{sec:mean_square_HUC}

Any feasible discrepancy cap $K_t$ used in $h_{t,i}$ is a uniform worst-case upper bound on the random discrepancy $K_t(X,X';R_t)$ under $\gamma\times\mathbb{P}_{\eta,\rho}$.
While this worst-case viewpoint is sound, it can be overly conservative in SGD, where the subsampling randomness $\eta$ is independent of the prior and has a genuine averaging effect.
To better reflect this structure and improve utility, we introduce a \textit{subsampling-aware HUC} (sa-HUC), which uniformly bounds the \textit{expected} squared directional shifts over subsampling, while retaining worst-case uniformity over priors, secrets, and couplings.

\begin{definition}[Subsampling-Aware History-Uniform Cap (sa-HUC)]
\label{def:msHUC}
Fix $t$ and a slicing set $\mathcal{U} = \{u_i\}_{i=1}^m \subset \mathbb{S}^{d-1}$.
A vector $h_t^{\mathsf{sa}} = (h_{t,i}^{\mathsf{sa}})_{i=1}^m \in \mathbb{R}_+^m$ is a sa-HUC for $\mathcal{U}$ at iteration $t$
if, for all $\theta \in \Theta$, all $(s_i,s_j)\in\mathcal{Q}$, all histories $y_{<t}$ in the support, and every coupling
$\gamma \in \Pi(\mu_{s_i}^\theta,\mu_{s_j}^\theta)$, we have, for $\gamma$-almost every $(X,X')$, all $i=1,\dots,m,$
\[
  \smash{\mathbb{E}_{\eta_t}
  \bigl[\big|\langle f_t(X,y_{<t};R_t) - f_t(X',y_{<t};R_t), u_i\rangle\big|^2
  \bigr]\leq
  h_{t,i}^{\mathsf{sa}},}
\]
where the expectation is taken only over the subsampling randomness $R_t \sim \mathbb{P}_{\eta,\rho}$.
\end{definition}

Let $\overline{K}_t^2$ be any scalar such that, for all
$\theta\in\Theta$, $(s_i,s_j)\in\mathcal{Q}$, all histories $y_{<t}$
in the support, and all couplings
$\gamma\in\Pi(\mu^\theta_{s_i},\mu^\theta_{s_j})$, we have, for
$\gamma$-almost every $(X,X')$,
\begin{equation}\label{eq:ms_disc_cap}
\mathbb{E}_{\eta_t}\bigl[K_t(X,X';R_t)^2\bigr]\leq\overline{K}_t^2.
\end{equation}


\begin{proposition}\label{prop:msHUC_from_K2}
Fix an iteration $t$ and a slicing set $\mathcal{U} = \{u_i\}_{i=1}^m \subset \mathbb{S}^{d-1}$.
Assume:
\begin{itemize}[topsep=1pt,itemsep=-2pt]
    \item Per-example gradients are $\ell_2$-clipped at radius $C>0$.
    \item The minibatch size $B_t \ge 1$ is deterministic.
    \item Assumption~\ref{assp:slicewise_Lipschitz} holds with constants $\{L_{t,i}\}_{i=1}^m$.
\end{itemize}
For each $\theta\in\Theta$, $(s_i,s_j)\in\mathcal{Q}$, history $y_{<t}$ in the support, and
$\gamma\in\Pi(\mu^\theta_{s_i},\mu^\theta_{s_j})$, let $\overline{K}_t^2$ satisfy (\ref{eq:ms_disc_cap}).
Then the vector $h_t^{\mathsf{sa}} = (h_{t,i}^{\mathsf{sa}})_{i=1}^m$ with
\begin{equation}\label{eq:msHUC_closed_form_final}
  h_{t,i}^{\mathsf{sa}}
  :=
  \Bigl(\frac{2 L_{t,i} C}{B_t}\Bigr)^2 \overline{K}_t^2,
  \qquad i=1,\dots,m,
\end{equation}
is a valid sa-HUC.
\end{proposition}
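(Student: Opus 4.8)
The plan is to reuse the pathwise, per-draw directional estimate underlying Proposition~\ref{prop:exist_HUC}, but \emph{without} collapsing the random discrepancy $K_t(X,X';\eta_t)$ to a deterministic feasible cap: instead I would keep it as a random variable, square the directional inequality pathwise, and only then take the expectation over the subsampling draw, invoking the mean-square bound~\eqref{eq:ms_disc_cap} at the very end.

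Concretely, fix $\theta\in\Theta$, an admissible secret pair $(s,s')\in\mathcal{Q}$, a history $y_{<t}$ in the support, a coupling $\gamma\in\Pi(\mu^\theta_{s},\mu^\theta_{s'})$, and a slice $u_i\in\mathcal{U}$. Writing $f_t(x,y_{<t};\eta_t)=T_t\bigl(\bar g_t(\mathcal{B}(x;\eta_t));\,y_{<t}\bigr)$ and applying Assumption~\ref{assp:slicewise_Lipschitz}, the directional displacement is controlled by $L_{t,i}\,\bigl|\langle \bar g_t(\mathcal{B}(X;\eta_t))-\bar g_t(\mathcal{B}(X';\eta_t)),\,u_i\rangle\bigr|$. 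Since $\bar g_t(\mathcal{B}(x;\eta_t))=\tfrac1{B_t}\sum_{j\in\mathsf{I}_t(\eta_t)}\tilde g_t(x_j)$, the summands with $X_j=X'_j$ vanish and each of the remaining $K_t(X,X';\eta_t)$ summands has $\ell_2$-norm at most $2C$ by clipping; the triangle inequality and Cauchy--Schwarz ($\|u_i\|_2=1$) then give, $\gamma\times\mathbb{P}_{\eta}$-a.s.,
\[
\bigl|\langle f_t(X,y_{<t};\eta_t)-f_t(X',y_{<t};\eta_t),\,u_i\rangle\bigr|
\;\le\;
\frac{2L_{t,i}C}{B_t}\,K_t(X,X';\eta_t).
\]
This is the same clipping/averaging estimate as in Proposition~\ref{prop:exist_HUC}, with the step $K_t(X,X';\eta_t)\le K_t$ deliberately \emph{not} taken.

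I would then square both sides (legitimate, since both are nonnegative) and, using that $B_t$ is deterministic, integrate over $\eta_t\sim\mathbb{P}_{\eta}$ for $\gamma$-almost every $(X,X')$:
\[
\mathbb{E}_{\eta_t}\!\Bigl[\bigl|\langle f_t(X,y_{<t};\eta_t)-f_t(X',y_{<t};\eta_t),\,u_i\rangle\bigr|^2\Bigr]
\;\le\;
\Bigl(\tfrac{2L_{t,i}C}{B_t}\Bigr)^2\mathbb{E}_{\eta_t}\!\bigl[K_t(X,X';\eta_t)^2\bigr]
\;\le\;
\Bigl(\tfrac{2L_{t,i}C}{B_t}\Bigr)^2\overline{K}_t^2
=h^{\mathsf{ms}}_{t,i},
\]
where the final inequality is~\eqref{eq:ms_disc_cap}. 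Since $\theta$, $(s,s')$, $y_{<t}$, $\gamma$, and $i$ were arbitrary, $h^{\mathsf{ms}}_t=(h^{\mathsf{ms}}_{t,i})_{i=1}^m$ satisfies Definition~\ref{def:msHUC}, which is the claim.

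I do not anticipate a real obstacle: the only point needing care is the \emph{order} of operations --- square the pathwise bound, then take the subsampling expectation, then apply~\eqref{eq:ms_disc_cap} at the level of the realized pair $(X,X')$ (not after integrating) --- which is exactly what lets the averaging effect of subsampling show up and makes this cap genuinely smaller than substituting a worst-case discrepancy cap into Proposition~\ref{prop:exist_HUC}.
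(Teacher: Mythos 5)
Your proposal is correct and matches the paper's own proof essentially step for step: both reuse the clipping/averaging directional estimate from Proposition~\ref{prop:exist_HUC} while keeping $K_t(X,X';\eta_t)$ random, then square the pathwise bound, take the expectation over $\eta_t$, and invoke~\eqref{eq:ms_disc_cap} last. The emphasis you place on the order of operations (square, integrate, then cap) is exactly the point the paper makes as well.
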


Proposition~\ref{prop:msHUC_from_K2} shows that, under slice-wise Lipschitz updates and $\ell_2$-clipping, the mean-square discrepancy $\overline{K}_t^2$ directly yields a closed-form sa-HUC.
In particular, if $\overline{K}_t^2$ uniformly bounds the expected squared discrepancy $K_t^2$ over subsampling, then the mean-square directional shift along each slice $u_i$ is controlled by
$h_{t,i}^{\mathsf{sa}} = \bigl(\tfrac{2L_{t,i}C}{B_t}\bigr)^2 \overline{K}_t^2$.
Thus $h_t^{\mathsf{sa}}$ is a valid sa-HUC.

To calibrate privacy using $\mathbb{E}_{\eta_t}[K_t^2]$ instead of a worst-case $K_t$,
we now introduce mean-square variants of Ave- and Joint-SRPP that match the sa-HUC notion.
Let $\eta$ denote all subsampling randomness used in SRPP-SGD, and for each fixed
$\eta$, let $\mathcal{M}^{\theta,\eta}_s$ be the distribution of the mechanism's
output under prior $\theta$ and secret $s$, conditional on $\eta$.

\begin{definition}[Subsampling-Aware Average SRPP (sa-Ave-SRPP)]
\label{def:ms_Ave_SRPP}
Fix $\alpha>1$ and a slice profile $(\mathcal{U},\omega)$.
A mechanism $\mathcal{M}$ with subsampling randomness $\eta$ satisfies
\textit{$(\alpha,\varepsilon,\omega)$-sa-Ave-SRPP} if, for all
$\theta\in\Theta$ and all $(s_i,s_j)\in\mathcal{Q}$,
\[
  \mathbb{E}_{\eta}\Big[
    \mathtt{ASD}^{\omega}_{\alpha}
    \big(
      \mathcal{M}^{\theta,\eta}_{s_i}
      \,\big\|\,
      \mathcal{M}^{\theta,\eta}_{s_j}
    \big)
  \Big]
  \;\le\;
  \varepsilon.
\]
Here the expectation is taken only over $\eta$.
\end{definition}

Compared to standard Ave-SRPP, sa-Ave-SRPP keeps the same sliced R\'enyi divergence
and the same uniformity over priors and secrets, but relaxes the guarantee from
a worst-case bound for each subsampling pattern $\eta$ to an \textit{on-average}
bound over the subsampling randomness. 

\begin{definition}[Subsampling-Aware Joint SRPP (sa-Joint-SRPP)]
\label{def:ms_Joint_SRPP}
Under the same setup as Definition~\ref{def:ms_Ave_SRPP},
a mechanism $\mathcal{M}$ satisfies
\textit{$(\alpha,\varepsilon,\omega)$-sa-Joint-SRPP}
if, for all $\theta\in\Theta$ and all $(s_i,s_j)\in\mathcal{Q}$,
\[
  \mathbb{E}_{\eta}\Big[
    \mathtt{JSD}^{\omega}_{\alpha}
    \big(
      \mathcal{M}^{\theta,\eta}_{s_i}
      \,\big\|\,
      \mathcal{M}^{\theta,\eta}_{s_j}
    \big)
  \Big]
  \;\leq\;
  \varepsilon.
\]
\end{definition}

Here, ms-Joint-SRPP is the mean-square analogue of Joint-SRPP: it still treats the slice
index and projection jointly, but now requires that the joint sliced R\'enyi
divergence be small \textit{on average over subsampling} rather than for every fixed
realization of $\eta$. As with ms-Ave-SRPP, the only change from the standard
notion is replacing a worst-case guarantee in $\eta$ by an expectation over $\eta$.

\subsection{HUC Accountant for Gaussian Noise}

Similar to DP-SGD, an important issue of SRPP-SGD is computing the aggregated privacy cost of the entire training process, i.e., the \textit{composition part} of SRPP-SGD as a decompose-then-compose privatization scheme.
In this section, we set out to show how HUC can circumvent the RPP/SRPP's lack of graceful composition properties and enable a privacy \textit{accountant} process to compute the privacy costs along the SGD-based training process.
We refer to it as \textit{HUC accountant}.

We start by introducing the Gaussian slice R\'enyi Envelope, which gives a per-iteration, per-slice RPP cost after noise perturbation.

\begin{lemma}\label{lemma:gaussian_SRE}
    Fix $t$ and a slicing direction $u_{i}\in\mathbb{S}^{d-1}$.
    Let $N_{t}\sim \mathcal{N}(0, \Sigma_{t})$ be independent of $((X,X'), R_{t})$
    and set $Y_{t} = f_{t}(X,y_{<t}; R_{t}) + N_{t}$.
    Define $v_{t,i}:= u^{\top}_{i} \Sigma_{t} u_{i}$, and let
    $h_{t}=(h_{t,1}, \dots, h_{t,m})$ be any HUC vector. 
    Assume that $v_{t,i}>0$.
    Then, for any $\alpha > 1$, we have
    \begin{equation}
        \mathtt{D}_{\alpha}\left(
          \Pr\big(\langle Y_{t}, u_{i} \rangle \mid s_{i}, \theta\big)
          \Big\| 
          \Pr\big(\langle Y_{t}, u_{i} \rangle \mid s_{j}, \theta\big)
        \right)
        \leq \frac{\alpha}{2}\,\frac{h_{t,i}}{v_{t,i}}.
    \end{equation}
\end{lemma}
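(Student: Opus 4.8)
The plan is to project everything onto the fixed direction $u_i$, reducing the claim to a one-dimensional problem about two Gaussian-smoothed laws, and then to invoke the per-slice R\'enyi envelope of Lemma~\ref{lemma:SRPP-envelope} specialized to Gaussian noise. First I would write $\langle Y_t,u_i\rangle=\langle f_t(X,y_{<t};R_t),u_i\rangle+\langle N_t,u_i\rangle$ and observe that, since $N_t\sim\mathcal N(0,\Sigma_t)$ is independent of $((X,X'),R_t)$, the two summands are independent and $\langle N_t,u_i\rangle\sim\mathcal N(0,v_{t,i})$ with $v_{t,i}=u_i^\top\Sigma_t u_i$. Hence, for the fixed history $y_{<t}$ in the support, $\Pr(\langle Y_t,u_i\rangle\mid s,\theta)=\Psi^{u_i}_{\#}F^{\theta}_{s}\ast\mathcal N(0,v_{t,i})$, where $F^{\theta}_{s}$ now denotes the law of the randomized pre-noise update $f_t(X,y_{<t};R_t)$ under $X\sim\mu^\theta_s$, $R_t\sim\mathbb P_{\eta,\rho}$. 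So it suffices to control a 1-D R\'enyi divergence between two laws whose projected pre-noise versions are close in $\infty$-Wasserstein distance.

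Next I would extract that $\infty$-Wasserstein bound directly from the HUC. Fix $\theta$, $(s_i,s_j)\in\mathcal Q$, the history $y_{<t}$, and any coupling $\gamma\in\Pi(\mu^\theta_{s_i},\mu^\theta_{s_j})$; couple $(X,X')$ via $\gamma$ and feed the \emph{same} sampling draw $R_t\sim\mathbb P_{\eta,\rho}$ to both. Pushing $\gamma\times\mathbb P_{\eta,\rho}$ forward through $((x,x'),r)\mapsto(\langle f_t(x,y_{<t};r),u_i\rangle,\langle f_t(x',y_{<t};r),u_i\rangle)$ produces a coupling of $\Psi^{u_i}_{\#}F^{\theta}_{s_i}$ and $\Psi^{u_i}_{\#}F^{\theta}_{s_j}$, and by Definition~\ref{def:HUC} (the a.s.\ bound~\eqref{eq:HUC}) this coupling is supported on $\{(a,b):|a-b|\le\sqrt{h_{t,i}}\}$. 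Therefore $W_\infty(\Psi^{u_i}_{\#}F^{\theta}_{s_i},\Psi^{u_i}_{\#}F^{\theta}_{s_j})\le\sqrt{h_{t,i}}$, i.e. $z^{u_i}_\infty\le\sqrt{h_{t,i}}$ in the notation of Lemma~\ref{lemma:SRPP-envelope}.

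Finally I would combine the two steps. Applying Lemma~\ref{lemma:SRPP-envelope} (with the randomized query $f_t(\cdot,y_{<t};\cdot)$ and noise $\mathcal N(0,v_{t,i})$ on the line) gives $\mathtt D_\alpha(\Pr(\langle Y_t,u_i\rangle\mid s_i,\theta)\,\|\,\Pr(\langle Y_t,u_i\rangle\mid s_j,\theta))\le R_\alpha(\mathcal N(0,v_{t,i}),z^{u_i}_\infty)$, and since $R_\alpha(\zeta,\cdot)$ is non-decreasing in its radius argument this is at most $R_\alpha(\mathcal N(0,v_{t,i}),\sqrt{h_{t,i}})$. It then remains to evaluate the Gaussian envelope: using the closed form $\mathtt D_\alpha(\mathcal N(\mu_1,\sigma^2)\,\|\,\mathcal N(\mu_2,\sigma^2))=\tfrac{\alpha(\mu_1-\mu_2)^2}{2\sigma^2}$ for equal-variance Gaussians, one gets $R_\alpha(\mathcal N(0,\sigma^2),z)=\sup_{|a|<z}\tfrac{\alpha a^2}{2\sigma^2}=\tfrac{\alpha z^2}{2\sigma^2}$, so with $\sigma^2=v_{t,i}$ and $z=\sqrt{h_{t,i}}$ the bound becomes $\tfrac{\alpha}{2}\,h_{t,i}/v_{t,i}$, as claimed.

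I expect the main obstacle to be the measure-theoretic bookkeeping in the second step: propagating the HUC's coupling-wise, pathwise, history-uniform almost-sure bound through the pushforward into a genuine $W_\infty$ statement about the projected pre-noise laws, and checking that the identification with the static setting of Lemma~\ref{lemma:SRPP-envelope} is legitimate when the ``query'' is itself randomized through $R_t$. If one prefers not to route through Lemma~\ref{lemma:SRPP-envelope}, the same conclusion follows directly: conditionally on $((X,X'),R_t)$ the two output laws are $\mathcal N(\cdot,v_{t,i})$ Gaussians whose means differ by at most $\sqrt{h_{t,i}}$ (hence R\'enyi distance $\le\tfrac{\alpha}{2}h_{t,i}/v_{t,i}$), and the unconditional laws under $s_i$ and $s_j$ are mixtures of these over the common index $((X,X'),R_t)\sim\gamma\times\mathbb P_{\eta,\rho}$, so joint quasi-convexity of $\mathtt D_\alpha$ in its two arguments transfers the conditional bound to the mixtures; this is the only genuinely non-routine inequality in the argument.
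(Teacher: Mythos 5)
Your proposal is correct, and you actually present two valid routes. Your backup argument at the end --- condition on $((X,X'),R_t)$, observe that under both secrets the conditional law of $\langle Y_t,u_i\rangle$ is a Gaussian with variance $v_{t,i}$ whose means differ by $a(X,X';R_t)$ with $|a|\le\sqrt{h_{t,i}}$ (hence conditional R\'enyi cost $\le\tfrac{\alpha}{2}h_{t,i}/v_{t,i}$), then transfer this pointwise bound to the $\gamma\times\mathbb P_{\eta,\rho}$-mixtures via the common-weight mixture inequality for $\mathtt D_\alpha$ --- is exactly the paper's proof. The paper implements that transfer through a small lemma which is precisely the quasi-convexity/data-processing fact you cite.

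Your \emph{primary} route is a genuinely different decomposition. Instead of re-deriving the pointwise-then-mixture bound from scratch, you observe that the HUC condition, applied with the common draw $R_t$ to both coupled copies, exhibits a coupling of the two projected pre-noise laws $\Psi^{u_i}_{\#}F^\theta_{s_i}$ and $\Psi^{u_i}_{\#}F^\theta_{s_j}$ supported on $\{(a,b):|a-b|\le\sqrt{h_{t,i}}\}$, hence $W_\infty\le\sqrt{h_{t,i}}$; you then invoke the static envelope Lemma~\ref{lemma:SRPP-envelope} with 1-D Gaussian noise $\mathcal N(0,v_{t,i})$ and evaluate $R_\alpha(\mathcal N(0,v_{t,i}),z)=\alpha z^2/(2v_{t,i})$. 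This has the virtue of unifying the iterative per-step accounting with the static calibration machinery: a HUC is revealed to be nothing but a per-direction $W_\infty$ sensitivity bound for the (randomized) update, after which Lemma~\ref{lemma:SRPP-envelope} applies verbatim. The one point you flag correctly deserves care: Lemma~\ref{lemma:SRPP-envelope} is stated for a deterministic query $f$, whereas $f_t(\cdot,y_{<t};R_t)$ carries its own internal randomness. Its proof, however, uses only the $W_\infty$ bound between the projected pre-noise laws and the convolution structure of additive noise, neither of which depends on $f$ being deterministic; one simply takes $F^\theta_s$ to be the law of the randomized pre-noise output. So the identification you need is legitimate. The paper's direct argument avoids this appeal by re-running the mixture bound inline, at the cost of some repetition; your route is slightly more modular but requires the reader to accept the extension of Lemma~\ref{lemma:SRPP-envelope} to randomized queries.
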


For ease of exposition, we absorb any step-size $\kappa$ and preconditioning into the covariance $\Sigma_t$ of the Gaussian noise. Lemma \ref{lemma:gaussian_SRE} gives a per-slice R\'enyi cost at iteration t directly in terms of the HUC component $h_{t,i}$ and the variance $v_{t,i}$ of the projected noise.

\begin{theorem}\label{thm:HUC_SRPP_SGD}
    Fix a slicing profile $\{\mathcal{U}, \omega\}$.
Let $h = \{h_{1}, \dots, h_{T}\}$ with each $h_{t}=(h_{t,1}, \dots, h_{t,m})$ as a \textup{valid HUC} vector. 
Let $N_{t}\overset{\text{i.i.d.}}{\sim} \mathcal{N}(0, \sigma^{2}I_{d})$.
Then, for any $\alpha>1$, we have:
\begin{itemize}[topsep=1pt,itemsep=-2pt]
    \item[(i)] \textbf{\textup{Ave-SRPP:}} For any $\sigma^2 \geq \tfrac{\alpha}{2\epsilon} \sum\limits^{T}_{t=1}\sum\limits^{m}_{\ell = 1} \omega_{\ell} h_{t,\ell}$, Algorithm \ref{alg:srpp-sgd} is $(\alpha, \epsilon, \omega)$-Ave-SRPP.

    \item[(ii)] \textbf{\textup{Joint-SRPP:}} For any $\sigma^{2}\geq \tfrac{\alpha}{2\epsilon} \sum\limits^{T}_{t=1}\max_{\ell} h_{t,\ell}$, Algorithm \ref{alg:srpp-sgd} is $(\alpha, \epsilon, \omega)$-Joint-SRPP.
\end{itemize}
\end{theorem}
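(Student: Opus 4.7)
The plan is a decompose–then–compose argument that turns each iteration into a per-slice one-dimensional Gaussian mechanism, composes additively across iterations per slice using the history-uniform structure of HUC, and aggregates over $\omega$ in the two different ways required by Ave-SRPP and Joint-SRPP.

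First, I would specialize Lemma~\ref{lemma:gaussian_SRE} to isotropic noise $\Sigma_t = \sigma^2 I_d$, so that $v_{t,\ell} = u_\ell^\top \Sigma_t u_\ell = \sigma^2$ for every unit $u_\ell\in\mathcal{U}$. Combined with the HUC property in Definition~\ref{def:HUC}, this gives, for any fixed history $y_{<t}$, any $(\theta,s_i,s_j)\in\Theta\times\mathcal{Q}$, any coupling $\gamma\in\Pi(\mu^\theta_{s_i},\mu^\theta_{s_j})$, and any slice $u_\ell$,
\[
\mathtt{D}_{\alpha}\!\bigl(\Psi^{u_\ell}_{\#}\Pr(\hat g_t\mid y_{<t},s_i,\theta)\,\big\|\,\Psi^{u_\ell}_{\#}\Pr(\hat g_t\mid y_{<t},s_j,\theta)\bigr)\;\leq\;\tfrac{\alpha\,h_{t,\ell}}{2\sigma^{2}},
\]
with a right-hand side that does not depend on $y_{<t}$. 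Because the Gaussian draws $N_t$ are refreshed independently across iterations and the per-iteration bound is uniform in the past, the R\'enyi chain rule applied per slice along the coupled trajectories telescopes into
\[
\mathtt{D}_{\alpha}\!\bigl(\Psi^{u_\ell}_{\#}\mathcal{M}^{\theta}_{s_i}\,\big\|\,\Psi^{u_\ell}_{\#}\mathcal{M}^{\theta}_{s_j}\bigr)\;\leq\;\sum_{t=1}^{T}\tfrac{\alpha\,h_{t,\ell}}{2\sigma^{2}},\qquad \ell=1,\dots,m,
\]
which is the moments-accountant–style composition used for DP-SGD~\cite{Abadi2016}, applied slice by slice.

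Next, I would aggregate over $\omega$. For part (i), integrating the per-slice bound against the discrete $\omega$ and swapping the double sum gives
\[
\mathtt{AveSD}^{\omega}_{\alpha}\bigl(\mathcal{M}^{\theta}_{s_i}\,\big\|\,\mathcal{M}^{\theta}_{s_j}\bigr)\;\leq\;\tfrac{\alpha}{2\sigma^{2}}\sum_{t=1}^{T}\sum_{\ell=1}^{m}\omega_{\ell}\,h_{t,\ell},
\]
and the stated threshold on $\sigma^2$ forces this to be at most $\epsilon$. For part (ii), setting $C_\ell:=\sum_{t}\tfrac{\alpha h_{t,\ell}}{2\sigma^{2}}$, substituting into Definition~\ref{def:joint_srd}, and using the elementary inequalities $\tfrac{1}{\alpha-1}\log\mathbb{E}_{V\sim\omega}[\exp((\alpha-1)C_V)]\leq \sup_{\ell}C_\ell$ and $\sup_{\ell}\sum_{t}\leq \sum_{t}\sup_{\ell}$ yields $\mathtt{JSD}^{\omega}_{\alpha} \leq \tfrac{\alpha}{2\sigma^{2}}\sum_{t}\max_{\ell} h_{t,\ell}$, which the Joint-SRPP threshold forces to be at most $\epsilon$.

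The principal obstacle is the per-slice composition step: RPP and SRPP are known to lack graceful composition in general, because prior-induced dependencies between secrets and datasets break the standard chain-rule argument (see the discussion preceding Section~\ref{sec:SRPP-SGD}). The HUC construction is exactly what unlocks composition in our setting: by bounding the one-step directional displacement almost surely over admissible couplings and pathwise uniformly over histories, HUC reduces each per-slice conditional R\'enyi term to a clean one-dimensional Gaussian cost whose value is a constant independent of the algorithm's state, so the chain rule telescopes additively on each slice without requiring contractivity of the update map or contending with prior-dependent couplings. The bulk of the technical care lies in verifying that this telescoping respects the arbitrary admissible coupling $\gamma$ and the independent sampling draws $R_t$ at every step, so that the HUC's pathwise bound can be invoked uniformly at each conditional.
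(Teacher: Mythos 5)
Your proposal is correct and follows essentially the same route as the paper's proof: a history-uniform per-slice Gaussian bound from Lemma~\ref{lemma:gaussian_SRE} with $v_{t,\ell}=\sigma^2$, adaptive R\'enyi composition (the paper's Lemma on adaptive composition, your ``chain rule'') telescoping the per-iteration costs slice by slice, then $\omega$-averaging for part (i) and the worst-slice/log-moment bound for part (ii). The only cosmetic difference is that the paper phrases the composition on the projected trajectories $Z^{(\ell)}_{1:T}$ and invokes post-processing for the final iterate, which your argument implicitly covers.
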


Theorem~\ref{thm:HUC_SRPP_SGD} shows that once a valid sequence of $\{h_t\}_{t=1}^T$ has been constructed for a fixed slicing set $\mathcal{U}$, the overall SRPP guarantee for the entire SGD trajectory reduces to explicit, closed-form calibration conditions on the Gaussian noise. Both Ave-SRPP and Joint-SRPP are ensured by choosing a single variance parameter $\sigma^2$ large enough to offset the cumulative per-step HUC contributions, either in the $\omega$-weighted average form (part (i)) or under the worst-slice aggregation (part (ii)).

\begin{theorem}\label{thm:msHUC_SRPP_SGD}
    Fix a slicing profile $\{\mathcal{U}, \omega\}$ with $\mathcal{U} = \{u_\ell\}_{\ell=1}^m$ and $\omega \in \Delta(\mathcal{U})$.
    Let $h^{\mathsf{sa}} = \{h^{\mathsf{sa}}_{1}, \dots, h^{\mathsf{sa}}_{T}\}$ with each 
    $h^{\mathsf{sa}}_{t} = (h^{\mathsf{sa}}_{t,1}, \dots, h^{\mathsf{sa}}_{t,m})$ a \textup{valid sa-HUC} vector.
    Let $N_{t}\overset{\text{i.i.d.}}{\sim} \mathcal{N}(0, \sigma^{2}I_{d})$.
    Then, for any $\alpha>1$, we have:
    \begin{itemize}[topsep=1pt,itemsep=-2pt]
        \item[(i)] \textbf{\textup{sa-Ave-SRPP:}} If $\sigma^2 \geq \tfrac{\alpha}{2\epsilon} \sum_{t=1}^{T}\sum_{\ell = 1}^{m} \omega_{\ell} h^{\mathsf{sa}}_{t,\ell}$, then lgorithm~\ref{alg:srpp-sgd} satisfies $(\alpha, \epsilon, \omega)$-sa-Ave-SRPP.

        \item[(ii)] \textbf{\textup{sa-Joint-SRPP:}} 
        If $\sigma^{2}\geq\tfrac{\alpha}{2\epsilon} \sum_{t=1}^{T} \max_{\ell} h^{\mathsf{sa}}_{t,\ell}$, then Algorithm~\ref{alg:srpp-sgd} satisfies $(\alpha, \epsilon, \omega)$-sa-Joint-SRPP.
    \end{itemize}
\end{theorem}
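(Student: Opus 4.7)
The plan is to parallel the proof of Theorem~\ref{thm:HUC_SRPP_SGD}, but with all per-iteration, per-slice R\'enyi bounds computed \emph{conditionally on the subsampling randomness} $\eta = (R_1,\dots,R_T)$, and then pushed through the outer expectation $\mathbb{E}_\eta[\cdot]$ that distinguishes ms-SRPP from SRPP in Definitions~\ref{def:ms_Ave_SRPP}--\ref{def:ms_Joint_SRPP}. The three building blocks are the per-step Gaussian slice envelope of Lemma~\ref{lemma:gaussian_SRE}, adaptive R\'enyi composition applied separately along each projected trajectory, and the mean-square property of the ms-HUC used to absorb the outer $\mathbb{E}_\eta$.

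I fix $(\theta, s_i, s_j) \in \Theta \times \mathcal{Q}$ and a coupling $\gamma \in \Pi(\mu^\theta_{s_i},\mu^\theta_{s_j})$. For a realization $\eta$, let $h_{t,\ell}(\eta) := |\langle f_t(X,y_{<t};R_t) - f_t(X',y_{<t};R_t), u_\ell\rangle|^2$; by Definition~\ref{def:msHUC}, $\mathbb{E}_{\eta_t}[h_{t,\ell}(\eta)] \leq h_{t,\ell}^{\mathsf{ms}}$. Since $\Sigma_t = \sigma^2 I_d$, the projected noise variance $v_{t,\ell} = u_\ell^\top \Sigma_t u_\ell$ equals $\sigma^2$, so Lemma~\ref{lemma:gaussian_SRE} gives a per-step, per-slice envelope of $\alpha h_{t,\ell}(\eta)/(2\sigma^2)$. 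Because this envelope is history-uniform (independent of $y_{<t}$), standard adaptive R\'enyi composition applied to the projected observable sequence $(\langle \hat{g}_t,u_\ell\rangle)_{t=1}^T$ sums the $T$ per-step bounds and yields, conditionally on $\eta$,
\[
    \mathtt{D}_\alpha\bigl(\Psi^{u_\ell}_\# \mathcal{M}^{\theta,\eta}_{s_i} \,\big\|\, \Psi^{u_\ell}_\# \mathcal{M}^{\theta,\eta}_{s_j}\bigr)\;\leq\; \frac{\alpha}{2\sigma^2}\sum_{t=1}^T h_{t,\ell}(\eta), \qquad (\star)
\]
For part~(i), I integrate $(\star)$ against $\omega$ to obtain a conditional Ave-SRD bound, apply $\mathbb{E}_\eta$, and use Tonelli to exchange finite sums with expectation; invoking the ms-HUC slice-wise replaces each $\mathbb{E}_\eta[h_{t,\ell}(\eta)]$ by $h_{t,\ell}^{\mathsf{ms}}$ and gives $\mathbb{E}_\eta[\mathtt{ASD}^\omega_\alpha] \leq (\alpha/(2\sigma^2))\sum_t\sum_\ell \omega_\ell h_{t,\ell}^{\mathsf{ms}}$, which the stated choice of $\sigma^2$ drives below $\epsilon$.

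For part~(ii), I first use the log-moment bound $\mathtt{JSD}^\omega_\alpha(\mu\|\nu) \leq \max_\ell \mathtt{D}_\alpha(\Psi^{u_\ell}_\# \mu \| \Psi^{u_\ell}_\# \nu)$ (obtained by replacing the exponent inside $\mathbb{E}_{V\sim\omega}[\exp(\cdot)]$ by its essential supremum), combined with $(\star)$, to obtain, conditional on $\eta$, $\mathtt{JSD}^\omega_\alpha(\mathcal{M}^{\theta,\eta}_{s_i} \| \mathcal{M}^{\theta,\eta}_{s_j}) \leq (\alpha/(2\sigma^2))\sum_t \max_\ell h_{t,\ell}(\eta)$. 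Taking $\mathbb{E}_\eta$ then reduces the claim to bounding $\mathbb{E}_\eta[\max_\ell h_{t,\ell}(\eta)]$ by $\max_\ell h_{t,\ell}^{\mathsf{ms}}$, and \emph{this is the main obstacle}: Jensen's inequality for the convex $\max$ runs the wrong way, so a purely slice-wise mean-square bound is not by itself sufficient. I would resolve this by invoking the product structure of the canonical ms-HUC from Proposition~\ref{prop:msHUC_from_K2}: the pathwise bound factorizes as $h_{t,\ell}(\eta) \leq A_{t,\ell}\cdot K_t(X,X';R_t)^2$ with $A_{t,\ell} := (2 L_{t,\ell} C/B_t)^2$ independent of $\eta$. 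Hence $\max_\ell h_{t,\ell}(\eta) \leq K_t(\eta)^2 \max_\ell A_{t,\ell}$, and taking $\mathbb{E}_\eta$ together with~\eqref{eq:ms_disc_cap} yields $\mathbb{E}_\eta[\max_\ell h_{t,\ell}(\eta)] \leq \overline{K}_t^2 \max_\ell A_{t,\ell} = \max_\ell h_{t,\ell}^{\mathsf{ms}}$. Substituting this and calibrating $\sigma^2$ as stated closes~(ii); the argument extends to any ms-HUC admitting an analogous slice/subsampling factorization, and without such structure one obtains the slightly looser sufficient condition with $\mathbb{E}_\eta[\max_\ell h_{t,\ell}(\eta)]$ in place of $\max_\ell h_{t,\ell}^{\mathsf{ms}}$.
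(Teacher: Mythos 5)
Your proposal is correct and follows essentially the same route as the paper: condition on the subsampling randomness, apply the Gaussian slice envelope and adaptive composition to get a conditional Ave-/Joint-SRD bound, then take $\mathbb{E}_\eta$ and absorb it via the ms-HUC. In particular, you correctly identify the only delicate point — that $\mathbb{E}_\eta[\max_\ell h_{t,\ell}(\eta)]$ is not controlled by slice-wise mean-square bounds alone — and resolve it exactly as the paper does, by exploiting the factorization $h_{t,\ell}(\eta)=a_{t,\ell}K_t(\eta)^2$ from Proposition~\ref{prop:msHUC_from_K2} so that the max over slices factors out of the expectation.
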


Theorem~\ref{thm:msHUC_SRPP_SGD} is the mean-square analogue of Theorem~\ref{thm:HUC_SRPP_SGD}.
Once we have a sequence of sa-HUC caps $h_t^{\mathsf{sa}}$ that bound the \textit{average} (over subsampling) squared directional shifts at each iteration, the overall sa-SRPP guarantee for the full SGD trajectory reduces to the same kind of closed-form calibration on a single noise level $\sigma^2$.
Part (i) ensures an $(\alpha,\epsilon,\omega)$ sa-Ave-SRPP bound by matching the $\omega$-weighted sum of sa-HUCs, while part (ii) does the same for sa-Joint-SRPP using the worst-slice aggregation.
Compared to the standard HUC case, the structure of the accountant is unchanged; the only difference is that the guarantee is now on-average over subsampling randomness.
In Appendix \ref{app:perlayer_clipping}, we characterize SGD with per-layer gradient clipping \cite{mcmahan2017learning}, where we impose an assumption of layer-wise Lipschitz updates instead of Assumption \ref{assp:slicewise_Lipschitz}.

\begin{proposition}[Utility of sa-SRPP, informal]
\label{prop:utility_sa_srpp_informal}
Under mild conditions, subsampling-aware discrepancy caps yield more utility than worst-case caps at the same true SRPP of the mechanism. (see Proposition~\ref{prop:utility_sa_srpp_formal} in App.~\ref{app:utility_sa_srpp} for formal results)
\end{proposition}

Proposition \ref{prop:utility_sa_srpp_informal} confirms that averaging over the subsampling randomness improves utility at the same true SRPP.
App.~\ref{app:utility_sa_srpp} provides formal and detailed analysis.

\subsection{Composition of SRPP-SGD Mechanisms}

In this section, we show that HUC- and sa-HUC-based SRPP/sa-SRPP-SGD mechanisms also enjoy additive composition. For clarity, we present the setup and results in terms of HUC and SRPP; the same arguments apply verbatim to sa-HUC and sa-SRPP by replacing the corresponding notions.

We consider $J$ SRPP-SGD mechanisms $\mathcal{M}_1,\dots,\mathcal{M}_J$ acting on the same dataset $X$.
Without loss of generality, we take the primitive output of each mechanism to be a vector in a common parameter space $\mathbb{R}^d$ (e.g., a final clipped update or parameter vector), and we fix a shared slice profile $(\mathcal{U},\omega)$ with
$\mathcal{U} = \{u_\ell\}_{\ell=1}^m \subset \mathbb{S}^{d-1}$ and $\omega \in \Delta(\mathcal{U})$.
In the remainder of this section, we restrict to the following mechanisms:
for each $j\in\{1,\dots,J\}$, the run $\mathcal{M}_j:\mathcal{X}\to\mathbb{R}^d$ is an SRPP-SGD mechanism with iterates
$\theta^{(j)}_t \in \mathbb{R}^d$ ($t=0,\dots,T_j$), additive Gaussian noises
$N^{(j)}_t \sim \mathcal{N}(0,\Sigma^{(j)}_t)$, and an associated HUC (or sa-HUC) sequence
$h^{(j)} = \{h^{(j)}_t\}_{t=1}^{T_j}$ that satisfies the conditions of
Theorem~\ref{thm:HUC_SRPP_SGD} (resp. Theorem~\ref{thm:msHUC_SRPP_SGD}).
The composed mechanism is
\begin{equation}\label{eq:composition_mechanism}
    \vec{\mathcal{M}}(X)
  :=
  \big(\mathcal{M}_1(X),\dots,\mathcal{M}_J(X)\big)
  \in (\mathbb{R}^d)^J.
\end{equation}

\begin{theorem}\label{thm:composition_srpp_sgd}
    Fix a slice profile $\{\mathcal{U}, \omega\}$ and a Pufferfish scenario $\{\mathcal{S}, \mathcal{Q}, \Theta\}$.
    Let $\vec{\mathcal{M}}$ be the composition mechanism defined in (\ref{eq:composition_mechanism}).
    Fix any $\Xi \in \{\textup{Ave}, \textup{Joint}, \textup{ms-Ave}, \textup{ms-Joint}\}$.
    Suppose that, for each $j\in\{1,\dots,J\}$, the mechanism $\mathcal{M}_j$ is
    $(\alpha,\epsilon_j,\omega)$-$\Xi$-SRPP.
    Then the composed mechanism $\vec{\mathcal{M}}$ is
    $(\alpha,\sum_{j=1}^J \epsilon_j,\omega)$-$\Xi$-SRPP.
\end{theorem}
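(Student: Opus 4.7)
The plan is to reduce the composition to a single extended SRPP-SGD trajectory and invoke the corresponding HUC/ms-HUC accountant (Theorem~\ref{thm:HUC_SRPP_SGD} or Theorem~\ref{thm:msHUC_SRPP_SGD}) only once, so that the additive structure of the per-iteration R\'enyi cost automatically delivers the $\sum_j \epsilon_j$ bound.

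I would first unpack the hypothesis using the SRPP-SGD structure of each $\mathcal{M}_j$: its $(\alpha,\epsilon_j,\omega)$-$\Xi$-SRPP guarantee is witnessed by the HUC (resp.\ ms-HUC) accountant inequality of the form
\[
  \tfrac{\alpha}{2}\sum_{t=1}^{T_j}\mathrm{Agg}_\omega\!\bigl(h^{(j)}_{t,\cdot}/(\sigma^{(j)}_t)^2\bigr)\;\leq\;\epsilon_j,
\]
where $\mathrm{Agg}_\omega$ denotes $\omega$-weighted slice averaging for $\Xi\in\{\mathrm{Ave},\mathrm{ms\text{-}Ave}\}$ and slice maximum for $\Xi\in\{\mathrm{Joint},\mathrm{ms\text{-}Joint}\}$ (i.e., the per-iteration cost from Lemma~\ref{lemma:gaussian_SRE} evaluated through Theorem~\ref{thm:HUC_SRPP_SGD}/\ref{thm:msHUC_SRPP_SGD}). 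I would then build an extended SRPP-SGD $\tilde{\mathcal{M}}$ whose global iterations are an arbitrary fixed ordering of $\bigsqcup_{j=1}^J\{1,\dots,T_j\}$: at step $(j,t)$ we perform iteration $t$ of $\mathcal{M}_j$ using its own noise $N^{(j)}_t\sim\mathcal{N}(0,\Sigma^{(j)}_t)$ and subsampling draw $R^{(j)}_t$. Because the $J$ mechanisms are trained independently on the same $X$, the tuples $\{(N^{(j)}_t,R^{(j)}_t)\}_{j,t}$ are mutually independent and independent of $X$, so $\tilde{\mathcal{M}}$ fits the template of Algorithm~\ref{alg:srpp-sgd} with an extended history $\tilde y_{<(j,t)}$ recording every earlier iterate across all mechanisms.

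The next step is to verify that $\tilde h := (h^{(1)},\dots,h^{(J)})$ is a valid HUC (resp.\ ms-HUC) for $\tilde{\mathcal{M}}$ with respect to the shared profile $(\mathcal{U},\omega)$. At step $(j,t)$ the update map depends only on $\mathcal{M}_j$'s own state $y^{(j)}_{<t}$ and minibatch $\mathcal{B}(X;R^{(j)}_t)$, so the pathwise slice-wise bound of Definition~\ref{def:HUC} (resp.\ Definition~\ref{def:msHUC}) already established for $\mathcal{M}_j$ transfers verbatim, once we note that uniformity over histories in the support is preserved under trivial enlargement. Applying Theorem~\ref{thm:HUC_SRPP_SGD} (resp.\ Theorem~\ref{thm:msHUC_SRPP_SGD}) to $\tilde{\mathcal{M}}$ with $\tilde h$ yields the accountant bound
\[
  \tfrac{\alpha}{2}\sum_{j=1}^J\sum_{t=1}^{T_j}\mathrm{Agg}_\omega\!\bigl(h^{(j)}_{t,\cdot}/(\sigma^{(j)}_t)^2\bigr)\;\leq\;\sum_{j=1}^J \epsilon_j,
\]
so $\tilde{\mathcal{M}}$ is $(\alpha,\sum_j\epsilon_j,\omega)$-$\Xi$-SRPP. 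Finally, $\vec{\mathcal{M}}$ is a data-independent post-processing of the full trajectory of $\tilde{\mathcal{M}}$ (it just extracts each $\mathcal{M}_j$'s final iterate), so post-processing immunity of SRPP (Section~\ref{sec:property_srpp}) lifts the same bound to $\vec{\mathcal{M}}$.

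The main obstacle is justifying that the HUC at step $(j,t)$ remains valid when the interleaved ``history'' contains iterates from sibling mechanisms. This is delicate because Definitions~\ref{def:HUC} and \ref{def:msHUC} were stated for a self-contained SRPP-SGD, not for an interleaved trajectory. The key observation is that mechanism $j$'s step-$t$ update is functionally oblivious to sibling randomness: it reads only $y^{(j)}_{<t}$ and $\mathcal{B}(X;R^{(j)}_t)$, so the pathwise inequality in \eqref{eq:HUC} is unchanged under the enlarged filtration. For the ms-HUC variants, the inner expectation over $R^{(j)}_t$ is unaffected by the independent $R^{(k)}_s$ for $(k,s)\neq(j,t)$ by Fubini, so the marginal mean-square bound is preserved. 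Once this measurability/independence bookkeeping is done, the rest is routine summation over $(j,t)$, uniformly across the four $\Xi$ variants.
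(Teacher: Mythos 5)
Your route (concatenate the $J$ runs into one long interleaved SRPP-SGD trajectory, apply the HUC/ms-HUC accountant once, then post-process to extract the final iterates) is genuinely different from the paper's, which argues directly at the level of output distributions: it conditions on the shared dataset realization, uses independence of the mechanisms' primitive randomness to tensorize the R\'enyi divergence slice-by-slice (Ave/ms-Ave), and for Joint/ms-Joint reinterprets Joint-SRPP as RPP of the sliced channel, composes additively there, and invokes post-processing. However, your proposal has a genuine gap at its very first step: you ``unpack'' the hypothesis into the accountant witness inequality $\tfrac{\alpha}{2}\sum_{t}\mathrm{Agg}_\omega\bigl(h^{(j)}_{t,\cdot}/(\sigma^{(j)}_t)^2\bigr)\le\epsilon_j$, but the theorem only assumes that each $\mathcal{M}_j$ \emph{is} $(\alpha,\epsilon_j,\omega)$-$\Xi$-SRPP. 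That assumption does not imply the accountant inequality at level $\epsilon_j$: the HUC envelope is generally conservative, so a mechanism can satisfy the SRPP definition with an $\epsilon_j$ strictly smaller than anything Theorem~\ref{thm:HUC_SRPP_SGD}/\ref{thm:msHUC_SRPP_SGD} certifies for it. As written, your argument therefore proves additive composition only of the accountant-certified budgets, which is a strictly weaker statement than the theorem (and than what the paper proves, since the paper's proof never routes through the HUC at all in Case~1, and in Case~2 uses only RPP additivity of independent sliced channels).

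A second, more technical mismatch: you apply Theorems~\ref{thm:HUC_SRPP_SGD} and \ref{thm:msHUC_SRPP_SGD} to the interleaved run with per-iteration, per-mechanism noises $\Sigma^{(j)}_t$, but those theorems are stated for a single common isotropic variance $\sigma^2 I_d$ across all iterations. To make your reduction rigorous you would have to bypass them and redo the accountant at the level of Lemma~\ref{lemma:gaussian_SRE} plus the adaptive composition lemma with heterogeneous $v_{t,i}=u_i^\top\Sigma^{(j)}_t u_i$, which is plausible but not what you invoke. The parts of your plan that do hold up are the transfer of each HUC/ms-HUC to the enlarged history (since step $(j,t)$ reads only mechanism $j$'s own state and minibatch, with independence handling the ms case) and the final post-processing step; the missing ingredient is a way to go from the bare SRPP hypothesis on each $\mathcal{M}_j$ to an additive bound for the composition, which is exactly what the paper's tensorization/sliced-channel argument supplies.
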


In other words, for any fixed SRPP notion (Ave, Joint, or their mean-square counterparts),
the privacy costs of $J$ SRPP-SGD mechanisms acting on the same data and slice profile
add up linearly: the composed mechanism has the same $\alpha$ and $\omega$, with
total privacy cost $\sum_j \varepsilon_j$.

\begin{figure*}[t]
    \centering

    \begin{subfigure}[b]{0.24\textwidth}
        \centering
        \includegraphics[width=\textwidth]{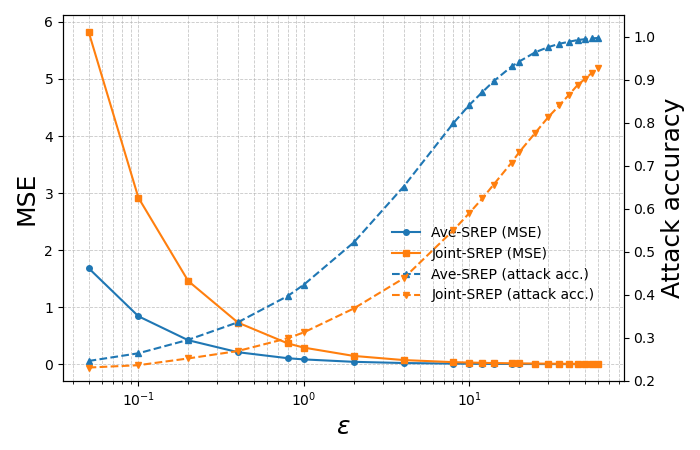}
        \subcaption{Logistic Regression}\label{subfig:logistic02}
    \end{subfigure}
    \begin{subfigure}[b]{0.24\textwidth}
        \centering
        \includegraphics[width=\textwidth]{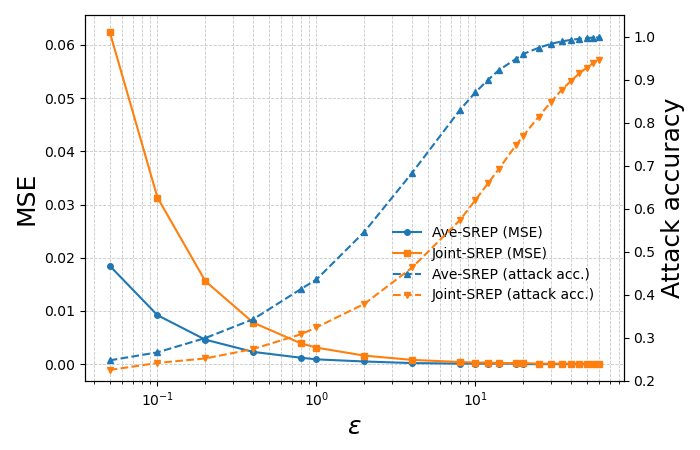}
        \subcaption{Random Forest}\label{subfig:rf02}
    \end{subfigure}
    \begin{subfigure}[b]{0.24\textwidth}
        \centering
        \includegraphics[width=\textwidth]{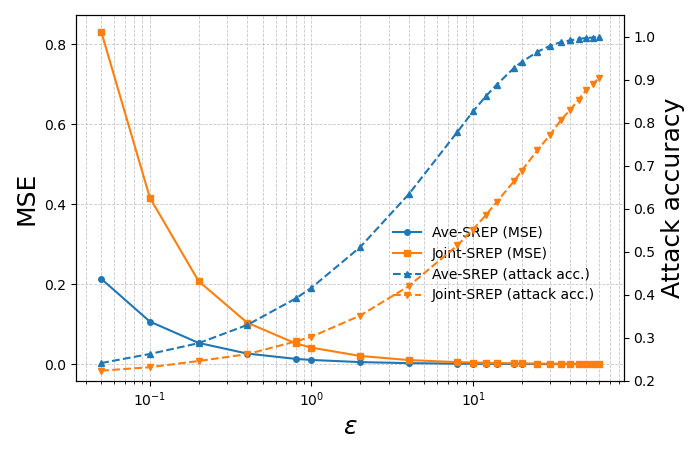}
        \subcaption{SVM}\label{subfig:svm02}
    \end{subfigure}
    \begin{subfigure}[b]{0.24\textwidth}
        \centering
        \includegraphics[width=\textwidth]{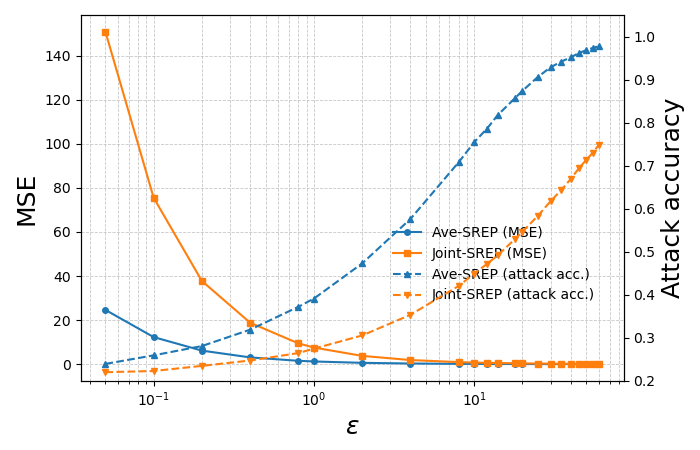}
        \subcaption{Summary statistics}\label{subfig:summary02}
    \end{subfigure}

    \vspace{0.5em}

    \begin{subfigure}[b]{0.24\textwidth}
        \centering
        \includegraphics[width=\textwidth]{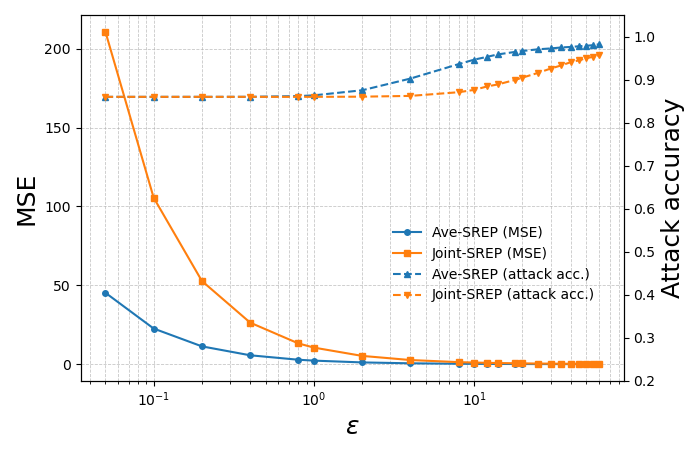}
        \subcaption{Logistic Regression}\label{subfig:logistic08}
    \end{subfigure}
    \begin{subfigure}[b]{0.24\textwidth}
        \centering
        \includegraphics[width=\textwidth]{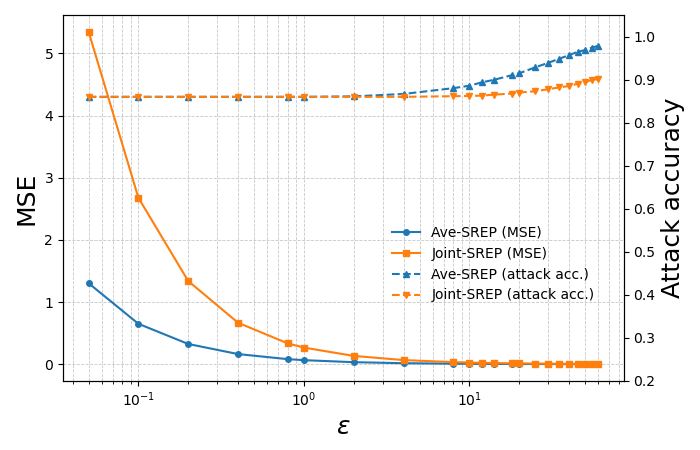}\label{subfig:rf08}
        \subcaption{Random Forest}
    \end{subfigure}
    \begin{subfigure}[b]{0.24\textwidth}
        \centering
        \includegraphics[width=\textwidth]{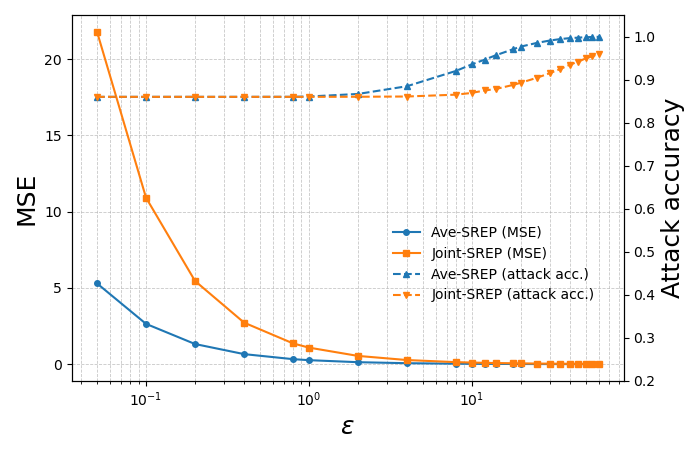}
        \subcaption{SVM}\label{subfig:svm08}
    \end{subfigure}
    \begin{subfigure}[b]{0.24\textwidth}
        \centering
        \includegraphics[width=\textwidth]{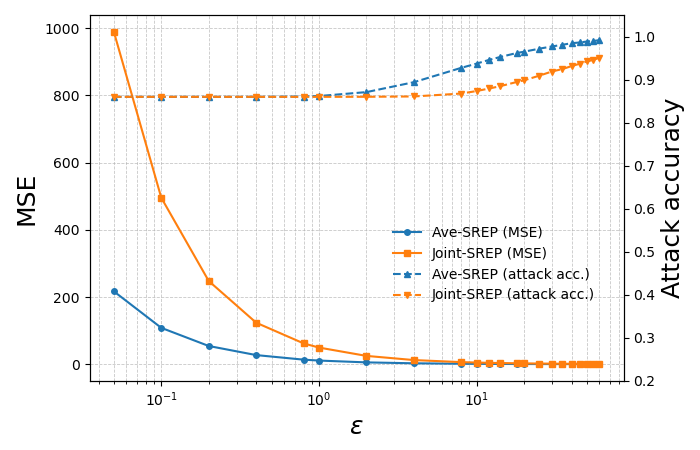}
        \subcaption{Summary statistics}\label{subfig:summary08}
    \end{subfigure}

    \vspace{0.5em}

    \begin{subfigure}[b]{0.24\textwidth}
        \centering
        \includegraphics[width=\textwidth]{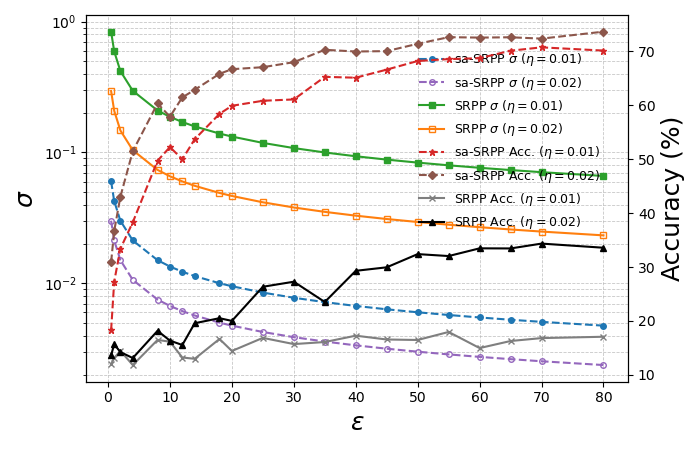}
        \subcaption{ResNet}\label{subfig:gep_resnet}
    \end{subfigure}
    \begin{subfigure}[b]{0.24\textwidth}
        \centering
        \includegraphics[width=\textwidth]{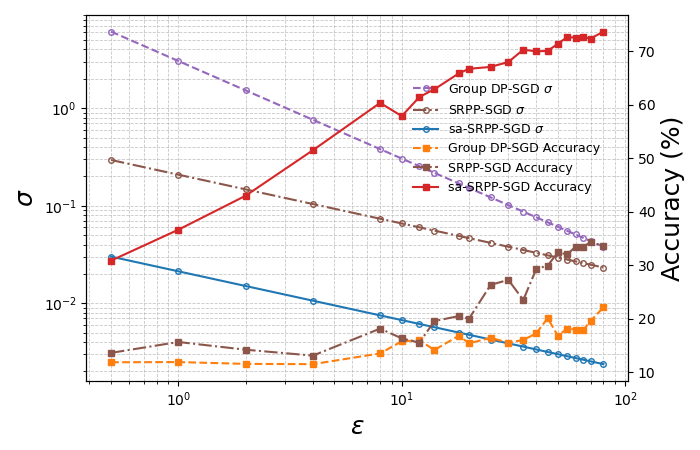}
        \subcaption{Group DP vs. SRPP}\label{subfig:group_dp}
    \end{subfigure}
    \begin{subfigure}[b]{0.24\textwidth}
        \centering
        \includegraphics[width=\textwidth]{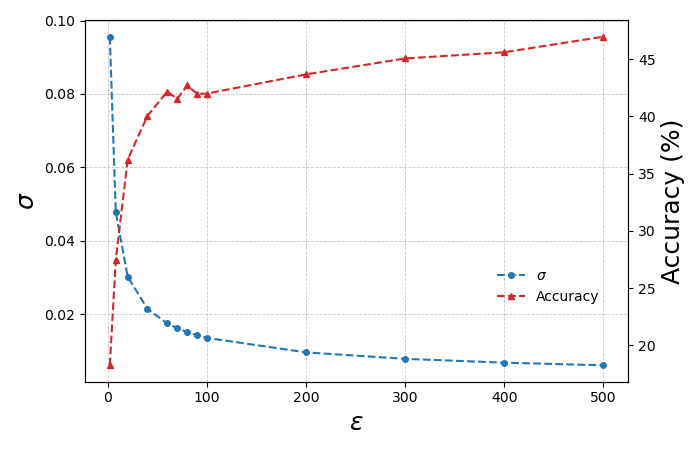}
        \subcaption{Overfitted ResNet}\label{subfig:overfit}
    \end{subfigure}
    \begin{subfigure}[b]{0.24\textwidth}
        \centering
        \includegraphics[width=\textwidth]{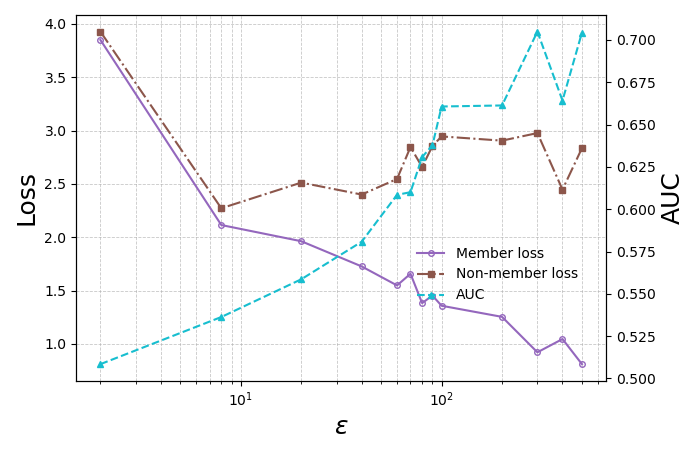}
        \subcaption{Empirical Attack Auditing}\label{subfig:overfit_attack}
    \end{subfigure}
    \caption{(a)-(d) with $\alpha = 4$ compares Ave-/Joint-SRPE on the balanced Adult dataset (uniform $0.2$ prior over races), for four query families: logistic regression, linear SVM, random forest feature importances, and summary statistics. Each panel plots the utility–privacy tradeoff as a function of the SRPP budget $\epsilon$. (e)-(h) with $\alpha=4$ performs the same tasks on the imbalanced Adult dataset (uniform $0.86$ prior over races).
    (i) compares SRPP-SGD and sa-SRPP-SGD on CIFAR-10 for two sampling rates ($\eta=0.01$ and $\eta=0.02$), showing the calibrated per–iteration noise scale $\sigma$ (left axis, log–scale) and test accuracy (right axis) as a function of the privacy budget $\epsilon$.
    (j) shows per-step Gaussian noise scale $\sigma$, and CIFAR-10 test accuracy for group-DP-SGD, SRPP-SGD, and sa-SRPP-SGD under the same clipping, subsampling ($\eta=0.01$ and $\eta=0.02$), and model setup.
    (k) presents overfitted SRPP-SGD on CIFAR-10 for $\eta = 0.02$. 
    (l) illustrates member/non-member losses and attack AUC of the empirical MIA as a function of the sa-SRPP budget $\epsilon$.
    All (i)-(l) use $\alpha = 16$.
    Appendix \ref{app:experiments} provides detailed numerical values.
    }
    \label{fig:SRPP_experiments}
\end{figure*}

\section{Experiments}\label{sec:experiments}

We empirically evaluate our Gaussian sliced Wasserstein mechanisms in both the static and stochastic-learning settings. For static queries, we instantiate SRPP guarantees (Section \ref{sec:SRPP}), and for iterative training we use SRPP/sa-SRPP together with gradient clipping to privatize SGD (Section \ref{sec:SRPP-SGD}). 
Full experimental details and additional plots are in Appendix \ref{app:experiments}.

\subsection{Gaussian Sliced Wasserstein Mechanisms}
\label{subsec:experiment_GSWM}

\textbf{Dataset and Pufferfish scenario.}
We evaluate our sliced Wasserstein mechanisms on three standard tabular benchmarks in an attribute-privacy setting \cite{Zhang2022}: Adult Census \cite{becker1996adult} (race as secret, $k=5$), Cleveland Heart Disease \cite{detrano1989international}, and Student Performance \cite{silva2008using}, with Pufferfish query set $\mathcal{Q} = \{(s_i,s_j): s_i \neq s_j\}$. 
We present results for Adult and Appendix \ref{app:experiments} shows results for the other datasets.
For Adult, we consider both the natural imbalanced race prior (majority race probability $0.86$) and a balanced, stratified subsample with a uniform $0.20$ prior over the five race categories.
We study two query families: (i) low-dimensional summary statistics (means, standard deviations, and rates of selected features) conditioned on each secret value; and (ii) machine learning (logistic regression, random forest, and SVM) queries on Adult, trained separately per secret group.
In all experiments, we use Gaussian mechanisms $Y = \mathcal{M}(X) + N$, where $N\sim \mathcal{N}(0, I_{d}\sigma)$ with $\sigma$ calibrated via our sliced SRPP. 
Utility is measured by mean squared error and related norms between $\mathcal{M}(X)$ and $Y$.
We also use a prior-aware Gaussian MAP attack on the secret attribute to perform empirical privacy auditing, reporting accuracy and advantage over the prior baseline (i.e., $0.86$ for imbalanced Adult, $0.20$ for balanced Adult); an advantage near zero indicates that the mechanism prevents meaningful secret inference beyond the prior.

Figures \ref{subfig:logistic02}-\ref{subfig:summary02}
and \ref{subfig:logistic08}-\ref{subfig:summary08} show utility-privacy trade-off as a function of the SRPP budget $\epsilon$ on balanced and imbalanced Adult datasets for four query types. 
Across all four queries, the behavior is consistent for both balanced and imbalanced settings. 
For very strong privacy (small $\epsilon$), both Ave- and Joint-SRPP force high noise and the MAP attack accuracy stays close to the prior baselines (i.e., $0.2$ and $0.86$), indicating that the mechanism reveals little additional information about the race. 
As $\epsilon$ increases, MSE decreases toward zero while attack accuracy rises monotonically, reflecting the expected utility-privacy trade-off. 
In addition, at any fixed $\epsilon$, Joint-SRPP is more conservative than Ave-SRPP: it induces larger MSE but yields uniformly lower attacker accuracy. This gap is most visible in the moderate privacy regime, where the Ave-SRPP curves already allow the attacker to reach substantially higher accuracy, while Joint-SRPP keeps the attack much closer to the prior baseline.

\subsection{SRPP-SGD Using Gaussian Noise}
\label{subsec:srpp_sgd_gaussian}

We evaluate our HUC-based SRPP-SGD and sa-SRPP-SGD mechanisms on image
classification with Gaussian noise added to clipped gradients in a
DP-SGD-style training pipeline.
In our implementation, the HUC envelope is instantiated using a
\textit{global} $\ell_2$-Lipschitz bound on the (clipped, averaged)
gradient update map. Specifically, for each iteration $t$ we bound the
change in the update under neighboring worlds by a scalar constant
$L_t$, so the global sensitivity of the update in Euclidean norm is at
most $L_t$. This global envelope automatically upper-bounds all
one-dimensional projections and therefore induces valid SRPP guarantees
for any slicing profile $(\mathcal U,\omega)$ via the linear-update setting.
Hence, Ave- and Joint-SRPP-SGD would have the same HUC. Appendix \ref{app-sub:SGD} provides more details.

\textbf{Dataset and Pufferfish scenario.}
We use CIFAR-10 (50{,}000 training and 10{,}000 test images). 
We create a two-world Pufferfish scenario, where texttt{cat} is designated as the secret class, and construct two worlds $s_0$ and $s_1$ that differ only in the prevalence of this class: in the prior, the \texttt{cat} frequency shifts from $p_{\mathrm{low}} = 0.10000$ to
$p_{\mathrm{high}} = 0.10004$.
The realized world $s_0$ is then fixed
and encoded by overwriting the CIFAR-10 training labels with the
realized label vector; all training procedures (private and non-private)
operate only on this realized world.

\textbf{Models.}
Our main model is a ResNet-22 similar to \cite{xiao2023geometry}, implemented with ELU
activations, GroupNorm-based layers, a $3\times 3$
adaptive average pooling, and a linear classifier on a standardized
$64\times 3\times 3$ feature vector. 
The architecture is made
differentially private–compatible (ready for Opacus \cite{yousefpour2021opacus}) by replacing non–DP-friendly
components (such as non-picklable shortcuts) with equivalent modules.

\textbf{SRPP-SGD and sa-SRPP-SGD training.}
Training follows the standard DP-SGD template: at each iteration,
per-sample gradients are clipped to an $\ell_2$ bound $C$ and then
perturbed by isotropic Gaussian noise. 
We use a worst-case discrepancy cap $K_{\mathrm{cap}}$ for SRPP-SGD and use $\mathbb{E}_{\eta}[K^2_t]$ for sa-SRPP-SGD.
We pre-compute, for each target SRPP and sa-SRPP budget $\epsilon$ at a fixed R\'enyi order $\alpha = 16$, a per-step noise scale $\sigma(\epsilon)$ that ensures the resulting Gaussian mechanisms satisfies the SRPP- and sa-SRPP-SGD under the given Pufferfish scenario.
In addition, we also involve group-DP-SGD, under the same setup except the noise calibration: for group-DP-SGD, we following the standard DP-SGD \cite{Abadi2016}, and apply convert it to group DP using $\mathbb{E}_{\eta}[K^2_t]$ ($\mathbb{E}_{\eta}[K^2_t]\leq K_{\mathrm{cap}}$) as the group size. 
As a baseline, we also include group-DP-SGD under the same training setup but with noise calibrated via the standard DP-SGD accountant \cite{Abadi2016}, and then converted to group DP by using $\mathbb{E}_{\eta}[K_t^2]$ (with $\mathbb{E}_{\eta}[K_t^2] \le K_{\mathrm{cap}}$) as the effective group size.

\textbf{Overfitted regime and empirical privacy attacker.}
In addition to the above accuracy-focused experiments on the full
training set, we run a separate overfitted regime tailored to empirical
privacy evaluation. We restrict the training data to a small random
subset of CIFAR-10 (on the order of $10^3$ images), substantially
increase the number of epochs, and remove or reduce weight decay, so
that the resulting private SRPP-SGD models exhibit markedly higher
training accuracy than test accuracy. On these overfitted models, we
apply the loss-threshold membership inference attack of
Yeom et al.~\cite{yeom2018privacy}: for each trained model, we draw
disjoint sets of member and non-member points, compute per-example
cross-entropy losses, and use the negative loss as a scalar membership
score (lower loss $\Rightarrow$ more likely member). We report the ROC
AUC of this score as a function of the privacy budget $\epsilon$ and
compare against a non-private overfitted baseline trained on the same
subset.

As shown in Figure \ref{subfig:gep_resnet}, across all $\epsilon$ and for both step sizes, sa-SRPP consistently requires much smaller Gaussian noise than worst-case SRPP: the sa-SRPP $\sigma$ curves lie well below their SRPP counterparts, often by a factor of several and approaching an order of magnitude around $\epsilon \approx$ 20–40. This tighter calibration directly translates into higher utility. For small budgets both methods yield low accuracy, but as $\epsilon$ increases the sa-SRPP models continue to improve, reaching roughly 60\%–70\% test accuracy at the largest budgets, whereas SRPP saturates around 20\%–35\%. The same pattern holds for both $\eta=0.01$ and $\eta=0.02$: sa-SRPP dominates SRPP in terms of the accuracy–privacy tradeoff, demonstrating that replacing a worst-case discrepancy cap $K_{\text{cap}}$ by the mean-square cap $\mathbb{E}_\eta[K_t^2]$ yields substantially milder noise for the same $\epsilon$.

Although DP, SRPP, and sa-SRPP are different privacy notions (see Appendix \ref{app:group-dp-vs-srpp}), Figure \ref{subfig:group_dp} illustrates what happens if we try to protect the same Pufferfish secret using a group-DP framework (see Appendix \ref{app:group-dp-vs-srpp} for more details). When group-DP-SGD is calibrated for an effective group size equal to $\mathbb{E}_\eta[K_t^2]$, the required Gaussian noise scale $\sigma$ is substantially larger than for SRPP-SGD, and especially sa-SRPP-SGD, across the entire range of $\epsilon$. At tight budgets (small $\epsilon$), group-DP-SGD operates at the largest $\sigma$ values on the plot and its test accuracy stays close to chance, while SRPP-SGD and sa-SRPP-SGD achieve visibly higher accuracies with much smaller noise. Even at moderate and large $\epsilon$, group-DP-SGD remains dominated in utility by both SRPP variants. This suggests that, in our Pufferfish setting, simply converting DP-SGD to group-DP using $\mathbb{E}_\eta[K_t^2]$ leads to overly conservative noise calibration and cannot practically substitute for SRPP/sa-SRPP when the goal is to protect the secret-level prevalence.

In the overfitted setting (Figure \ref{subfig:overfit}),  empirical behavior of sa-SRPP-SGD matches the privacy–utility trade-off seen in Figure \ref{subfig:gep_resnet}.
Figure \ref{subfig:overfit_attack} shows the stress-test performance of the empirical membership-inference attacks (at inference time of the model).
Here, tighter sa-SRPP budgets (smaller $\epsilon$) correspond to larger calibrated noise, lower test accuracy, and weaker membership inference, while looser budgets reduce the noise and improve accuracy at the cost of stronger membership-inference signals.

\section{Conclusion}

We introduced \textit{Sliced R\'enyi Pufferfish Privacy} (SRPP), a PP/RPP-semantics-preserving framework that replaces high-dimensional R\'enyi divergence with projection-based quantification via \textit{Average}- and \textit{Joint}-SRPP. To mitigate the curse of dimensionality in PP-family mechanism design, we developed \textit{sliced Wasserstein mechanisms} calibrated by a computable sliced Wasserstein sensitivity, yielding provable SRPP guarantees and closed-form Gaussian noise calibration without incurring the computational burden of full optimal transport. We further provided finite-sample characterizations, including conditions under which Gaussian noise achieves SRPP with high confidence.

To enable practical privacy accounting for iterative learning under probabilistic secret--dataset relationships, we proposed SRPP-SGD with gradient clipping and new accountants based on \textit{History-Uniform Caps} (HUC) and a subsampling-aware mean-square variant (sa-HUC) that improves utility. This supports a decompose-then-compose privatization strategy without requiring Wasserstein sensitivity computation or contractivity assumptions. Under a shared slicing geometry, we proved graceful additive composition for Average/Joint SRPP and their subsampling-aware counterparts, enabling modular accounting for multi-stage pipelines and cascades. 
Empirically, our experiments on both static privatization and SRPP-SGD demonstrate favorable privacy--utility trade-offs and practical scalability.


\cleardoublepage



\bibliographystyle{plain}
\bibliography{reference}


\appendix

\section*{Outline of Appendix}

\subsubsection*{Background}

Appendix~\ref{app:add_preliminaries} provides additional preliminaries on differential privacy (App.~\ref{app:DP}), DP-SGD (App.~\ref{sec:preli_DP_SFO}), and a relaxation of RPP’s general Wasserstein mechanism (GWM), termed the \textit{Distribution-Aware General Wasserstein Mechanism} (App.~\ref{app:background_DAGWM_RPP}). Appendix~\ref{app:worst_case_group_privacy} introduces \textit{worst-case group privacy}, which captures a robust (prior-free) RPP guarantee that does not depend on the prior family $\Theta$.

Appendix~\ref{app:ER_WD} formalizes an RPP variant in which the corresponding $\infty$-Wasserstein distance is approximated via entropically regularized (Sinkhorn) optimal transport. We refer to the resulting privatization method as the \textit{Sinkhorn--Wasserstein mechanism}, and discuss key limitations for privacy quantification and accounting.

\subsubsection*{SRPP Related Discussion and Analysis}

Appendix~\ref{sec:property_srpp} presents detailed properties of SRPP, including ordering (App.~\ref{app:ordering}), post-processing immunity (App.~\ref{app:post_processing}), a deeper comparison between Ave- and Joint-SRPP (App.~\ref{app:ave-vs-joint-srpp}), and conversions from SRPP to \textit{sliced Pufferfish privacy} (App.~\ref{app:srpp_to_epsdelta}).

Appendix~\ref{app:L_t_Update_example} gives two examples of update maps that satisfy the $L_t$-Lipschitz assumption in Sec.~\ref{sec:HUC}. Appendix~\ref{app:estimate_caps} provides analyses and practical approaches for estimating the discrepancy cap in~\eqref{eq:discrepancy_cap}. Appendix~\ref{app:poisson_subsampling} remarks on Poisson subsampling in HUC/sa-HUC-based accounting.

Appendix~\ref{app:perlayer_clipping} characterizes \textit{per-layer gradient clipping} as an alternative setting to the slice-wise assumptions used in Sec.~\ref{sec:HUC}. Appendix~\ref{app:group-dp-vs-srpp} clarifies the relationship between SRPP/sa-SRPP and group DP-SGD; note that \textit{group DP} here differs from the \textit{worst-case group privacy} notion in App.~\ref{app:worst_case_group_privacy}.

Appendix~\ref{app:attainable_HUC} establishes the \textit{minimality} and \textit{attainability} of HUCs (Definition~\ref{def:HUC} in Sec.~\ref{sec:HUC}). Appendix~\ref{app:sampling_models} compares minibatch sampling with replacement (\textsf{WR}) and without replacement (\textsf{WOR}) in the HUC/sa-HUC framework. Appendix~\ref{app:utility_sa_srpp} provides formal utility characterizations showing why subsampling-aware bounds (sa-HUC/sa-SRPP) can be strictly less conservative.

\subsubsection*{Proofs and Experimental Details}

Appendices~\ref{app:proof_lemma:SRPP-envelope}--\ref{app:proof_thm:composition_srpp_sgd} contain the formal proofs of the main-body results. Appendix~\ref{app:experiments} provides additional experimental details.

\section{Additional Preliminaries}\label{app:add_preliminaries}

\subsection{Differential Privacy}\label{app:DP}

This section revisits the Differential Privacy (DP).
Let $X$ be the input dataset.
Each data point $x_i$ is defined over some measurable domain $\mathcal{X}^{\dagger}$, so that $x = (x_1, x_2, \ldots, x_n) \in \mathcal{X}=(\mathcal{X}^{\dagger})^{n}$.
We say two datasets $x, x' \in \mathcal{X}$ are \textit{adjacent} if they differ in exactly one data point.

\begin{definition}[$(\epsilon, \delta)$-Differential Privacy \cite{dwork2006calibrating}]\label{def:DP}
    A randomized mechanism $\mathcal{M}:\mathcal{X}\to\mathcal{Y}$ is said to be $(\epsilon, \delta)$-differentially private (DP), with $\epsilon\geq 0$ and $\delta\in[0,1]$, if for any pair of adjacent datasets $x, x'$, and for all $\mathcal{T}\subseteq \mathcal{Y}$, we have 
    $\Pr[\mathcal{M}(x)\in \mathcal{T}]\leq e^{\epsilon}\Pr[\mathcal{M}(x')\in\mathcal{T}] + \delta$.
\end{definition}

The parameter $\epsilon$ is usually referred to as the \textit{privacy budget}, and $\delta\in(0,1]$ represents the failure probability. 
DP represents a worst-case, input-agnostic adversarial worst-case framework.

\subsection{DP-SGD with Gradient Clipping}\label{sec:preli_DP_SFO}

In this section, we describe the Differentially private stochastic gradient descent (DP-SGD) \cite{Abadi2016,song2013stochastic} has become the standard approach for training neural networks with formal privacy guarantees.
First, We provide the following basic setup for DP-SGD similar to Sec.~\ref{sec:SRPP-SGD} for completeness.

\paragraph{Dataset.}
Let $x=(x_1,\dots,x_n)\in \mathcal{X} = \bar{\mathcal{X}}^n$ be the finite samples of data points, where each $x_{i} = (a_{i}, b_{i})\in \bar{\mathcal{X}}$ consists of features $a_{i}$ and label $b_{i}$.
In this section, we take $P_{\theta}^{X}$ as the empirical (marginal) distribution of the data samples $x=(x_1,\dots,x_n)\in \mathcal{X} = \bar{\mathcal{X}}^n$ for all $\theta\in\Theta$.
At iteration $t$, we draw a random mini-batch $\mathsf{I}_t \subseteq [n]$ using a subsampling scheme $\{\eta, \rho\}$ characterized by a rate $\eta$ and a scheme type $\rho$, where $\rho \in \{\textsf{WR}, \textsf{WOR}, \textsf{Poisson}\}$ denotes sampling with-replacement (\textsf{WR}), without-replacement (\textsf{WOR}), or via a Poisson process (\textsf{Poisson}), respectively.
Let $X^{t} = (X_{i})_{i\in \mathsf{I}_{t}}$ denote the random mini-batch subsampled from $x$, with $x^{t}$ representing a particular realization.

\begin{algorithm}[t]
  \caption{Noise-perturbed SGD with gradient clipping}
  \label{alg:privsgd_dp}
  \begin{algorithmic}[1]
    \REQUIRE Data $x_1,\dots,x_n$; loss $\ell(\xi;x)$; stepsizes $(\eta_t)_{t=0}^{T-1}$;
    batch size $B$; clip norm $C$; noise covariances $(\Sigma_t)_{t=1}^T$.
    \STATE Initialize parameters $\xi_0$
    \FOR{$t = 0,\dots,T-1$}
      \STATE Sample minibatch $\mathsf{I}_t \subseteq [n]$ with $|\mathsf{I}_t| = B$
      \STATE For each $i\in\mathsf{I}_t$:
      \[
        g_t(x_i) \gets \nabla_{\xi} \ell(\xi_t;x_i),\
        \tilde{g}_t(x_i) \gets g_t(x_i)\min\!\left\{1,\frac{C}{\|g_t(x_i)\|_2}\right\}
      \]
      \STATE $\bar{g}_t \gets \frac{1}{B}\sum_{i\in\mathsf{I}_t}\tilde{g}_t(x_i)$
      \STATE Draw $N_t \sim \mathcal{N}(0,\Sigma_t)$ and set $\hat{g}_t \gets \bar{g}_t + N_t$
      \STATE $\xi_{t+1} \gets \theta_t - \eta_t \hat{g}_t$
    \ENDFOR
    \STATE \textbf{return} $\theta_T$; compute privacy parameters from $(\Sigma_t)$ according to DP guarantee.
  \end{algorithmic}
\end{algorithm}

\paragraph{Empirical Risk Minimization.}
In supervised learning, each example is $x_i = (a_i,b_i)$ with features $a_i$ and label $b_i$.
Let $\ell(\theta;x_i)$ denote the per-sample loss of a model with parameter $\xi$.
The empirical risk minimization (ERM) problem is $\min_{\xi} F(\xi) := \frac{1}{n}\sum_{i=1}^n \ell(\theta;x_i)$.

\paragraph{Stochastic Gradient Descent. }
SGD approximately solves ERM by iteratively sampling minibatches and updating $\theta$ along noisy gradients.
At iteration $t$, a subsampling scheme $\{\eta,\rho\}$ produces a random minibatch index set $\mathsf{I}_t \subseteq [n]$ of size $B$ and corresponding minibatch $x^t = (x_i)_{i\in\mathsf{I}_t}$.
The (unclipped) stochastic gradient is
\[
    g_t(x_i) := \nabla_{\xi} \ell(\xi_t;x_i), 
    \qquad
    g_t := \frac{1}{B}\sum_{i\in\mathsf{I}_t} g_t(x_i),
\]
and the parameters are updated as $\xi_{t+1} = \xi_t - \eta_t g_t$, for stepsize $\eta_t>0$.

\paragraph{Gradient clipping and DP-SGD. }
DP-SGD \cite{song2013stochastic,Abadi2016} modifies SGD in two ways: it clips per-example gradients to control sensitivity and adds noise to the averaged gradient for privacy.
Each per-sample gradient is clipped at $\ell_2$ radius $C$,
\[
  \tilde{g}_t(x_i)
  :=
  g_t(x_i)\,\min\!\Bigl\{1,\frac{C}{\|g_t(x_i)\|_2}\Bigr\},
\]
and the averaged clipped gradient is
\begin{equation*}
    \bar{g}_t := \frac{1}{B}\sum_{i\in\mathsf{I}_t} \tilde{g}_t(x_i).
\end{equation*}
Noise-perturbed SGD with gradient clipping (Algorithm~\ref{alg:privsgd_dp}) then draws Gaussian noise
$N_t \sim \mathcal{N}(0,\Sigma_t)$, forms $\hat{g}_t = \bar{g}_t + N_t$, and updates
$\xi_{t+1} = \xi_t - \eta_t \hat{g}_t$.
In DP-SGD with isotropic Gaussian noise perturbation, the Gaussian noise variance is calibrated so that the resulting algorithm satisfies a target $(\epsilon,\delta)$-DP guarantee via, e.g., moment accountant \cite{Abadi2016}.

Since the Pufferfish scenario considers probabilistic secret-dataset relationship captured by the priors, DP and SRPP are fundamentally different in both semantics as well as how the privacy leakages are aggregated iteratively in SGD algorithms.
Appendix \ref{app:group-dp-vs-srpp} provides detailed discussion between group DP-SGD and our SRPP-SGD.

\subsection{Distribution Aware General Wasserstein Mechanism for RPP}\label{app:background_DAGWM_RPP}

In this section, we recall the Distribution Aware General Wasserstein Mechanism (DAGWM) of \cite{pierquin2024renyi}, which refines the General Wasserstein Mechanism (GWM) for R\'enyi Pufferfish Privacy (RPP) by replacing worst-case $\infty$-Wasserstein sensitivities with $p$-Wasserstein quantities that better capture average-case geometry. 
The DAGWM is proved to offer more utility than the GWM at no additional privacy cost \cite{pierquin2024renyi}.

\begin{definition}[$p$-Wasserstein distance]\label{def:p_W}
Fix a norm $\|\cdot\|$ on $\mathbb{R}^d$ and let $1 \le p < \infty$. 
For probability measures $\nu, \mu$ on $\mathbb{R}^d$ with finite $p$-th moments, the $p$-Wasserstein distance is
\[
W_{p}(\nu, \mu): = \inf_{\pi\in\Pi(\nu, \mu)}
\Bigg(\int_{\mathbb{R}^d\times\mathbb{R}^d} \|x-y\|^{p} d\pi(x,y)\Bigg)^{1/p}
\]
If either $\nu$ or $\mu$ lacks a finite $p$-th moment, set $W_p(\nu, \mu)=+\infty$.
\end{definition}

For $1\leq p\le s\leq\infty$, $W_p(\nu, \mu)\leq W_s(\nu, \mu)$.
Moreover, $W_{p}(\nu, \mu)$ is nondecreasing in $p$, and for measures with bounded support we can relate large-$p$ Wasserstein distances to the $\infty$-Wasserstein distance.

To improve the utility of the GWM, Pierquin et al. \cite{pierquin2024renyi} leverage $p$-Wasserstein distance to constrain the sensitivity instead of $\infty$-Wasserstein distances.
This replaces the worst-case transportation cost between $\Pr(f(X)\mid s_i, \theta)$ and $\Pr(f(X)\mid s_j, \theta)$ by moments of the transportation cost. 
Let $\zeta$ be a noise distribution and, for $(s_i,s_j)\in\mathcal{Q}$ and $\theta\in\Theta$, write $\mu_i^{\theta}=P(f(X)\mid s_i,\theta)$. For $q\ge 1$ and $\alpha>1$, define the $p$-Wasserstein sensitivity as
\begin{equation}\label{eq:Delta_DAGWM}
    \begin{aligned}
    \Delta^{\zeta,q,\alpha}_G: = \max_{(s_i,s_j),\theta}&
\inf_{(X,Y)\sim\Pi(\mu_i^{\theta},\mu_j^{\theta})}
\mathbb{E}\Big[\\
&\exp\big(q(\alpha-1)\mathtt{D}_{q(\alpha-1)+1}(\zeta,\zeta\ast(X-Y))\big)\Big].
\end{aligned}
\end{equation}

The DAGWM calibration term is governed by full-dimensional p-Wasserstein distances between the conditional output distributions $\Pr(f(X)\mid s,\theta)$. When $f(X)\in\mathbb{R}^d$ is high-dimensional, accurately estimating these distances from empirical samples and computing them at scale is known to be difficult in general for optimal transport. Therefore, while DAGWM improves the distribution-awareness of RPP relative to worst-case $W_\infty$-based calibration, \textit{it still inherits the high-dimensional intractability associated with full-dimensional Wasserstein sensitivity.}
This motivates replacing full-dimensional OT calibration by sliced constructions, where the mechanism depends only on one-dimensional projections and remains computable and calibratable in high dimensions.

\subsection{Worst-Case Group Privacy}\label{app:worst_case_group_privacy}

Following~\cite{Song2017,pierquin2024renyi}, in this section, we define a \textit{worst-case group privacy} baseline for a
given Pufferfish scenario $(\mathcal{S},\mathcal{Q},\Theta)$, which is a \textit{prior-free} (group-DP-type)
quantification that does not exploit any distributional restrictions encoded by $\Theta$.

Fix $n\in\mathbb{N}^{*}$ and group dimensions $d_1,\dots,d_n\in\mathbb{N}^{*}$.
Let $\mathcal{X}\subset\mathbb{R}^{d}$ and define the block-structured dataset space
\[
\mathcal{X}^{\mathrm{all}} := \mathcal{X}^{d_1}\times\cdots\times \mathcal{X}^{d_n}.
\]
For $x\in\mathcal{X}^{\mathrm{all}}$, write $x=(x^{(1)},\dots,x^{(n)})$ with
$x^{(k)}\in\mathcal{X}^{d_k}$ the $k$-th group (block).

\paragraph{Group adjacency and worst-case group sensitivity.}
For each $k\in\{1,\dots,n\}$, define the group adjacency relation
\[
D_k
:= \Bigl\{(x,x')\in(\mathcal{X}^{\mathrm{all}})^2 \;:\; x^{(j)}=x'^{(j)}\ \ \forall j\neq k \Bigr\},
\]
i.e., $(x,x')\in D_k$ iff $x$ and $x'$ may differ arbitrarily within the $k$-th group and are identical
outside that group. For any query $f:\mathcal{X}^{\mathrm{all}}\to\mathbb{R}^{m}$, define the worst-case group
sensitivity (with respect to $\|\cdot\|$ on $\mathbb{R}^m$) as
\[
\Delta_{\mathrm{GROUP}}(f)
:= \max_{k\in\{1,\dots,n\}}\ \sup_{(x,x')\in D_k}\ \|f(x)-f(x')\|.
\]
Equivalently, letting $D:=\bigcup_{k=1}^n D_k$, one has
\[
\Delta_{\mathrm{GROUP}}(f)=\sup_{(x,x')\in D}\ \|f(x)-f(x')\|.
\]

\paragraph{Worst-case (prior-free) RPP under group adjacency.}
A prior-free (worst-case) instantiation of RPP requires the certificate to hold uniformly over the chosen
adjacency class; here we take $D=\bigcup_{k=1}^n D_k$. Consider the additive-noise mechanism
$M(x)=f(x)+Z$. We say that $M$ satisfies \textit{worst-case (prior-free) RPP} at order $\alpha>1$ with budget
$\varepsilon$ if
\[
\sup_{(x,x')\in D}\ \mathtt{D}_\alpha\!\bigl(M(x)\,\|\,M(x')\bigr) \;\leq\; \varepsilon.
\]
In the noise-calibration form of \cite{pierquin2024renyi}, a sufficient condition is
\[
R_\alpha\!\bigl(Z,\Delta_{\mathrm{GROUP}}(f)\bigr)\leq \varepsilon
\]
where 
\[
R_\alpha(Z,z):=\sup_{\|u\|\le z} \mathtt{D}_\alpha(Z-u\,\|\,Z)
\]
is the R\'enyi envelop (i.e., the worst-case R\'enyi divergence) between the noise distribution and its shifts by at most $z$.

\paragraph{Validity and Conservativeness.}
By construction, the above worst-case (prior-free) specification yields a \textit{valid} RPP certificate:
the privacy guarantee is quantified by a supremum over the entire adjacency class
$D=\bigcup_{k=1}^n D_k$ and therefore does not rely on any distributional restriction (in particular, it
does not exploit $\Theta$). In particular, if $R_\alpha(Z,\Delta_{\mathrm{GROUP}}(f))\le \varepsilon$, then
the additive-noise mechanism $M(x)=f(x)+Z$ satisfies worst-case RPP at order $\alpha>1$ with budget
$\varepsilon$:
\[
\sup_{(x,x')\in D} D_\alpha\!\bigl(M(x)\,\|\,M(x')\bigr)\leq \varepsilon.
\]

However, this prior-free baseline is typically \textit{utility-draining} because
$\Delta_{\mathrm{GROUP}}(f)$ is an \textit{absolute worst-case} sensitivity over all group-adjacent dataset
pairs; for highly correlated data, such group-level worst-case protections can force noise magnitudes that
scale with the group size and may destroy utility~\cite{Song2017}. Since $R_\alpha(Z,z)$ is monotone
nondecreasing in $z$ for common noise families, achieving the same privacy budget
$\varepsilon$ under the prior-free baseline generally requires larger noise, which in turn degrades utility.

\subsection{Sinkhorn-Wasserstein Mechanism for RPP }\label{app:Sinkhorn}

In this section, we introduce the \textit{Sinkhorn-Wasserstein (Sinkhorn-W) mechanism} as an estimated version of the standard full-dimensional WD, where the corresponding Wasserstein distance is estimated by the Sinkhorn algorithm~\cite{cuturi2013sinkhorn,luise2018differential,wong2019wasserstein,chizat2020faster}.

\subsubsection{Entropy-Regularized Wasserstein Distance}\label{app:ER_WD}

We start by defining the \textit{entropy-regularized} Wasserstein distance (ER-WD). 
The WD in Definition \ref{def:WD} can be equivalently represented as follows. 
Let
\[
J_{t} :=\left\{(x,y)\in \mathbb{R}^{d} \times \mathbb{R}^{d} : \|x-y\|\leq t \right\},
\]
\[
\Pi_{t}(\nu, \mu) := \left\{\pi \in \Pi(\nu, \mu) : \pi(J_{t}) = 1 \right\}.
\]
Then, 
\begin{equation}\label{eq:WD_new_rep}
    W_{\infty}(\nu,\mu) = \inf\left\{t\geq 0 : \Pi_{t}(\nu,\mu) \neq \emptyset \right\}.
\end{equation}
The ER-WD re-formulates the WD in (\ref{eq:WD_new_rep}) by adding a regularization based on Kullback–Leibler (KL) divergence.
For any regularization parameter $\hat{\varepsilon} >0$, the ER-WD is then defined as 
\begin{equation}\label{eq:ER_WD}
    \begin{aligned}
        W^{(\hat{\varepsilon})}_{\infty}(\nu,\mu) := \inf_{t\geq0, \pi\in \Pi(\nu, \mu)} \left\{t +\hat{\varepsilon} \mathrm{KL}\left(\pi \| \nu\otimes \mu\right): \pi(J_{t}) = 1 \right\},
    \end{aligned}
\end{equation}
where 
\[
\mathrm{KL}\left(\pi \| \nu\otimes \mu\right) = \int\log\left(\tfrac{d\pi}{d(\nu\otimes \mu)}\right)d\pi
\]
is the KL divergence with $\mathrm{KL}\left(\pi \| \nu\otimes \mu\right) =+\infty$ if $\pi \not\ll (\nu \otimes \mu)$.
For ER-WD, we still minimizing the smallest radius $t$ for which a coupling exists supported inside $\|x-y\|\leq t$, but we also penalize couplings that have low entropy (i.e., "too concentrated"). 
Equivalently, (\ref{eq:ER_WD}) can be viewed as
\[
W^{(\hat{\varepsilon})}_{\infty}(\nu,\mu) = \inf_{t\geq0}\left\{t +\hat{\varepsilon} \inf_{\pi\in \Pi_{t}(\nu, \mu)}\mathrm{KL}\left(\pi \| \nu\otimes \mu\right)\right\}.
\]
The discrete formulation follows immediately by restricting $\nu, \mu$ to empirical measures and optimizing over transport matrices ($\pi$) with fixed marginals, with the same entropic KL regularizer and an $\infty$-cost implemented through a distance-threshold support constraint.

We can also define the more general entropy-regularized $p$-WD as
\[
W^{p}_{p,\hat{\varepsilon}}(\nu,\mu):= \inf_{\pi\in \Pi(\nu,\mu)}\left\{\int \|x-y\|^{p}d\pi(x,y) + \hat{\varepsilon} \mathrm{KL}(\pi\| \nu\otimes \mu)\right\}.
\]

\subsubsection{Sinkhorn Algorithm}\label{app:Sinkhorn_algorithm}

For each $t\geq 0$, define the KL-minimizing coupling
\[
\pi_t^\star \in \arg\min_{\pi\in \Pi_t(\nu,\mu)} \mathrm{KL}\!\left(\pi \,\middle\|\, \nu\otimes\mu\right),
\]
and the associated profile objective
\[
J_{\hat{\varepsilon}}(t)
\;:=\;
t \;+\; \hat{\varepsilon}\,\mathrm{KL}\!\left(\pi_t^\star \,\middle\|\, \nu\otimes\mu\right).
\]
Then \eqref{eq:ER_WD} can be written compactly as
\[
W^{(\hat{\varepsilon})}_{\infty}(\nu,\mu)=\inf_{t\geq 0} J_{\hat{\varepsilon}}(t).
\]

A standard result for KL projections onto marginal constraints is that the solution (when it exists) admits a multiplicative scaling representation: there exist strictly positive measurable functions
$u_{t}:\mathbb{R}^{d}\to (0,\infty)$ and $v_{t}:\mathbb{R}^{d}\to (0,\infty)$, unique up to the rescaling
$(u_t,v_t)\mapsto (c\,u_t,c^{-1}v_t)$ for $c>0$, such that
\[
d\pi_{t}(x,y)
=
u_{t}(x)\, v_{t}(y)\,\mathbf{1}\{\|x-y\|\leq t\}\, d\nu(x)\, d\mu(y).
\]
Imposing the marginal constraints $\pi_t(\cdot,\mathbb{R}^d)=\nu$ and $\pi_t(\mathbb{R}^d,\cdot)=\mu$ yields the IPFP fixed-point equations
\[
\int_{\mathbb{R}^d} v_t(y)\,\mathbf{1}\{\|x-y\|\leq t\}\, d\mu(y) \;=\; \frac{1}{u_t(x)}
\quad \text{for $\nu$-a.e.\ } x,
\]
\[
\int_{\mathbb{R}^d} u_t(x)\,\mathbf{1}\{\|x-y\|\leq t\}\, d\nu(x) \;=\; \frac{1}{v_t(y)}
\quad \text{for $\mu$-a.e.\ } y.
\]

This motivates the (continuous) Sinkhorn--Knopp algorithm (also known as the iterative proportional fitting procedure, IPFP) \cite{knight2008sinkhorn}:
initialize a measurable function $v_t^{(0)}:\mathbb{R}^d\to(0,\infty)$, and for $k\geq 0$ define measurable functions
$u_t^{(k+1)},v_t^{(k+1)}:\mathbb{R}^d\to(0,\infty)$ by
\[
u_t^{(k+1)}(x)
=
\left(\int_{\mathbb{R}^d} \mathbf{1}\{\|x-y\|\leq t\}\, v_t^{(k)}(y)\, d\mu(y)\right)^{-1},
\]
\[
v_t^{(k+1)}(y)
=
\left(\int_{\mathbb{R}^d} \mathbf{1}\{\|x-y\|\leq t\}\, u_t^{(k+1)}(x)\, d\nu(x)\right)^{-1},
\]
for $\nu$-a.e. $x$ and $\mu$-a.e. $y$, respectively. 
If the iterates $(u_t^{(k)},v_t^{(k)})$ converge to $(u_t,v_t)$, then the coupling
\[
d\pi_t(x,y)=u_t(x)v_t(y)\,\mathbf{1}\{\|x-y\|\leq t\}\,d\nu(x)\,d\mu(y)
\]
has marginals $\nu$ and $\mu$.

We write $\pi_t^{\mathrm{SK}}$ for the coupling induced by the Sinkhorn(-Knopp) iterates after termination, i.e.,
\[
d\pi_t^{\mathrm{SK}}(x,y)=u_t^{(k)}(x)v_t^{(k)}(y)\mathbf{1}\{\|x-y\|\leq t\}\,d\nu(x)\,d\mu(y),
\]
for the final iterate $k$.
Given a candidate set $\mathcal{T}\subseteq [0,\infty)$, we define the Sinkhorn-computed ER-$\infty$-WD by
\begin{equation}\label{eq:Sinkhorn_er_wd}
    W^{(\hat{\varepsilon}),\mathrm{SK}}_{\infty}(\nu,\mu;\mathcal{T})
\;:=\;
\min_{t\in\mathcal{T}}
\left\{
t+\hat{\varepsilon}\,\mathrm{KL}\!\left(\pi_t^{\mathrm{SK}} \,\middle\|\, \nu\otimes\mu\right)
\right\}.
\end{equation}

\paragraph{Discrete Case.}
Let $\nu=\sum_{i=1}^n a_i\delta_{x_i}$ and $\mu=\sum_{j=1}^m b_j\delta_{y_j}$ with $a\in\Delta_n$ and $b\in\Delta_m$.
Define the admissibility mask $M(t)\in\{0,1\}^{n\times m}$ by
$M_{ij}(t)=\mathbf{1}\{\|x_i-y_j\|\leq t\}$, and the reference matrix
$K(t):=(ab^\top)\odot M(t)$.
Then $\pi_t$ admits the scaling form $\pi_t=\mathrm{diag}(u)\,K(t)\,\mathrm{diag}(v)$, where
$u\in\mathbb{R}_+^n$ and $v\in\mathbb{R}_+^m$ are obtained by the Sinkhorn updates
$u\leftarrow a\oslash(K(t)v)$ and $v\leftarrow b\oslash(K(t)^\top u)$.

\paragraph{Exactness w.r.t.\ ER-WD.}
Assume that for every $t\in\mathcal{T}$ the Sinkhorn/IPFP iterate converges to the exact KL projection,
i.e., $\pi_t^{\mathrm{SK}}=\pi_t^\star\in\arg\min_{\pi\in\Pi_t(\nu,\mu)}\mathrm{KL}(\pi\|\nu\otimes\mu)$.
If, in addition, $\mathcal{T}$ contains a minimizer of the outer problem, i.e.,
$\arg\min_{t\geq 0}\{t+\hat{\varepsilon}\inf_{\pi\in\Pi_t(\nu,\mu)}\mathrm{KL}(\pi\|\nu\otimes\mu)\}\cap \mathcal{T}\neq\emptyset$,
then 
\[
W^{(\hat{\varepsilon}),\mathrm{SK}}_{\infty}(\nu,\mu;\mathcal{T})=W^{(\hat{\varepsilon})}_{\infty}(\nu,\mu).
\]

\paragraph{Relation to $W_\infty$.}
For any $\hat{\varepsilon}>0$ and any $\mathcal{T}\subseteq[0,\infty)$, one has
\[
W^{(\hat{\varepsilon}),\mathrm{SK}}_{\infty}(\nu,\mu;\mathcal{T})\geq \min\left\{t\in\mathcal{T}\;:\; \Pi_{t}(\nu,\mu)\neq \emptyset  \right\}.
\]
Moreover, if $\hat{\varepsilon}\downarrow 0$ and the inner KL problem is solved exactly (so that $\pi_t^{\mathrm{SK}}=\pi_t^\star$),
then
\[
\lim_{\hat{\varepsilon}\downarrow 0} W^{(\hat{\varepsilon})}_{\infty}(\nu,\mu)=W_\infty(\nu,\mu),
\]
and, in the discrete setting, taking $\mathcal{T}$ to be the set of distinct distances $\{\|x_i-y_j\|\}_{i,j}$ yields
$\min\{t\in\mathcal{T}:\Pi_t(\nu,\mu)\neq\emptyset\}=W_\infty(\nu,\mu)$.

In the present $\infty$-WD formulation, setting $\hat{\varepsilon}=0$ removes the KL term and reduces the inner problem at a fixed threshold $t$ to the feasibility question $\Pi_t(\nu,\mu)\neq\emptyset$, i.e., the existence of a coupling supported on $J_t=\{(x,y):\|x-y\|\leq t\}$. 
In the discrete setting $\nu=\sum_i a_i\delta_{x_i}$, $\mu=\sum_j b_j\delta_{y_j}$, this feasibility problem is equivalent to the existence of a nonnegative transport matrix with prescribed marginals and support constrained to $\{(i,j):\|x_i-y_j\|\leq t\}$, and can be decided via standard max-flow/feasible-circulation methods (equivalently, Hall-type conditions on the bipartite support graph). 
Consequently, the computational advantage of Sinkhorn/IPFP is tied to $\hat{\varepsilon}>0$: the entropic regularization yields a strictly convex projection problem with a multiplicative scaling structure, whereas the unregularized case $\hat{\varepsilon}=0$ reverts to a (potentially large-scale) feasibility/flow problem and does not admit the same Sinkhorn scaling iterations.

\subsubsection{Limitations of Sinkhorn-Wasserstein Mechanism}

For any query function $f:\mathcal{X}\to\mathbb{R}^d$ and any $\mathcal{T}\subseteq[0,\infty)$, define the \textit{$\infty$-Sinkhorn-Wasserstein sensitivity (Sinkhorn-W sensitivity)} as
\begin{equation}
    \begin{aligned}
        &\Delta^{\mathrm{SK}, \hat{\varepsilon}}_{\infty}(\mathcal{T}) \\
        &:= \max_{(s_i,s_j)\in\mathcal{Q},\theta\in\Theta}
W^{(\hat{\varepsilon}),\mathrm{SK}}_{\infty}\left(P^{f,s_{i}}_{\theta}, P^{f,s_{j}}_{\theta};\mathcal{T}\right).
    \end{aligned}
\end{equation}
Define the full-dimensional R\'enyi envelope:
\[
R_{\alpha}(\zeta,z)= \sup_{\|a\|<z} \mathtt{D}_{\alpha}(\zeta_{-a},\zeta),
\]
where $\|\cdot\|$ is the same norm as in the definition of $W_{\infty}$ in Definition \ref{def:WD}.

The Sinkhorn-Wasserstein (Sinkhorn-W) mechanism replaces the unsliced $W_\infty$ sensitivity by an
entropy-regularized surrogate, Sinkhorn-W sensitivity, computed numerically. While this can reduce optimization difficulty,
it introduces additional approximation layers that complicate privacy certification. 
In particular, if the sensitivity used for calibration is under-estimated (due to regularization bias, grid omission,
or premature solver termination), then the injected noise may be insufficient for the intended
worst-case guarantee, and the resulting privacy statement may no longer apply to the true
$W_\infty$-based notion.

\paragraph{Regularization Changes the Certified Notion.}
For $\hat{\varepsilon}>0$, the entropy-regularized objective $W^{(\hat{\varepsilon})}_{\infty}$ generally
does not equal $W_{\infty}$. Consequently, calibrating noise using
$\Delta^{\mathrm{SK},\hat{\varepsilon}}_{\infty}(\mathcal T)$ yields a certificate tied to the
\textit{regularized} sensitivity. Interpreting this certificate as a guarantee for the true
$W_\infty$ sensitivity requires an additional approximation argument as $\hat{\varepsilon}\downarrow 0$
(and, in the discrete case, a sufficiently rich threshold set $\mathcal T$). Without such an argument,
the mechanism may only be provably private with respect to the surrogate objective rather than the
original worst-case coupling radius $W_\infty$.

\paragraph{Two Additional Approximation Layers Beyond Sampling.}
Even for fixed $\hat{\varepsilon}>0$, the computed value
$W^{(\hat{\varepsilon}),\mathrm{SK}}_{\infty}(\nu,\mu;\mathcal T)$ can differ from
$W^{(\hat{\varepsilon})}_{\infty}(\nu,\mu)$ for two independent reasons.
First, the outer minimization is discretized by restricting to $\min_{t\in\mathcal T}$
instead of $\inf_{t\ge 0}$; if $\mathcal T$ omits near-minimizers, the reported value may deviate
from the true optimum.
Second, the inner KL projection is solved only approximately by terminating Sinkhorn/IPFP after
finitely many iterations. Without a certified primal--dual gap (or other a posteriori bound),
the resulting coupling may violate the marginal constraints beyond the desired tolerance, and the
reported objective value can be inaccurate.
From the standpoint of privacy calibration, the critical failure mode is \textit{under-estimation}:
if the computed sensitivity proxy is smaller than the true one, the selected noise scale may be
insufficient for the intended worst-case guarantee.

\paragraph{Conservativeness Can Be Large and Difficult to Control.}
At the optimal $t$, we pay $t$ plus an entropic penalty that is generally strictly positive for the feasible set $\Pi_{t}$. 
Hence, $W^{(\hat{\varepsilon})}_{\infty}$ unless a KL-minimizing feasible coupling has zero KL (which is rare).
This inflates the sensitivity proxy and can force the mechanism to inject unnecessary noise, degrading utility. Decreasing
$\hat{\varepsilon}$ reduces this regularization bias, but it also exacerbates numerical issues
(see the next paragraph). Thus, there is no monotone regime in which the method is simultaneously
computationally benign and tightly aligned with $W_\infty$.

\paragraph{Ill-Conditioning as $\hat{\varepsilon}\downarrow 0$.}
The computational appeal of Sinkhorn/IPFP relies on $\hat{\varepsilon}>0$, which yields a strictly convex
KL projection with a multiplicative scaling structure. 
As $\hat{\varepsilon}\downarrow 0$, the optimal $t$ approaches the feasibility threshold.
At (or near) this threshold, the mask $M(t)$ becomes sparse or near-disconnected, making scaling interations unstable or slow (and sometimes infeasible without enlargint $t$).

\paragraph{Runtime and Memory Remain Coupling-Scale in General.}
In the discrete setting with $n$ samples per conditional law, each Sinkhorn/IPFP update with a dense
pairwise structure costs $O(n^2)$ arithmetic (and typically $\Omega(n^2)$ memory if the pairwise
distance/kernel is materialized). Moreover, the required number of iterations depends on
$\hat{\varepsilon}$ and on the chosen marginal-violation tolerance. Thus, although Sinkhorn avoids
generic coupling optimization, its worst-case scaling remains coupling-sized and is, in general,
less favorable than sliced calibration pipelines that reduce computation to sorting-based one-dimensional
operations.

\paragraph{Computational Properties. }
Fix an instance with empirical measures supported on $n$ points in $\mathbb{R}^d$.
Let $\mathcal{T}$ be a finite set of distance thresholds and let $K$ be the number of Sinkhorn/IPFP
iterations run per $t\in\mathcal{T}$.
Forming the full pairwise distance table $\{\|x_i-y_j\|_2\}_{i,j\in[n]}$ costs $\Theta(dn^2)$ arithmetic operations.
For each $t\in\mathcal{T}$, one Sinkhorn/IPFP update (two dense matrix--vector products plus entrywise scaling)
costs $O(n^2)$ time, hence $K$ updates cost $O(Kn^2)$.
Therefore, evaluating the grid objective over $\mathcal{T}$ costs
\[
T_{\mathrm{SK}}(1\ \mathrm{instance})
=
O\!\big(dn^2+|\mathcal T|\,K\,n^2\big),
\]
and over $M$ instances,
\[
T_{\mathrm{SK}}
=
O\!\Big(\mathsf{M}\big(dn^2+|\mathcal{T}|\,K\,n^2\big)\Big),
\]
where $\mathsf{M}:=|\widehat{\Theta}||\widehat{\mathcal{Q}}|$,
with $\Omega(n^2)$ memory if the distance table or kernel is materialized.

\paragraph{Practical implication.}
A Sinkhorn-W mechanism can be attractive when a moderate $\hat{\varepsilon}$ and a moderate solver tolerance
yield acceptable utility. However, it introduces additional approximation knobs
$(\hat{\varepsilon},\mathcal T,\text{termination tolerance})$, and the resulting privacy calibration can become
either overly conservative (when $\hat{\varepsilon}$ is large) or numerically burdensome (when $\hat{\varepsilon}$
is small). In contrast, sliced mechanisms avoid full-dimensional couplings entirely and admit exact one-dimensional
subroutines, yielding simpler calibration logic and more predictable computational behavior.

\section{Properties of SRPP}\label{sec:property_srpp}

In this section, we provide detailed discussions of Ave-SRPP and Joint-SRPP.
Ave-SRPP and Joint-SRPP use the same ingredients (per-slice R\'enyi divergences along 1-D projections) but aggregate them in different ways, leading to different strengths. Ave-SRPP averages the per-slice R\'enyi divergences over directions $u \sim \omega$, so it controls the mean distinguishability of secrets across slices (good for mean geometry across slices). Joint-SRPP instead can be interpreted as bounding the R\'enyi divergence of the joint observation of slice index $V\sim \omega$ and projection $\Psi^V_{\#}(\mathcal{M}(X))$, which captures and penalizes rare, high-risk directions that Ave-SRPP alone can hide in the average.
Joint-SRPP is generally a stronger requirement than Ave-SRPP: satisfying Joint-SRPP implies the corresponding averaged bound, but not conversely.

Sec. \ref{app:ordering} shows the ordering of Ave-SRD, Joint-SRD, and the standard R\'enyi divergence.
In Sec. \ref{app:post_processing}, we characterize the post-processing immunity property of SRPP notions.
Sec. \ref{app:ave-vs-joint-srpp} further compares Ave-SRPP and Joint-SRPP.
Finally, we show that our SRPP frameworks imply their corresponding sliced Pufferfish Privacy frameworks in Sec. \ref{app:srpp_to_epsdelta}.

\subsection{Ordering}\label{app:ordering}

Our two SRPP frameworks sit between each other and the standard RPP.

\begin{proposition}
\label{prop:ordering}
Fix a slice profile $\{\mathcal{U}, \omega\}$.
For $d\geq 1$, $\alpha>1$, every secret pair $(s_i,s_j)\in\mathcal{Q}$ and
every prior $\theta\in\Theta$,
\[\mathtt{AveSD}^{\omega}_{\alpha}\big(\mathcal{M}^{\theta}_{s_i} \|\mathcal{M}^{\theta}_{s_j}\big)
  \leq
  \mathtt{JSD}^{\omega}_{\alpha}\big(\mathcal{M}^{\theta}_{s_i}\|\mathcal{M}^{\theta}_{s_j}\big)
  \leq \mathtt{D}_{\alpha}\big(\mathcal{M}^{\theta}_{s_i}\|\mathcal{M}^{\theta}_{s_j}\big).
\]
\end{proposition}

The ordering in Proposition \ref{prop:ordering} only compares the \textit{numerical tightness of upper bounds} and does \textit{not} induce a straightforward ranking of privacy strength between the three notions. 
For example, an $(\alpha,\epsilon)$-RPP guarantee implies the
corresponding $(\alpha,\epsilon,\omega)$-Joint/Ave-SRPP guarantees
by Proposition~\ref{prop:ordering}, but the converse need not hold.
Moreover, different choices of $\omega$ and different calibration
strategies for the noise can lead to mechanisms with the same bound
on $\mathtt{AveSD}^{\omega}_{\alpha}$ or $\mathtt{JSD}^{\omega}_{\alpha}$ but
very different protections at the level of secrets and
priors. For this reason, the inequalities in
Proposition~\ref{prop:ordering} should be interpreted as relations
between divergence-based certificates, rather than as an absolute
ranking of which framework is "more private".

\begin{proof}[Proof of Proposition \ref{prop:ordering}]
    
Fix a slice profile $\{\mathcal{U},\omega\}$, a prior $\theta\in\Theta$, and a secret
pair $(s_i,s_j)\in\mathcal{Q}$ with $P_\theta^S(s_i),P_\theta^S(s_j)>0$.
Write
\[
  \mathrm{P} := \mathcal{M}^\theta_{s_i},
  \textup{ and }
  \mathrm{Q} := \mathcal{M}^\theta_{s_j}
\]
for the output probability measures of the mechanism under $s_i$ and $s_j$, respectively.
Let $\{\mathcal{U}, \omega\}$ with $\mathcal{U} = \{u_{\ell}\}_{\ell=1}^m$ and $\omega = \{\omega_{\ell}\}^{m}_{\ell=1}$ be the
discrete slicing profile, and denote
\[
  D(u) := \mathtt{D}_{\alpha}\bigl(\Psi^{u_{\ell}}_{\#}\mathrm{P} \| \Psi^{u_{\ell}}_{\#}\mathrm{Q}\bigr),
  \qquad \ell=1,\dots,m.
\]

By Definition \ref{def:ave_srd}, the average sliced R\'enyi divergence is
\[
\mathtt{AveSD}^\omega_\alpha(\mathrm{P}\|\mathrm{Q})=\int_{\mathbb{S}^{d-1}}D(u)\ d\omega(u)
\]
By Definition \ref{def:joint_srd}, the joint sliced R\'enyi divergence is
\[
  \mathtt{JSD}^\omega_{\alpha}(\mathrm{P}\|\mathrm{Q})=\frac{1}{\alpha-1}
  \log\Bigl(\int_{\mathbb{S}^{d-1}} \exp\bigl((\alpha-1)D(u)\ d\omega(u)\bigr)
  \Bigr).
\]


\medskip\noindent
\textbf{Part 1: $\mathtt{AveSD}^{\omega}_{\alpha}(\mathrm{P}\|\mathrm{Q})
\leq \mathtt{JSD}^{\omega}_{\alpha}(\mathrm{P}\|\mathrm{Q})$.}

Let $Z$ be the random variable $Z = D(U)$ when $U\sim\omega$.
Then, 
\[
\mathbb{E}[Z] = \int_{\mathbb{S}^{d-1}} D(u) \ d\omega(u)= \mathtt{AveSD}^{\omega}_{\alpha}(\mathrm{P}\|\mathrm{Q}),
\]
and
\[
\mathtt{JSD}^\omega_\alpha(\mathrm{P}\|\mathrm{Q}) = \frac{1}{\alpha-1}\log\mathbb{E}\bigl[\exp\bigl((\alpha-1)Z\bigr)\bigr]
\]
For $\alpha>1$, $\exp((\alpha-1)z)$ is convex. 
Then, Jensen's inequality gives
\[
\mathbb{E}\bigl[\exp\bigl((\alpha-1)Z\bigr)\bigr] \geq \exp\bigl((\alpha-1) \mathbb{E}[Z] \bigr).
\]
Thus, by taking log and dividing by $(\alpha-1)$, we get
\[
\mathtt{JSD}^\omega_\alpha(\mathrm{P}\|\mathrm{Q}) \geq \mathbb{E}[Z] = \mathtt{AveSD}^{\omega}_{\alpha}(\mathrm{P}\|\mathrm{Q}).
\]

\medskip\noindent
\textbf{Part 2: $\mathtt{JSD}^\omega_\alpha(\mathrm{P}\|\mathrm{Q})
\leq \mathtt{D}_\alpha(\mathrm{P}\|\mathrm{Q})$.}

Define the randomized post-processing channel $\mathsf{Slice}$ as follows:
given $Y\in\mathbb{R}^d$, draw $U\sim\omega$ independently of $Y$, then output
$\mathsf{Slice}(Y):=(U,\langle Y,U\rangle)$.
Let $\mathrm{P}^{\mathsf{Slice}}$ and $\mathrm{Q}^{\mathsf{Slice}}$ denote the laws of $\mathsf{Slice}(Y)$ when
$Y\sim \mathrm{P}$ and $Y\sim \mathrm{Q}$, respectively.

For each direction $u$, the conditional law of $\langle Y,u\rangle$ under $Y\sim \mathrm{P}$ is
$\Psi^u_\#\mathrm{P}$, and similarly under $\mathrm{Q}$ it is $\Psi^u_\#\mathrm{Q}$.
By the definition of Rényi divergence for product/mixture measures with the same mixing law $\omega$,
we have
\[
\begin{aligned}
    &\mathtt{D}_\alpha(\mathrm{P}^{\mathsf{Slice}}\|\mathrm{Q}^{\mathsf{Slice}})\\&
=
\frac{1}{\alpha-1}\log\!\int_{\mathbb{S}^{d-1}}
\exp\!\Big((\alpha-1)\mathtt{D}_\alpha(\Psi^u_\#\mathrm{P}\,\|\,\Psi^u_\#\mathrm{Q})\Big)\,d\omega(u)\\&
=
\mathtt{JSD}^\omega_\alpha(\mathrm{P}\|\mathrm{Q}).
\end{aligned}
\]
Since $\mathsf{Slice}$ is a (data-independent) post-processing channel applied to $Y$,
the data-processing inequality for Rényi divergence (an $f$-divergence for $\alpha>1$) yields
\[
\mathtt{D}_\alpha(\mathrm{P}^{\mathsf{Slice}}\|\mathrm{Q}^{\mathsf{Slice}})
\le
\mathtt{D}_\alpha(\mathrm{P}\|\mathrm{Q}),
\]
and hence $\mathtt{JSD}^\omega_\alpha(\mathrm{P}\|\mathrm{Q})\le \mathtt{D}_\alpha(\mathrm{P}\|\mathrm{Q})$.

By combining \textbf{Part 1} and \textbf{Part 2}, we can claim 
\[
\mathtt{AveSD}^{\omega}_{\alpha}\big(\mathcal{M}^{\theta}_{s_i} \|
  \mathcal{M}^{\theta}_{s_j}\big)
  \leq
\mathtt{JSD}^{\omega}_{\alpha}\big(\mathcal{M}^{\theta}_{s_i}\|
  \mathcal{M}^{\theta}_{s_j}\big)
  \leq
\mathtt{D}_{\alpha}\big(\mathcal{M}^{\theta}_{s_i}\|
  \mathcal{M}^{\theta}_{s_j}\big),
\]
\end{proof}

\subsection{Post-Processing Immunity}\label{app:post_processing}

A fundamental closure property of DP and RPP is post-processing immunity (PPI): the privacy loss
of a mechanism cannot be increased by arbitrary data-independent post-processing
\cite{pierquin2024renyi,Dwork2014}. Formally, for any Markov kernel (randomized mapping) $K$ that is
independent of the dataset, post-processing acts on the \textit{endpoint distributions} of a release.
Since the R\'enyi divergence $\mathtt{D}_{\alpha}(\cdot\|\cdot)$ satisfies the data-processing inequality
(DPI),
\[
\mathtt{D}_{\alpha}(PK \,\|\, QK)\ \leq\ \mathtt{D}_{\alpha}(P\,\|\,Q),
\]
any privacy certificate expressed in terms of (upper bounds on) $\mathtt{D}_{\alpha}$ is automatically
closed under post-processing.

Post-processing immunity has two aspects.
(i) \textit{True privacy loss never increases:} for any data-independent $K$,
$\mathtt{D}_{\alpha}(PK \,\|\, QK)\le \mathtt{D}_{\alpha}(P\,\|\,Q)$ by DPI.
(ii) \textit{Safe certificate reuse for a downstream pipeline:} one may upper-bound the privacy loss of a
post-processing pipeline using a certificate for the initial mechanism.
For SRPP, this reuse must be interpreted through the relationship between sliced divergences and the
full-dimensional R\'enyi divergence (\ Prop.~\ref{prop:ordering}): SRPP/SRD is a quantification
language for the full release, and changing the representation along the pipeline can change the
numerical SRPP value even though the underlying distinguishability (and hence privacy loss in the
sense of (i)) cannot increase. Consequently, reusing SRPP parameters for an end-to-end pipeline can be
valid but may be overly conservative when the quantification geometry matters.

In privacy, PPI should be understood primarily as a \textit{semantic closure} property: for any data-independent
post-processing $K$ applied to the true release, the post-processed pipeline cannot be \textit{more}
distinguishable than the original release in the underlying divergence sense (DPI). This is stronger than
(and logically prior to) any particular \textit{parameterization rule} such as "reusing the same numeric
privacy parameters", which is merely a convenient upper-bounding convention and can be very conservative (or not directly meaningful) when the quantification geometry changes (as in SRPP).

SRPP’s post-processing discussion is ultimately rooted in the data-processing inequality (DPI) for the
standard R\'enyi divergence $\mathtt{D}_{\alpha}$.  Concretely, SRPP quantifies privacy by applying a
fixed (profile-determined) slicing/observation channel to the \textit{endpoint distribution} of the
(full-dimensional) released output, and then evaluating $\mathtt{D}_{\alpha}$ on the induced sliced laws:
Ave-SRPP takes the $\omega$-average of the resulting per-direction divergences, while Joint-SRPP applies a
log-moment aggregation that is equivalent to a single $\mathtt{D}_{\alpha}$ on an augmented observation.
Accordingly, for any data-independent post-processing kernel $K$ applied to the true release $Y=\mathcal M(X)$,
one first obtains the post-processed endpoint laws $PK$ and $QK$, and SRPP/SRD is then evaluated on
$PK$ versus $QK$ under the same slice profile.  Any nonconservation of the SRPP numerical value under
post-processing reflects this change in endpoint distributions (and the geometry induced by the chosen
profile), rather than a failure of post-processing immunity: the underlying distinguishability cannot
increase due to DPI.

Let $P=\Pr[\mathcal M(X)\mid s_i,\theta]$ and $Q=\Pr[\mathcal M(X)\mid s_j,\theta]$ be the endpoint
distributions on $\mathbb R^d$, and let $K$ be any data-independent post-processing kernel on $\mathbb R^d$.
Then DPI gives $\mathtt D_\alpha(PK\|QK)\le \mathtt D_\alpha(P\|Q)$ (true privacy loss never increases).
Moreover, by Prop.~\ref{prop:ordering},
\[
\mathtt{AveSD}^\omega_\alpha(PK\|QK)\le \mathtt{JSD}^\omega_\alpha(PK\|QK)\leq \mathtt D_\alpha(PK\|QK),
\]
so any full-dimensional R\'enyi/RPP certificate for $\mathcal M$ remains a valid upper bound on the
SRPP/SRD risk of the post-processed pipeline $K\circ\mathcal M$. In general, however, the SRPP values
$\mathtt{AveSD}^\omega_\alpha(\cdot\|\cdot)$ and $\mathtt{JSD}^\omega_\alpha(\cdot\|\cdot)$ need not be
numerically conserved under arbitrary $K$, because they quantify endpoint distinguishability through a
fixed slice profile.

\subsection{Average vs.\ Joint SRPP}
\label{app:ave-vs-joint-srpp}

Average SRPP (Ave-SRPP) and Joint SRPP (Joint-SRPP) are built from the same ingredients, R\'enyi divergences of one-dimensional projections, but aggregate these per-direction costs in different ways, leading to distinct strengths and use cases.

Recall that for any pair of probability measures $\mathrm{P},\mathrm{Q}$ on $\mathbb{R}^d$ and a slice profile $\{\mathcal{U}, \omega\}$, the average sliced R\'enyi divergence of order $\alpha>1$ is
\[
  \mathtt{AveSD}^{\omega}_\alpha(\mathrm{P}\|\mathrm{Q})
  :=
  \int_{\mathbb{S}^{d-1}}
    \mathtt{D}_{\alpha}\big( \Psi^{u}_{\#} P \,\big\|\, \Psi^{u}_{\#} Q \big) d\omega(u),
\]
while the joint sliced R\'enyi divergence is
\[
\begin{aligned}
    &\mathtt{JSD}^{\omega}_\alpha(\mathrm{P}\|\mathrm{Q})\\
  &:=
  \frac{1}{\alpha-1}
  \log
  \mathbb{E}_{U\sim \omega}
  \Big[
    \exp\big(
      (\alpha-1)
      \mathtt{D}_{\alpha}\big( \Psi^{U}_{\#} \mathrm{P} \big\| \Psi^{U}_{\#} \mathrm{Q} \big)
    \big)
  \Big].
\end{aligned}
\]

Ave-SRPP requires that, for every prior $\theta\in\Theta$ and every secret pair $(s_0,s_1)\in\mathcal{Q}$, the \textit{average} sliced R\'enyi divergence between the corresponding output probability distribution is bounded by $\epsilon$. 
Joint-SRPP instead bounds the log-moment aggregation of per-slice divergences, which can be interpreted as the R\'enyi divergence between the joint distributions ($\widetilde{\mathrm{P}}(\cdot, \cdot)$) of the \textit{slice (index) and the projected output}.

In particular, define the randomized sliced observation channel $\mathsf{Slice}:\mathbb R^d\to \mathcal U\times\mathbb R$
by
\[
\mathsf{Slice}(y):=(U,\Psi^{U}(y)),\qquad U\sim \omega\ \text{independent of }y.
\]
Let $\widetilde{\mathrm P}:=\mathsf{Slice}_{\#}\mathrm P$ and $\widetilde{\mathrm Q}:=\mathsf{Slice}_{\#}\mathrm Q$ denote the
endpoint laws of $(U,\Psi^{U}(Z_{\mathrm P}))$ and $(U,\Psi^{U}(Z_{\mathrm Q}))$ when $Z_{\mathrm P}\sim \mathrm P$ and
$Z_{\mathrm Q}\sim \mathrm Q$, respectively. Then
\[
\mathtt{JSD}^{\omega}_\alpha(\mathrm P\|\mathrm Q)
=
\mathtt D_{\alpha}\!\big(\widetilde{\mathrm P}\,\big\|\,\widetilde{\mathrm Q}\big).
\]
In this sense, Joint-SRPP is exactly RPP applied to the augmented observation $(U,\Psi^{U}(\mathcal M(X)))$,
whereas Ave-SRPP only controls the mean per-direction divergence.

These two aggregation schemes are ordered between each other and the full R\'enyi divergence. For any $\mathrm{P}, \mathrm{Q}$ on $\mathbb{R}^d$ and any slice distribution $\omega$,
\[
  \mathtt{AveSD}^{\omega}_{\alpha}(\mathrm{P}\|\mathrm{Q})\leq
  \mathtt{JSD}^{\omega}_{\alpha}(\mathrm{P}\|\mathrm{Q})\leq 
  \mathtt{D}_{\alpha}(\mathrm{P}\|\mathrm{Q}),
\]
and the same inequalities hold for the corresponding output distributions under any fixed Pufferfish model $(\mathcal{S},\mathcal{Q},\Theta)$ ((Proposition \ref{prop:ordering})).
Thus, at the level of numerical certificates, Joint-SRPP is always at least as "conservative" as Ave-SRPP for the same $(\alpha,\epsilon,\omega)$, and both are upper-bounded by the unsliced RPP divergence.

Although Ave-SRPP and Joint-SRPP are closely related, they serve complementary purposes.

\paragraph{Ave-SRPP for Geometry-Aware \textit{Average} Protection.}
Ave-SRPP directly reflects the geometry induced by the slicing distribution $\omega$ and is naturally aligned with sliced Wasserstein mechanism privatization, where one averages per-direction sensitivities. Formally, it controls the \textit{expected} sliced R\'enyi divergence when the projection direction is drawn from $\omega$. 
Equivalently, it upper-bounds the expected per-direction divergence for a direction drawn from $\omega$;
this is an analytic interpretation only (the mechanism still releases the full $d$-dimensional output).
This "single-slice observer" is purely an analytic device: the actual mechanism always outputs the full $d$-dimensional vector, and the sliced view is used only to define a geometry-aware privacy budget. 
In practice, this makes Ave-SRPP a natural \textit{calibration target}, often yielding tighter, more utility-friendly noise levels than enforcing a worst-case bound simultaneously in every direction.

\paragraph{Joint-SRPP for Robust, Slice-Aware Guarantees.}
Joint-SRPP instead can be interpreted as if it treats the slice index as part of the observation channel and bounds the R\'enyi divergence of the joint pair $(U,\Psi^{U}(\mathcal{M}(X)))$. In terms of risk quantification, this corresponds to accounting for an observer that can exploit knowledge of \textit{which} direction was used, not just the projected value. 
Thus Joint-SRPP protects against worst-case sliced distinguishability over the randomized slicing procedure, rather than its average, and provides a more conservative, slice-aware notion of privacy than Ave-SRPP.



\subsection{From SRPP to Sliced Pufferfish Privacy}
\label{app:srpp_to_epsdelta}

This appendix records standard R\'enyi-to-approximate guarantees for our sliced notions, with notation
consistent with Sec.~\ref{sec:sliced_Wasserstein}--\ref{sec:joint_srpp}.
Fix a Pufferfish scenario $(\mathcal{S},\mathcal{Q},\Theta)$ and a slice profile $(\mathcal{U},\omega)$, where
$\omega$ is a probability measure on $\mathbb{S}^{d-1}$ supported on $\mathcal{U}$.
For a mechanism $\mathcal M$ and $\theta\in\Theta$, $s\in\mathcal{S}$, write
\[
\mathcal{M}^\theta_s \ :=\ \Pr\!\big(\mathcal{M}(X)\mid s,\theta\big),
\]
for the conditional output distribution. For each $u\in\mathbb{S}^{d-1}$, recall the projection
$\Psi^{u}(a)=\langle a,u\rangle$ and its pushforward $\Psi^{u}_{\#}$ as in~\eqref{eq:pushforward}.

\paragraph{Approximate Max-Divergence.}
For distributions $P,Q$ on a common measurable space and $\delta\in(0,1)$, define
\begin{equation}\label{eq:approx_max_divergence}
    \mathtt{D}_\infty^\delta(P\|Q)
:= \inf\Bigl\{\varepsilon\geq 0:\ \forall A,\ P(A)\leq e^\varepsilon Q(A)+\delta\Bigr\}.
\end{equation}

These are the $(\varepsilon,\delta)$ analogues of Definitions~\ref{def:ave_srpp} and~\ref{def:joint_srpp},
obtained by replacing per-slice R\'enyi divergences with approximate max-divergences.

\begin{definition}[$(\varepsilon,\delta,\omega)$-Ave-SPP]\label{def:ave_spp}
A mechanism $\mathcal{M}$ satisfies $(\varepsilon,\delta,\omega)$-\textit{Ave-SPP} in
$(\mathcal{S},\mathcal{Q},\Theta)$ if for all $\theta\in\Theta$ and $(s_i,s_j)\in\mathcal{Q}$ with
$P^S_{\theta}(s_i),P^S_{\theta}(s_j)>0$,
\[
\int_{\mathbb{S}^{d-1}}
\mathtt{D}_\infty^\delta\!\Big(\Psi^u_{\#}\mathcal{M}^\theta_{s_i}\ \big\|\ \Psi^u_{\#}\mathcal{M}^\theta_{s_j}\Big)\,
\mathrm{d}\omega(u)
\ \leq\ \varepsilon.
\]
\end{definition}

\begin{definition}[$(\varepsilon,\delta,\omega)$-Joint-SPP]\label{def:joint_spp}
Let $V\sim\omega$ be independent of $X$ and of the internal randomness of $\mathcal{M}$, and define the
\textit{direction-revealing projected mechanism}
\[
\widetilde{\mathcal{M}}(X) \ :=\ \bigl(V,\ \Psi^{V}(\mathcal{M}(X))\bigr).
\]
We say that $\mathcal{M}$ satisfies $(\varepsilon,\delta,\omega)$-\textit{Joint-SPP} in
$(\mathcal{S},\mathcal{Q},\Theta)$ if $\widetilde{\mathcal{M}}$ satisfies $(\varepsilon,\delta)$-PP in
$(\mathcal{S},\mathcal{Q},\Theta)$, i.e., for all $\theta\in\Theta$ and $(s_i,s_j)\in\mathcal{Q}$ with
$P^S_{\theta}(s_i), P^S_{\theta}(s_j)>0$ and all measurable $A$,
\[
\Pr\!\bigl(\widetilde{\mathcal{M}}(X)\in A\mid s_i,\theta\bigr)
\ \leq\ e^\varepsilon\,\Pr\!\bigl(\widetilde{\mathcal{M}}(X)\in A\mid s_j,\theta\bigr)\ +\ \delta.
\]
\end{definition}

Recall the R\'enyi divergence $\mathtt{D}_\alpha(\cdot\|\cdot)$ given by (\ref{eq:renyi_divergence}).
Lemma \ref{lem:renyi_to_epsdelta_app} shows conversion from R\'enyi divergenece to the approximate max-divegrence $\mathtt{D}_\infty^\delta$ in (\ref{eq:approx_max_divergence}).

\begin{lemma}[R\'enyi to approximate max-divergence]\label{lem:renyi_to_epsdelta_app}
Let $\alpha>1$ and $\delta\in(0,1)$. If $\mathtt{D}_\alpha(P\|Q)\leq \epsilon_\alpha$, then
\[
\mathtt{D}_\infty^\delta(P\|Q)\ \leq\ \epsilon_\alpha\ +\ \frac{\log(1/\delta)}{\alpha-1}.
\]
\end{lemma}

\begin{proof}
Let $r:=\frac{dP}{dQ}$ on $\{dQ>0\}$ (and $r:=+\infty$ on $\{dQ=0,dP>0\}$).
For any measurable $A$ and any $\tau>0$,
\[
\begin{aligned}
    P(A)=\int_A r\,dQ
&=\int_{A\cap\{r\le \tau\}} r\,dQ + \int_{A\cap\{r>\tau\}} r\,dQ
\\&\leq \tau\,Q(A)+P(r>\tau).
\end{aligned}
\]
Let $\tau:=e^\varepsilon$. 
It suffices to bound $P(r>e^\varepsilon)$.
Applying Markov's inequality to $r^{\alpha-1}$ under $P$ gives
\[
P(r>e^\varepsilon)
=
P\big(r^{\alpha-1}>e^{(\alpha-1)\varepsilon}\big)
\leq
\frac{\mathbb{E}_P[r^{\alpha-1}]}{e^{(\alpha-1)\varepsilon}}
=
\frac{\int r^\alpha\,dQ}{e^{(\alpha-1)\varepsilon}}.
\]
By definition of R\'enyi divergence,
\[
\int r^\alpha\,dQ
=
\exp\big((\alpha-1)\mathtt{D}_\alpha(P\|Q)\big)
\leq
\exp\big((\alpha-1)\epsilon_\alpha\big).
\]
Hence
\[
P(r>e^\varepsilon)
\leq
\exp\big((\alpha-1)\epsilon_\alpha-(\alpha-1)\varepsilon\big).
\]
Setting $\varepsilon:=\epsilon_\alpha+\frac{\log(1/\delta)}{\alpha-1}$ yields $P(r>e^\varepsilon)\leq\delta$, and therefore
$P(A)\leq e^\varepsilon Q(A)+\delta$ for all measurable $A$, which is equivalent to
$\mathtt{D}_\infty^\delta(P\|Q)\leq \varepsilon$.
\end{proof}

Applying Lemma~\ref{lem:renyi_to_epsdelta_app} to each sliced pair
$(\Psi^u_{\#}\mathcal M^\theta_{s_i},\Psi^u_{\#}\mathcal M^\theta_{s_j})$ yields the pointwise bound
\[
\mathtt{D}_\infty^\delta\!\Big(\Psi^u_{\#}\mathcal M^\theta_{s_i}\,\big\|\,\Psi^u_{\#}\mathcal M^\theta_{s_j}\Big)
\le
\mathtt{D}_\alpha\!\Big(\Psi^u_{\#}\mathcal M^\theta_{s_i}\,\big\|\,\Psi^u_{\#}\mathcal M^\theta_{s_j}\Big)
+\frac{\log(1/\delta)}{\alpha-1}.
\]
Integrating over $u\sim\omega$ converts $(\alpha,\epsilon,\omega)$-Ave-SRPP into an $(\varepsilon,\delta,\omega)$-Ave-SPP
guarantee with $\varepsilon=\epsilon+\frac{\log(1/\delta)}{\alpha-1}$; the same conversion applies to Joint-SRPP by
applying Lemma~\ref{lem:renyi_to_epsdelta_app} to the joint endpoint laws of $(V,\Psi^V(\mathcal M(X)))$.

\subsubsection{Ave-SRPP Implies Ave-SPP}

\begin{theorem}[Ave-SRPP $\Rightarrow$ Ave-SPP]\label{thm:ave_srpp_to_ave_spp_app}
Fix $\alpha>1$. 
If $\mathcal{M}$ satisfies $(\alpha,\epsilon,\omega)$-\textup{Ave-SRPP} in
$(\mathcal{S},\mathcal{Q},\Theta)$, then for any $\delta\in(0,1)$ it satisfies
$(\epsilon',\delta,\omega)$-\textup{Ave-SPP} with
\[
\epsilon'\ :=\ \epsilon\ +\ \frac{\log(1/\delta)}{\alpha-1}.
\]
\end{theorem}

\begin{proof}
Fix $\theta\in\Theta$ and $(s_i,s_j)\in\mathcal{Q}$ with $P^S_{\theta}(s_i),P^S_{\theta}(s_j)>0$.
For each $u\in\mathcal{U}$, apply Lemma~\ref{lem:renyi_to_epsdelta_app} to
$P^u:=\Psi^u_{\#}\mathcal{M}^\theta_{s_i}$ and $Q^u:=\Psi^u_{\#}\mathcal{M}^\theta_{s_j}$ to obtain
\[
\mathtt{D}_\infty^\delta(P^u\|Q^u)
\leq
\mathtt{D}_\alpha(P^u\|Q^u)+\frac{\log(1/\delta)}{\alpha-1}.
\]
Integrating both sides with respect to $\omega$ yields
\[
\int \mathtt{D}_\infty^\delta(P^u\|Q^u)\,\mathrm d\omega(u)
\leq
\int \mathtt{D}_\alpha(P^u\|Q^u)\,\mathrm d\omega(u)
+\frac{\log(1/\delta)}{\alpha-1}.
\]
By $(\alpha,\epsilon,\omega)$-Ave-SRPP (Definition~\ref{def:ave_srpp}), the second integral satisfies $\int \mathtt{D}_\alpha(P^u\|Q^u)\,d\omega(u)\leq \epsilon$. Therefore,
$\int \mathtt{D}_\infty^\delta(P^u\|Q^u)\,d\omega(u)\leq \epsilon+\frac{\log(1/\delta)}{\alpha-1}$.
\end{proof}

\subsubsection{Joint-SRPP Implies Joint-SPP}

\begin{theorem}[Joint-SRPP $\Rightarrow$ Joint-SPP]\label{thm:joint_srpp_to_joint_spp_app}
Fix $\alpha>1$. If $\mathcal{M}$ satisfies $(\alpha,\epsilon,\omega)$-\textup{Joint-SRPP} in
$(\mathcal{S},\mathcal{Q},\Theta)$, then for any $\delta\in(0,1)$ it satisfies
$(\epsilon',\delta,\omega)$-\textup{Joint-SPP with}\footnote{Equivalently, $\widetilde{\mathcal{M}}$ is $(\epsilon',\delta)$-PP.}
\[
\epsilon'\ :=\ \epsilon\ +\ \frac{\log(1/\delta)}{\alpha-1}.
\]
\end{theorem}

\begin{proof}
Fix $\theta\in\Theta$ and $(s_i,s_j)\in\mathcal{Q}$ with $P^S_{\theta}(s_i),P^S_{\theta}(s_j)>0$.
Let $P:=\mathcal{M}^\theta_{s_i}$ and $Q:=\mathcal{M}^\theta_{s_j}$ be distributions on $\mathbb{R}^d$.
Let $\widetilde{P},\widetilde{Q}$ be the laws of $\widetilde{\mathcal{M}}(X)=(V,\Psi^V(\mathcal{M}(X)))$
under secrets $s_i$ and $s_j$.

\textbf{Step 1.}
We show that Joint-SRD is the R\'enyi divergence of the direction-revealing output.
For each $u$, write $P^u:=\Psi^u_{\#}P$ and $Q^u:=\Psi^u_{\#}Q$.
Since $V\sim\omega$ is the same under both secrets and is independent of $\mathcal{M}(X)$,
the disintegrations of $\widetilde{P}$ and $\widetilde{Q}$ satisfy
\[
\widetilde{P}(du,dy)=\omega(du)\,P^u(dy),
\quad
\widetilde{Q}(du,dy)=\omega(du)\,Q^u(dy).
\]
Assuming $\widetilde{P}\ll \widetilde{Q}$ (otherwise the divergence is $+\infty$), the Radon-Nikodym derivative factorizes:
\[
\frac{d\widetilde{P}}{d\widetilde{Q}}(u,y)=\frac{dP^u}{dQ^u}(y)
\]
for $\widetilde Q$-a.e. $(u,y)$.
Therefore,
\begin{align*}
\mathtt{D}_\alpha(\widetilde{P}\|\widetilde{Q})
&=
\frac{1}{\alpha-1}\log\int\left(\frac{d\widetilde{P}}{d\widetilde{Q}}\right)^\alpha d\widetilde{Q}\\
&=
\frac{1}{\alpha-1}\log\int_{\mathbb S^{d-1}}\omega(du)\int \left(\frac{dP^u}{dQ^u}(y)\right)^\alpha dQ^u(y)\\
&=
\frac{1}{\alpha-1}\log\int_{\mathbb{S}^{d-1}}
\exp\!\Big((\alpha-1)\mathtt{D}_\alpha(P^u\|Q^u)\Big)\,\mathrm{d}\omega(u)\\
&=
\mathtt{JSD}^{\omega}_{\alpha}(P\|Q),
\end{align*}
which matches Definition~\ref{def:joint_srd}.

\textit{Step 2 (Apply the standard conversion).}
By $(\alpha,\epsilon,\omega)$-Joint-SRPP (Definition~\ref{def:joint_srpp}),
\[
\mathtt{D}_\alpha(\widetilde{P}\|\widetilde{Q})
=
\mathtt{JSD}^{\omega}_{\alpha}(P\|Q)
\leq \epsilon.
\]
Applying Lemma~\ref{lem:renyi_to_epsdelta_app} to $(\widetilde{P},\widetilde{Q})$ gives
\[
\mathtt{D}_\infty^\delta(\widetilde{P}\|\widetilde{Q})
\leq
\epsilon+\frac{\log(1/\delta)}{\alpha-1}
=
\epsilon'.
\]
By the definition of $\mathtt{D}_\infty^\delta$, this inequality is equivalent to the $(\epsilon',\delta)$-PP condition
for $\widetilde{\mathcal{M}}$, i.e., $(\epsilon',\delta,\omega)$-Joint-SPP for $\mathcal{M}$.
\end{proof}

\subsubsection{Optimizing Over R\'enyi Orders}

If a mechanism admits Ave/Joint-SRPP bounds $\epsilon_\alpha$ for multiple $\alpha>1$, then we may optimize the conversion as
\[
\epsilon(\delta)
:=\inf_{\alpha>1}\left\{\epsilon_\alpha+\frac{\log(1/\delta)}{\alpha-1}\right\}.
\]

\section{Examples of $L_t$-Lipschitz Update Map}\label{app:L_t_Update_example}

In this section, we provide two common cases of $L_t$-Lipschitz update maps described in Sec.~\ref{sec:HUC}.

\textit{(i) Preconditioned SGD.}
Suppose we use a (possibly history- and randomness-dependent) preconditioner
$A_{t}:\mathcal{Y}_{<t}\times \mathcal{R} \to \mathbb{R}^{d\times d}$, so that
\[
T_{t}(z; y_{<t})
=
\xi_{t-1} - A_{t}(y_{<t}, R_t)\, z .
\]
Then $T_t$ is $L_t$-Lipschitz in its first argument with
\[
L_t
:=
\sup_{y_{<t},\, r}\bigl\|A_{t}(y_{<t}, r)\bigr\|_{\mathrm{op}},
\qquad
\|A\|_{\mathrm{op}} := \sup_{\|v\|_{2} = 1} \|Av\|_{2}.
\]
In standard SGD, $A_{t}(y_{<t}, r) = \kappa I_{d}$ (step size $\kappa$), hence $L_t=\kappa$.

\textit{(ii) Proximal or projected updates.}
Let 
\[
\mathrm{prox}(v)
:=
\arg\min_{w\in\mathbb{R}^{d}}
\Bigl\{\tfrac{1}{2}\|w-v\|^{2}_{2} + \bar{\lambda}\mathrm{Reg}(w)\Bigr\},
\]
where $\bar{\lambda}>0$ and $\mathrm{Reg}:\mathbb{R}^{d}\to(-\infty,+\infty]$ is a proper,
lower-semicontinuous, convex regularizer. Let $\overline{\Pi}_{\mathcal{C}}$ be the Euclidean
projection onto a closed convex set $\mathcal{C}\subset \mathbb{R}^{d}$.
Since both $\mathrm{prox}$ and $\overline{\Pi}_{\mathcal{C}}$ are $1$-Lipschitz, the same constant $L_t$
serves as a Lipschitz bound when
\[
T_{t}(z; y_{<t}) = \mathrm{prox}\bigl(\xi_{t-1} - A_{t}(y_{<t}, R_t)z \bigr)
\]
or
\[
T_{t}(z; y_{<t}) = \overline{\Pi}_{\mathcal{C}}\bigl( \xi_{t-1} - A_{t}(y_{<t}, R_t)z\bigr).
\]

\section{Discrepancy Cap Estimation}
\label{app:estimate_caps}

This appendix describes principled approaches to instantiating the \textit{discrepancy cap} terms
used by HUC/sa-HUC accounting.
Fix a Pufferfish scenario $(\mathcal{S},\mathcal{Q},\Theta)$ and an iteration index $t$.
Let $\eta_t$ denote the (possibly randomized) minibatch selection rule (distribution) at iteration $t$ with batch size $B_t$.
For each $\theta\in\Theta$ and each secret $s\in\mathcal{S}$ with $P_\theta^S(s)>0$, let
$P_\theta(X\mid S=s)$ denote the conditional data distribution.

\paragraph{Notation and Correspondence to the Main Text.}
The main body defines the discrepancy count
$K_t(x,x';r)=\sum_{j\in\mathsf{I}_t(r)}\mathbf{1}\{x_j\neq x'_j\}$
and its coupling-based cap
$K_t(\gamma)=\operatorname*{ess\,sup}_{((X,X'),R_t)\sim \gamma\times\mathbb{P}_{\eta,\rho}}K_t(X,X';R_t)$
(see \eqref{eq:discrepancy_K}--\eqref{eq:discrepancy_cap}).
In this appendix we denote the same random discrepancy count by
$\Delta_t=\Delta_t(X^{(i)},X^{(j)},R_t)$
to emphasize its role as a random variable under the coupled experiment.
Concretely, under the identification $(X^{(i)},X^{(j)})\equiv(X,X')$ and with the convention that
$\mathsf{I}_t(r)$ is interpreted as a set (\textsf{WOR}) or a length-$B_t$ sequence (\textsf{WR}),
we have $\Delta_t \equiv K_t(X,X';R_t)$.
Accordingly, the deterministic HUC cap in \eqref{eq:HUC_cap_def} is a uniform upper bound on $\operatorname*{ess\,sup}\Delta_t$, and the sa-HUC cap in \eqref{eq:msHUC_cap_def} uniformly bounds
$\mathbb{E}[\Delta_t^2]$ over the subsampling randomness.

\subsection{Discrepancy Random Variable and Caps}
\label{app:estimate_caps:def}

Caps are defined relative to a \textit{coupling} between two conditional data distributions.
Fix $\theta\in\Theta$ and $(s_i,s_j)\in\mathcal{Q}$ with $P_\theta^S(s_i)>0$ and $P_\theta^S(s_j)>0$.
Let $\Gamma_{\theta,ij}\in\Pi\!\big(P_\theta(X\mid S=s_i),\,P_\theta(X\mid S=s_j)\big)$ be a coupling, and draw
$(X^{(i)},X^{(j)})\sim\Gamma_{\theta,ij}$.
Draw a minibatch index multiset/vector $R_t=(I_1,\dots,I_{B_t})\sim \eta_t$.
Define the batch discrepancy count
\begin{equation}
\label{eq:def_discrepancy_count}
\Delta_t
\;:=\;
\Delta_t\!\big(X^{(i)},X^{(j)},R_t\big)
\;=\;
\#\Big\{b\in\{1,\dots,B_t\}:\ x^{(i)}_{I_b}\neq x^{(j)}_{I_b}\Big\},
\end{equation}
where the indices $(I_1,\dots,I_{B_t})$ may be sampled with or without replacement according to $\eta_t$.
For structured data, the relation "$\neq$" is interpreted as inequality of the atomic data unit that
determines the per-example gradient contribution (e.g., a record, a user trajectory, or another unit of influence).

By construction,
\begin{equation}
\label{eq:Delta_range}
0 \;\leq\; \Delta_t \;\leq\; B_t,
\qquad\text{and hence}\qquad
0 \;\leq\; \Delta_t^2 \;\leq\; B_t^2.
\end{equation}

\paragraph{Deterministic (HUC) Caps.}
For a fixed choice of coupling family $\{\Gamma_{\theta,ij}\}$, a deterministic HUC cap is any
$K_t\in[0,B_t]$ such that
\begin{equation}\label{eq:HUC_cap_def}
K_t\ \geq\
\sup_{\theta\in\Theta}\ \sup_{(s_i,s_j)\in\mathcal{Q}}
\operatorname*{ess\ sup}_{(X^{(i)},X^{(j)},R_t)\sim \Gamma_{\theta,ij}\times \eta_t}
\Delta_t\!\big(X^{(i)},X^{(j)},R_t\big).
\end{equation}
Equivalently, \eqref{eq:HUC_cap_def} enforces $\Delta_t\leq K_t$ almost surely, uniformly over all
$\theta\in\Theta$ and $(s_i,s_j)\in\mathcal{Q}$.

\paragraph{Second-Moment (sa-HUC) Caps.}
A subsampling-aware (sa-HUC) cap is any scalar $\overline{K}_t^2\in[0,B_t^2]$ such that
\begin{equation}\label{eq:msHUC_cap_def}
\overline{K}_t^2\ \ge\
\sup_{\theta\in\Theta}\ \sup_{(s_i,s_j)\in\mathcal{Q}}
\operatorname*{ess\ sup}_{(X^{(i)},X^{(j)})\sim \Gamma_{\theta,ij}}
\mathbb{E}_{R_t\sim \eta_t}\!\big[\Delta_t\!\big(X^{(i)},X^{(j)},R_t\big)^2\big].
\end{equation}

\paragraph{Tail vs.\ Moment.}
In practice, rather than enforcing the almost-sure cap~\eqref{eq:HUC_cap_def}, one may specify a tail level
$\delta_t\in(0,1)$ and seek a \textit{high-probability} cap $K_t(\delta_t)$ such that
\begin{equation}\label{eq:HUC_tail_cap}
\Pr\bigl(\Delta_t \leq K_t(\delta_t)\bigr)\ \geq\ 1-\delta_t
\end{equation}
uniformly over $\theta\in\Theta$ and $(s_i,s_j)\in\mathcal{Q}$,
where the probability is taken over $(X^{(i)},X^{(j)})\sim\Gamma_{\theta,ij}$ and $R_t\sim\eta_t$.
For sa-HUC, one instead seeks a uniform second-moment cap $\overline{K}_t^2$ satisfying~\eqref{eq:msHUC_cap_def}.

\subsection{Monte Carlo Estimation via a Simulator}
\label{app:estimate_caps:mc}

Assume we have (i) sampling access to $P_\theta(X\mid S=s)$ for any $\theta\in\Theta$ and $s\in\mathcal{S}$, and
(ii) (ii) the ability to sample a minibatch draw $R_t\sim \eta_t$.
Fix $\theta\in\Theta$ and $(s_i,s_j)\in\mathcal{Q}$ with $P_\theta^S(s_i)>0$ and $P_\theta^S(s_j)>0$, and fix a coupling
$\Gamma_{\theta,ij}\in\Pi\!\big(P_\theta(X\mid S=s_i),\,P_\theta(X\mid S=s_j)\big)$.
The goal is to estimate the distribution of the discrepancy count $\Delta_t$ defined in~\eqref{eq:def_discrepancy_count},
under the joint randomness
\[
(X^{(i)},X^{(j)})\sim \Gamma_{\theta,ij},
\quad
R_t\sim \eta_t.
\]

\paragraph{Coupling Constructions.}
Any coupling $\Gamma_{\theta,ij}$ consistent with the scenario may be used. Typical constructions include:
\begin{itemize}
\item \textbf{Independent coupling:} draw $X^{(i)}\sim P_\theta(X\mid S=s_i)$ and $X^{(j)}\sim P_\theta(X\mid S=s_j)$ independently.
\item \textbf{Shared-latent coupling:} when the conditional distribution admits a representation $X=g(Z,S)$ for some latent $Z$,
draw $Z\sim P_\theta(Z)$ and set $X^{(i)}=g(Z,s_i)$, $X^{(j)}=g(Z,s_j)$.
\item \textbf{Recordwise maximal coupling:} for i.i.d.\ record models, couple records coordinatewise using maximal couplings of the
per-record conditionals; see Section~\ref{app:estimate_caps:structure}.
\end{itemize}
All bounds below are stated for an arbitrary fixed coupling $\Gamma_{\theta,ij}$; tighter couplings can yield smaller caps.

\paragraph{Monte Carlo Procedures.}
Draw i.i.d.\ replicates
\[
(X^{(i,m)},X^{(j,m)})\sim \Gamma_{\theta,ij},\qquad R_t^{(m)}\sim \eta_t,\qquad m=1,\dots,M,
\]
compute $\Delta_t^{(m)}$ via~\eqref{eq:def_discrepancy_count}, and construct either (i) a second-moment cap for sa-HUC or
(ii) a tail cap for HUC. The parameter $\gamma_t\in(0,1)$ denotes the \textit{calibration failure probability} associated with
the Monte Carlo estimation, whereas $\delta_t\in(0,1)$ denotes the \textit{tail level} of the desired HUC cap.

\begin{algorithm}[t]
  \caption{Monte Carlo estimation of discrepancy caps at iteration $t$ (fixed $\theta,(s_i,s_j)$ and coupling $\Gamma_{\theta,ij}$)}
  \label{alg:mc_caps}
  \begin{algorithmic}[1]
    \REQUIRE Iteration $t$; batch size $B_t$; minibatch subsampling rule $\eta_t$;
    coupling sampler for $\Gamma_{\theta,ij}$; Monte Carlo size $M$;
    tail level $\delta_t\in(0,1)$; calibration failure $\gamma_t\in(0,1)$.
    \FOR{$m=1,\dots,M$}
      \STATE Sample $(X^{(i,m)},X^{(j,m)})\sim\Gamma_{\theta,ij}$.
      \STATE Sample $R_t^{(m)}=(I^{(m)}_1,\dots,I^{(m)}_{B_t})\sim\eta_t$.
\STATE Compute $\Delta_t^{(m)}=\#\{b\in\{1,\dots,B_t\}: x^{(i,m)}_{I^{(m)}_b}\neq x^{(j,m)}_{I^{(m)}_b}\}$.
    \ENDFOR
    \STATE Set $\varepsilon \gets \sqrt{\frac{1}{2M}\log\frac{2}{\gamma_t}}$.
    \STATE \textbf{sa-HUC output:} $\widehat{\mu}_{2,t}\gets \frac{1}{M}\sum_{m=1}^M(\Delta_t^{(m)})^2$ and
    $\widehat{K}_t^2 \gets \widehat{\mu}_{2,t}+B_t^2\,\varepsilon$.
    \STATE \textbf{HUC output:} if $\delta_t>\varepsilon$, let $\widehat{K}_t(\delta_t)$ be the empirical
    $(1-(\delta_t-\varepsilon))$-quantile of $\{\Delta_t^{(m)}\}_{m=1}^M$; otherwise set $\widehat{K}_t(\delta_t)\gets B_t$.
    \STATE \textbf{return} $(\widehat{K}_t(\delta_t),\,\widehat{K}_t^2)$.
  \end{algorithmic}
\end{algorithm}

\subsubsection{Finite-Sample Bounds for sa-HUC}
\label{app:estimate_caps:mc_ms}

We give a direct concentration bound for $\mathbb{E}[\Delta_t^2]$, using the bounded range
$0\leq \Delta_t^2\leq B_t^2$ from~\eqref{eq:Delta_range}.

\begin{lemma}[Monte Carlo sa-HUC cap via Hoeffding]
\label{lem:mc_ms_cap}
Fix an iteration $t$, a prior $\theta\in\Theta$, a secret pair $(s_i,s_j)\in\mathcal{Q}$ with
$P_\theta^S(s_i)>0$ and $P_\theta^S(s_j)>0$, a coupling $\Gamma_{\theta,ij}$, and a minibatch subsampling rule $\eta_t$.
Let $\Delta_t^{(1)},\dots,\Delta_t^{(M)}$ be i.i.d.\ draws of the discrepancy count $\Delta_t$ under the joint randomness
\[
(X^{(i)},X^{(j)})\sim \Gamma_{\theta,ij},\qquad R_t\sim \eta_t,
\]
and define the empirical second moment
\[
\widehat{\mu}_{2,t}(\theta,ij)\;:=\;\frac{1}{M}\sum_{m=1}^M\big(\Delta_t^{(m)}\big)^2.
\]
Then for any $\gamma_t\in(0,1)$, with probability at least $1-\gamma_t$ over the Monte Carlo sampling,
\begin{equation}
\label{eq:mc_ms_cap_bound}
\mathbb E\big[\Delta_t^2\big]
\;\leq\;
\widehat{\mu}_{2,t}(\theta,ij)\;+\;B_t^2\sqrt{\frac{1}{2M}\log\frac{2}{\gamma_t}}.
\end{equation}
\end{lemma}

\begin{proof}
Let $Z_m := (\Delta_t^{(m)})^2$ for $m=1,\dots,M$. By construction,
$\Delta_t\in\{0,1,\dots,B_t\}$ (\ \eqref{eq:Delta_range}), hence
\[
0 \leq Z_m \leq B_t^2 \qquad \text{a.s.}
\]
Moreover, $\{Z_m\}_{m=1}^M$ are i.i.d.\ because the pairs
$(X^{(i,m)},X^{(j,m)})\sim\Gamma_{\theta,ij}$ and minibatch draws
$R_t^{(m)}\sim \eta_t$ are sampled i.i.d.\ across $m$.

Denote $\mu := \mathbb{E}[Z_1]=\mathbb{E}[\Delta_t^2]$ and
$\bar{Z} := \frac{1}{M}\sum_{m=1}^M Z_m=\widehat{\mu}_{2,t}(\theta,ij)$.
By Hoeffding's inequality for bounded i.i.d.\ random variables, for any $\varepsilon>0$,
\[
\Pr\!\left(\mu - \bar{Z} \geq \varepsilon\right)
\leq \exp\!\left(-\frac{2M\varepsilon^2}{(B_t^2-0)^2}\right)
= \exp\!\left(-\frac{2M\varepsilon^2}{B_t^4}\right).
\]
Set $\varepsilon := B_t^2\sqrt{\frac{1}{2M}\log\frac{2}{\gamma_t}}$.
Then
\[
\Pr\!\left(\mu - \bar{Z} \geq B_t^2\sqrt{\frac{1}{2M}\log\frac{2}{\gamma_t}}\right)
\leq \exp\!\left(-\log\frac{2}{\gamma_t}\right)
= \frac{\gamma_t}{2}
\leq \gamma_t.
\]
Equivalently, with probability at least $1-\gamma_t$,
\[
\mathbb{E}[\Delta_t^2]
=\mu
\leq \bar{Z} + B_t^2\sqrt{\frac{1}{2M}\log\frac{2}{\gamma_t}}
= \widehat{\mu}_{2,t}(\theta,ij) + B_t^2\sqrt{\frac{1}{2M}\log\frac{2}{\gamma_t}},
\]
which is exactly \eqref{eq:mc_ms_cap_bound}.
\end{proof}

\paragraph{Uniform ms-Cap Over a Finite Instantiated Family.}
Let $\widehat{\Theta}\subseteq\Theta$ be a finite set of instantiated priors, and let
$\widehat{\mathcal{Q}}\subseteq\mathcal{Q}$ be a finite set of secret pairs.
Define the index set $\mathcal{I}_t := \widehat{\Theta}\times\widehat{\mathcal{Q}}$ and let $N_t:=|\mathcal{I}_t|$.

\begin{corollary}[Uniform sa-HUC cap over $\widehat{\Theta}\times\widehat{\mathcal{Q}}$]
\label{cor:uniform_ms_cap}
For each $(\theta,(s_i,s_j))\in\mathcal{I}_t$, compute
\[
\widehat{K}_{t,\theta,ij}^2
\;:=\;
\widehat{\mu}_{2,t}(\theta,ij)\;+\;B_t^2\sqrt{\frac{1}{2M}\log\frac{2N_t}{\gamma_t}},
\]
i.e., apply Lemma~\ref{lem:mc_ms_cap} with calibration failure $\gamma_t/N_t$.
Then, with probability at least $1-\gamma_t$, simultaneously for all $(\theta,(s_i,s_j))\in\mathcal{I}_t$,
\[
\mathbb{E}[\Delta_t^2]\ \leq\ \widehat{K}_{t,\theta,ij}^2.
\]
Consequently, the uniform sa-HUC cap
\[
\widehat{K}_t^2\;:=\;\max_{(\theta,(s_i,s_j))\in\mathcal{I}_t}\widehat{K}_{t,\theta,ij}^2
\]
satisfies $\mathbb{E}[\Delta_t^2]\leq \widehat{K}_t^2$ for all $(\theta,(s_i,s_j))\in\mathcal{I}_t$ on the same event.
\end{corollary}

\begin{proof}
For each index $k=(\theta,(s_i,s_j))\in\mathcal{I}_t$, apply Lemma~\ref{lem:mc_ms_cap}
with failure probability $\gamma_t/N_t$. This yields
\[
\Pr\!\left(\mathbb{E}[\Delta_t^2] \leq \widehat{K}_{t,\theta,ij}^2\right)\ \geq\ 1-\frac{\gamma_t}{N_t}.
\]
By the union bound over the $N_t$ indices,
\[
\Pr\!\left(\forall(\theta,(s_i,s_j))\in\mathcal{I}_t:\ \mathbb{E}[\Delta_t^2]\leq \widehat{K}_{t,\theta,ij}^2\right)
\ \geq\ 1-\sum_{k\in\mathcal{I}_t}\frac{\gamma_t}{N_t}
\ =\ 1-\gamma_t.
\]
On this same event, taking $\widehat{K}_t^2:=\max_{(\theta,(s_i,s_j))\in\mathcal{I}_t}\widehat{K}_{t,\theta,ij}^2$
gives $\mathbb{E}[\Delta_t^2]\leq \widehat K_t^2$ simultaneously for all $(\theta,(s_i,s_j))\in\mathcal{I}_t$.
\end{proof}

\subsubsection{Finite-Sample Bounds for HUC}
\label{app:estimate_caps:mc_huc}

We estimate a high-probability HUC cap $K_t(\delta_t)$ satisfying~\eqref{eq:HUC_tail_cap} via an empirical quantile
bound based on the Dvoretzky--Kiefer--Wolfowitz (DKW) inequality.

Fix $\theta\in\Theta$, $(s_i,s_j)\in\mathcal{Q}$ with $P_\theta^S(s_i),P_\theta^S(s_j)>0$, a coupling
$\Gamma_{\theta,ij}\in\Pi(P_\theta(X\mid S=s_i),P_\theta(X\mid S=s_j))$, and a subsampling rule $\eta_t$.
Let $\Delta_t$ be the discrepancy count induced by
$(X^{(i)},X^{(j)})\sim\Gamma_{\theta,ij}$ and $R_t\sim\eta_t$.
Let $F(u):=\Pr(\Delta_t\leq u)$ and $\widehat{F}_M$ be the empirical CDF from i.i.d.\ samples
$\Delta_t^{(1)},\dots,\Delta_t^{(M)}$.

\begin{lemma}
\label{lem:dkw_huc_cap}
Let $\Delta_t^{(1)},\dots,\Delta_t^{(M)}$ be i.i.d.\ samples of $\Delta_t$ with true CDF $F$ and empirical CDF $\widehat{F}_M$.
For any calibration failure $\gamma_t\in(0,1)$, define
\[
\varepsilon \;:=\; \sqrt{\frac{1}{2M}\log\frac{2}{\gamma_t}}.
\]
Then, with probability at least $1-\gamma_t$,
\[
\sup_{u\in\mathbb{R}}\big|\widehat{F}_M(u)-F(u)\big|\ \leq\ \varepsilon.
\]
Consequently, for any $\delta_t\in(0,1)$ with $\delta_t>\varepsilon$, let
$\widehat{q}_{1-(\delta_t-\varepsilon)}$ be an empirical $(1-(\delta_t-\varepsilon))$-quantile of
$\{\Delta_t^{(m)}\}_{m=1}^M$ (e.g., the $\lceil(1-(\delta_t-\varepsilon))M\rceil$-th order statistic).
Then on the same event,
\[
\Pr\big(\Delta_t \leq \widehat{q}_{1-(\delta_t-\varepsilon)}\big)\ \geq\ 1-\delta_t.
\]
In particular, $K_t(\delta_t):=\widehat{q}_{1-(\delta_t-\varepsilon)}$ is a valid tail cap at level $\delta_t$
for this fixed $(\theta,(s_i,s_j))$ and coupling $\Gamma_{\theta,ij}$.
\end{lemma}

\begin{proof}
By the Dvoretzky--Kiefer--Wolfowitz (DKW) inequality \cite{dvoretzky1956asymptotic,massart1990tight}, for any $\varepsilon>0$,
\[
\Pr\!\left(\sup_{u\in\mathbb{R}} \big|\widehat{F}_M(u)-F(u)\big| > \varepsilon\right)
\;\leq\;
2e^{-2M\varepsilon^2}.
\]
With $\varepsilon:=\sqrt{\tfrac{1}{2M}\log\tfrac{2}{\gamma_t}}$, the right-hand side equals $\gamma_t$.
Hence, with probability at least $1-\gamma_t$,
\begin{equation}\label{eq:dkw_event}
\sup_{u\in\mathbb{R}} \big|\widehat{F}_M(u)-F(u)\big|\ \leq\ \varepsilon.
\end{equation}

Assume \eqref{eq:dkw_event} holds and fix any $\delta_t\in(0,1)$ with $\delta_t>\varepsilon$.
Let $\widehat{q}:=\widehat{q}_{1-(\delta_t-\varepsilon)}$ be an empirical $(1-(\delta_t-\varepsilon))$-quantile, i.e.,
$\widehat{F}_M(\widehat q)\geq 1-(\delta_t-\varepsilon)$.
Then, using \eqref{eq:dkw_event},
\[
F(\widehat{q})
\;\geq\;
\widehat F_M(\widehat{q})-\varepsilon
\;\geq\;
\bigl(1-(\delta_t-\varepsilon)\bigr)-\varepsilon
\;=\;
1-\delta_t.
\]
Since $F(\widehat{q})=\Pr(\Delta_t\leq \widehat{q})$, we conclude that on the same event,
\[
\Pr(\Delta_t\leq \widehat{q})\ \geq\ 1-\delta_t.
\]
Therefore, with probability at least $1-\gamma_t$ over the Monte Carlo sampling,
the choice $K_t(\delta_t):=\widehat{q}_{1-(\delta_t-\varepsilon)}$ satisfies
$\Pr(\Delta_t\leq K_t(\delta_t))\geq 1-\delta_t$, as claimed.
\end{proof}

\paragraph{Uniform Tail Cap over a Finite Instantiated Family.}
Let $\widehat{\Theta}\subseteq\Theta$ and $\widehat{\mathcal Q}\subseteq\mathcal{Q}$ be finite sets, and let
$\mathcal{I}_t:=\widehat{\Theta}\times\widehat{\mathcal{Q}}$ with $N_t:=|\mathcal{I}_t|$.

\begin{corollary}
\label{cor:uniform_huc_cap}
For each $(\theta,(s_i,s_j))\in\mathcal{I}_t$, apply Lemma~\ref{lem:dkw_huc_cap} with calibration failure $\gamma_t/N_t$,
i.e., with
\[
\varepsilon \;:=\; \sqrt{\frac{1}{2M}\log\frac{2N_t}{\gamma_t}}.
\]
Assume $\delta_t>\varepsilon$. Define
\[
\widehat{K}_{t,\theta,ij}(\delta_t)\;:=\;\widehat{q}^{(\theta,ij)}_{1-(\delta_t-\varepsilon)},
\qquad
\widehat{K}_t(\delta_t)\;:=\;\max_{(\theta,(s_i,s_j))\in\mathcal{I}_t}\widehat{K}_{t,\theta,ij}(\delta_t),
\]
where $\widehat{q}^{(\theta,ij)}_{1-(\delta_t-\varepsilon)}$ is the empirical $(1-(\delta_t-\varepsilon))$-quantile computed
from the Monte Carlo samples generated for that $(\theta,(s_i,s_j))$.
Then, with probability at least $1-\gamma_t$ over the Monte Carlo sampling (simultaneously across all indices in $\mathcal I_t$),
\[
\Pr\big(\Delta_t \leq \widehat{K}_{t,\theta,ij}(\delta_t)\big)\ \geq\ 1-\delta_t
\quad\forall (\theta,(s_i,s_j))\in\mathcal{I}_t,
\]
and hence $\Pr(\Delta_t \leq \widehat{K}_t(\delta_t))\geq 1-\delta_t$ holds uniformly over $(\theta,(s_i,s_j))\in\mathcal{I}_t$
on the same event.
\end{corollary}

\begin{proof}
Fix $t$ and the finite index set $\mathcal{I}_t=\widehat{\Theta}\times\widehat{\mathcal{Q}}$ with $N_t:=|\mathcal{I}_t|$.
For each index $a=(\theta,(s_i,s_j))\in\mathcal{I}_t$, let
$\{\Delta_t^{(m)}(a)\}_{m=1}^M$ denote the $M$ i.i.d.\ Monte Carlo draws generated under the joint randomness
$(X^{(i)},X^{(j)})\sim\Gamma_{\theta,ij}$ and $R_t\sim\eta_t$, and let $\widehat{K}_{t,a}(\delta_t)$ be the resulting
empirical tail cap defined in the corollary.

For each $a\in\mathcal{I}_t$, define the “calibration-success” event
\[
E_a \;:=\; \Big\{\Pr\big(\Delta_t \leq \widehat{K}_{t,a}(\delta_t)\big)\ \geq\ 1-\delta_t\Big\},
\]
where the probability is with respect to the true joint law of $\Delta_t$ for that fixed $a$ (i.e., over
$(X^{(i)},X^{(j)})$ and $R_t$), and the randomness in $E_a$ is only the Monte Carlo sampling used to construct
$\widehat{K}_{t,a}(\delta_t)$.

Apply Lemma~\ref{lem:dkw_huc_cap} to each fixed $a\in\mathcal{I}_t$ with calibration failure $\gamma_t/N_t$.
With
\[
\varepsilon=\sqrt{\frac{1}{2M}\log\frac{2N_t}{\gamma_t}},
\]
and assuming $\delta_t>\varepsilon$, the lemma yields
\[
\Pr_{\mathrm{MC}}(E_a)\ \geq\ 1-\frac{\gamma_t}{N_t},
\qquad \forall a\in\mathcal{I}_t,
\]
where $\Pr_{\mathrm{MC}}$ denotes probability over the Monte Carlo sampling for index $a$.

By a union bound,
\[
\Pr_{\mathrm{MC}}\Big(\bigcap_{a\in\mathcal{I}_t} E_a\Big)
\;\geq\;
1-\sum_{a\in\mathcal{I}_t}\Pr_{\mathrm{MC}}(E_a^c)
\;\geq\;
1-\sum_{a\in\mathcal{I}_t}\frac{\gamma_t}{N_t}
\;=\;
1-\gamma_t.
\]
Therefore, with probability at least $1-\gamma_t$ (simultaneously over all indices $a\in\mathcal{I}_t$),
we have $\Pr(\Delta_t \leq \widehat{K}_{t,a}(\delta_t))\geq 1-\delta_t$ for every $a=(\theta,(s_i,s_j))\in\mathcal{I}_t$,
which proves the first claim.

Finally, define $\widehat{K}_t(\delta_t):=\max_{a\in\mathcal{I}_t}\widehat{K}_{t,a}(\delta_t)$.
On the event $\bigcap_{a\in\mathcal{I}_t}E_a$, for any fixed $a$ we have the set inclusion
$\{\Delta_t \leq \widehat K_{t,a}(\delta_t)\}\subseteq \{\Delta_t \leq \widehat{K}_t(\delta_t)\}$, hence
\[
\Pr\big(\Delta_t \leq \widehat{K}_t(\delta_t)\big)
\;\geq\;
\Pr\big(\Delta_t \leq \widehat{K}_{t,a}(\delta_t)\big)
\;\geq\;
1-\delta_t.
\]
This holds for every $a\in\mathcal{I}_t$ on the same event, completing the proof.
\end{proof}

Suppose a tail cap $K_t(\delta_t)$ is used in HUC-style accounting at each round $t=1,\dots,T$.
Let $E_t$ denote the runtime cap-validity event for the coupled draw and minibatch draw at round $t$,
\[
E_t \;:=\; \{\Delta_t \leq K_t(\delta_t)\},
\]
where the probability is over $(X^{(i)},X^{(j)})$ and $R_t\sim\eta_t$.
If $\Pr(E_t)\geq 1-\delta_t$ holds for each round, then by a union bound,
\[
\Pr\Big(\bigcap_{t=1}^T E_t\Big)\ \geq\ 1-\sum_{t=1}^T \delta_t.
\]
If, additionally, each $K_t(\delta_t)$ is obtained from Monte Carlo estimation with calibration failure probability $\gamma_t$
(e.g., via Corollary~\ref{cor:uniform_huc_cap}), then the event that \textit{all} per-round tail guarantees hold across all rounds
has probability at least
\begin{equation}
\label{eq:delta_gamma_bookkeeping}
1-\sum_{t=1}^T\delta_t\;-\;\sum_{t=1}^T\gamma_t.
\end{equation}
For sa-HUC, there is no tail parameter $\delta_t$; only the calibration failures $\gamma_t$ appear, and the analogous
across-round union bound yields probability at least $1-\sum_{t=1}^T \gamma_t$.

\subsection{Closed-Form Bounds under Record-Level Structure}
\label{app:estimate_caps:structure}

This section provides analytic bounds for $\mathbb{E}[\Delta_t^2]$ and for the tail behavior of $\Delta_t$
under record-level i.i.d.\ structure.
Assume the dataset consists of $N$ records $X=(X_1,\dots,X_N)$ and, conditional on $(S=s,\theta)$, records are i.i.d.:
\begin{equation}
X_k \mid (S=s,\theta) \;\sim\; P_{\theta,s}^{\mathrm{rec}},
\qquad k=1,\dots,N.
\end{equation}

\paragraph{Recordwise Maximal Coupling and Mismatch Probability.}
Fix $\theta\in\Theta$ and $(s_i,s_j)\in\mathcal{Q}$.
Let $p:=P_{\theta,s_i}^{\mathrm{rec}}$ and $q:=P_{\theta,s_j}^{\mathrm{rec}}$.
There exists a \textit{maximal coupling} of a single record $(X_k^{(i)},X_k^{(j)})$ with marginals $p$ and $q$ such that
\begin{equation}
\Pr\!\big[X_k^{(i)}\neq X_k^{(j)}\big]
\;=\;
\tau_{\theta,ij}
\;:=\;
\mathrm{TV}(p,q),
\end{equation}
where $\mathrm{TV}(\cdot,\cdot)$ denotes total variation distance.
Coupling records independently across $k$ yields mismatch indicators
$M_k:=\mathbf{1}\{X_k^{(i)}\neq X_k^{(j)}\}$ that are i.i.d.\ $\mathrm{Bernoulli}(\tau_{\theta,ij})$.

\paragraph{Minibatch Discrepancy: \textsf{WR} vs. \textsf{WOR}.}
Let $\eta_t$ select a minibatch $\mathsf{I}_t\subseteq\{1,\dots,N\}$ of size $B_t$, and recall
$\Delta_t=\sum_{k\in\mathsf{I}_t} M_k$.

\textit{\textsf{WR} (or i.i.d.\ index draws).}
If $\eta_t$ draws indices $I_1,\dots,I_{B_t}$ independently (with replacement), then
\begin{equation}
\label{eq:Delta_binomial_tau}
\Delta_t \;\sim\; \mathrm{Binomial}\!\big(B_t,\tau_{\theta,ij}\big),
\end{equation}
and hence
\begin{align}
\label{eq:Delta_moments_tau}
\mathbb{E}[\Delta_t] &= B_t\tau_{\theta,ij},\\
\mathbb{E}[\Delta_t^2] &= B_t\tau_{\theta,ij}(1-\tau_{\theta,ij}) + (B_t\tau_{\theta,ij})^2.
\end{align}

\textit{\textsf{WOR}.}
If $\eta_t$ samples $\mathsf{I}_t$ uniformly without replacement, then conditional on the total mismatch count
$S_N:=\sum_{k=1}^N M_k$, the discrepancy $\Delta_t$ has a hypergeometric distribution:
\[
\Delta_t \,\big|\, S_N \;\sim\; \mathrm{Hypergeometric}\big(N,\,S_N,\,B_t\big).
\]
In particular, $\mathbb{E}[\Delta_t]=B_t\tau_{\theta,ij}$ still holds, and standard concentration inequalities for sampling
without replacement (e.g., Serfling/Hoeffding-type bounds) yield binomial-style tails. Consequently, the tail cap derived
below for the with-replacement case remains a valid (conservative) choice under without-replacement sampling.

\paragraph{Resulting ms-Cap from Total-Variation Control.}
In the with-replacement case, \eqref{eq:Delta_moments_tau} implies $\mathbb{E}[\Delta_t^2]=G(B_t,\tau_{\theta,ij})$ with
\[
G(B_t,\tau)\;:=\;B_t\tau(1-\tau) + (B_t\tau)^2.
\]
Thus one may take the sa-HUC cap
\begin{equation}
\label{eq:ms_cap_from_tv}
K_t^2 \;:=\; \sup_{\theta\in\Theta}\sup_{(s_i,s_j)\in\mathcal{Q}} G\!\big(B_t,\tau_{\theta,ij}\big),
\end{equation}
whenever $\tau_{\theta,ij}$ (or an upper bound on it) is available uniformly over $\theta$ and $(s_i,s_j)$.

\paragraph{Tail Bounds and a High-Probability HUC.}
Under \eqref{eq:Delta_binomial_tau}, Hoeffding's inequality gives, for any $\lambda>0$,
\[
\Pr\!\big(\Delta_t \geq \mathbb{E}[\Delta_t] + \lambda\big)\ \leq\ \exp\!\Big(-\frac{2\lambda^2}{B_t}\Big).
\]
Therefore, for any tail level $\delta_t\in(0,1)$, the choice
\begin{equation}
\label{eq:huc_cap_from_tv}
K_t(\delta_t)
\;:=\;
\sup_{\theta\in\Theta}\sup_{(s_i,s_j)\in\mathcal{Q}}
\Big(B_t\tau_{\theta,ij} + \sqrt{\tfrac{B_t}{2}\log\tfrac{1}{\delta_t}}\Big)
\end{equation}
ensures $\Pr(\Delta_t \leq K_t(\delta_t))\geq 1-\delta_t$ for the with-replacement minibatch rule.
Moreover, the same cap remains valid (conservatively) for common without-replacement rules by standard concentration
bounds for sampling without replacement.

\subsection{Localized Influence of Secrets}
\label{app:estimate_caps:subset}

In some scenarios, the secret affects only a limited subset of records.
Fix $\theta\in\Theta$ and $(s_i,s_j)\in\mathcal{Q}$, and suppose there exists a (possibly random) index set
$A\subseteq\{1,\dots,N\}$ and a coupling $\Gamma_{\theta,ij}$ such that, for $(X^{(i)},X^{(j)})\sim \Gamma_{\theta,ij}$,
\[
X_k^{(i)} = X_k^{(j)} \quad \text{for all } k\notin A,
\qquad\text{and}\qquad
|A|\leq D_{\max}\ \text{a.s.}
\]
Then, for any minibatch rule selecting $B_t$ indices, the discrepancy count satisfies the deterministic bound
\[
\Delta_t \;\leq\; \min\{B_t,D_{\max}\}.
\]
Accordingly, one may take the HUC and sa-HUC caps
\begin{equation}
\label{eq:subset_caps}
K_t \;:=\; \min\{B_t,D_{\max}\},
\qquad
K_t^2 \;:=\; \min\{B_t^2,D_{\max}^2\}.
\end{equation}
If the distribution of $|A|$ is available (analytically or via simulation), then sharper moment or tail caps can be obtained by
conditioning on $|A|$ and applying the corresponding sampling-with/without-replacement formulas.

\subsection{Instantiation of HUC and sa-HUC}
\label{app:estimate_caps:summary}

To instantiate the HUC/sa-HUC accountants, it suffices to specify (a) a coupling family
$\{\Gamma_{\theta,ij}\}_{\theta\in\Theta,(s_i,s_j)\in\mathcal{Q}}$ consistent with the scenario assumptions and
(b) the minibatch selection rule $\eta_t$.
Given these, one obtains admissible caps either by analytic control of the induced discrepancy distribution (e.g.,
Section~\ref{app:estimate_caps:structure} or Section~\ref{app:estimate_caps:subset}) or by Monte Carlo estimation
(Section~\ref{app:estimate_caps:mc}), with finite-sample calibration guarantees given by
Lemma~\ref{lem:mc_ms_cap} and Corollary~\ref{cor:uniform_ms_cap} for sa-HUC,
and by Lemma~\ref{lem:dkw_huc_cap} and Corollary~\ref{cor:uniform_huc_cap} for HUC tail caps,
and across-round bookkeeping in~\eqref{eq:delta_gamma_bookkeeping}.
The resulting caps $K_t(\delta_t)$ and/or $K_t^2$ can then be used directly in the corresponding HUC/sa-HUC accounting theorems.

\begin{algorithm}[t]
  \caption{Monte Carlo discrepancy-cap estimation at iteration $t$}
  \label{alg:mc_discrepancy_cap}
  \begin{algorithmic}[1]
    \REQUIRE Prior $\theta\in\Theta$; secret pair $(s_i,s_j)\in\mathcal{Q}$ with $P_\theta^S(s_i),P_\theta^S(s_j)>0$;
    coupling sampler $\mathsf{Couple}_\theta(s_i,s_j)$ returning $(X^{(i)},X^{(j)})\sim\Gamma_{\theta,ij}$ with marginals
    $X^{(i)}\sim P_\theta(\cdot\mid S=s_i)$ and $X^{(j)}\sim P_\theta(\cdot\mid S=s_j)$;
    minibatch subsampling rule $\eta_t$ selecting $B_t$ indices;
    Monte Carlo size $M$;
    HUC tail level $\delta_t\in(0,1)$;
    calibration failure $\gamma_t\in(0,1)$.
    \ENSURE A tail cap $\widehat{K}_t(\delta_t)$ for HUC and a second-moment cap $\widehat{K}_t^2$ for sa-HUC.

    \FOR{$m=1,\dots,M$}
      \STATE Sample $(X^{(i,m)},X^{(j,m)}) \gets \mathsf{Couple}_\theta(s_i,s_j)$.
      \STATE Sample minibatch indices $R_t^{(m)}=(I^{(m)}_1,\dots,I^{(m)}_{B_t})\sim\eta_t$.
      \STATE Compute the discrepancy count
      \[
        \Delta_t^{(m)} \;\gets\; \#\{b\in\{1,\dots,B_t\}:\ x^{(i,m)}_{I^{(m)}_b}\neq x^{(j,m)}_{I^{(m)}_b}\},
        \; 0\leq \Delta_t^{(m)}\leq B_t.
      \]
    \ENDFOR

    \STATE Set $\varepsilon \gets \sqrt{\frac{1}{2M}\log\frac{2}{\gamma_t}}$.

    \STATE \textbf{(sa-HUC cap).}
    Compute $\widehat\mu_{2,t}\gets \frac{1}{M}\sum_{m=1}^M(\Delta_t^{(m)})^2$ and set
    \[
      \widehat K_t^2 \;\gets\; \widehat\mu_{2,t} \;+\; B_t^2\,\varepsilon.
    \]

    \STATE \textbf{(HUC tail cap).}
    If $\delta_t>\varepsilon$, set $\widehat{K}_t(\delta_t)$ to the empirical $(1-(\delta_t-\varepsilon))$-quantile of
    $\{\Delta_t^{(m)}\}_{m=1}^M$; otherwise set $\widehat{K}_t(\delta_t)\gets B_t$.

    \STATE \textbf{return} $(\widehat{K}_t(\delta_t),\,\widehat{K}_t^2)$.
  \end{algorithmic}
\end{algorithm}

The sa-HUC accounting accepts any cap $K_t^2$ such that
$\mathbb{E}[\Delta_t^2]\le K_t^2$ uniformly over $\theta\in\Theta$ and $(s_i,s_j)\in\mathcal{Q}$
(or over a finite instantiated family as in Corollary~\ref{cor:uniform_ms_cap}).
If $K_t^2$ is obtained via Monte Carlo (Algorithm~\ref{alg:mc_discrepancy_cap}),
then Lemma~\ref{lem:mc_ms_cap} (and Corollary~\ref{cor:uniform_ms_cap} with a union bound) implies that
$\mathbb{E}[\Delta_t^2]\leq \widehat K_t^2$ holds with probability at least $1-\gamma_t$
over the cap-estimation randomness.


For HUC accounting, we use tail caps $K_t(\delta_t)$ such that
$\Pr(\Delta_t\leq K_t(\delta_t))\geq 1-\delta_t$ uniformly over $\theta\in\Theta$ and $(s_i,s_j)\in\mathcal{Q}$.
If $K_t(\delta_t)$ is obtained via Monte Carlo (Algorithm~\ref{alg:mc_discrepancy_cap}),
then Lemma~\ref{lem:dkw_huc_cap} (and Corollary~\ref{cor:uniform_huc_cap} with a union bound) implies that,
with probability at least $1-\gamma_t$ over the cap-estimation randomness,
the tail guarantee holds. Across rounds, a union bound yields overall probability at least
$1-\sum_{t=1}^T\delta_t-\sum_{t=1}^T\gamma_t$ as in~\eqref{eq:delta_gamma_bookkeeping}.


\subsection{Example Visualization of HUC and sa-HUC}

\begin{figure*}[t]
    \centering

    \begin{subfigure}[b]{0.30\textwidth}
        \centering
        \includegraphics[width=\textwidth]{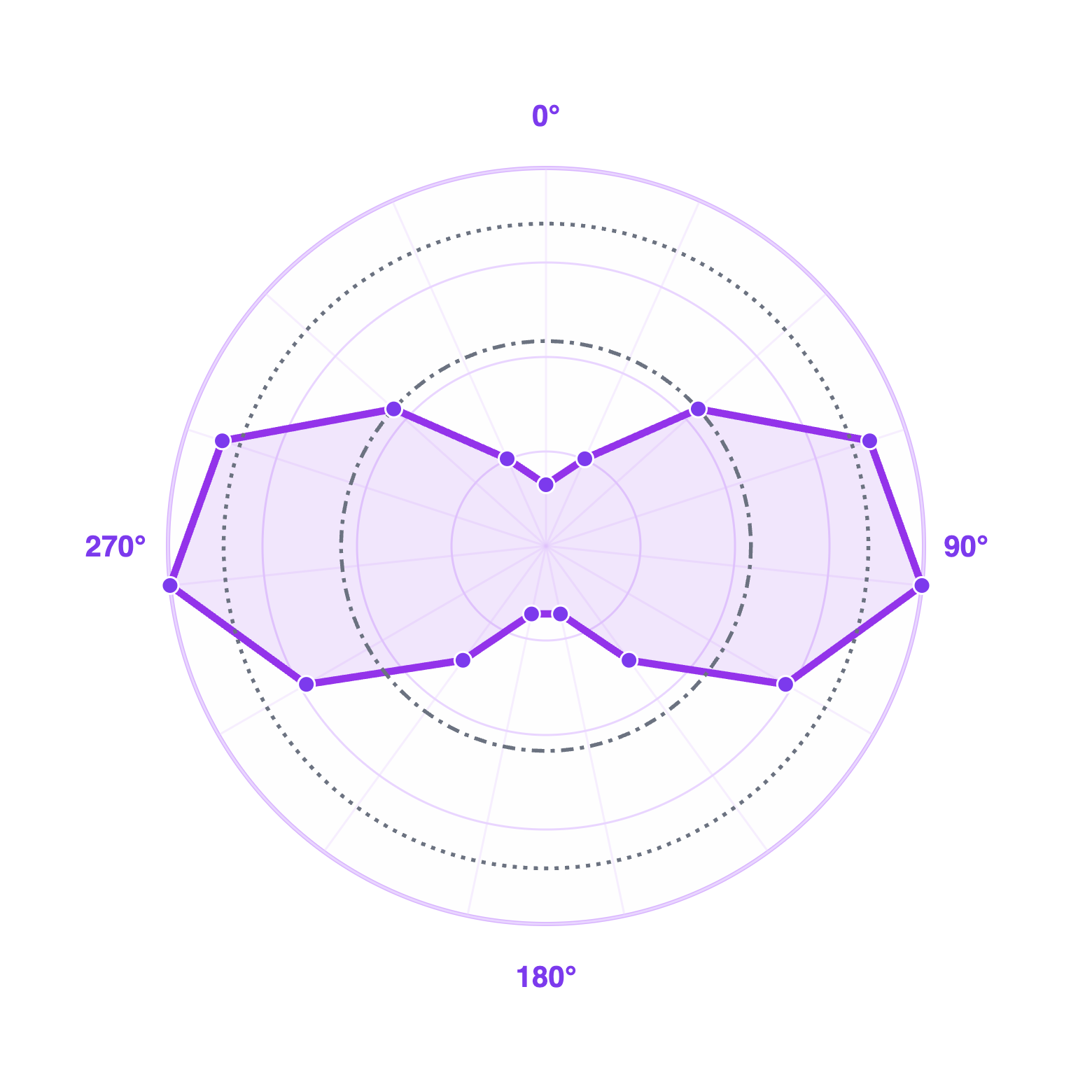}
        \subcaption{HUC (m=15)}
    \end{subfigure}
    \begin{subfigure}[b]{0.30\textwidth}
        \centering
        \includegraphics[width=\textwidth]{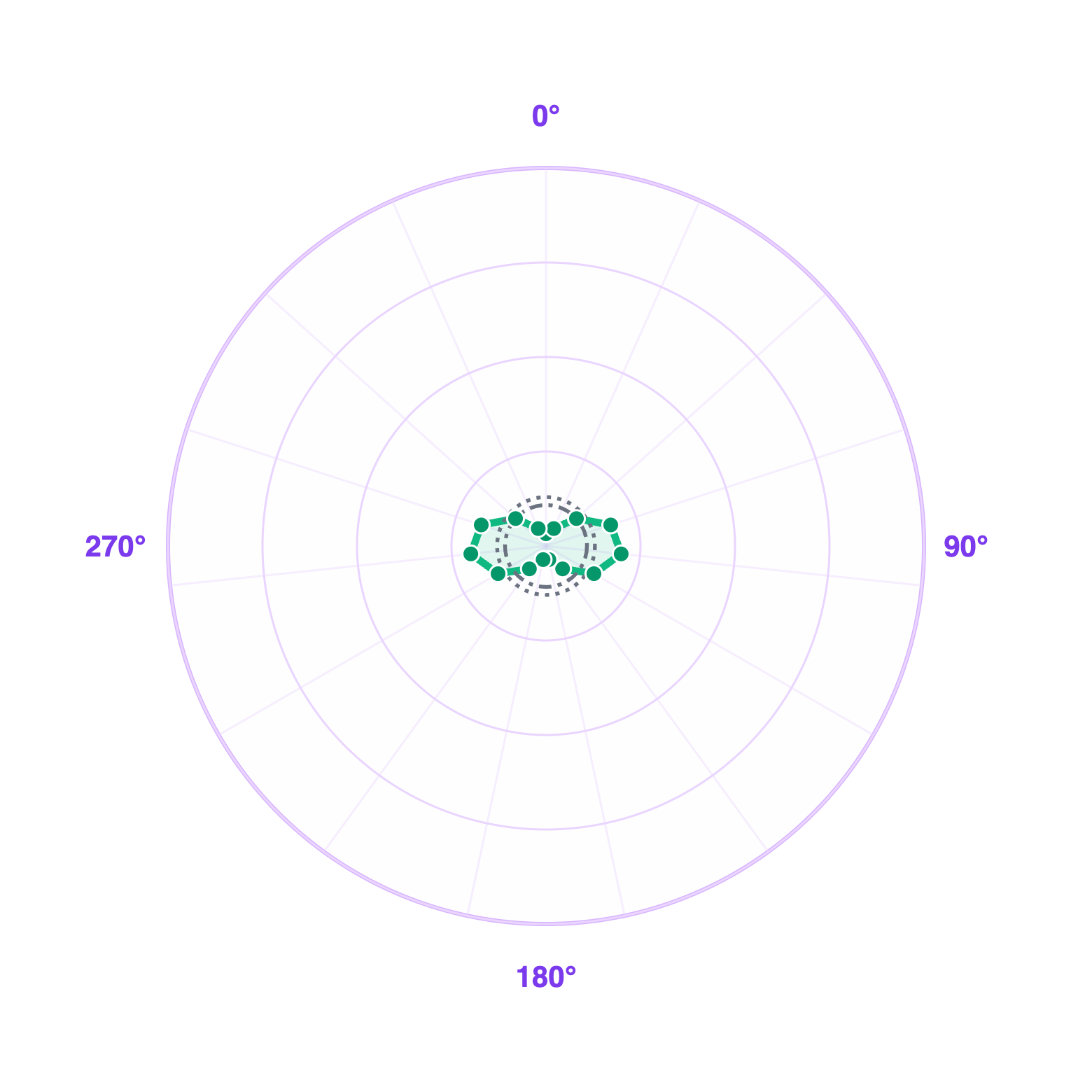}
        \subcaption{sa-HUC (m=15)}
    \end{subfigure}
    \begin{subfigure}[b]{0.30\textwidth}
        \centering
        \includegraphics[width=\textwidth]{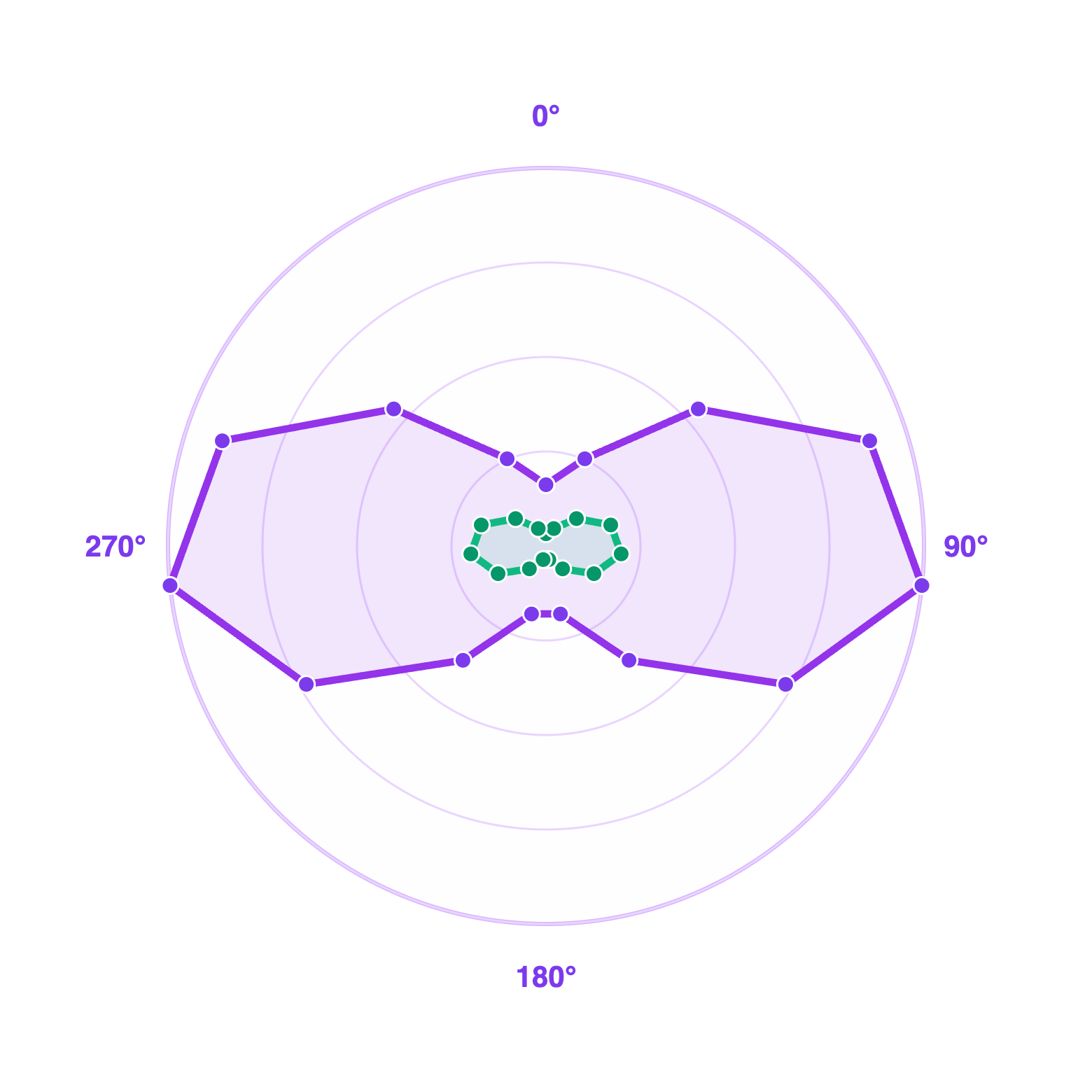}
        \subcaption{Overlay (m=15)}
    \end{subfigure}
    \vspace{0.5em}

    \begin{subfigure}[b]{0.30\textwidth}
        \centering
        \includegraphics[width=\textwidth]{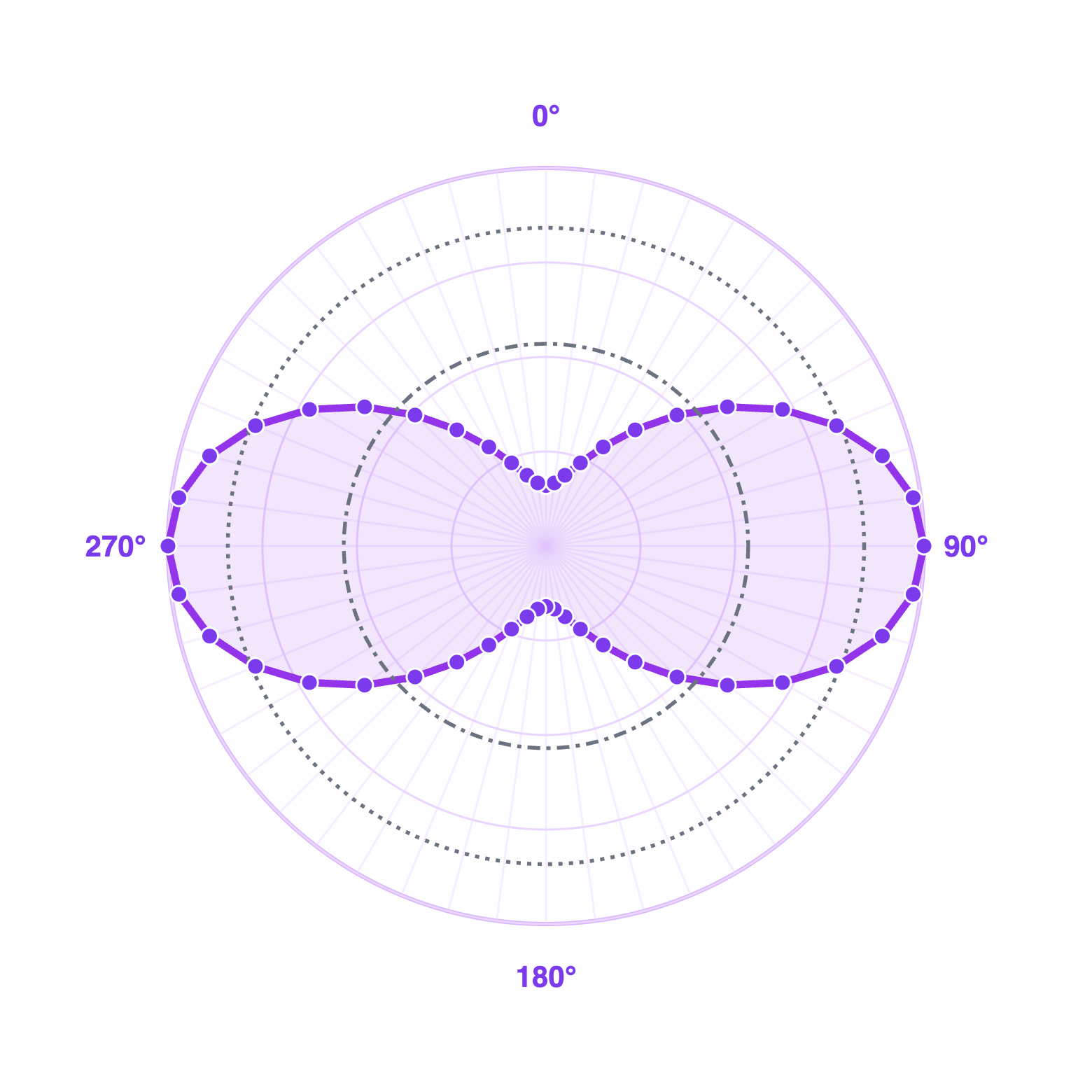}
        \subcaption{HUC (m=48)}
    \end{subfigure}
    \begin{subfigure}[b]{0.30\textwidth}
        \centering
        \includegraphics[width=\textwidth]{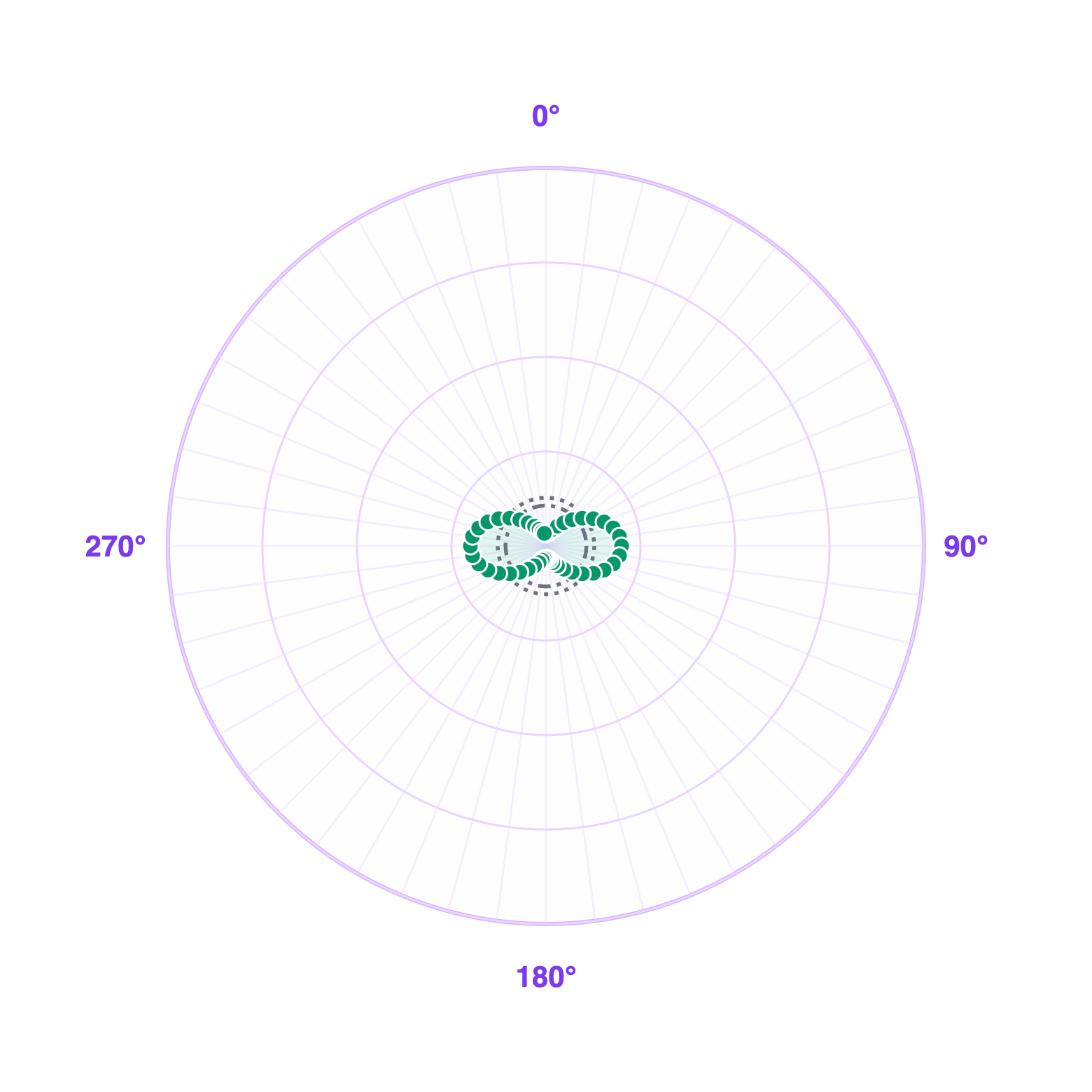}
        \subcaption{sa-HUC (m=48)}
    \end{subfigure}
    \begin{subfigure}[b]{0.30\textwidth}
        \centering
        \includegraphics[width=\textwidth]{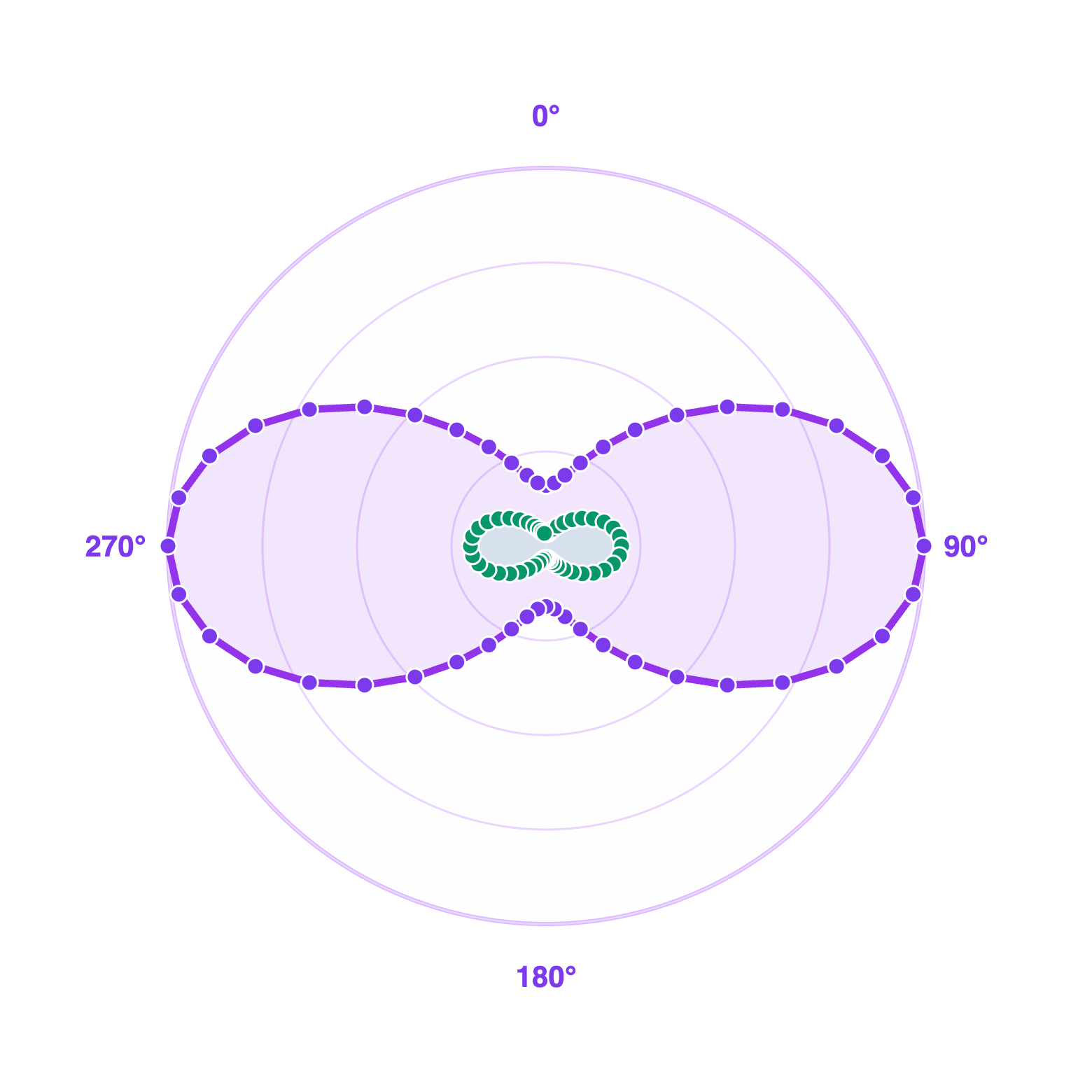}
        \subcaption{Overlay (m=48)}
    \end{subfigure}
    
\centering
\footnotesize
\begin{tabular}{@{}l@{\hspace{1.1em}}l@{\hspace{1.1em}}l@{\hspace{1.1em}}l@{}}
\raisebox{0.15em}{\tikz{\draw[draw=violet, fill=violet!15, line width=1.2pt] (0,0)--(0.55,0)--(0.55,0.16)--(0,0.16)--cycle;}}
& HUC envelope $\{R_i\}$
&
\raisebox{0.15em}{\tikz{\draw[draw=teal, fill=teal!15, line width=1.2pt] (0,0)--(0.55,0)--(0.55,0.16)--(0,0.16)--cycle;}}
& sa-HUC envelope $\{R_i^{\mathrm{sa}}\}$
\\
\raisebox{0.15em}{\tikz{\draw[gray, line width=1.1pt, dash pattern=on 4pt off 2pt on 1pt off 2pt] (0,0)--(0.65,0);}}
& Ave-SRPP (aggregation ring)
&
\raisebox{0.15em}{\tikz{\draw[gray, line width=1.1pt, dotted] (0,0)--(0.65,0);}}
& Joint-SRPP (aggregation ring)
\end{tabular}

    \caption{
Directional envelope profiles for HUC vs.\ sa-HUC under a uniform discrete slice profile $\omega$.
Each panel plots the per-direction R\'enyi envelope values $\{R_i\}_{i=1}^m$ on a polar axis (radius = envelope magnitude, angle = slice direction).
Panels (a,d) show the worst-case HUC envelope; (b,e) show the subsampling-aware envelope (sa-HUC); (c,f) overlay the two profiles on the same radial scale.
The gray rings indicate the corresponding aggregated SRPP costs (Ave-SRPP and Joint-SRPP) computed from the discrete profile.
We use $\alpha=2.5$ and subsampling rate $q=0.20$, and compare $m=15$ versus $m=48$ slice directions.
}
\label{fig:HUC_sa_HUC}
\end{figure*}

Figure \ref{fig:HUC_sa_HUC} visualizes the difference between HUC and sa-HUC.
For each slice direction $u_i$, the plotted radius is the (one-round) per-slice envelope value
$R_i = R_\alpha(\zeta_\sigma, z_i)$ induced by the corresponding directional shift cap $z_i$.
HUC uses a worst-case cap over minibatch randomness, while sa-HUC incorporates subsampling by bounding the envelope
\textit{on average} over minibatch selection; consequently the sa-HUC profile is uniformly smaller and typically
strictly so when the per-round discrepancy varies with the minibatch.
The rings visualize the two SRPP aggregations applied to the same discrete profile:
Ave-SRPP averages $\{R_i\}$ over $\omega$, whereas Joint-SRPP applies a log-moment aggregation that emphasizes larger directions
(and is always at least the Ave value). Increasing $m$ refines the discrete approximation of the directional profile.

\section{SRPP-SGD with Poisson Subsampling}\label{app:poisson_subsampling}

Our definitions allow $\rho\in\{\textsf{WR},\textsf{WOR},\textsf{Poisson}\}$, where under $\rho=\textsf{Poisson}$ the minibatch size
$B_t(R_t)=|\mathsf{I}_t(R_t)|$ is random. The SRPP-SGD accounting theorems in Sec.~\ref{sec:HUC}--\ref{sec:mean_square_HUC}
apply to $\rho=\textsf{Poisson}$ as well, provided the per-round HUC/sa-HUC caps are formed using a \textit{normalized discrepancy}
rather than a fixed $1/B$ factor. Concretely, define
\[
\kappa_t(X,X';R_t):=\frac{K_t(X,X';R_t)}{B_t(R_t)}\,\mathbf{1}\{B_t(R_t)\geq 1\},
\]
and assume either (i) the implementation enforces $B_t(R_t)\geq 1$ a.s.\ (e.g., resampling when $B_t=0$), or (ii) the update is
defined deterministically when $B_t=0$.
Then Proposition~\ref{prop:exist_HUC} and Proposition~\ref{prop:msHUC_from_K2} remain valid after replacing the deterministic bound
$K_t/B_t$ by a uniform cap on $\kappa_t$ (for HUC) or by a uniform bound on $\mathbb{E}_{\eta_t}[\kappa_t^2]$ (for sa-HUC).
All subsequent envelope and composition results are unchanged.

\section{Per-Layer Gradient Clipping }\label{app:perlayer_clipping}

DP-SGD with \textit{per-layer gradient clipping} \cite{mcmahan2017learning} is a structured variant of DP-SGD, where gradient clipping and noise addition are applied independently to disjoint parameter blocks, rather than globally across the entire model.

\subsection{DP-SGD with Per-Layer Gradient Clipping}

We briefly recall the standard DP-SGD \cite{Abadi2016} with per-example gradient clipping.
Given a minibatch $B$ of size $b$, DP-SGD computes per-example gradients $g_{i} = \nabla_{\zeta} \ell(\zeta;x_{i})$, clips each to an $\ell_{2}$ bound $C$, averages the clipped gradients, and adds Gaussian noise calibrated to $C$:
\[
\tilde{g}_{i} \;=\; g_{i} \cdot \min\!\left\{1,\tfrac{C}{\|g_{i}\|_{2}}\right\}, \quad \hat{g}\;=\; \frac{1}{b}\sum_{i\in B}\tilde{g}_{i} \; + \; \mathcal{N}\left(0, \tfrac{\sigma^{2} C^{2}}{b^{2}} I\right),
\]
followed by the SGD update:
\[
\zeta\;\leftarrow \; \zeta-\eta \cdot \hat{g}.
\]

DP-SGD with per-layer (gradient) clipping partitions parameters into $E$ disjoint layers $\zeta = (\zeta^{(1)}, \dots, \zeta^{(E)})$, and each per-example block-gradient is clipped using a block-specific bound $C_{e}$:
\[
\tilde{g}^{(e)}_{i} = g^{(e)}_{i}\cdot \min\left\{1, \tfrac{C_{e}}{\|g^{(e)}_{i}\|_{2}}\right\}, \; \hat{g}^{(e)} = \frac{1}{b}\sum_{i\in B}\tilde{g}^{(e)}_{i} +\mathcal{N}\left(0, \tfrac{\sigma^2C^2_{e}}{b^2}I\right),
\]
and then $\hat{g}=(\hat{g}^{(1)}, \dots, \hat{g}^{(E)})$ is used in the update.

\subsection{Layer-wise Lipschitz Assumption}

We view $\zeta = (\zeta^{(1)}, \dots, \zeta^{(E)})$ as the concatenation of layer parameters, which induces the orthogonal direct-sum decomposition $\mathbb{R}^{d} = \bigoplus^{E}_{e=1} \mathbb{R}^{d_{e}}$.
Let $\{\mathsf{P}_{e}\}^{E}_{e=1}$ denote the canonical orthogonal projector onto the $e$-th block.
Consider a generic (possibly preconditioned) SGD-type update written in the form
\[
f_{t}(x,y_{<t};R_t)\;=\;\xi_{t-1}\;-\;A_{t}\,\bar{g}_{t}(x;R_{t}),
\]
where $\bar{g}_{t}(\cdot)$ is the (clipped) minibatch gradient estimator, built from per-layer clipped per-example gradients, and $A_{t}$ is a deterministic linear map.

Assumption \ref{assp:layerwise_lip} is a layer-wise analogue of Assumption \ref{assp:slicewise_Lipschitz}.

\begin{assumption}[Layer-wise Lipschitz Updates]\label{assp:layerwise_lip}
For each iteration $t$ and each block $e\in\{1,\dots,E\}$, there exists a finite constant $L_{t,e}\geq 0$ such that
\[
\|\mathsf{P}_{e} A^{\top}_{t} u\|_{2} \leq L_{t,e} \|\mathsf{P}_{e}u\|_{2},
\qquad \forall u\in\mathcal{U}.
\]
\end{assumption}


For vanilla SGD with step size $\eta_{t}$, we have $A_{t} = \eta_{t}I$, so the Lipschitz constant is $L_{t,e} = \eta_{t}$ for all $e=1,\dots, E$.
For SGD with per-layer clipping with $\eta_{t,e}$ for each $e$, we have $A_t=\mathsf{blkdiag}(\eta_{t,1}I,\dots,\eta_{t,E}I)$, so the Lipschitz constant for each block $e$ is $L_{t,e} = \eta_{t,e}$.

\subsection{HUC and sa-HUC under per-layer clipping}
Fix time $t$, a slice profile $U=\{u_i\}_{i=1}^m$, and let $B_t$ be the minibatch size and $K_t$ the discrepancy
cap used in our main HUC construction.

A convenient way to express the per-slice cap in the per-layer clipped setting is via the same blockwise
template as Corollary~I.1 (linear, blockwise-clipped case). Specialized to "layers as blocks," this yields
the slice-dependent form
\[
h^{\text{layer}}_{t,i}
\;=\;
\left(\tfrac{2K_t}{B_t}\right)^2
\left(
\sum_{\ell=1}^{E} C_\ell \,\bigl\|P_\ell A_t^\top u_i\bigr\|_2
\right)^2,
\qquad i\in[m].
\]
Using the layer-wise Lipschitz bound $\|P_\ell A_t^\top u_i\|_2\le L_{t,\ell}\|P_\ell u_i\|_2$, we obtain the
fully explicit and easy-to-evaluate envelope
\[
h^{\text{layer}}_{t,i}
\;\leq\;
\left(\tfrac{2K_t}{B_t}\right)^2
\left(
\sum_{\ell=1}^{E} C_\ell \,L_{t,\ell}\,\|P_\ell u_i\|_2
\right)^2.
\]
The corresponding subsampling-aware case (sa-HUC), which underpins sa-SRPP-SGD in Section~5.2, is obtained by
replacing the worst-case $K_t^2$ factor by its mean-square counterpart:
\[
\begin{aligned}
h^{\text{sa,layer}}_{t,i}
&\;=\;
\left(\tfrac{2}{B_t}\right)^2
\overline{K}_t^2\,
\left(
\sum_{\ell=1}^E C_\ell \,\bigl\|\mathsf{P}_\ell A_t^\top u_i\bigr\|_2
\right)^2\\
&\;\leq\;
\left(\tfrac{2}{B_t}\right)^2
\overline{K}_t^2\,
\left(
\sum_{\ell=1}^E C_\ell \,L_{t,\ell}\,\|\mathsf{P}_\ell u_i\|_2
\right)^2.
\end{aligned}
\]

Unlike the "global Lipschitz + global clipping" special case (where the cap can collapse to a slice-independent
expression), per-layer clipping makes $\|P_\ell u_i\|_2$ appear explicitly, so Ave- vs.\ Joint-type aggregation
across slices can differ in SRPP/sa-SRPP-SGD, even though the clipping itself is performed on the true gradients
and is not directional.

\subsection{Numerical Example}
\label{app:perlayer:num_example}

\begin{figure}[t] %
  \centering
  \begin{subfigure}{\columnwidth}
    \includegraphics[width=\linewidth]{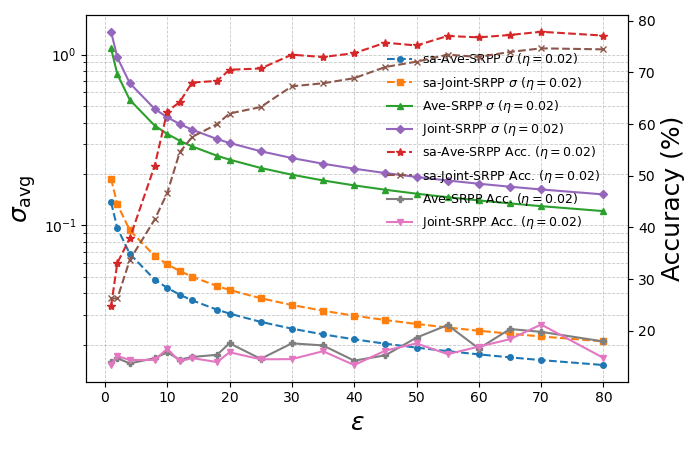}
  \end{subfigure}
  \caption{Per-block ("per-layer") clipping experiment under the same setup as the main-body SGD study (\(\eta=0.02\)).
Left axis: the calibrated reference noise scale \(\sigma_{\mathrm{avg}}\) (at \(C_{\mathrm{base}}\)), where per-block noise obeys
\(\sigma_\ell/C_\ell=\sigma_{\mathrm{avg}}/C_{\mathrm{base}}\). Right axis: test accuracy.}
  \label{Fig:example_SWD}
\end{figure}

We repeat the main-body SGD privatization experiment under the same Pufferfish scenario, training protocol, and subsampling rate
($\eta=0.02$), but replace global clipping by \textit{parameter-tensor blockwise} clipping ("per-layer").
Concretely, we treat each trainable parameter tensor in \texttt{model.named\_parameters()} as one block, yielding $L=23$ blocks
for \texttt{gep\_resnet22} (convolution weights plus the final linear weight and bias).
We use non-uniform clipping thresholds with the \(\sqrt{\mathrm{dim}}\) rule:
\[
C_\ell \;=\; C_{\mathrm{base}}\sqrt{\frac{d_\ell}{\sum_j d_j}},
\quad C_{\mathrm{base}}=5.0,
\]
where $d_\ell$ is the number of parameters in block $\ell$.

For each target SRPP/sa-SRPP budget $\varepsilon$, we first compute the aggregate accountant value $H_{\mathrm{total}}$
(Ave or Joint) and calibrate a \textit{single} reference noise scale
\[
\sigma_{\mathrm{avg}}^{2}(\varepsilon) \;=\; \frac{\alpha}{2\varepsilon}\,H_{\mathrm{total}}.
\]
This $\sigma_{\mathrm{avg}}$ is \textit{not} an average over per-block standard deviations; it is the standard deviation of the
Gaussian noise added to the \textit{batch-averaged clipped gradient}, expressed at the reference clipping scale $C_{\mathrm{base}}$.
Per-block noise is then set by
\[
\sigma_\ell \;=\; \sigma_{\mathrm{avg}}\cdot\frac{C_\ell}{C_{\mathrm{base}}},
\quad\text{so that}\quad
\frac{\sigma_\ell}{C_\ell}=\frac{\sigma_{\mathrm{avg}}}{C_{\mathrm{base}}}\ \ \text{for all }\ell.
\]

Figure~\ref{Fig:example_SWD} reports $\sigma_{\mathrm{avg}}$ and test accuracy versus $\varepsilon$.
Subsampling-aware accounting (sa-Ave/sa-Joint) certifies the same target budget with smaller $\sigma_{\mathrm{avg}}$ than the
corresponding worst-case calibration (Ave/Joint), translating into improved accuracy at matched $\varepsilon$.

In addition, Joint-type accounting is more conservative than Ave-type accounting: because it controls the \textit{joint} leakage across slices
(equivalently, a log-moment / log-sum-exp aggregation over directions), it yields a larger aggregate $H_{\mathrm{total}}$ and therefore
requires a larger calibrated $\sigma_{\mathrm{avg}}$ at the same target budget $\varepsilon$.

\section{Relationship to Group DP-SGD}
\label{app:group-dp-vs-srpp}

In this section, we characterize the relationship and the difference between our SRPP/sa-SRPP and the group DP-SGD. 
Please note that the group DP here is different from the \textit{worst-case group privacy} described in App. \ref{app:worst_case_group_privacy}.

The dependence on the discrepancy cap $K_t$ (worst-case) and its mean-square analogue $\mathbb{E}_{\eta}[K_t^2]$ in our SRPP/sa-SRPP privatization of SGD naturally raises the question: \textit{can group DP-SGD, with an appropriate group size, serve as a reliable proxy for SRPP-SGD when the protected information is a specific Pufferfish secret?}
In this appendix, we clarify the relationship between SRPP-/sa-SRPP-SGD and group DP-SGD, and explain why the latter cannot generally substitute for the former when one cares about secret-level Pufferfish guarantees.

Our SRPP- and sa-SRPP-SGD mechanisms are calibrated directly from the
HUC and sa-HUC envelopes developed in Sections \ref{sec:HUC} and \ref{sec:mean_square_HUC}, respectively.
For a
fixed R\'enyi order $\alpha>1$ and a target sliced privacy budget
$\epsilon$, the corresponding theorems give an upper bound on the
per-run SRPP or sa-SRPP cost in terms of the HUC/sa-HUC sequence
$\{h_t\}$ and the per-step noise covariances $\{\Sigma_t\}$.  In our
experiments we specialize to isotropic Gaussian noise
$\Sigma_t = \sigma^2 I$ and, for each desired $\epsilon$, solve the
resulting bound for a noise scale $\sigma(\epsilon)$ that guarantees the
prescribed SRPP or sa-SRPP budget.  
This calibration is explicitly secret-aware: $H_{\mathrm{total}}$ depends on the Pufferfish scenario $(\mathcal{S},\mathcal{Q},\Theta)$ through the discrepancy caps $K_t$ and mean-square discrepancy caps $\mathbb{E}_\eta[K_t^2]$, which, for each iteration $t$, bound the $\ell_2$–distance between the clipped, averaged gradient updates under any admissible secret pair $(s_i,s_j)\in\mathcal{Q}$ and any coupling of their corresponding data-generating distributions.

In contrast, standard group DP-SGD is calibrated with respect to a
\textit{dataset-level} adjacency relation
\[
  X \sim_k X'
  \;\;\Longleftrightarrow\;\;
  d_H(X,X') \le k,
\]
where $d_H$ is Hamming distance and $k\in\mathbb{N}$ is a chosen group
size.  
A mechanism $M$ is $(\epsilon_g,\delta_g)$-\textit{group-DP} for
group size $k$ if
\begin{equation}
  \Pr[\mathcal{M}(X)\in E]\leq
  e^{\varepsilon_g} \Pr[\mathcal{M}(X')\in E] + \delta_g,
  \;
  \forall E\subseteq\mathcal{Y},\ \forall X\sim_k X'.
  \label{eq:group-dp}
\end{equation}
In DP-SGD, one typically fixes $k$ and uses a moments accountant to
obtain $(\varepsilon_g,\delta_g)$ as a function of $k$, the sampling
probability $q = B_{t}/N$, the number of steps $T$, and the per-step noise
scale $\sigma_{\mathrm{DP}}$.

A natural question is whether group DP-SGD can be used as a proxy for
SRPP-/sa-SRPP-SGD by choosing a group size $k$ that reflects the
secret-induced discrepancy at iteration $t$.
However, $k$ in group DP is a \textit{dataset-level} Hamming distance (an integer),
whereas our sa-HUC uses a \textit{minibatch-level} mean-square discrepancy cap
$\overline K_t^2$ satisfying $\mathbb{E}_{\eta_t}[\Delta_t^2]\le \overline K_t^2$.
A dimensionally compatible proxy would be $k \approx \sqrt{\overline K_t^2}$
(or more conservatively $k\approx \lceil \sqrt{\overline K_t^2}\rceil$),
but even with such a choice, group DP-SGD does not generally yield a certified
SRPP/sa-SRPP guarantee for an arbitrary Pufferfish scenario.

In this section, we clarify two points:

\begin{itemize}
  \item There is no general result that, given a group-DP guarantee
  \eqref{eq:group-dp} for some group size $k$, produces an
  $(\alpha,\varepsilon)$-SRPP or sa-SRPP guarantee for an arbitrary
  Pufferfish scenario $(\mathcal{S},\mathcal{Q},\Theta)$.
  In particular, even if one chooses a \textit{proxy} group size
  $k \approx \lceil \sqrt{\overline{K}_t^2}\rceil$ (or uses a tail cap
  $k=K_t(\delta_t)$) motivated by our HUC/sa-HUC analysis, there is no
  closed-form or universal mapping from $(\varepsilon_g,\delta_g,k)$ to
  an SRPP/sa-SRPP budget $\varepsilon$.

  \item Even when we fix a specific Pufferfish scenario
  $(\mathcal{S},\mathcal{Q},\Theta)$ and choose a proxy group size
  $k \approx \lceil \sqrt{\overline{K}_t^2}\rceil$ (or $k=K_t(\delta_t)$),
  the resulting $(\varepsilon_g,\delta_g)$ group-DP-SGD guarantee is in
  general not a certified conservative bound for our $(\alpha,\varepsilon)$
  SRPP/sa-SRPP guarantees, and there is no theorem ensuring that this
  group-DP guarantee upper-bounds the leakage measured in the SRPP/sa-SRPP sense.
\end{itemize}

\subsection{No Universal Mapping}
\label{app:no-dp-to-srpp-map}

We first formalize the absence of a universal conversion from
group-DP guarantees to SRPP/sa-SRPP guarantees.

\begin{definition}[Hypothetical DP$\to$SRPP conversion]
  \label{def:dp-to-srpp-map}
  A \textit{DP-to-SRPP conversion map} is a function
  \[
    F: (0,\infty)\times[0,1]\times\mathbb{N} \to (0,\infty]
  \]
  that, for given group-DP parameters $(\epsilon_g,\delta_g)$ and
  group size $k$, assigns an SRPP budget $F(\epsilon_g,\delta_g,k)$
  with the following property:
  for every mechanism $\mathcal{M}$ and every Pufferfish scenario
  $(\mathcal{S},\mathcal{Q},\Theta)$, if $\mathcal{M}$ is
  $(\epsilon_g,\delta_g)$-group-DP with group size $k$, then for all
  $\theta\in\Theta$ and all $(s_i,s_j)\in\mathcal{Q}$ the SRPP/sa-SRPP
  divergence between the distributions of $\mathcal{M}$ under $(S=s_i,\theta)$ and
  $(S=s_j,\theta)$ is bounded by $F(\epsilon_g,\delta_g,k)$.
\end{definition}

Our first result is that such a universal conversion map cannot exist.

\begin{proposition}
  \label{prop:no-universal-dp-to-srpp}
  There is no function $F$ satisfying
  Definition \ref{def:dp-to-srpp-map}, even if we restrict to pure
  group-DP ($\delta_g=0$) and to finite secret sets $\mathcal{S}$.
\end{proposition}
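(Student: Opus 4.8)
The plan is to produce an explicit counterexample showing that a \emph{fixed} group size $k$ carries no information about secret-level leakage, because a probabilistic secret--dataset relationship can alter an unbounded number of records (or a global property of the dataset) at once. I would argue by contradiction. Suppose a conversion map $F$ as in Definition~\ref{def:dp-to-srpp-map} exists; fix arbitrary $\epsilon_g>0$ and $k\in\mathbb{N}$ and let $B:=F(\epsilon_g,0,k)$, which, as the ``budget'' it is meant to certify, must be finite. It then suffices to construct, for every $M>0$, a mechanism $\mathcal{M}$ and a finite Pufferfish scenario such that $\mathcal{M}$ is $(\epsilon_g,0)$-group-DP with group size $k$, yet the divergences $\mathtt{AveSD}^{\omega}_{\alpha}$ and $\mathtt{JSD}^{\omega}_{\alpha}$ between its two secret-conditioned output laws both exceed $M$; choosing $M>B$ then contradicts the defining property of $F$.

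For the construction, take record domain $\mathcal{X}^{\dagger}=\{0,1\}$, $n$ records, $\mathcal{S}=\{s_0,s_1\}$, $\mathcal{Q}=\{(s_0,s_1)\}$, and a single prior $\Theta=\{\theta\}$ with $P^S_\theta$ uniform on $\mathcal{S}$ and $P_\theta(X=\mathbf{0}\mid s_0)=P_\theta(X=\mathbf{1}\mid s_1)=1$, so that the secret deterministically pins the entire dataset. Let the mechanism be $\mathcal{M}(X)=\sum_{i=1}^n x_i+L$ with $L\sim\Lap(0,k/\epsilon_g)$, and fix the trivial one-dimensional slice profile $d=1$, $\mathcal{U}=\{1\}\subset\mathbb{S}^{0}$, $\omega=\delta_{1}$, for which $\Psi^{1}(a)=a$ and hence $\mathtt{AveSD}^{\omega}_{\alpha}=\mathtt{JSD}^{\omega}_{\alpha}=\mathtt{D}_{\alpha}$; since no subsampling is used, the mean-square variants coincide with the plain ones here.

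Two verifications remain. First, $\mathcal{M}$ is $(\epsilon_g,0)$-group-DP with group size $k$: changing at most $k$ coordinates changes $\sum_i x_i$ by at most $k$, so $\mathcal{M}$ is the Laplace mechanism at scale $k/\epsilon_g$ applied to a query whose sensitivity under the $\sim_k$-adjacency is $k$, hence $\epsilon_g$-DP with respect to $\sim_k$. Second, the SRPP divergence is large: the output law is $\Lap(0,b)$ under $(S=s_0,\theta)$ and $\Lap(n,b)$ under $(S=s_1,\theta)$ with $b=k/\epsilon_g$, and since $\mathtt{D}_{\alpha}$ is nondecreasing in $\alpha$ with $\lim_{\alpha\downarrow 1}\mathtt{D}_{\alpha}=\mathrm{KL}$, the elementary identity $\mathrm{KL}(\Lap(0,b)\,\|\,\Lap(n,b))=n/b+e^{-n/b}-1$ yields $\mathtt{D}_{\alpha}(\Lap(0,b)\|\Lap(n,b))\ge n\epsilon_g/k-1$ for every $\alpha>1$. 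Taking $n>(M+1)k/\epsilon_g$ forces this divergence (and hence the Ave-, Joint-, ms-Ave- and ms-Joint-SRPP divergence of $\mathcal{M}$ under the pair $(s_0,s_1)$) to exceed $M$, contradicting $B=F(\epsilon_g,0,k)$ and completing the argument.

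I do not anticipate a real obstacle: this is a choose-a-bad-instance argument, and the only computation is the lower bound on the R\'enyi divergence between two shifted Laplace laws, which reduces to $\mathtt{D}_{\alpha}\ge\mathrm{KL}$ together with the standard Laplace KL formula. The only points that require care are (i) reading a ``conversion map'' as finite-valued, so that the unboundedness of the SRPP divergence over group-DP mechanisms of a fixed group size is precisely the failure we need, and (ii) noting that we are free to pick both the Pufferfish scenario and the slice profile, so no slicing geometry can absorb the secret-induced shift. Conceptually, the example is the minimal instance of the phenomenon identified in \cite{zhang2024differential}: the secret here flips all $n$ records, so no single Hamming radius $k$ captures the distinguishability it induces, and the gap grows without bound in $n$ while the group-DP parameters remain fixed.
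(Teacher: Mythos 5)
Your proof is correct, but it takes a genuinely different route from the paper's. The paper makes the group-DP guarantee \emph{vacuous}: it restricts the dataset space to $\{X_0,X_1\}$ with $d_H(X_0,X_1)>k$, so no nontrivial $\sim_k$-adjacent pair exists, \emph{any} mechanism is trivially $(\varepsilon_g,\delta_g)$-group-DP, and it then uses a deterministic ``reveal the dataset'' mechanism to get an \emph{infinite} SRPP divergence. You instead keep the full product dataset space $\{0,1\}^n$, so the group-DP constraint is genuinely nonvacuous (there are many $\sim_k$-adjacent pairs), and you satisfy it with a real Laplace mechanism calibrated to sensitivity $k$; you then exhibit a \emph{finite but unboundedly large} SRPP cost by increasing the secret-induced shift $n$ while $(\varepsilon_g,0,k)$ stay fixed, contradicting finiteness of $F(\varepsilon_g,0,k)$. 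What the paper's argument buys is brevity: once adjacency is vacuous, there is nothing to check about the mechanism. What your argument buys is a strictly more informative counterexample: it defeats the obvious rejoinder that the paper's mechanism ``is not really group-DP because the constraint is empty,'' and it quantifies the rate at which the conversion fails ($\mathtt{D}_\alpha\ge\mathrm{KL}\ge n\varepsilon_g/k - 1$, which diverges in $n$). Both proofs share the same underlying observation, namely that a Pufferfish secret can shift the dataset by a Hamming distance that exceeds any fixed group radius, so no function of $(\varepsilon_g,\delta_g,k)$ alone can bound secret-level divergence. The small details you rely on are sound: the Laplace KL identity $\mathrm{KL}(\Lap(0,b)\|\Lap(n,b))=n/b+e^{-n/b}-1$, the monotonicity $\mathtt{D}_\alpha\ge\mathrm{KL}$ for $\alpha>1$, and the collapse $\mathtt{AveSD}^\omega_\alpha=\mathtt{JSD}^\omega_\alpha=\mathtt{D}_\alpha$ in $d=1$ with a point-mass slice profile.
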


\begin{proof}[Proof of Proposition \ref{prop:no-universal-dp-to-srpp}]
  Fix any candidate function $F$. We construct a counterexample.

  Let the dataset space be $\mathcal{X} = \{X_0,X_1\}$, and choose
  $k\in\mathbb{N}$ so that $d_H(X_0,X_1) > k$, where $d_H$ is the Hamming
  distance. 
  Under the group adjacency relation restricted to $\mathcal{X}=\{X_0,X_1\}$,
  there are no distinct adjacent pairs:
  the only pairs with $d_H(X,X')\leq k$ are $(X_0,X_0)$ and $(X_1,X_1)$.
  Hence, for any mechanism $\mathcal{M}$ and any $(\epsilon_g,\delta_g)$, the
  group-DP inequality \eqref{eq:group-dp} is trivially satisfied, so
  $\mathcal{M}$ is $(\epsilon_g,\delta_g)$-group-DP for all choices of
  $(\epsilon_g,\delta_g)$ and $k$.

  Now define a Pufferfish scenario with two secrets
  $\mathcal{S} = \{0,1\}$, a single prior $\Theta = \{\theta\}$, and a
  deterministic data-generating process
  \[
    P_\theta(X = X_0 \mid S = 0) = 1,
    \quad
    P_\theta(X = X_1 \mid S = 1) = 1.
  \]
  In other words, the secret value deterministically selects which of
  the two datasets is realized.

  Consider the mechanism $\mathcal{M}$ that simply reveals the secret:
  \[
    \mathcal{M}(X_0) = 0,
    \quad
    \mathcal{M}(X_1) = 1.
  \]
  As argued above, $\mathcal{M}$ is $(\epsilon_g, \delta_g)$-group-DP for
  \textit{all} $(\epsilon_g,\delta_g)$ and all $k$, because there are
  no nontrivial adjacent dataset pairs.

  However, the SRPP (and RPP/PP) divergence between the output
  distributions under $S=0$ and $S=1$ is infinite: for any measurable
  set $E$ containing $0$ but not $1$ we have
  \[
    \Pr[\mathcal{M}(X)\in E \mid S=0] = 1,
    \qquad
    \Pr[\mathcal{M}(X)\in E \mid S=1] = 0.
  \]
  In particular, the R\'enyi divergence of any order $\alpha>1$ between
  the distributions of $\mathcal{M}(X)$ given $S=0$ and $S=1$ is $+\infty$ (i.e., the two output laws are mutually singular), so no finite
  SRPP budget can hold.

  This contradicts the existence of a finite-valued
  $F(\epsilon_g,\delta_g,k)$ that would upper-bound SRPP leakage
  whenever $\mathcal{M}$ is $(\epsilon_g,\delta_g)$-group-DP with group size
  $k$. Since $(\epsilon_g,\delta_g,k)$ were arbitrary, no such
  universal conversion function $F$ can exist.
\end{proof}

Proposition \ref{prop:no-universal-dp-to-srpp} remains true if, instead
of allowing arbitrary mechanisms, we restrict attention to mechanisms
obtained by running DP-SGD and calibrating noise for a fixed group size
$k$ (including proxy choices such as $k = K_t(\delta_t)$ or
$k \approx \lceil \sqrt{\overline{K}_t^2}\rceil$ motivated by our HUC/sa-HUC analysis).
We can always construct a Pufferfish scenario whose secrets are encoded
in dataset differences at Hamming distance strictly greater than $k$ and
hence outside the scope of the group-DP guarantee.
Therefore, even when the group-DP parameters
$(\epsilon_g,\delta_g,k)$ are known from a moments-accountant
analysis of DP-SGD, there is no closed-form or universal transformation
that converts them into the SRPP/sa-SRPP budget used in our
HUC/sa-HUC–based SRPP-SGD calibration.

\subsection{No Safe Bounds}
\label{app:group-dp-not-conservative}

We now specialize to the prevalence-based Pufferfish scenarios used in our SGD privatization experiments,
where the secret changes only the label configuration on a fixed feature set (e.g., the \texttt{cat} label in CIFAR-10).
Concretely, we consider two secrets $s_0,s_1$ corresponding to two datasets $X^{(0)},X^{(1)}\in\bar{\mathcal{X}}^N$
with different prevalences $p_{\mathrm{low}}\neq p_{\mathrm{high}}$ of the protected class. Let
\[
  \tilde{\Delta} := d_H\bigl(X^{(0)},X^{(1)}\bigr)
\]
be the number of records whose labels differ between the two secrets.
In our HUC/sa-HUC analysis, $\tilde{\Delta}$ influences the secret-induced update discrepancy through the
(minibatch-level) discrepancy cap sequence, e.g., via $K_t$ (or $\mathbb{E}_{\eta_t}[K_t^2]$), which determines the SRPP/sa-SRPP
noise calibration. A natural question is whether one can instead treat such quantities as an \textit{effective group size} and
calibrate standard group DP-SGD with some $k$ derived from $K_t$ or $\mathbb{E}_{\eta_t}[K_t^2]$.
The next proposition shows that, even in prevalence-based scenarios, a group-DP guarantee with any fixed $k$
does not yield a general \textit{conservative} (i.e., universally safe) bound on secret-level SRPP/sa-SRPP leakage.

\begin{proposition}[Group-DP does not upper-bound secret leakage in general]
\label{prop:group-dp-not-conservative}
Fix any group size $k\in\mathbb{N}$ and any group-DP parameters $(\varepsilon_g,\delta_g)$.
There exist
\begin{itemize}
  \item[(i)] a prevalence-based Pufferfish scenario $(\mathcal{S},\mathcal{Q},\Theta)$ with two secrets $s_0,s_1\in\mathcal{S}$
  whose realized datasets $X^{(0)},X^{(1)}$ satisfy $d_H(X^{(0)},X^{(1)})>k$, and
  \item[(ii)] a mechanism $\mathcal{M}$ that is $(\varepsilon_g,\delta_g)$-group-DP with group size $k$
\end{itemize}
such that, for every $\alpha>1$, the secret-level R\'enyi divergence (hence SRPP/sa-SRPP leakage) between the output laws under
$(S=s_0,\theta)$ and $(S=s_1,\theta)$ can be made \textit{arbitrarily large}. In particular, there is no finite function of
$(\varepsilon_g,\delta_g,k)$ that universally upper-bounds this secret-level leakage.
\end{proposition}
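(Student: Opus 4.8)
The plan is to reuse the ``vacuous group-DP'' construction from the proof of Proposition~\ref{prop:no-universal-dp-to-srpp}, but to (i) realize it inside a genuinely prevalence-based Pufferfish scenario, and (ii) strengthen the conclusion from ``$\mathtt{D}_\alpha=+\infty$'' to ``every sliced variant (Ave-, Joint-, ms-Ave-, and ms-Joint-SRPP) is $+\infty$''. Fix $k\in\mathbb N$ and $(\varepsilon_g,\delta_g)$ arbitrarily; this in particular covers the choice $k=\mathbb E_\eta[K_t^2]$ that one would be tempted to use as an effective group size.

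First I would fix a feature set $a_1,\dots,a_n$ with $n>k$, designate one label value as the protected class, and build two label assignments $X^{(0)},X^{(1)}$ on these features with protected-class prevalences $p_0<p_1$ chosen so that the number of examples whose label differs is $\tilde\Delta = d_H(X^{(0)},X^{(1)}) = k+1 > k$; this is exactly a prevalence shift of more than $k$ records. I would then take the input space to be the two-point set $\mathcal X=\{X^{(0)},X^{(1)}\}$. Since $d_H(X^{(0)},X^{(1)})>k$, the only pairs at Hamming distance $\le k$ are the two diagonal pairs, so the group-DP inequality~\eqref{eq:group-dp} holds trivially for \emph{every} mechanism on $\mathcal X$: any such mechanism is $(\varepsilon_g,\delta_g)$-group-DP with group size $k$. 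Next I would equip this with the Pufferfish scenario $\mathcal S=\{s_0,s_1\}$ (the two prevalence levels), $\Theta=\{\theta\}$ with $P^S_\theta(s_0)=P^S_\theta(s_1)=\tfrac12$, $\mathcal Q=\{(s_0,s_1)\}$, and the deterministic law $P_\theta(X=X^{(0)}\mid S=s_0)=P_\theta(X=X^{(1)}\mid S=s_1)=1$, so that the secret value deterministically selects the realized dataset and its protected-class prevalence.

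For the mechanism I would take a \emph{secret-revealing} deterministic map $\mathcal M:\mathcal X\to\mathbb R^d$ that separates its two outputs along a direction charged by $\omega$: pick any $u^\star$ in the support of $\omega$ (so $u^\star\in\mathcal U$) and set $\mathcal M(X^{(0)})=\mathbf 0$, $\mathcal M(X^{(1)})=u^\star$, with no subsampling, so that the ms-SRPP divergences coincide with the SRPP ones. By the previous step $\mathcal M$ is $(\varepsilon_g,\delta_g)$-group-DP with group size $k$, yet $\mathcal M^\theta_{s_0}=\delta_{\mathbf 0}$ and $\mathcal M^\theta_{s_1}=\delta_{u^\star}$ are mutually singular. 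Because $\langle u^\star,u^\star\rangle=1\neq0$ and $u^\star$ lies in the support of $\omega$, the set $\{u:\langle u^\star,u\rangle\neq0\}$ has strictly positive $\omega$-measure, and on that set $(\Psi^u_{\#}\mathcal M^\theta_{s_0},\Psi^u_{\#}\mathcal M^\theta_{s_1})=(\delta_0,\delta_{\langle u^\star,u\rangle})$ is a pair of distinct point masses, hence its per-slice R\'enyi divergence of any order $\alpha>1$ is $+\infty$. Therefore $\mathtt{AveSD}^\omega_\alpha(\mathcal M^\theta_{s_0}\|\mathcal M^\theta_{s_1})$ (an integral of a nonnegative integrand that equals $+\infty$ on a positive-measure set) and $\mathtt{JSD}^\omega_\alpha(\mathcal M^\theta_{s_0}\|\mathcal M^\theta_{s_1})$ (a log of an exponential moment containing a $+\infty$ term) are both $+\infty$, and the mean-square versions inherit this since $\mathcal M$ does not depend on the subsampling randomness. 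This exhibits, for arbitrary $(\varepsilon_g,\delta_g,k)$, a prevalence-based scenario and a mechanism $\mathcal M$ that is $(\varepsilon_g,\delta_g)$-group-DP with group size $k$ while its SRPP/ms-SRPP leakage between $(S=s_0,\theta)$ and $(S=s_1,\theta)$ is infinite, which is the claim.

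The only step that differs substantively from the proof of Proposition~\ref{prop:no-universal-dp-to-srpp}, and the one I expect to require care, is passing from ``$\mathtt{D}_\alpha=+\infty$ for the full output'' to ``every sliced divergence is $+\infty$'': Proposition~\ref{prop:ordering} bounds the sliced quantities \emph{above} by $\mathtt{D}_\alpha$, not below, so one cannot read off $\mathtt{AveSD}/\mathtt{JSD}=+\infty$ merely from the full divergence being infinite, and if the separating direction were $\omega$-orthogonal to the mechanism's output shift one would in fact obtain a \emph{finite} (indeed zero) sliced divergence. Choosing the revealing direction to lie in the support of $\omega$ is precisely what forces at least one $\omega$-charged slice to inherit the mutual singularity; everything else (the Hamming-distance count, the triviality of group adjacency on a two-point input space, and the two divergence computations) is routine bookkeeping identical in spirit to the earlier proposition.
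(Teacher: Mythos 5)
Your proposal is correct and follows the same high-level counterexample construction as the paper (a prevalence-based two-secret scenario whose realized datasets lie at Hamming distance exceeding $k$, so that group-DP cannot see the shift, paired with a mechanism that reveals the secret). Two substantive differences are worth noting. First, you take the input space to be the two-point set $\{X^{(0)},X^{(1)}\}$, which renders every mechanism vacuously $(\varepsilon_g,\delta_g)$-group-DP for group size $k$; the paper instead keeps the full domain $\bar{\mathcal X}^N$ and runs a case analysis to argue that a modification of a baseline group-DP mechanism remains group-DP, which is more delicate and slightly under-specified. Your route is cleaner and sidesteps that case analysis entirely. Second, and more importantly, you explicitly address a gap that the paper's proof glosses over: the paper concludes ``SRPP/ms-SRPP leakage is infinite'' directly from $\mathtt D_\alpha=+\infty$ between the two conditional output laws, but Proposition~\ref{prop:ordering} only bounds the sliced divergences \emph{above} by $\mathtt D_\alpha$, so this implication does not hold in general — if the mechanism's output shift were $\omega$-orthogonal to every slice, the sliced divergences would in fact vanish. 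Your fix — choosing the revealing direction $u^\star$ in $\operatorname{supp}(\omega)$, so that $\{u:\langle u^\star,u\rangle\neq 0\}$ carries positive $\omega$-mass and the per-slice divergence is $+\infty$ on that set — is exactly the additional care needed, and it makes the conclusion genuinely hold for $\mathtt{AveSD}$, $\mathtt{JSD}$, and their mean-square variants. In this respect your argument is tighter than the paper's.
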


\begin{proof}
Fix $k\in\mathbb{N}$ and $(\varepsilon_g,\delta_g)$.
Choose a dataset size $N>k$ and an integer $\tilde{\Delta}\in\{k+1,\dots,N\}$.

\paragraph{Prevalence-based Pufferfish scenario.}
Let the record domain be $\bar{\mathcal{X}}=\mathcal{A}\times\{0,1\}$, where $\mathcal{A}$ denotes non-sensitive features and the
second coordinate is the protected-class label. Let the dataset space be $\mathcal{X}:=\bar{\mathcal{X}}^N$.
Fix a feature vector collection $(a_1,\dots,a_N)\in\mathcal{A}^N$ and two label vectors
$(b^{(0)}_1,\dots,b^{(0)}_N)$ and $(b^{(1)}_1,\dots,b^{(1)}_N)$ such that:
(i) the empirical prevalences differ, i.e.\ $\frac{1}{N}\sum_{i=1}^N b^{(0)}_i=p_{\mathrm{low}}$ and
$\frac{1}{N}\sum_{i=1}^N b^{(1)}_i=p_{\mathrm{high}}$ with $p_{\mathrm{low}}\neq p_{\mathrm{high}}$, and
(ii) exactly $\tilde{\Delta}$ labels differ, i.e.\ $\#\{i: b^{(0)}_i\neq b^{(1)}_i\}=\tilde{\Delta}$.
Define
\[
  X^{(0)} := \bigl((a_i,b^{(0)}_i)\bigr)_{i=1}^N,
  \qquad
  X^{(1)} := \bigl((a_i,b^{(1)}_i)\bigr)_{i=1}^N,
\]
so that $d_H(X^{(0)},X^{(1)})=\tilde{\Delta}>k$.

Define the Pufferfish scenario by
\[
  \mathcal S=\{s_0,s_1\},\quad
  \Theta=\{\theta\},\quad
  \mathcal Q=\{(s_0,s_1),(s_1,s_0)\},
\]
and let the conditional data law be deterministic:
\[
  P_\theta(X=X^{(0)}\mid S=s_0)=1,\qquad
  P_\theta(X=X^{(1)}\mid S=s_1)=1.
\]

\paragraph{Group adjacency.}
Let group adjacency be defined by Hamming distance at most $k$:
\[
  X\sim_k X' \Longleftrightarrow d_H(X,X')\leq k.
\]
Since $d_H(X^{(0)},X^{(1)})=\tilde{\Delta}>k$, the pair $(X^{(0)},X^{(1)})$ is not adjacent, and the group-DP constraint
never compares $\mathcal{M}(X^{(0)})$ to $\mathcal{M}(X^{(1)})$.

\paragraph{Mechanism with group-DP but large secret leakage.}
Fix any statistic $T:\mathcal{X}\to\mathbb{R}$ with bounded group sensitivity at size $k$,
\[
  \Delta_k(T):=\sup_{X\sim_k X'}|T(X)-T(X')|<\infty,
\]
and choose $T$ so that $|T(X^{(0)})-T(X^{(1)})|=L$ for an arbitrary prescribed $L>0$
(e.g., define $T(X)=L\cdot\mathbf{1}\{X=X^{(0)}\}$; then $\Delta_k(T)\leq L$ because any $X'\sim_k X^{(0)}$ must satisfy $X'\neq X^{(1)}$).
Let
\[
  \mathcal{M}(X):=T(X)+Z,\qquad Z\sim\mathcal{N}(0,\sigma^2),
\]
with $\sigma$ calibrated so that $\mathcal{M}$ is $(\varepsilon_g,\delta_g)$-group-DP for group size $k$
under the adjacency $\sim_k$ (this is the standard Gaussian mechanism calibration based on $\Delta_k(T)$).

Under the above Pufferfish scenario, $\mathcal{M}(X)\mid(S=s_0)$ is $\mathcal{N}(T(X^{(0)}),\sigma^2)$ and
$\mathcal{M}(X)\mid(S=s_1)$ is $\mathcal{N}(T(X^{(1)}),\sigma^2)$.
For any $\alpha>1$, the R\'enyi divergence between these two output laws is
\[
 \begin{aligned}
      &\mathtt{D}_\alpha\!\bigl(\mathcal{N}(T(X^{(0)}),\sigma^2)\,\|\,\mathcal{N}(T(X^{(1)}),\sigma^2)\bigr)\\&
  =\frac{\alpha}{2(\alpha-1)}\cdot \frac{|T(X^{(0)})-T(X^{(1)})|^2}{\sigma^2}
  =\frac{\alpha}{2(\alpha-1)}\cdot \frac{L^2}{\sigma^2}.
 \end{aligned}
\]
Since $L>0$ can be chosen arbitrarily large while the group-DP parameters $(\varepsilon_g,\delta_g,k)$ are fixed (by construction),
this R\'enyi divergence (hence SRPP/sa-SRPP leakage) can be made arbitrarily large.
Therefore no finite function of $(\varepsilon_g,\delta_g,k)$ can universally upper-bound secret-level leakage across such
prevalence-based scenarios.
\end{proof}

Proposition~\ref{prop:group-dp-not-conservative} shows that even for prevalence-based secrets as in our CIFAR-10 experiments,
calibrating DP-SGD using a group size chosen from HUC/sa-HUC-derived quantities (e.g., $k=\lceil\sqrt{\overline K_t^2}\rceil$ or
$k=K_{\mathrm{cap}}$) does \textit{not} yield a certified conservative bound in the SRPP/sa-SRPP sense.
Group-DP is indexed by Hamming-adjacent dataset pairs, whereas SRPP/sa-SRPP quantify divergence between the secret-conditioned output
laws in the specified Pufferfish scenario. Our HUC/sa-HUC-based SRPP-SGD calibration is therefore not a reparameterization of group DP-SGD,
but a distinct, secret-aware analysis.

\section{Minimality and Attainability of HUC}\label{app:attainable_HUC}

In this section, we characterize the \textit{minimality} and \textit{attainability} of HUC defined by Definition \ref{def:HUC} in Sec. \ref{sec:HUC}.

\begin{proposition}\label{prop:HUC_tight}
Fix $t$, a slicing set $\mathcal{U}=\{u_i\}_{i=1}^m\subset\mathbb{S}^{d-1}$, a history $y_{<t}\in\mathcal{Y}_{<t}$, and an admissible coupling $\gamma$.
Define the \textup{shift}
\begin{equation}\label{eq:Delta_shift}
    \Delta_t(X,X';R_t) := f_t(X,y_{<t};R_t)-f_t(X',y_{<t};R_t).
\end{equation}
For each direction $u_{i}$, define
\[
\phi_t(u_i)
\;:=\;
\operatorname*{ess\,sup}_{(X,X',R_t)\sim \gamma\times\mathbb P_{\eta,\rho}}
\big|\langle \Delta_t(X,X';R_t),u_i\rangle\big| \textup{ and }
h^\star_{t,i}:=\phi_t(u_i)^2.
\]
In addition, let $\mathsf{P}:=\gamma\times\mathbb{P}_{\eta,\rho}$ and $\mu_t:= (\Delta_t)_{\#} \mathsf{P}$ be the pushforward measure of $\Delta_t$, and define
\[
\mathtt{ELS}_t := \mathrm{supp}(\mu_t) = \big\{z\in\mathbb{R}^d: \mu_t(\mathcal{B}_{\bar{\varepsilon}}(z))>0\ \text{for all }\bar{\varepsilon}>0\big\},
\]
where $\mathcal{B}_{\bar{\varepsilon}}(z)$ is a ball of radius $\bar{\varepsilon}$ centered at $z$.

Then, the following holds.
\begin{itemize}
\item[(i)] \textup{\textbf{Minimality.}} Any HUC $h_t=(h_{t,1},\dots,h_{t,m})$ for $(y_{<t},\gamma)$ satisfies $h_{t,i}\geq h^{\star}_{t,i}$ for all $i\in[m]$.
\item[(ii)] \textup{\textbf{Attainability.}} If $\mathtt{ELS}_t$ is compact, then there exists $\Delta_i^{\star}\in\mathtt{ELS}_t$ such that $|\langle \Delta_i^{\star},u_i\rangle|=\sup_{\Delta\in \mathtt{ELS}_{t}} |\langle \Delta, u_{i} \rangle |$. 
Moreover, by definition of $\phi_t(u_i)$ as an essential supremum, for very $\varepsilon >0$, $\mathsf{P}\!\left(\,|\langle \Delta_t(X,X';R_t),u_i\rangle|\geq \phi_t(u_i)-\varepsilon\,\right)>0$.
\end{itemize}
\end{proposition}

However, the HUC constructed in Proposition~\ref{prop:exist_HUC} is generally not tight; it is a computable conservative bound.
Proposition~\ref{prop:HUC_tight} characterizes the \textit{minimal} HUC and specifies when it is attained.
In particular, any valid HUC $h_t$ must satisfy $h_{t,i}\geq h^{\star}_{t,i}$ for all $i\in[m]$, and the minimal cap $h_t^{\star}$ is attained whenever the essential shift set $\mathtt{ELS}_{t}$ is compact.

In our setting, compactness of $\mathtt{ELS}_t$ follows from three ingredients:
(i) per-example clipping, which uniformly bounds each individual gradient contribution;
(ii) a finite discrepancy cap together with a deterministic (or bounded) batch size; and
(iii) an $L_t$-Lipschitz post-update map.
These yield a uniform almost-sure bound
$\|\Delta_t\|_2\le C_t$ for some finite $C_t$, and hence
$\mathtt{ELS}_t\subset \overline{B(0,C_t)}$, which is closed and bounded in $\mathbb R^d$ and therefore compact.

The following corollary follows Proposition~\ref{prop:HUC_tight}.

\begin{corollary}\label{corollary:vanilla_SGD_HUC}
    Consider the linear, blockwise-clipped case where
    \[
        f_t(x,y_{<t};R_t)
        = \xi_{t-1} - A_t\,\bar g_t(x;R_t),
    \]
    with $A_t$ independent of $((X,X'),R_t)$, and per-block clipping thresholds
    $\{C_{b} \}_{b=1}^{B_{\mathrm{blk}}}$ over an orthogonal block decomposition
    with projectors $\{\mathsf{P}_{b} \}_{b=1}^{B_{\mathrm{blk}}}$.
    Assume that, for each block $b$, the blockwise gradient differences can attain
    the clipping radius in directions aligned with the vectors
    $\mathsf{P}_b A_t^\top u_i$ (for the slices of interest).
    Then the minimal per-slice HUC is
    \begin{equation}\label{eq:vanilla_SGD_HUC}
        h^{\star}_{t,i}
        =
        \Bigl(\frac{2K_t}{B_t}\Bigr)^{ 2}
        \biggl(
            \sum_{b=1}^{B_{\mathrm{blk}}} C_b\,\bigl\|\mathsf{P}_b A_t^\top u_i\bigr\|_2
        \biggr)^2,
        \qquad i\in[m].
    \end{equation}
\end{corollary}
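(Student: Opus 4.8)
\textbf{Proof proposal for Corollary~\ref{corollary:vanilla_SGD_HUC}.}
The plan is to reduce everything to Proposition~\ref{prop:HUC_tight}, which already identifies the minimal per-slice HUC as $h^{\star}_{t,i}=\phi_t(u_i)^2$ with $\phi_t(u_i)=\sup_{\Delta\in\mathtt{ELS}_t}|\langle\Delta,u_i\rangle|$ and $\mathtt{ELS}_t=\mathrm{supp}\big((\Delta_t)_{\#}(\gamma\times\mathbb{P}_{\eta,\rho})\big)$ for the shift $\Delta_t(X,X';R_t)=f_t(X,y_{<t};R_t)-f_t(X',y_{<t};R_t)$. So the work is entirely in evaluating $\phi_t(u_i)$ in the linear, blockwise-clipped model, which I would do by proving a matching upper and lower bound. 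First I would simplify the shift: substituting $f_t(x,y_{<t};R_t)=\xi_{t-1}-A_t\bar g_t(x;R_t)$, the history term $\xi_{t-1}$ cancels, so $\Delta_t=-A_t\big(\bar g_t(X;R_t)-\bar g_t(X';R_t)\big)$ and hence $|\langle\Delta_t,u_i\rangle|=\big|\langle \bar g_t(X;R_t)-\bar g_t(X';R_t),\,A_t^{\top}u_i\rangle\big|$; here I will condition on $A_t$, which is legitimate since $A_t$ is independent of $((X,X'),R_t)$, and the stated formula is then read with the realized $A_t$.

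For the upper bound I would expand the averaged-gradient difference over the subsampled indices: only coordinates $j$ with $X_j\neq X'_j$ contribute, there are at most $K_t$ of them by the discrepancy cap, and the batch size is $B_t$, so $\bar g_t(X;R_t)-\bar g_t(X';R_t)=\tfrac{1}{B_t}\sum_{j:\,X_j\neq X'_j}\big(\tilde g_t(X_j)-\tilde g_t(X'_j)\big)$ with $\tilde g_t$ the blockwise-clipped per-example gradient. Using orthogonality of the block decomposition I write $\langle \bar g_t(X)-\bar g_t(X'),A_t^{\top}u_i\rangle=\sum_{b=1}^{B_{\mathrm{blk}}}\big\langle\mathsf{P}_b(\bar g_t(X)-\bar g_t(X')),\,\mathsf{P}_bA_t^{\top}u_i\big\rangle$, bound each block difference by $\|\mathsf{P}_b(\tilde g_t(X_j)-\tilde g_t(X'_j))\|_2\le 2C_b$ (so that $\|\mathsf{P}_b(\bar g_t(X)-\bar g_t(X'))\|_2\le 2K_tC_b/B_t$ by the triangle inequality over the $\le K_t$ terms), and apply Cauchy--Schwarz block by block. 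This yields $\phi_t(u_i)\le\tfrac{2K_t}{B_t}\sum_b C_b\|\mathsf{P}_bA_t^{\top}u_i\|_2$.

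For the matching lower bound I would exhibit an element of $\mathtt{ELS}_t$ that makes every block-wise Cauchy--Schwarz step tight simultaneously. Invoking the corollary's structural hypothesis — that the blockwise gradient differences can attain the clipping radius $2C_b$ in the direction of $\mathsf{P}_bA_t^{\top}u_i$ — I select an admissible coupling $\gamma\in\Pi(\mu^{\theta}_{s_i},\mu^{\theta}_{s_j})$ and a sampling realization with exactly $K_t$ differing sampled points, each contributing a difference whose $b$-th block equals $2C_b\,\mathsf{P}_bA_t^{\top}u_i/\|\mathsf{P}_bA_t^{\top}u_i\|_2$; the resulting configuration has positive probability under $\gamma\times\mathbb{P}_{\eta,\rho}$, so the corresponding $\Delta_t$ lies in $\mathtt{ELS}_t$ and attains $|\langle\Delta_t,u_i\rangle|=\tfrac{2K_t}{B_t}\sum_b C_b\|\mathsf{P}_bA_t^{\top}u_i\|_2$. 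Combining the two bounds gives $\phi_t(u_i)$ exactly, and squaring produces \eqref{eq:vanilla_SGD_HUC}; compactness of $\mathtt{ELS}_t$ (from per-example clipping, the finite discrepancy cap, and the fixed batch size) makes Proposition~\ref{prop:HUC_tight}(ii) applicable so the supremum is indeed attained. The main obstacle I anticipate is this lower-bound step: one must verify that the aligning configuration is genuinely realizable — that the chosen per-example differences are consistent with block-clipped gradients of actual data points, that the coupling and the sampling draw are jointly admissible and place $K_t$ differing points in the batch with positive probability — rather than being a purely formal choice of vectors. This is exactly where the hypothesis on attainable clipping-radius differences does the work, and the argument must spell out that it delivers a genuine point of $\mathrm{supp}\big((\Delta_t)_{\#}(\gamma\times\mathbb{P}_{\eta,\rho})\big)$; the upper bound, by contrast, is routine bookkeeping.
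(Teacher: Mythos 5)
Your proposal is correct and follows essentially the same route as the paper's proof: both reduce to Proposition~\ref{prop:HUC_tight}, observe that the linear update makes the shift $\Delta_t=-A_t\bigl(\bar g_t(X;R_t)-\bar g_t(X';R_t)\bigr)$ so that $|\langle\Delta_t,u_i\rangle|=|\langle \bar g_t(X)-\bar g_t(X'), A_t^{\top}u_i\rangle|$, derive the upper bound by blockwise Cauchy--Schwarz with each block radius $2K_tC_b/B_t$, and invoke the alignment hypothesis for attainability. The only cosmetic difference is that you phrase the upper bound as a per-block triangle inequality followed by Cauchy--Schwarz, whereas the paper first characterizes the feasible set $\mathcal V$ as a product of balls and then optimizes $|\langle v,w\rangle|$ over $v\in\mathcal V$; these are the same computation, and your explicit flagging of what the realizability hypothesis must deliver is a sound and accurate reading of where the load-bearing assumption sits.
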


\subsection{Proof of Proposition~\ref{prop:HUC_tight}}

Let $\mathsf{P}:=\gamma\times\mathbb{P}_{\eta,\rho}$ and write
\[
Z_i(X,X',R_t)\;:=\;\big|\langle \Delta_t(X,X';R_t),u_i\rangle\big|,
\qquad i\in[m].
\]
Recall that $\phi_t(u_i)=\operatorname*{ess\,sup}_{\mathsf{P}} Z_i$ and $h^\star_{t,i}=\phi_t(u_i)^2$.

We use the following standard property: if $Z\geq 0$ is measurable and
$Z\leq c$ holds $\mathsf{P}$-a.s., then $\operatorname*{ess\,sup}_{\mathsf{P}} Z\leq c$.
Moreover, by definition of essential supremum, for every $\varepsilon>0$,
\begin{equation}\label{eq:esssup_eps_posprob}
\mathsf{P}\big(Z\geq \operatorname*{ess\,sup}_{\mathsf{P}} Z - \varepsilon\big) > 0.
\end{equation}

\paragraph{(i) Minimality.}
Let $h_t=(h_{t,1},\dots,h_{t,m})$ be any HUC for $(y_{<t},\gamma)$.
By Definition~\ref{def:HUC}, for each $i\in[m]$ we have
\[
Z_i(X,X',R_t)
=\big|\langle \Delta_t(X,X';R_t),u_i\rangle\big|
\leq \sqrt{h_{t,i}}
\quad \text{$\mathsf P$-a.s.}
\]
Applying the essential-supremum property gives
\[
\phi_t(u_i)=\operatorname*{ess\,sup}_{\mathsf{P}} Z_i \leq \sqrt{h_{t,i}}.
\]
Squaring yields $h^\star_{t,i}=\phi_t(u_i)^2\leq h_{t,i}$ for all $i\in[m]$, proving minimality.

\paragraph{(ii) Attainability.}
Define $\mu_t := (\Delta_t)_\# \mathsf{P}$ and $\mathtt{ELS}_t:=\mathrm{supp}(\mu_t)$.
Assume $\mathtt{ELS}_t$ is compact. The map
$\Delta\mapsto |\langle \Delta,u_i\rangle|$ is continuous, hence by the Weierstrass
Extreme Value Theorem there exists $\Delta_i^\star\in \mathtt{ELS}_t$ such that
\[
|\langle \Delta_i^\star,u_i\rangle|
=
\sup_{\Delta\in \mathtt{ELS}_t} |\langle \Delta,u_i\rangle|.
\]
This establishes attainment of the \textit{set supremum} over $\mathtt{ELS}_t$.

Finally, the \textit{essential} attainability statement follows directly from
\eqref{eq:esssup_eps_posprob} applied to $Z_i$: for every $\varepsilon>0$,
\[
\mathsf{P}\big(Z_i \geq \phi_t(u_i)-\varepsilon\big)>0.
\]
Equivalently, in pushforward form,
\[
\mu_t\Big(\big\{\Delta:\ |\langle \Delta,u_i\rangle|\geq \phi_t(u_i)-\varepsilon\big\}\Big)>0.
\]
\qed

\subsection{Proof of Corollary~\ref{corollary:vanilla_SGD_HUC}}

Fix iteration $t$ and a slice $u_i\in\mathcal{U}$.  Let $\mathsf{P}:=\gamma\times \mathbb{P}_{\eta,\rho}$.
Write
\[
\begin{aligned}
    &f_t(x,y_{<t};R_t)=\xi_{t-1}-A_t\,\bar{g}_t(x;R_t),
\\&
\Delta_t(X,X';R_t)=-A_t\big(\bar{g}_t(X;R_t)-\bar{g}_t(X';R_t)\big).
\end{aligned}
\]
Define
\[
v := \bar{g}_t(X;R_t)-\bar{g}_t(X';R_t),\quad w:=A_t^\top u_i.
\]
Then $|\langle \Delta_t,u_i\rangle| = |\langle v,w\rangle|$.

\paragraph{Step 1: a.s.\ feasible set for $v$.}
Under blockwise clipping with thresholds $\{C_b\}_{b=1}^{B_{\mathrm{blk}}}$ and a discrepancy cap $K_t$,
at most $K_t$ per-example contributions in the minibatch can differ between $X$ and $X'$.
Since each per-example block-gradient is clipped to $\ell_2$-norm at most $C_b$ in block $b$,
the blockwise difference in the averaged clipped gradients satisfies, $\mathsf{P}$-a.s.,
\[
\|\mathsf{P}_b v\|_2
\;\leq\;
\frac{2C_b}{B_t}K_t
\;=:\;\alpha_b,
\quad b=1,\dots,B_{\mathrm{blk}}.
\]
Equivalently, $v\in\mathcal{V}$ $\mathsf{P}$-a.s., where
\[
\mathcal{V}
:=
\Big\{
v\in\mathbb{R}^d:\ \|\mathsf{P}_b v\|_2\leq \alpha_b,\ \ b=1,\dots,B_{\mathrm{blk}}
\Big\}.
\]

\paragraph{Step 2: upper bound on the minimal cap.}
By Proposition~\ref{prop:HUC_tight} (minimality with $\phi_t(u_i)=\operatorname*{ess\,sup}_{\mathsf{P}}|\langle \Delta_t,u_i\rangle|$),
\[
\phi_t(u_i)=\operatorname*{ess\,sup}_{\mathsf{P}}|\langle v,w\rangle|
\;\leq\;
\sup_{v\in\mathcal{V}}|\langle v,w\rangle|.
\]
Now decompose $v=\sum_b v_b$ and $w=\sum_b w_b$ with $v_b:=\mathsf{P}_b v$, $w_b:=\mathsf{P}_b w$.
Then, for any $v\in\mathcal{V}$,
\[
|\langle v,w\rangle|
=
\Big|\sum_{b=1}^{B_{\mathrm{blk}}}\langle v_b,w_b\rangle\Big|
\leq
\sum_{b=1}^{B_{\mathrm{blk}}} \|v_b\|_2\,\|w_b\|_2
\leq
\sum_{b=1}^{B_{\mathrm{blk}}} \alpha_b \|w_b\|_2.
\]
Hence
\[
\sup_{v\in\mathcal{V}}|\langle v,w\rangle|
=
\sum_{b=1}^{B_{\mathrm{blk}}} \alpha_b \|w_b\|_2
=
\frac{2K_t}{B_t}\sum_{b=1}^{B_{\mathrm{blk}}} C_b\,\|\mathsf{P}_b A_t^\top u_i\|_2.
\]
Therefore,
\begin{equation}\label{eq:phi_upper}
\phi_t(u_i)
\;\leq\;
\frac{2K_t}{B_t}\sum_{b=1}^{B_{\mathrm{blk}}} C_b\,\|\mathsf{P}_b A_t^\top u_i\|_2.
\end{equation}

\paragraph{Step 3: matching lower bound (attainability in essential-sup sense).}
Consider the optimizer $v^\star\in\mathcal{V}$ defined blockwise by
\[
v_b^\star
=
\begin{cases}
\alpha_b\,\dfrac{w_b}{\|w_b\|_2}, & w_b\neq 0,\\[6pt]
0, & w_b=0,
\end{cases}
\quad b=1,\dots,B_{\mathrm{blk}}.
\]
Then $v^\star\in\mathcal{V}$ and
\[
\langle v^\star,w\rangle
=
\sum_{b=1}^{B_{\mathrm{blk}}} \alpha_b\|w_b\|_2
=
\sup_{v\in\mathcal{V}}\langle v,w\rangle,
\]
so $\sup_{v\in\mathcal{V}}|\langle v,w\rangle|$ equals the right-hand side of \eqref{eq:phi_upper}.

By the corollary's assumption (“the blockwise gradient differences can attain the clipping radius
in directions aligned with $\mathsf{P}_bA_t^\top u_i$”), together with the requirement that the
corresponding discrepancy pattern of size $K_t$ occurs with nonzero probability under the sampling rule,
we have that for every $\varepsilon>0$,
\[
\mathsf P\Big(|\langle v,w\rangle|\ge |\langle v^\star,w\rangle|-\varepsilon\Big)>0.
\]
By the defining property of essential supremum, this implies
\[
\phi_t(u_i)
=
\operatorname*{ess\,sup}_{\mathsf{P}}|\langle v,w\rangle|
=
|\langle v^\star,w\rangle|
=
\frac{2K_t}{B_t}\sum_{b=1}^{B_{\mathrm{blk}}} C_b\,\|\mathsf{P}_b A_t^\top u_i\|_2.
\]
Squaring yields
\[
h^\star_{t,i}=\phi_t(u_i)^2
=
\Bigl(\frac{2K_t}{B_t}\Bigr)^{2}
\biggl(
\sum_{b=1}^{B_{\mathrm{blk}}} C_b\,\|\mathsf{P}_b A_t^\top u_i\|_2
\biggr)^2,
\]
which is exactly \eqref{eq:vanilla_SGD_HUC}.
\qed

\section{Minibatch Subsampling models: \textsf{WR} vs.\ \textsf{WOR}}\label{app:sampling_models}

Our HUC/sa-HUC analysis is compatible with standard minibatch subsampling variants. The proofs follow
the same skeleton. Only the sampling model for the round-$t$ randomness $R_t$ and the interpretation of the
index collection $\mathsf{I}_t$ change (set vs.\ sequence with possible repeats).
In this section, we characterize the comparison between sampling With Replacement (\textsf{WR}) and Without Replacement (\textsf{WOR}).

\paragraph{Convention for \textsf{WOR}.}
Unless stated otherwise, we adopt the following standard model.

\begin{assumption}[Minibatch Subsampling \textsf{WOR}]\label{assp:wor_sampling}
At each iteration $t$, the minibatch indices $\mathsf{I}_t\subseteq[n]$ are sampled uniformly without
replacement with $|\mathsf{I}_t|=B_{t}$. All algorithmic randomness at round $t$ (including subsampling and
any additional randomness) is included in $R_t\sim\mathbb{P}_{\eta,\rho}$.
\end{assumption}

Under Assumption~\ref{assp:wor_sampling}, $\mathsf{I}_t(r)$ is a \textit{set}. The discrepancy
in~\eqref{eq:discrepancy_K} is unambiguous:
\[
K_t(x,x';r)\ :=\ \sum_{j\in\mathsf I_t(r)}\mathbf{1}\{x_j\neq x'_j\}, 
\]
where $\mathsf{I}_t(r)$ is a set.
In particular, if $x$ and $x'$ differ in at most $\overline K$ coordinates, then
$K_t(x,x';r)\le \min\{\overline K,B_{t}\}$ for every draw $r$.

\begin{assumption}[Minibatch Subsampling \textsf{WR}]\label{assp:wr_sampling}
At each iteration $t$, the minibatch indices are formed by $B_{t}$ i.i.d.\ draws from $[n]$.
Equivalently, $\mathsf I_t(r)=(I_{t,1}(r),\ldots,I_{t,B_{t}}(r))$ is a \textit{length-$B$ sequence} in $[n]$
(allowing repetitions). All round-$t$ randomness is included in $R_t\sim\mathbb P_{\eta,\rho}$.
\end{assumption}

We keep the same discrepancy definition as in~\eqref{eq:discrepancy_K}, interpreted \textit{per draw}:
\[
K_t(x,x';r)\ :=\ \sum_{b=1}^{B_{t}}\mathbf{1}\{x_{I_{t,b}(r)}\neq x'_{I_{t,b}(r)}\},
\]
where $\mathsf{I}_t(r)$ is a sequence.
Thus, repeated indices contribute repeatedly; this matches the potential influence entering through the sampled minibatch.

\begin{proposition}[Worst-case discrepancy caps under \textsf{WR} vs.\ \textsf{WOR}]
\label{prop:wr_caps}
Fix any $x,x'\in\bar{\mathcal{X}}^n$.

\textit{\textsf{WR}.}
Under Assumption~\ref{assp:wr_sampling}, for any draw $r$ one has
$0\le K_t(x,x';r)\le B_t$, where
\[
K_t(x,x';r)
:=\sum_{b=1}^{B_t}\mathbf{1}\{x_{I_{t,b}(r)}\neq x'_{I_{t,b}(r)}\}.
\]
If $d_H(x,x')=k$, then $K_t(x,x';R_t)\sim \mathrm{Binomial}(B_t,k/n)$ and
$\mathbb{E}[K_t]=B_t(k/n)$.

Moreover, for any coupling $\gamma$,
\[
K_t(\gamma)
=
\operatorname*{ess\,sup}_{((X,X'),R_t)\sim \gamma\times\mathbb P_{\eta,\rho}}
K_t(X,X';R_t)
=
\begin{cases}
0, & \text{if } X=X'\ \gamma\text{-a.s.},\\
B_t, & \text{otherwise}.
\end{cases}
\]

\textit{\textsf{WOR}.}
Under Assumption~\ref{assp:wor_sampling}, $\mathsf I_t(r)$ is a set and
\[
K_t(x,x';r)
:=\sum_{j\in \mathsf I_t(r)}\mathbf{1}\{x_{j}\neq x'_{j}\}.
\]
If $d_H(x,x')\le \overline K$, then deterministically $K_t(x,x';r)\le \min\{\overline K,B_t\}$
for every draw $r$, and hence $K_t(\gamma)\le \min\{\overline K,B_t\}$ whenever
$d_H(X,X')\le \overline K$ $\gamma$-a.s.
\end{proposition}

\begin{proof}
\textit{(\textsf{WR} upper bound.)}
Under Assumption~\ref{assp:wr_sampling}, $\mathsf I_t(r)=(I_{t,1}(r),\ldots,I_{t,B_t}(r))$ is a length-$B_t$ sequence.
Hence
\[
K_t(x,x';r):=\sum_{b=1}^{B_t}\mathbf{1}\{x_{I_{t,b}(r)}\neq x'_{I_{t,b}(r)}\}
\]
is a sum of $B_t$ indicators, so $0\le K_t(x,x';r)\le B_t$.

\textit{(\textsf{WR} binomial law.)}
If $x$ and $x'$ differ in exactly $k$ coordinates, each draw hits a differing index with probability $k/n$,
independently across draws, so $K_t(x,x';R_t)\sim\mathrm{Binomial}(B_t,k/n)$.

\textit{(\textsf{WR} worst-case cap.)}
Fix a coupling $\gamma$ and consider
\[
K_t(\gamma)
=\operatorname*{ess\,sup}_{((X,X'),R_t)\sim \gamma\times \mathbb P_{\eta,\rho}} K_t(X,X';R_t).
\]
If $X=X'$ $\gamma$-a.s., then $K_t(X,X';R_t)=0$ a.s., so $K_t(\gamma)=0$.
Otherwise, $\Pr_\gamma(X\neq X')>0$. On the event $\{X\neq X'\}$, let
$D:=\{j\in[n]: X_j\neq X'_j\}$, so $D\neq\emptyset$.
Under \textsf{WR}, for any fixed $j\in D$, the event $\{I_{t,1}=\cdots=I_{t,B_t}=j\}$ has probability $(1/n)^{B_t}>0$,
and on this event $K_t(X,X';R_t)=B_t$.
Therefore $\Pr\big(K_t(X,X';R_t)=B_t\big)>0$ on $\{X\neq X'\}$, which implies $K_t(\gamma)=B_t$.

\textit{(\textsf{WOR} comparison.)}
Under Assumption~\ref{assp:wor_sampling}, $\mathsf I_t(r)$ is a set and
$K_t(x,x';r)=\sum_{j\in\mathsf I_t(r)}\mathbf{1}\{x_j\neq x'_j\}$.
If $x$ and $x'$ differ in at most $\overline K$ coordinates then deterministically
$K_t(x,x';r)\le \min\{\overline K,B_t\}$ for every draw $r$, and the same holds $\gamma$-a.s.\ when
$d_H(X,X')\le \overline K$ $\gamma$-a.s.
\end{proof}

\section{Utility of Subsampling-Aware SRPP Accounting}\label{app:utility_sa_srpp}

This appendix formalizes why incorporating minibatch randomness into the accounting can improve utility.
We compare two \textit{sufficient calibrations} for the \textit{same} SRPP guarantee for the \textit{same} SGD mechanism:
a worst-case (HUC) envelope bound versus a subsampling-aware (sa-HUC) envelope bound.

\subsection{A R\'enyi bound for shared-mixtures}\label{app:mixture_renyi}

Let $\alpha>1$. Let $(\Omega,\mathcal{F},\nu)$ be a probability space and $\eta\sim \nu$.
For each $\eta\in\Omega$, let $P_\eta$ and $Q_\eta$ be distributions on a common measurable space $(\mathcal{Y}, \mathcal{A})$,
with $P_\eta\ll Q_\eta$ for $\nu$-a.e.\ $\eta$.
Define the shared-mixtures $P:=\int P_\eta\,d\nu(\eta)$ and $Q:=\int Q_\eta\,d\nu(\eta)$.

\begin{lemma}[R\'enyi divergence of shared-mixtures]\label{lem:renyi_mixture_shared}
Assume $P_\eta\ll Q_\eta$ for $\nu$-a.e.\ $\eta$.
Assume further that $\eta\mapsto D_\alpha(P_\eta\|Q_\eta)$ is measurable and that
$\mathbb{E}_{\eta\sim\nu}\!\left[\exp\!\big((\alpha-1)D_\alpha(P_\eta\|Q_\eta)\big)\right]<\infty$
(otherwise the right-hand side is $+\infty$ and the inequality below is trivial).
Then
\[
\mathtt{D}_\alpha(P\|Q)
\;\leq\;
\frac{1}{\alpha-1}\log \mathbb{E}_{\eta\sim \nu}\!\left[\exp\!\big((\alpha-1)\mathtt{D}_\alpha(P_\eta\|Q_\eta)\big)\right].
\]
Moreover, the right-hand side is strictly smaller than $\operatorname*{ess\,sup}_{\eta} \mathtt{D}_\alpha(P_\eta\|Q_\eta)$
whenever $\mathtt{D}_\alpha(P_\eta\|Q_\eta)$ is not $\nu$-a.s.\ constant.
\end{lemma}

\begin{proof}
Fix a $\sigma$-finite dominating measure $\mu$ on $(\mathcal{Y},\mathcal{A})$ such that
$P_\eta,Q_\eta\ll \mu$ for $\nu$-a.e.\ $\eta$ (e.g.\ take $\mu=Q+\sum_{j\ge1}2^{-j}P_{\eta_j}$ for a dense countable subfamily,
or simply assume existence as standard in this setting).
Write $p_\eta=\frac{dP_\eta}{d\mu}$ and $q_\eta=\frac{dQ_\eta}{d\mu}$.
Then $p=\mathbb{E}_\eta[p_\eta]$ and $q=\mathbb{E}_\eta[q_\eta]$ are Radon--Nikodym derivatives of
$P=\int P_\eta d\nu(\eta)$ and $Q=\int Q_\eta d\nu(\eta)$ with respect to $\mu$.

By definition of R\'enyi divergence,
\[
\exp\!\big((\alpha-1)\mathtt{D}_\alpha(P\|Q)\big)
= \int \Big(\frac{p}{q}\Big)^\alpha\, q\, d\mu
= \int p^\alpha q^{1-\alpha}\, d\mu .
\]
For $\alpha>1$, the map
\[
f:(0,\infty)\times(0,\infty)\to(0,\infty),\quad f(a,b):=a^\alpha b^{1-\alpha}
= b\Big(\frac{a}{b}\Big)^\alpha
\]
is jointly convex (it is the perspective of the convex function $x\mapsto x^\alpha$).
Therefore, for each $y\in\mathcal{Y}$, applying Jensen's inequality to the random pair
$(p_\eta(y),q_\eta(y))$ gives
\[
\begin{aligned}
    &(\mathbb{E}_\eta p_\eta(y))^\alpha(\mathbb{E}_\eta q_\eta(y))^{1-\alpha}= f\big(\mathbb{E}_\eta p_\eta(y),\,\mathbb{E}_\eta q_\eta(y)\big)
\\&\leq
\mathbb{E}_\eta f\big(p_\eta(y),q_\eta(y)\big)
= \mathbb{E}_\eta\!\big[p_\eta(y)^\alpha q_\eta(y)^{1-\alpha}\big].
\end{aligned}
\]
Integrating over $\mu$ and using Tonelli/Fubini (the integrand is nonnegative) yields
\[
\begin{aligned}
    &\exp\!\big((\alpha-1)D_\alpha(P\|Q)\big)
= \int p^\alpha q^{1-\alpha}\,d\mu
\\&\leq
\mathbb{E}_\eta \int p_\eta^\alpha q_\eta^{1-\alpha}\,d\mu
=
\mathbb{E}_\eta \exp\!\big((\alpha-1)D_\alpha(P_\eta\|Q_\eta)\big).
\end{aligned}
\]
Taking $\frac{1}{\alpha-1}\log$ proves the stated bound.

For the strictness claim, note that for any random variable $Z$,
\[
\frac{1}{\alpha-1}\log \mathbb{E}[e^{(\alpha-1)Z}] \leq \operatorname*{ess\,sup} Z,
\]
with strict inequality whenever $Z$ is not $\nu$-a.s.\ constant (since then
$\mathbb{P}(Z \leq \operatorname*{ess\,sup} Z - \varepsilon)>0$ for some $\varepsilon>0$ and thus the exponential moment is
strictly smaller than $e^{(\alpha-1)\operatorname*{ess\,sup} Z}$).
Apply this to $Z=\mathtt{D}_\alpha(P_\eta\|Q_\eta)$.
\end{proof}

\subsection{Subsampling-aware envelopes at one iteration}\label{app:one_round_sa}

Fix an iteration $t$, a Pufferfish scenario $(\mathcal{S},\mathcal{Q},\Theta)$, and a slice profile $\{\mathcal{U},\omega\}$.
Let $\eta_t$ denote the (random) minibatch selector at round $t$.
Condition on a fixed history up to round $t$ (as in the main accountant).
Fix $(\theta,(s_i,s_j))\in\Theta\times\mathcal{Q}$ and a coupling $\Gamma_{\theta,ij}$ for $(X^{(i)},X^{(j)})$.

Let $R_t$ denote all remaining internal randomness used by the round-$t$ update,
and view $(\eta_t,R_t)$ as jointly generated by the algorithm at round $t$.
(Equivalently, one may treat minibatch selection as part of the round randomness; the conditioning below is the same.)

For each slice $u\in\mathcal{U}$ and each minibatch outcome $\eta$ in the support of $\eta_t$,
let $\mathbb{P}(\cdot\mid \eta_t=\eta)$ denote a regular conditional distribution
for $(X^{(i)},X^{(j)},R_t)$ given $\eta_t=\eta$ (under the coupled experiment).
Define the conditional shift bound
\begin{equation}\label{eq:z_t_u_eta_def}
z^{u}_{t}(\eta)
\;:=\;
\operatorname*{ess\,sup}_{\mathbb{P}(\cdot\mid \eta_t=\eta)}
\Big|\big\langle f_t(X^{(i)},y_{<t};R_t)-f_t(X^{(j)},y_{<t};R_t),\,u\big\rangle\Big|.
\end{equation}
Let $z^{u}_{t}(\eta_t)$ be the resulting random variable when $\eta_t$ is drawn by the algorithm.
Define the worst-case (HUC-style) per-slice cap
\[
\bar{z}^{u}_{t}
\;:=\;
\operatorname*{ess\,sup}_{\eta_t}\, z^{u}_{t}(\eta_t).
\]

\paragraph{Worst-case envelope bounds (HUC).}
Define the per-round worst-case averaged and joint envelopes by
\begin{align}
\mathtt{AR}^{\mathrm{wc}}_{t}(\sigma)
&:=\int R_\alpha\!\big(\zeta_\sigma,\bar{z}^{u}_{t}\big)\,d\omega(u),
\label{eq:AR_wc_round}\\
\mathtt{JR}^{\mathrm{wc}}_{t}(\sigma)
&:=\frac{1}{\alpha-1}\log\!\left(\int
\exp\!\big((\alpha-1)R_\alpha(\zeta_\sigma,\bar{z}^{u}_{t})\big)\,d\omega(u)\right),
\label{eq:JR_wc_round}
\end{align}
where $\zeta_\sigma$ denotes the Gaussian noise distribution with variance $\sigma^2$.

\paragraph{Subsampling-aware envelope bounds (sa-HUC).}
Define the per-round subsampling-aware envelopes by
\begin{equation}\label{eq:AR_sa_round}
    \begin{aligned}
    \mathtt{AR}^{\mathrm{sa}}_{t}(\sigma)
:=\int&
\frac{1}{\alpha-1}\log \mathbb{E}_{\eta_t}\!\big[\\
&\exp\!\big((\alpha-1)R_\alpha(\zeta_\sigma,z^{u}_{t}(\eta_t))\big)\big]\,d\omega(u),
    \end{aligned}
\end{equation}
\begin{equation}\label{eq:JR_sa_round}
    \begin{aligned}
        \mathtt{JR}^{\mathrm{sa}}_{t}(\sigma)
:=\frac{1}{\alpha-1}&\log \mathbb{E}_{\eta_t}\!\Big[\\
&\int
\exp\!\big((\alpha-1)R_\alpha(\zeta_\sigma,z^{u}_{t}(\eta_t))\big)\,d\omega(u)\Big].
    \end{aligned}
\end{equation}

\begin{lemma}[Subsampling-aware envelopes are never looser]
\label{lem:sa_never_looser_ARJR}
Fix $\alpha>1$ and $\sigma>0$. Assume that for $\omega$-a.e.\ $u\in\mathcal{U}$,
the map $\eta\mapsto z_t^u(\eta)$ is measurable and
\[
\mathbb{E}_{\eta_t}\!\left[\exp\!\big((\alpha-1)R_\alpha(\zeta_\sigma,z_t^u(\eta_t))\big)\right]<\infty,
\]
(otherwise the corresponding $\mathtt{AR}^{\mathrm{sa}}_t(\sigma)$ term equals $+\infty$ and the inequality is trivial).
Then
\[
\mathtt{AR}^{\mathrm{sa}}_{t}(\sigma)\;\leq\;\mathtt{AR}^{\mathrm{wc}}_{t}(\sigma),
\qquad
\mathtt{JR}^{\mathrm{sa}}_{t}(\sigma)\;\leq\;\mathtt{JR}^{\mathrm{wc}}_{t}(\sigma).
\]
Moreover, the inequality is strict for $\mathtt{AR}$ if there exists a measurable set $U_0\subseteq\mathcal{U}$
with $\omega(U_0)>0$ such that for every $u\in U_0$,
\[
\Pr\!\Big(R_\alpha(\zeta_\sigma,z^{u}_{t}(\eta_t)) < R_\alpha(\zeta_\sigma,\bar{z}^{u}_{t})\Big)>0.
\]
Similarly, the inequality is strict for $\mathtt{JR}$ if
\[
\begin{aligned}
\Pr\!\Big(
\int \exp\!\big((\alpha-1)R_\alpha(&\zeta_\sigma,z^{u}_{t}(\eta_t))\big)\,d\omega(u)\\&
<
\int \exp\!\big((\alpha-1)R_\alpha(\zeta_\sigma,\bar{z}^{u}_{t})\big)\,d\omega(u)
\Big)\\
&>0.
\end{aligned}
\]
\end{lemma}

\begin{proof}
Fix $u\in\mathcal{U}$. By definition, $z_t^u(\eta_t)\leq \bar{z}_t^u$ almost surely.
Since $R_\alpha(\zeta_\sigma,\cdot)$ is nondecreasing, we have
\[
R_\alpha(\zeta_\sigma,z_t^u(\eta_t))\leq R_\alpha(\zeta_\sigma,\bar{z}_t^u)
\quad\text{a.s.}
\]
Therefore,
\[
\mathbb{E}_{\eta_t}\exp\!\big((\alpha-1)R_\alpha(\zeta_\sigma,z_t^u(\eta_t))\big)
\leq
\exp\!\big((\alpha-1)R_\alpha(\zeta_\sigma,\bar{z}_t^u)\big).
\]
Taking $\frac{1}{\alpha-1}\log$ yields
\[
\frac{1}{\alpha-1}\log \mathbb{E}_{\eta_t}\!\left[\exp\!\big((\alpha-1)R_\alpha(\zeta_\sigma,z_t^u(\eta_t))\big)\right]
\leq
R_\alpha(\zeta_\sigma,\bar{z}_t^u).
\]
Integrating both sides over $u\sim\omega$ gives
$\mathtt{AR}^{\mathrm{sa}}_{t}(\sigma)\leq\mathtt{AR}^{\mathrm{wc}}_{t}(\sigma)$.

For strictness of $\mathtt{AR}$, fix any $u$ such that
$\Pr(R_\alpha(\zeta_\sigma,z_t^u(\eta_t)) < R_\alpha(\zeta_\sigma,\bar{z}_t^u))>0$.
Then the random variable $D_u(\eta_t):=R_\alpha(\zeta_\sigma,z_t^u(\eta_t))$ is not $\nu$-a.s.\ constant and satisfies
$D_u(\eta_t)\leq \operatorname*{ess\,sup}_{\eta_t} D_u(\eta_t)=R_\alpha(\zeta_\sigma,\bar{z}_t^u)$.
Hence,
\[
\frac{1}{\alpha-1}\log \mathbb{E}_{\eta_t}\!\left[e^{(\alpha-1)D_u(\eta_t)}\right]
<
R_\alpha(\zeta_\sigma,\bar{z}_t^u).
\]
Integrating this strict inequality over a set of $u$ of positive $\omega$-mass yields
$\mathtt{AR}^{\mathrm{sa}}_{t}(\sigma)< \mathtt{AR}^{\mathrm{wc}}_{t}(\sigma)$.

For $\mathtt{JR}$, the same pointwise monotonicity implies that almost surely in $\eta_t$,
\[
\begin{aligned}
    &\int \exp\!\big((\alpha-1)R_\alpha(\zeta_\sigma,z^{u}_{t}(\eta_t))\big)\,d\omega(u)\\&
\leq
\int \exp\!\big((\alpha-1)R_\alpha(\zeta_\sigma,\bar{z}^{u}_{t})\big)\,d\omega(u).
\end{aligned}
\]
Taking $\mathbb{E}_{\eta_t}$ and then $\frac{1}{\alpha-1}\log$ yields
$\mathtt{JR}^{\mathrm{sa}}_{t}(\sigma)\leq \mathtt{JR}^{\mathrm{wc}}_{t}(\sigma)$.
Strictness follows immediately under the stated strict-inequality event (since $\log$ is strictly increasing).
\end{proof}

\subsection{Full Version of Proposition \ref{prop:utility_sa_srpp_informal}}\label{app:sgd_lift}

Let $\mathcal{M}_{\mathrm{SGD}}$ be the full $T$-round mechanism in Algorithm~\ref{alg:srpp-sgd}.
Our accountant theorems (e.g., Theorem~\ref{thm:HUC_SRPP_SGD} for HUC and Theorem~\ref{thm:msHUC_SRPP_SGD} for sa-HUC)
aggregate valid per-round envelope bounds to obtain a total SRPP guarantee.

\begin{proposition}[Full Version of Proposition \ref{prop:utility_sa_srpp_informal}]
\label{prop:utility_sa_srpp_formal}
Let $\mathcal{M}_{\mathrm{SGD}}$ be the $T$-round mechanism in Algorithm~\ref{alg:srpp-sgd}.
Assume the hypotheses of Theorem~\ref{thm:HUC_SRPP_SGD} (HUC accountant) and
Theorem~\ref{thm:msHUC_SRPP_SGD} (sa-HUC accountant) hold.
Assume additionally that for each fixed $z$, the envelope $R_\alpha(\zeta_\sigma,z)$ is nonincreasing in $\sigma^2$
(e.g., for Gaussian noise $\zeta_\sigma$).

For each round $t$ and each $\sigma>0$, suppose the per-round envelope contribution in the averaged case admits
either a worst-case bound $\mathtt{AR}^{\mathrm{wc}}_{t}(\sigma)$ or a subsampling-aware bound
$\mathtt{AR}^{\mathrm{sa}}_{t}(\sigma)$, and similarly in the joint case admits either
$\mathtt{JR}^{\mathrm{wc}}_{t}(\sigma)$ or $\mathtt{JR}^{\mathrm{sa}}_{t}(\sigma)$, as defined in
Section~\ref{app:one_round_sa}.

Let $\mathsf{Agg}:\mathbb R_+^T\to \mathbb{R}_+$ denote the aggregation rule used by the corresponding accountant theorem,
and assume $\mathsf{Agg}$ is coordinatewise nondecreasing, i.e.,
if $a_t\leq b_t$ for all $t$ then $\mathsf{Agg}(a_{1:T})\leq \mathsf{Agg}(b_{1:T})$.

Then for every $\sigma>0$,
\[
\mathsf{Agg}\!\big(\mathtt{AR}^{\mathrm{sa}}_{1:T}(\sigma)\big)
\;\leq\;
\mathsf{Agg}\!\big(\mathtt{AR}^{\mathrm{wc}}_{1:T}(\sigma)\big),
\]
and
\[
\mathsf{Agg}\!\big(\mathtt{JR}^{\mathrm{sa}}_{1:T}(\sigma)\big)
\;\leq\;
\mathsf{Agg}\!\big(\mathtt{JR}^{\mathrm{wc}}_{1:T}(\sigma)\big),
\]
where $\mathtt{AR}^{\bullet}_{1:T}(\sigma):=(\mathtt{AR}^{\bullet}_{t}(\sigma))_{t=1}^T$
and $\mathtt{JR}^{\bullet}_{1:T}(\sigma):=(\mathtt{JR}^{\bullet}_{t}(\sigma))_{t=1}^T$, for $\bullet\in\{\mathrm{wc},\mathrm{sa}\}$.
Moreover, if $\mathsf{Agg}$ is strictly increasing in at least one coordinate (e.g.\ $\mathsf{Agg}(a_{1:T})=\sum_{t=1}^T a_t$)
and the strictness condition of Lemma~\ref{lem:sa_never_looser_ARJR} holds for at least one round $t$
(for the corresponding Ave/JR case), then the above inequality is strict.

Consequently, for any target SRPP (or sa-SRPP) budget level $\varepsilon$, the minimal noise variance
\[
\sigma^2_{\mathrm{sa}}(\varepsilon):=\inf\{\sigma^2:\ \mathsf{Agg}(\mathtt{AR}^{\mathrm{sa}}_{1:T}(\sigma))\leq \varepsilon\}
\quad(\text{or with }\mathtt{JR})
\]
satisfies $\sigma^2_{\mathrm{sa}}(\varepsilon)\leq \sigma^2_{\mathrm{wc}}(\varepsilon)$, and typically
$\sigma^2_{\mathrm{sa}}(\varepsilon)< \sigma^2_{\mathrm{wc}}(\varepsilon)$.
\end{proposition}

\begin{proof}
Lemma~\ref{lem:sa_never_looser_ARJR} gives $\mathtt{AR}^{\mathrm{sa}}_{t}(\sigma)\leq \mathtt{AR}^{\mathrm{wc}}_{t}(\sigma)$
and $\mathtt{JR}^{\mathrm{sa}}_{t}(\sigma)\leq \mathtt{JR}^{\mathrm{wc}}_{t}(\sigma)$ for each $t$ and each $\sigma$.
Applying the coordinatewise nondecreasing aggregation rule $\mathsf{Agg}$ yields the claimed inequalities for the
full $T$-round mechanism.

If $\mathsf{Agg}$ is strictly increasing in some coordinate and at least one round admits a strict inequality
(as characterized in Lemma~\ref{lem:sa_never_looser_ARJR}), then the aggregated inequality is strict.

Finally, since the aggregated envelope bound is nonincreasing in $\sigma^2$, the set of $\sigma^2$ values sufficient
to certify a fixed target budget under the subsampling-aware bound contains the corresponding set under the worst-case bound.
Thus the minimal sufficient $\sigma^2$ under the subsampling-aware bound cannot exceed that under the worst-case bound.
\end{proof}

\section{Proof of Lemma \ref{lemma:SRPP-envelope}}\label{app:proof_lemma:SRPP-envelope} 

Fix $\theta\in\Theta$, $(s_i,s_j)\in\mathcal{Q}$ with $P_\theta^S(s_i),P_\theta^S(s_j)>0$,
a direction $u\in\mathcal{U}$, and $\alpha>1$.
Let $P^{f,s}:=P_\theta(f(X)\mid s)$ and define the one-dimensional projected pre-noise laws
\[
P_u := \Psi^u_{\#}P_{\theta}^{f,s_i},
\quad
Q_u := \Psi^u_{\#}P_{\theta}^{f,s_j}.
\]
Let $z_\infty^u := W_\infty(P_u,Q_u)$.

By definition of $W_\infty$, for every $\varepsilon>0$ there exists a coupling
$\pi_\varepsilon\in\Pi(P_u,Q_u)$ such that
\begin{equation}\label{eq:W_infty_coupling_eps_app}
|Z-Z'|\leq z_\infty^u+\varepsilon
\quad\text{for $\pi_\varepsilon$-a.e.\ $(Z,Z')$.}
\end{equation}

\medskip\noindent
\textbf{Step 1 (Projected additive-noise form).}
Let $\widetilde{N}:=\Psi^u(N)$. Then $\widetilde{N}$ is independent of $f(X)$ and has distribution
$\widetilde{\zeta}:=\Psi^u_{\#}\zeta$ on $\mathbb{R}$.
For $z\in\mathbb{R}$, write $\widetilde{\zeta}_z$ for the law of $\widetilde{N}+z$.
Then the projected endpoint laws are mixtures of shifted noises:
\[
\Psi^u_{\#}\mathcal{M}^\theta_{s_i}=\int \widetilde{\zeta}_{z}\,P_u(dz),
\quad
\Psi^u_{\#}\mathcal{M}^\theta_{s_j}=\int \widetilde{\zeta}_{z'}\,Q_u(dz').
\]
Using the coupling $\pi_\varepsilon$ of $(P_u,Q_u)$, we can write both mixtures over the common index
$(z,z')\sim\pi_\varepsilon$:
\[
\Psi^u_{\#}\mathcal{M}^\theta_{s_i}=\int \widetilde{\zeta}_{z}\,\pi_\varepsilon(dz,dz'),
\quad
\Psi^u_{\#}\mathcal{M}^\theta_{s_j}=\int \widetilde{\zeta}_{z'}\,\pi_\varepsilon(dz,dz').
\]

\medskip\noindent
\textbf{Step 2 (Pointwise bound for each shift pair).}
For any $z,z'\in\mathbb{R}$, Rényi divergence is invariant under common translations, hence
\[
\mathtt{D}_\alpha(\widetilde{\zeta}_z\|\widetilde{\zeta}_{z'})
=
\mathtt{D}_\alpha(\widetilde{\zeta}_{z-z'}\|\widetilde{\zeta}).
\]
By definition of the (1-D) shift-R\'enyi envelope,
\[
R_\alpha(\widetilde{\zeta},r):=\sup_{|a|\leq r}\mathtt{D}_\alpha(\widetilde{\zeta}_{a}\|\widetilde{\zeta}),
\]
we obtain
\[
\mathtt{D}_\alpha(\widetilde{\zeta}_z\|\widetilde{\zeta}_{z'})
\leq
R_\alpha(\widetilde{\zeta},|z-z'|).
\]
Combining with \eqref{eq:W_infty_coupling_eps_app} yields, for $\pi_\varepsilon$-a.e.\ $(z,z')$,
\[
\mathtt{D}_\alpha(\widetilde{\zeta}_z\|\widetilde{\zeta}_{z'})
\leq
R_\alpha(\widetilde{\zeta},z_\infty^u+\varepsilon).
\]

\medskip\noindent
\textbf{Step 3 (From pointwise to mixtures via DPI).}
Define probability measures on $\mathbb{R}^2\times\mathbb{R}$ by
\[
\begin{aligned}
    &\overline{P}_\varepsilon(dz,dz',dy):=\pi_\varepsilon(dz,dz')\,\widetilde{\zeta}_z(dy),\\
&\overline{Q}_\varepsilon(dz,dz',dy):=\pi_\varepsilon(dz,dz')\,\widetilde{\zeta}_{z'}(dy).
\end{aligned}
\]
Their $y$-marginals are exactly the desired mixtures:
\[
(\overline{P}_\varepsilon)_{Y}=\Psi^u_{\#}\mathcal{M}^\theta_{s_i},
\quad
(\overline{Q}_\varepsilon)_{Y}=\Psi^u_{\#}\mathcal{M}^\theta_{s_j}.
\]
Since marginalization is a data-independent post-processing, DPI implies
\[
\mathtt{D}_\alpha\!\left(\Psi^u_{\#}\mathcal{M}^\theta_{s_i}\ \big\|\ \Psi^u_{\#}\mathcal{M}^\theta_{s_j}\right)
\leq
\mathtt{D}_\alpha(\overline{P}_\varepsilon\|\overline{Q}_\varepsilon).
\]
Moreover, conditioning on $(z,z')$ and using the standard expression for Rényi divergence of a mixture with a common mixing law,
\[
\mathtt{D}_\alpha(\overline{P}_\varepsilon\|\overline{Q}_\varepsilon)
=
\frac{1}{\alpha-1}\log
\int \exp\!\Big((\alpha-1)\mathtt{D}_\alpha(\widetilde{\zeta}_z\|\widetilde{\zeta}_{z'})\Big)\,
\pi_\varepsilon(dz,dz').
\]
Using the pointwise bound from Step 2 and the monotonicity of $r\mapsto R_\alpha(\widetilde\zeta,r)$ gives
\[
\mathtt{D}_\alpha(\overline{P}_\varepsilon\|\overline{Q}_\varepsilon)
\leq
R_\alpha(\widetilde{\zeta},z_\infty^u+\varepsilon).
\]
Therefore,
\[
\mathtt{D}_\alpha\!\left(\Psi^u_{\#}\mathcal{M}^\theta_{s_i}\ \big\|\ \Psi^u_{\#}\mathcal{M}^\theta_{s_j}\right)
\leq
R_\alpha(\widetilde{\zeta},z_\infty^u+\varepsilon).
\]
Finally, letting $\varepsilon\downarrow 0$ yields
\[
\mathtt{D}_\alpha\!\left(\Psi^u_{\#}\mathcal{M}^\theta_{s_i}\ \big\|\ \Psi^u_{\#}\mathcal{M}^\theta_{s_j}\right)
\le
R_\alpha(\widetilde{\zeta},z_\infty^u).
\]
Renaming $\widetilde{\zeta}=\Psi^u_{\#}\zeta$ (as in the lemma statement) proves the claim.
\qed

\section{Proof of Theorem~\ref{thm:gaussian-ave-SRPEp}}

Fix $\theta\in\Theta$ and $(s_i, s_j)\in \mathcal{Q}$ with $P^S_{\theta}(s_i),P^S_{\theta}(s_j)>0$.
Let $\zeta=\mathcal{N}(0,\sigma^2 I_d)$ and define
\[
\mathrm{P}:=\mathcal{M}^{\theta}_{s_i}=P_{\theta}^{f, s_i}*\zeta,
\quad
\mathrm{Q}:=\mathcal{M}^{\theta}_{s_j}=P_{\theta}^{f,s_j}*\zeta.
\]

\medskip\noindent
\textbf{Step 1 (per-slice envelope bound).}
Fix any direction $u\in\mathcal{U}$.
Applying Lemma~\ref{lemma:SRPP-envelope} to the 1-D projection $\Psi^u(y)=\langle y,u\rangle$ yields
\[
\mathtt{D}_{\alpha}\!\left(\Psi^{u}_{\#}\mathrm{P}\ \big\|\ \Psi^{u}_{\#}\mathrm{Q}\right)
\leq
R_{\alpha}\!\left(\zeta,\ z^{u}_{\infty}(\theta,s_i,s_j)\right),
\]
where
\[
z^{u}_{\infty}(\theta,s_i,s_j)
:=
W_{\infty}\!\Bigl(\Psi^{u}_{\#}P_{\theta}^{f,s_i},\ \Psi^{u}_{\#}P_{\theta}^{f,s_j}\Bigr).
\]
By the definition of the per-slice $\infty$-Wasserstein sensitivity $\Delta_\infty^u$
in \eqref{eq:per_slice_WS}, for every $u\in\mathcal{U}$ we have
\[
z^{u}_{\infty}(\theta,s_i,s_j)\leq \Delta_\infty^u,
\]
and since $R_\alpha(\zeta,\cdot)$ is non-decreasing,
\begin{equation}\label{eq:per_slice_bound_final}
\mathtt{D}_{\alpha}\!\left(\Psi^{u}_{\#}\mathrm{P}\ \big\|\ \Psi^{u}_{\#}\mathrm{Q}\right)
\leq
R_{\alpha}\!\left(\zeta,\ \Delta_\infty^u\right).
\end{equation}

\medskip\noindent
\textbf{Step 2 (Gaussian envelope).}
For $\zeta=\mathcal{N}(0,\sigma^2 I_d)$, the projected noise $\Psi^u(N)$ is $\mathcal{N}(0,\sigma^2)$.
For any shift $a\in\mathbb{R}$,
\[
\mathtt{D}_{\alpha}\!\left(\mathcal{N}(a,\sigma^2)\ \big\|\ \mathcal{N}(0,\sigma^2)\right)
=\frac{\alpha a^2}{2\sigma^2}.
\]
Hence the 1-D Gaussian shift envelope satisfies
\[
R_{\alpha}(\zeta,z)
=\sup_{|a|\leq z}\mathtt{D}_{\alpha}\!\left(\mathcal{N}(a,\sigma^2)\ \big\|\ \mathcal{N}(0,\sigma^2)\right)
=\frac{\alpha}{2\sigma^2}z^2.
\]
Substituting this into \eqref{eq:per_slice_bound_final} gives, for all $u\in\mathcal{U}$,
\[
\mathtt{D}_{\alpha}\!\left(\Psi^{u}_{\#}\mathrm{P}\ \big\|\ \Psi^{u}_{\#}\mathrm{Q}\right)
\leq
\frac{\alpha}{2\sigma^2}\,(\Delta_\infty^u)^2.
\]

\medskip\noindent
\textbf{Step 3 (average over $\omega$ and choose $\sigma^2$).}
By Definition~\ref{def:ave_srd},
\[
\mathtt{AveSD}^{\omega}_{\alpha}(\mathrm{P}\|\mathrm{Q})
=
\int_{\mathbb S^{d-1}}
\mathtt{D}_{\alpha}\!\left(\Psi^{u}_{\#}\mathrm{P}\ \big\|\ \Psi^{u}_{\#}\mathrm{Q}\right)\, d\omega(u).
\]
Therefore,
\[
\mathtt{AveSD}^{\omega}_{\alpha}(\mathrm{P}\|\mathrm{Q})
\leq
\int_{\mathbb S^{d-1}} \frac{\alpha}{2\sigma^2}(\Delta_\infty^u)^2\, d\omega(u)
=
\frac{\alpha}{2\sigma^2}\,\overline{\Delta}^2.
\]
Choosing $\sigma^2=\frac{\alpha\,\overline{\Delta}^2}{2\epsilon}$ yields
$\mathtt{AveSD}^{\omega}_{\alpha}(\mathcal M^\theta_{s_i}\|\mathcal{M}^\theta_{s_j})\leq \epsilon$
for all $\theta$ and $(s_i,s_j)$, i.e., $\mathcal{M}$ satisfies
$(\alpha,\epsilon,\omega)$-Ave-SRPP.
\qed

\section{Proof of Theorem \ref{thm:gaussian-joint-SRPEp}}

Fix $\theta\in\Theta$ and $(s_i, s_j)\in \mathcal{Q}$ with $P^S_{\theta}(s_i),P^S_{\theta}(s_j)>0$, and fix a slice profile $\{\mathcal{U}, \omega\}$.
Let 
\[
  \mathrm{P} := \mathcal{M}^{\theta}_{s_i}
  = P_{\theta}^{f,s_i} * \zeta,
  \quad \textup{and} \quad
  \mathrm{Q} := \mathcal{M}^{\theta}_{s_j}
  = P_{\theta}^{f,s_j} * \zeta.
\]

\medskip\noindent
\textbf{Step 1.}
Fix a slice $u\in\mathcal{U}$.
Lemma \ref{lemma:SRPP-envelope} applied to the 1-D
projection $\Psi^u(y) = \langle y,u\rangle$ states that, for all
$\alpha>1$,
\[
  \mathtt{D}_{\alpha}\bigl(\Psi^{u}_{\#}\mathrm{P} \,\big\|\, \Psi^{u}_{\#}Q\bigr)
  \leq
  R_{\alpha}\bigl(\zeta, z^{u}_{\infty}(\theta,s_i,s_j)\bigr),
\]
where $z^{u}_{\infty}(\theta,s_i,s_j)$ upper-bounds the 1-D
shift (in $W_{\infty}$) between the pre-noise probability measures
$\Psi^u_{\#}P_\theta^{f,s_i}$ and $\Psi^u_{\#}P_\theta^{f,s_j}$.
By definition,
\[
  z^{u}_{\infty}(\theta,s_i,s_j)\leq 
  \tilde{\Delta}^u_{\infty}(\theta,s_i,s_j)
  :=
  W_{\infty}\bigl(\Psi^{u}_{\#}P_{\theta}^{f,s_i}\,\big\|\,
                  \Psi^{u}_{\#}P_{\theta}^{f,s_j}\bigr),
\]
and the envelope $R_\alpha(\zeta,\cdot)$ is non-decreasing in its second argument. 
Hence
\begin{equation}\label{eq:per_slice_env_bound}
  \mathtt{D}_{\alpha}\bigl(\Psi^{u}_{\#}\mathrm{P} \,\big\|\, \Psi^{u}_{\#}\mathrm{Q}\bigr)\leq
  R_{\alpha}\bigl(\zeta, \tilde{\Delta}_{\infty}^u(\theta,s_i,s_j)\bigr)
\end{equation}
for all $u\in\mathcal{U}$.

\medskip\noindent
\textbf{Step 2.}
Now, we specialize to Gaussian noise $\zeta = \mathcal{N}(0, \sigma^2 I_{d})$.
For any fixed slice $u\in\mathcal{U}$, the projected noise
$\Psi^u(N)$ is 1-D Gaussian $\mathcal{N}(0,\sigma^2)$.
If we shift by $a\in\mathbb{R}$, the noise becomes
$\mathcal{N}(a,\sigma^2)$, and the order-$\alpha$ R\'enyi divergence
between $\mathcal{N}(a,\sigma^2)$ and $\mathcal{N}(0,\sigma^2)$ is
\[
  \mathtt{D}_{\alpha}\bigl(\mathcal{N}(a,\sigma^2)\,\big\|\,
                         \mathcal{N}(0,\sigma^2)\bigr)
  = \frac{\alpha a^2}{2\sigma^2}.
\]
Therefore the 1-D Gaussian shift envelope is
\[
  R_{\alpha}(\zeta,z):=
  \sup_{|a|\leq z}
  \mathtt{D}_\alpha\bigl(\zeta-a \,\big\|\, \zeta\bigr)
  = \sup_{|a|\le z} \frac{\alpha a^2}{2\sigma^2}
  = \frac{\alpha}{2\sigma^2} z^2.
\]
Substituting $z = \tilde{\Delta}^u_{\infty}(\theta,s_i,s_j)$ into
\eqref{eq:per_slice_env_bound}, we obtain
\begin{equation}\label{eq:slice_div_bound}
  \mathtt{D}_{\alpha}\bigl(\Psi^{u}_{\#}\mathrm{P} \,\big\|\, \Psi^{u}_{\#}\mathrm{Q}\bigr)
  \leq
  \frac{\alpha}{2\sigma^2}\,
  \bigl(\tilde{\Delta}^u_{\infty}(\theta,s_i,s_j)\bigr)^2.
\end{equation}

By Definition \ref{sec:joint_srpp}, the joint sliced R\'enyi divergence of order
$\alpha>1$ with slicing distribution $\omega$ is
\[
\begin{aligned}
    &\mathtt{JSD}^\omega_\alpha(\mathrm{P}\|\mathrm{Q})\\
    &:=
  \frac{1}{\alpha - 1}
  \log \int_{\mathbb{S}^{d-1} }
    \exp\Bigl(
      (\alpha-1)\,
      \mathtt{D}_{\alpha}\bigl(\Psi^{u}_{\#}\mathrm{P} \,\big\|\, \Psi^{u}_{\#}\mathrm{Q}\bigr)
    \Bigr) d\omega(u).
\end{aligned}
\]
Using the bound (\ref{eq:slice_div_bound}), we get for each $u$
\[
  \mathtt{D}_{\alpha}\bigl(\Psi^{u}_{\#}\mathrm{P} \,\big\|\, \Psi^{u}_{\#}\mathrm{Q}\bigr)\leq 
  \frac{\alpha}{2\sigma^2}\,
  \bigl(\tilde{\Delta}_{\infty}^u(\theta,s_i,s_j)\bigr)^2 \leq 
  \frac{\alpha}{2\sigma^2}\,
  \Delta_\star^2(\theta,s_i,s_j),
\]
where
\[
  \Delta_\star^2(\theta,s_i,s_j)
  :=
  \sup_{u\in\mathcal{U}}
  \bigl(\tilde{\Delta}^u_\infty(\theta,s_i,s_j)\bigr)^2.
\]
Therefore, we have
\begin{align*}
  &\mathtt{JSD}^\omega_\alpha(\mathrm{P}\|\mathrm{Q})\\
  &= \frac{1}{\alpha - 1}
     \log \int_{\mathbb{S}^{d-1} }
       \exp\Bigl(
         (\alpha-1)\,
         \mathtt{D}_{\alpha}\bigl(\Psi^{u}_{\#} \mathrm{P} \,\big\|\, \Psi^{u}_{\#} \mathrm{Q}\bigr)
       \Bigr)
   d\omega(u) \\
  &\leq
  \frac{1}{\alpha - 1}
  \log \int_{\mathbb{S}^{d-1}}
       \exp\Bigl(
         (\alpha-1)\,
         \frac{\alpha}{2\sigma^2}\,
         \Delta_\star^2(\theta,s_i,s_j)
       \Bigr) d\omega(u).
\end{align*}
Here, the integrand is independent of $u$, so the integral is just the constant itself:
\[
\begin{aligned}
    &\int_{\mathbb{S}^{d-1}}
    \exp\Bigl(
      (\alpha-1)\,
      \tfrac{\alpha}{2\sigma^2}\,
      \Delta_\star^2(\theta,s_i,s_j)
    \Bigr)
   d\omega(u)\\
  &=
  \exp\Bigl(
    (\alpha-1)\,
    \tfrac{\alpha}{2\sigma^2}\,
    \Delta_\star^2(\theta,s_i,s_j)
  \Bigr).
\end{aligned}
\]
Hence
\[
  \begin{aligned}
      \mathtt{JSD}^\omega_\alpha(\mathrm{P}\|\mathrm{Q})&\leq 
  \frac{1}{\alpha-1}
  \log \exp\Bigl(
    (\alpha-1)\,
    \tfrac{\alpha}{2\sigma^2}
    \Delta_\star^2(\theta,s_i,s_j)
  \Bigr)\\
  &=
  \frac{\alpha}{2\sigma^2}
  \Delta_\star^2(\theta,s_i,s_j).
  \end{aligned}
\]

\medskip\noindent
\textbf{Step 3.}
By definition of the global constant $\Delta_\star^2$, we have
\[
  \Delta_\star^2(\theta,s_i,s_j)\leq 
  \Delta_\star^2
\]
for all $\theta\in\Theta,\ (s_i,s_j)\in\mathcal{Q}$.
Thus
\[
  \mathtt{JSD}^{\omega}_{\alpha}\bigl(
    \mathcal{M}^{\theta}_{s_i}\,\big\|\,
    \mathcal{M}^{\theta}_{s_j}
  \bigr)\leq 
  \frac{\alpha}{2\sigma^2}\,\Delta_{\star}^2.
\]

With the choice
\[
  \sigma^2 = \frac{\alpha\,\Delta_{\star}^2}{2\epsilon},
\]
we obtain
\[
  \mathtt{JSD}^{\omega}_{\alpha}\bigl(
    \mathcal{M}^{\theta}_{s_i}\,\big\|\,
    \mathcal{M}^{\theta}_{s_j}
  \bigr)\leq \epsilon
\]
for every $\theta\in\Theta$ and $(s_i,s_j)\in\mathcal{Q}$ with positive prior mass. This is exactly the $(\alpha,\epsilon,\omega)$-Joint-SRPE
guarantee in $(\mathcal{S},\mathcal{Q},\Theta)$.

\section{Proof of Theorem~\ref{thm:mc_finite_gaussian_ave_srpp}}
\label{app:proof_mc_finite_gaussian_ave_srpp}

Let $U_{1:m}\overset{\mathrm{iid}}{\sim}\omega$. Define the random variables
\[
X_\ell := \big(\Delta_\infty^{U_\ell}\big)^2 \in [0,\Delta_0^2],
\qquad \ell=1,\ldots,m,
\]
and recall
\[
\overline{\Delta}^2(\omega) := \mathbb{E}_{U\sim\omega}\!\big[(\Delta_\infty^{U})^2\big]
= \mathbb{E}[X_1].
\]

\paragraph{Step 1.}
Define the event
\[
\mathcal{E}_{\mathrm{dist}}
:=
\Big\{\Delta_\infty^{U_\ell}\leq \widehat\Delta_\infty(U_\ell)\ \text{for all }\ell\in[m]\Big\}.
\]
By assumption,
\begin{equation}
\label{eq:Edist_prob}
\Pr(\mathcal{E}_{\mathrm{dist}})\geq 1-\gamma/2.
\end{equation}
On $\mathcal{E}_{\mathrm{dist}}$,
\begin{equation}
\label{eq:samplemean_domination}
\frac{1}{m}\sum_{\ell=1}^m X_\ell
=
\frac{1}{m}\sum_{\ell=1}^m \big(\Delta_\infty^{U_\ell}\big)^2
\leq
\frac{1}{m}\sum_{\ell=1}^m \big(\widehat\Delta_\infty(U_\ell)\big)^2.
\end{equation}

\paragraph{Step 2.}
Since $X_1,\ldots,X_m$ are i.i.d.\ and bounded in $[0,\Delta_0^2]$, Hoeffding's inequality yields that for any
$\eta\in(0,1)$,
\[
\Pr\!\left(
\mathbb{E}[X_1]
\leq
\frac{1}{m}\sum_{\ell=1}^m X_\ell
+
\Delta_0^2\sqrt{\frac{\log(2/\eta)}{2m}}
\right)\geq 1-\eta.
\]
Setting $\eta=\gamma/2$ and defining the event
\[
\mathcal{E}_{\mathrm{mc}}
:=
\left\{
\overline{\Delta}^2(\omega)
\leq
\frac{1}{m}\sum_{\ell=1}^m X_\ell
+
\Delta_0^2\sqrt{\frac{\log(4/\gamma)}{2m}}
\right\},
\]
we obtain
\begin{equation}
\label{eq:Emc_prob}
\Pr(\mathcal{E}_{\mathrm{mc}})\geq 1-\gamma/2.
\end{equation}

\paragraph{Step 3. }
On the intersection event $\mathcal{E}_{\mathrm{dist}}\cap\mathcal{E}_{\mathrm{mc}}$, combining the defining inequality of
$\mathcal{E}_{\mathrm{mc}}$ with~\eqref{eq:samplemean_domination} gives
\begin{equation}
\label{eq:barDelta_ucb_final}
\overline{\Delta}^2(\omega)
\leq
\frac{1}{m}\sum_{\ell=1}^m \big(\widehat\Delta_\infty(U_\ell)\big)^2
+
\Delta_0^2\sqrt{\frac{\log(4/\gamma)}{2m}}.
\end{equation}
By the union bound together with~\eqref{eq:Edist_prob} and~\eqref{eq:Emc_prob},
\begin{equation}
\label{eq:intersection_prob}
\Pr(\mathcal{E}_{\mathrm{dist}}\cap\mathcal{E}_{\mathrm{mc}})
\geq
1-\gamma.
\end{equation}

\paragraph{Step 4.}
Assume the mechanism uses $N\sim\mathcal{N}(0,\sigma^2 I_d)$ with
\[
\sigma^2 \;\geq\; \frac{\alpha}{2\varepsilon}\left(
\frac{1}{m}\sum_{\ell=1}^{m}\big(\widehat\Delta_\infty(U_\ell)\big)^2
+\Delta_0^2\sqrt{\frac{\log(4/\gamma)}{2m}}
\right).
\]
On $\mathcal{E}_{\mathrm{dist}}\cap\mathcal{E}_{\mathrm{mc}}$, inequality~\eqref{eq:barDelta_ucb_final} implies
\[
\sigma^2 \;\geq\; \frac{\alpha}{2\varepsilon}\,\overline{\Delta}^2(\omega).
\]
Therefore, by Theorem~\ref{thm:gaussian-ave-SRPEp} (Gaussian calibration for $(\alpha,\varepsilon,\omega)$-Ave-SRPP),
the mechanism $\mathcal{M}(X)=f(X)+N$ satisfies $(\alpha,\varepsilon,\omega)$-\textit{Ave-SRPP} on this event.
Finally,~\eqref{eq:intersection_prob} gives that this conclusion holds with probability at least $1-\gamma$
over both the conditional sampling (used to form $\widehat{\Delta}_\infty$) and the draw of~$U_{1:m}$.
\qed

\section{Proof of Theorem~\ref{thm:mc_finite_gaussian_joint_srpp}}
\label{app:proof_mc_finite_gaussian_joint_srpp}

Fix $\sigma^2>0$ and abbreviate
$c:=c(\sigma)=\frac{\alpha(\alpha-1)}{2\sigma^2}$ and $b:=b(\sigma)=\exp(c\Delta_0^2)$.
Let $U_{1:m}\overset{\mathrm{iid}}{\sim}\omega$ and define the random variables
\[
Y_\ell \;:=\; \exp\!\big(c\,(\Delta_\infty^{U_\ell})^2\big),
\qquad \ell=1,\ldots,m.
\]
Since $0\leq \Delta_\infty^{u}\leq \Delta_0$ for all $u$, we have $Y_\ell\in[1,b]$ almost surely, and
\[
\mu(\sigma)
\;:=\;
\mathbb{E}_{U\sim\omega}\!\Big[\exp\!\big(c\,(\Delta_\infty^{U})^2\big)\Big]
\;=\;
\mathbb{E}[Y_1].
\]

\paragraph{Step 1. }
Define the event
\[
\mathcal{E}_{\mathrm{dist}}
:=
\Big\{\Delta_\infty^{U_\ell}\leq \widehat\Delta_\infty(U_\ell)\ \text{for all }\ell\in[m]\Big\}.
\]
By assumption, $\Pr(\mathcal{E}_{\mathrm{dist}})\ge 1-\gamma/2$.
On $\mathcal{E}_{\mathrm{dist}}$, for each $\ell$,
\[
Y_\ell
=
\exp\!\big(c\,(\Delta_\infty^{U_\ell})^2\big)
\;\leq\;
\exp\!\big(c\,(\widehat\Delta_\infty(U_\ell))^2\big),
\]
and hence
\begin{equation}
\label{eq:mc_joint_domination}
\frac{1}{m}\sum_{\ell=1}^m Y_\ell
\;\leq\;
\widehat{\mu}_m(\sigma)
:=
\frac{1}{m}\sum_{\ell=1}^{m}
\exp\!\Big(c\,\big(\widehat\Delta_\infty(U_\ell)\big)^2\Big).
\end{equation}

\paragraph{Step 2.}
Since $Y_1,\ldots,Y_m$ are i.i.d.\ and bounded in $[1,b]$, by Hoeffding's inequality, we have, for any 
$\eta\in(0,1)$,
\[
\Pr\!\left(\mu(\sigma)\leq \frac1m\sum_{\ell=1}^m Y_\ell + (b-1)\sqrt{\frac{\log(2/\eta)}{2m}}\right)\geq 1-\eta.
\]
Setting $\eta=\gamma/2$ gives the stated $\log(4/\gamma)$ term.
Set $\eta=\gamma/2$ and define the event
\[
\mathcal{E}_{\mathrm{mc}}
:=
\left\{
\mu(\sigma)
\leq
\frac{1}{m}\sum_{\ell=1}^m Y_\ell
+
(b-1)\sqrt{\frac{\log(4/\gamma)}{2m}}
\right\}.
\]
Then $\Pr(\mathcal{E}_{\mathrm{mc}})\geq 1-\gamma/2$.

\paragraph{Step 3.}
On $\mathcal{E}_{\mathrm{dist}}\cap \mathcal{E}_{\mathrm{mc}}$, combining the defining inequality of
$\mathcal{E}_{\mathrm{mc}}$ with~\eqref{eq:mc_joint_domination} yields
\begin{equation}
\label{eq:mu_ucb_final}
\mu(\sigma)
\leq
\widehat{\mu}_m(\sigma)
+
(b-1)\sqrt{\frac{\log(4/\gamma)}{2m}}.
\end{equation}
Moreover, by the union bound,
\[
\Pr(\mathcal{E}_{\mathrm{dist}}\cap \mathcal{E}_{\mathrm{mc}})
\;\geq\;
1-\gamma.
\]

\paragraph{Step 4.}
For Gaussian SWM with noise $N\sim \mathcal{N}(0,\sigma^2 I_d)$, the joint envelope equals
\[
\mathtt{JR}^{\infty}_{\alpha,\omega}
=
\frac{1}{\alpha-1}\log \mu(\sigma).
\]
On $\mathcal{E}_{\mathrm{dist}}\cap \mathcal{E}_{\mathrm{mc}}$, inequality~\eqref{eq:mu_ucb_final} implies
\[
\mathtt{JR}^{\infty}_{\alpha,\omega}
\leq
\frac{1}{\alpha-1}\log\!\left(
\widehat\mu_m(\sigma)
+
(b-1)\sqrt{\frac{\log(4/\gamma)}{2m}}
\right).
\]
Hence, if $\sigma^2$ satisfies the stated condition
\[
\frac{1}{\alpha-1}\log\!\left(
\widehat{\mu}_m(\sigma)
+
(b-1)\sqrt{\frac{\log(4/\gamma)}{2m}}
\right)
\leq
\varepsilon,
\]
then on $\mathcal{E}_{\mathrm{dist}}\cap \mathcal{E}_{\mathrm{mc}}$ we have
$\mathtt{JR}^{\infty}_{\alpha,\omega}\leq \varepsilon$.
Therefore, by the Gaussian Joint-SRPP calibration result (Theorem~\ref{thm:gaussian-joint-SRPEp} stated for SRPP),
the mechanism $\mathcal{M}(X)=f(X)+\mathcal{N}(0,\sigma^2 I_d)$ satisfies $(\alpha,\varepsilon,\omega)$-\textit{Joint-SRPP}
on this event. Since $\Pr(\mathcal{E}_{\mathrm{dist}}\cap \mathcal{E}_{\mathrm{mc}})\geq 1-\gamma$, the claim follows.
\qed

\section{Formal Version and Proof of Proposition~\ref{prop:comp_advantage_swm}}
\label{app:computational_advantage}

We compare the cost of evaluating (i) sliced sensitivities $\Delta_\infty^u$ over finitely many directions
versus (ii) the unsliced $d$-dimensional empirical $W_\infty$ sensitivity.

Assume that for each $(\theta,s)$ we have $n$ samples from $P^{f,s}_\theta$, and let
$\mathsf{M}:=|\widehat\Theta|\,|\widehat{\mathcal Q}|$ denote the number of $(\theta,(s_i,s_j))$ instances.
Fix $L$ directions $U_1,\ldots,U_L\in\mathbb S^{d-1}$.

\begin{lemma}[Exact empirical $W_\infty$ on $\mathbb R$]
\label{lem:empirical_Winfty_1d}
Let $a_1,\ldots,a_n\in\mathbb R$ and $b_1,\ldots,b_n\in\mathbb R$ and define
$\widehat P:=\frac1n\sum_{k=1}^n\delta_{a_k}$ and $\widehat Q:=\frac1n\sum_{k=1}^n\delta_{b_k}$.
Let $a_{(1)}\le\cdots\le a_{(n)}$ and $b_{(1)}\le\cdots\le b_{(n)}$ be the order statistics.
Then
\[
W_\infty(\widehat P,\widehat Q)=\max_{1\le k\le n}\,|a_{(k)}-b_{(k)}|.
\]
\end{lemma}

\begin{proof}
Define the coupling $\pi:=\frac1n\sum_{k=1}^n\delta_{(a_{(k)},b_{(k)})}\in\Pi(\widehat P,\widehat Q)$.
Then $\operatorname{ess\,sup}_{(A,B)\sim\pi}|A-B|=\max_k|a_{(k)}-b_{(k)}|$, hence
$W_\infty(\widehat P,\widehat Q)\le \max_k|a_{(k)}-b_{(k)}|$.

For the reverse inequality, let $t^\star:=\max_k|a_{(k)}-b_{(k)}|$ and fix $t<t^\star$.
Choose $k$ such that $|a_{(k)}-b_{(k)}|>t$.
Assume $a_{(k)}<b_{(k)}-t$ (the other case is symmetric) and set
$A:=(-\infty,a_{(k)}]$ and $B:=[a_{(k)}+t,\infty)$.
Then $\widehat P(A)=k/n$ and $\widehat Q(B)\ge (n-k+1)/n$ since $b_{(k)}\in B$ implies
$b_{(k)},\ldots,b_{(n)}\in B$.
Therefore, under any coupling $\pi'\in\Pi(\widehat P,\widehat Q)$, at least $1/n$ mass must be transported
from $A$ to $B$, so $\pi'(\{|x-y|>t\})>0$.
Hence $\operatorname{ess\,sup}_{(X,Y)\sim\pi'}|X-Y|\ge t^\star$.
Taking the infimum over $\pi'$ yields $W_\infty(\widehat P,\widehat Q)\ge t^\star$.
\end{proof}

\begin{lemma}[Uniform empirical $W_\infty$ reduces to a bottleneck matching problem]
\label{lem:Winfty_bottleneck_matching}
Let $\widehat P=\frac1n\sum_{i=1}^n\delta_{x_i}$ and $\widehat Q=\frac1n\sum_{j=1}^n\delta_{y_j}$ on $\mathbb R^d$.
For $t\ge 0$, define the bipartite graph $G_t$ on $(\{1,\ldots,n\},\{1,\ldots,n\})$ with an edge $(i,j)$ iff
$\|x_i-y_j\|_2\le t$.
Then
\[
W_\infty(\widehat P,\widehat Q)\le t
\quad\Longleftrightarrow\quad
G_t \text{ has a perfect matching}.
\]
Consequently,
\[
W_\infty(\widehat P,\widehat Q)=\min\{t\ge 0:\ G_t \text{ has a perfect matching}\}.
\]
\end{lemma}

\begin{proof}
($\Leftarrow$) If $G_t$ has a perfect matching, there exists a permutation $\sigma$ such that
$\|x_i-y_{\sigma(i)}\|_2\le t$ for all $i$. Then
$\pi:=\frac1n\sum_{i=1}^n\delta_{(x_i,y_{\sigma(i)})}\in\Pi(\widehat P,\widehat Q)$
is supported on $\{(x,y):\|x-y\|_2\le t\}$, so $W_\infty(\widehat P,\widehat Q)\le t$.

($\Rightarrow$) If $W_\infty(\widehat P,\widehat Q)\le t$, there exists $\pi\in\Pi(\widehat P,\widehat Q)$
supported on $\{(x,y):\|x-y\|_2\le t\}$.
Write $\pi=\frac1n\sum_{i,j}\pi_{ij}\,\delta_{(x_i,y_j)}$ where $\Pi:=(\pi_{ij})$ is doubly stochastic.
Support implies $\pi_{ij}=0$ whenever $(i,j)$ is not an edge of $G_t$.
By the Birkhoff--von Neumann theorem, $\Pi$ is a convex combination of permutation matrices,
so there exists a permutation matrix supported on edges of $G_t$, i.e.\ $G_t$ contains a perfect matching.
\end{proof}

\begin{lemma}[Elementary domination for asymptotic sums]
\label{lem:dominate_sum}
For all $n\ge 3$, one has $n\le n\log n$ (with $\log$ the natural logarithm). Consequently,
for any constants $c_1,c_2,c_3\ge 0$ and all $n\ge 3$,
\[
c_1 dn+c_2 n\log n + c_3 n \;\le\; (c_1+c_2+c_3)\,(dn+n\log n).
\]
\end{lemma}

\begin{proof}
For $n\ge 3$, $\log n\ge 1$, hence $n\le n\log n$.
Thus $c_3 n\le c_3 n\log n$ and the inequality follows by grouping terms.
\end{proof}

\begin{proposition}[Formal Version of Proposition~\ref{prop:comp_advantage_swm}]
\label{prop:slice_complexity_comp}
Fix a Pufferfish scenario $(\mathcal{S}, \mathcal{Q},\Theta)$.
\begin{enumerate}
\item \textup{\textbf{Per-slice computation is exact in 1-D.}}
For a fixed direction $u$, the empirical $W_\infty$ between the one-dimensional projected samples
$\{\langle x_k,u\rangle\}_{k=1}^n$ and $\{\langle y_k,u\rangle\}_{k=1}^n$
equals $\max_{k}|a_{(k)}-b_{(k)}|$ after sorting (Lemma~\ref{lem:empirical_Winfty_1d}).
Per instance, projection costs $O(dn)$ and sorting costs $O(n\log n)$, hence
\[
T_{\mathrm{1D}}(u)=O\!\big(\mathsf{M}\,(dn+n\log n)\big).
\]

\item \textup{\textbf{Finite slicing over $L$ directions.}}
Repeating (1) for $u=U_\ell$ over all $\ell\in[L]$ gives
\[
T_{\mathrm{slice}}=O\!\big(\mathsf{M}\,L\,(dn+n\log n)\big).
\]

\item \textup{\textbf{Unsliced $\infty$-Wasserstein.}}
For each instance, forming all distances $\|x_i-y_j\|_2$ requires $\Theta(dn^2)$ arithmetic and
$\Omega(n^2)$ memory to materialize the distance table.
Moreover, exact evaluation of $W_\infty$ can be performed by searching over thresholds $t$ and checking
perfect-matching feasibility in $G_t$ (Lemma~\ref{lem:Winfty_bottleneck_matching}).
If $\mathcal T:=\{\|x_i-y_j\|_2:\ 1\le i,j\le n\}$, then $|\mathcal T|\le n^2$ and binary search over $\mathcal T$
requires $O(\log n)$ feasibility checks. Using a standard bipartite matching algorithm with $O(n^3)$ time,
the per-instance runtime is
\[
T_{\mathrm{unsliced,exact}}(1\ \text{instance})=O(dn^2+n^3\log n),
\]
and over all $M$ instances,
\[
T_{\mathrm{unsliced,exact}}=O\!\big(\mathsf{M}\,(dn^2+n^3\log n)\big),
\]
with memory $\Omega(n^2)$ per instance if distances are materialized.
\end{enumerate}
\end{proposition}

\begin{proof}
We count arithmetic operations and comparisons in the standard word-RAM model.
Fix any instance $(\theta,(s_i,s_j))\in\widehat\Theta\times\widehat{\mathcal{Q}}$ and let
$x_1,\ldots,x_n\in\mathbb{R}^d$ and $y_1,\ldots,y_n\in\mathbb{R}^d$ be the corresponding samples.

\paragraph{(1) Per-slice computation with one fixed direction $u$.}
Fix $u\in\mathbb{S}^{d-1}$. For each $k\in[n]$, compute $a_k:=\langle x_k,u\rangle$ and $b_k:=\langle y_k,u\rangle$.
A dot product in $\mathbb{R}^d$ can be computed using exactly $d$ multiplications and $d-1$ additions; thus there exists
a constant $c_{\mathrm{dot}}>0$ such that each dot product costs at most $c_{\mathrm{dot}}\,d$ primitive operations.
Hence computing all $2n$ projections costs at most $2c_{\mathrm{dot}}\,dn$ operations.

Next, sort $\{a_k\}_{k=1}^n$ and $\{b_k\}_{k=1}^n$ to obtain order statistics
$a_{(1)}\leq\cdots\leq a_{(n)}$ and $b_{(1)}\leq\cdots\leq b_{(n)}$.
Using a standard comparison-based sorting algorithm (e.g., mergesort), sorting $n$ reals costs at most
$c_{\mathrm{sort}}\,n\log n$ comparisons for some constant $c_{\mathrm{sort}}>0$ (for all $n\geq 2$).
Thus sorting both lists costs at most $2c_{\mathrm{sort}}\,n\log n$ operations.

Define $\widehat{P}_u:=\frac1n\sum_{k=1}^n\delta_{a_k}$ and $\widehat{Q}_u:=\frac1n\sum_{k=1}^n\delta_{b_k}$.
By Lemma~\ref{lem:empirical_Winfty_1d},
\[
W_\infty(\widehat{P}_u,\widehat{Q}_u)=\max_{k\in[n]}|a_{(k)}-b_{(k)}|.
\]
Computing the maximum can be done by one scan through $k=1,\ldots,n$, performing $O(1)$ operations per $k$; hence
there exists $c_{\mathrm{scan}}>0$ such that this step costs at most $c_{\mathrm{scan}}\,n$ operations.

Therefore, for this fixed instance and fixed $u$, the total cost is bounded by
\[
T_{\mathrm{inst}}(u)
\;\leq\;
2c_{\mathrm{dot}}\,dn \;+\; 2c_{\mathrm{sort}}\,n\log n \;+\; c_{\mathrm{scan}}\,n.
\]
For all $n\geq 3$, Lemma~\ref{lem:dominate_sum} implies that
\[
T_{\mathrm{inst}}(u)
\;\leq\;
C_{\mathrm{inst}}\,(dn+n\log n),
\quad
C_{\mathrm{inst}}:=2c_{\mathrm{dot}}+2c_{\mathrm{sort}}+c_{\mathrm{scan}}.
\]
Hence $T_{\mathrm{inst}}(u)=O(dn+n\log n)$.

Now fix $u$ and consider the $\mathsf{M}$ instances. Computing $\Delta_\infty^u$ requires computing $W_\infty(\widehat{P}_u,\widehat{Q}_u)$
once per instance and taking their maximum.
Maintaining a running maximum costs at most $c_{\max}$ operations per instance for some constant $c_{\max}>0$,
thus at most $c_{\max}\mathsf{M}$ operations in total.
Consequently, for all $n\geq 3$,
\[
T_{\mathrm{1D}}(u)
\leq
\sum_{r=1}^\mathsf{M} T_{\mathrm{inst}}^{(r)}(u) \;+\; c_{\max}\mathsf{M}
=
O\!\big(\mathsf{M}(dn+n\log n)\big),
\]
which proves (1).

\paragraph{(2) Finite slicing over $L$ directions.}
For each $\ell\in[L]$, computing $\Delta_\infty^{U_\ell}$ is exactly the computation in (1) with $u=U_\ell$.
Therefore,
\[
T_{\mathrm{slice}}
=
\sum_{\ell=1}^L T_{\mathrm{1D}}(U_\ell)
=
O\!\big(\mathsf{M}L(dn+n\log n)\big),
\]
which proves (2).

\paragraph{(3) Unsliced $\infty$-Wasserstein (exact computation).}
Fix one instance and define $\widehat{P}:=\frac1n\sum_{i=1}^n\delta_{x_i}$ and
$\widehat{Q}:=\frac1n\sum_{j=1}^n\delta_{y_j}$.
Compute the full distance table $D\in\mathbb{R}^{n\times n}$ with entries $D_{ij}:=\|x_i-y_j\|_2$.
Computing one Euclidean distance requires $\Theta(d)$ arithmetic operations; hence computing all $n^2$ distances
requires $\Theta(dn^2)$ arithmetic operations. If the table is materialized, it stores $n^2$ reals and therefore
requires $\Omega(n^2)$ memory.

Let $\mathcal{T}:=\{D_{ij}: i,j\in[n]\}$ be the set of candidate thresholds, with $|\mathcal{T}|\leq n^2$.
By Lemma~\ref{lem:Winfty_bottleneck_matching},
\[
W_\infty(\widehat{P},\widehat{Q})=\min\{t\in\mathcal{T}:\ G_t\ \text{has a perfect matching}\}.
\]
Sorting the multiset of at most $n^2$ values in $\mathcal{T}$ costs $O(n^2\log(n^2))=O(n^2\log n)$, and
binary search over the sorted list uses $O(\log|\mathcal{T}|)=O(\log n)$ feasibility checks.
Given $t$, scanning $D$ to determine edges of $G_t$ costs $O(n^2)$ time, which is dominated by matching.
A perfect matching in a bipartite graph on $2n$ vertices can be found in $O(n^3)$ time by standard algorithms.
Thus each feasibility check costs $O(n^3)$ time, and the total time spent in feasibility checks is
$O(n^3\log n)$.

Combining the distance-table computation, threshold sorting, and feasibility checks yields the per-instance bound
\[
\begin{aligned}
    &T_{\mathrm{unsliced,exact}}(1\ \mathrm{instance})\\
&=
O(dn^2) + O(n^2\log n) + O(n^3\log n)\\&
=
O(dn^2+n^3\log n).
\end{aligned}
\]
Over $\mathsf{M}$ instances this becomes $O(\mathsf{M}(dn^2+n^3\log n))$, with $\Omega(n^2)$ memory per instance if $D$ is materialized.
This proves (3) and completes the proof.
\end{proof}

\section{Proof of Proposition \ref{prop:exist_HUC}}

Fix an iteration $t$ and a slice profile $\{\mathcal{U},\omega\}$ with
$\mathcal{U} = \{u_i\}_{i=1}^m$.
Fix arbitrarily a prior $\theta\in\Theta$, a secret pair
$(s_i,s_j)\in\mathcal{Q}$ with $P^S_\theta(s_i),P^S_\theta(s_j)>0$, a history
$y_{<t}$ in the support, and a coupling
$\gamma \in \Pi(\mu^\theta_{s_i},\mu^\theta_{s_j})$.
Let $(X,X')\sim\gamma$ and $R_t\sim\mathbb{P}_{\eta,\rho}$ denote the
minibatch sampling randomness at iteration $t$.
For a draw $r$ of $R_t$, let $\mathsf{I}_t(r)\subseteq[n]$ be the minibatch
index (multi)set and $B_t(r) := |\mathsf{I}_t(r)|$ its size.

\noindent\textbf{Step 1.}
By assumption, the minibatch size is deterministic, so we write $B_t := B_t(r)\geq 1$.
For a dataset $x=(x_1,\dots,x_n)$, let $g_j(x)$ be the per-example gradient
on $x_j$, and define the $\ell_2$-clipped gradient
\[
\tilde{g}_j(x):=g_j(x) \min \left\{1,\frac{C}{\|g_j(x)\|_2}\right\}.
\]
Then $\|\tilde{g}_j(x)\|_2 \le C$ for all $j$ and all $x$.
The averaged clipped minibatch gradient at iteration $t$ is
\[
\bar{g}_t(\mathcal{B}(x;r)):=\frac{1}{B_t}\sum_{j\in \mathsf{I}_t(r)} \tilde{g}_j(x),
\]
and similarly for $x'$.
The discrepancy cap at iteration $t$ (see \eqref{eq:discrepancy_K}) is
\[
  K_t(x,x'; r):=
  \sum_{j\in \mathsf{I}_t(r)} \mathbf{1}\{x_j \neq x'_j\}.
\]
By definition, at most $K_t(x,x'; r)$ indices in $\mathsf{I}_t(r)$ differ
between $x$ and $x'$, so only those terms can change their contribution to the average of gradients.

Whenever $x_j \neq x'_j$, we have
\[
  \|\tilde{g}_j(x) - \tilde{g}_j(x')\|_2 \leq 
  \|\tilde{g}_j(x)\|_2 + \|\tilde{g}_j(x')\|_2\leq 2C.
\]
Therefore, 
\[
\begin{aligned}
    \bigl\|\bar{g}_t(\mathcal{B}(x;r)) - \bar{g}_t(\mathcal{B}(x';r))\bigr\|_2 &\leq 
  \frac{1}{B_t}\,K_t(x,x'; r)\,(2C)\\
  &=\frac{2C}{B_t}\,K_t(x,x'; r).
\end{aligned}
\]
For any slice $u_{\ell}\in\mathcal{U}$, this implies
\begin{equation}\label{eq:pre_update_directional_huc}
  \bigl|\langle \bar{g}_t(\mathcal{B}(x;r))
           - \bar{g}_t(\mathcal{B}(x';r)), u_{\ell}\rangle\bigr|\leq 
  \frac{2C}{B_t} K_t(x,x'; r).
\end{equation}

\noindent\textbf{Step 2.}
At iteration $t$, the pre-perturbation update is
\[
  f_t(x,y_{<t}; r)
  =
  T_t\bigl(\bar{g}_t(\mathcal{B}(x;r)); y_{<t}\bigr),
\]
and similarly
$f_t(x',y_{<t}; r) = T_t(\bar{g}_t(\mathcal{B}(x';r)); y_{<t})$.
Let
\[
  z   := \bar{g}_t(\mathcal{B}(x;r)),\qquad
  z'  := \bar{g}_t(\mathcal{B}(x';r)).
\]
By Assumption~\ref{assp:slicewise_Lipschitz}, for each $u_{\ell}\in\mathcal{U}$
there exists $L_{t,i}<\infty$ such that, for all $z,z'\in\mathbb{R}^d$
and all histories $y_{<t}$,
\[
  \bigl|u_{\ell}^{\top}\bigl(T_t(z; y_{<t}) - T_t(z'; y_{<t})\bigr)\bigr|\leq L_{t,i} \bigl|u_{\ell}^{\top}(z - z')\bigr|.
\]
Applying this with the $z,z'$ defined above and using
(\ref{eq:pre_update_directional_huc}), we obtain
\[
\begin{aligned}
    \bigl|\langle f_t(x,y_{<t}; r) - f_t(x',y_{<t}; r), u_{\ell}\rangle\bigr|&\leq  L_{t,i}\,\frac{2C}{B_t}\,K_t(x,x'; r)\\
&= \frac{2 C L_{t,i}}{B_t}\,K_t(x,x'; r).
\end{aligned}
\]
Squaring both sides yields
\[
\bigl|\langle f_t(x,y_{<t}; r) - f_t(x',y_{<t}; r), u_{\ell}\rangle\bigr|^2\leq \Bigl(\frac{2 C L_{t,i}}{B_t} K_t(x,x'; r)\Bigr)^2.
\]

\noindent\textbf{Step 3.}
By definition of a feasible discrepancy cap at iteration $t$, $K_t$ satisfies
\[
  K_t(x,x'; r) \leq  K_t
\]
for $\gamma\times\mathbb{P}_{\eta,\rho}$-almost every $((X,X'),R_t)$, uniformly over all $\theta\in\Theta$, $(s_i,s_j)\in\mathcal{Q}$, histories
$y_{<t}$ in support, and couplings $\gamma$.
Thus, for $\gamma\times\mathbb{P}_{\eta,\rho}$-a.e.\ $((X,X'),R_t)$,
\[
  \bigl|\langle f_t(X,y_{<t}; R_t) - f_t(X',y_{<t}; R_t), u_{\ell}\rangle\bigr|^2\leq \Bigl(\frac{2 K_t L_{t,i} C}{B_t}\Bigr)^2
  =: h_{t,i}.
\]
Since the bound holds for all $i=1,\dots,m$ and is uniform over $\theta,(s_i,s_j),y_{<t}$, and $\gamma$, the vector
$h_t = (h_{t,1},\dots,h_{t,m})$ with
\[
  h_{t,i}
  :=
  \Bigl(\frac{2 K_t L_{t,i} C}{B_t}\Bigr)^2
\]
satisfies the HUC condition (\ref{eq:HUC}).
Therefore $h_t$ is a valid history-uniform cap at iteration $t$, as claimed. 
\qed

\section{Proof of Proposition \ref{prop:msHUC_from_K2}}

Fix an iteration $t$ and a slice profile $\{\mathcal{U}, \omega\}$ with $\mathcal{U} = \{u_i\}_{i=1}^m \subset \mathbb{S}^{d-1}$.
Let $\theta\in\Theta$, $(s_i,s_j)\in\mathcal{Q}$, a history $y_{<t}$ in the support, and a coupling
$\gamma\in\Pi(\mu^\theta_{s_i},\mu^\theta_{s_j})$ be arbitrary.
Let $(X,X')\sim\gamma$.
In addition, let $r_t\sim\mathbb{P}_{\eta,\rho}$ denote the subsampling randomness at iteration $t$.
We write $\bar{g}_t(X;r_t)$ and $\bar{g}_t(X';r_t)$ for the clipped, averaged minibatch gradients, and
$f_t(X,y_{<t};r_t)$, $f_t(X',y_{<t};r_t)$ for the corresponding pre-perturbation updates.

\noindent\textbf{Step 1.}
By per-example $\ell_2$-clipping at threshold $C$ and the definition of the discrepancy
$K_t(X,X';r_t)$, the argument in the proof of Proposition \ref{prop:exist_HUC} gives, for every
realization of $r_t$,
\[
  \bigl\|\bar g_t(X;r_t) - \bar{g}_t(X';r_t)\bigr\|_2\leq\frac{2C}{B_t} K_t(X,X';r_t).
\]
Hence, for any slice $u_{\ell}\in\mathcal{U}$,
\begin{equation}\label{eq:ms_pre_update_bound}
  \bigl|\langle \bar{g}_t(X;r_t) - \bar{g}_t(X';r_t), u_{\ell}\rangle\bigr|\leq 
  \frac{2C}{B_t} K_t(X,X';r_t).
\end{equation}
Define the pre-update difference
\[
  \Delta_t^{\mathrm{pre}}(X,X';r_t):=
  \bar{g}_t(X;r_t) - \bar{g}_t(X';r_t),
\]
and the update difference
\[
\Delta_t(X,X';r_t):= f_t(X,y_{<t};r_t) - f_t(X',y_{<t};r_t).
\]
By Assumption~\ref{assp:slicewise_Lipschitz}, for each slice $u_{\ell}\in\mathcal{U}$ and every
realization of $r_t$,
\[
  \bigl|\langle \Delta_t(X,X';r_t), u_{\ell}\rangle\bigr|\leq
  L_{t,i} \bigl|\langle \Delta_t^{\mathrm{pre}}(X,X';r_t), u_{\ell}\rangle\bigr|.
\]

Combining this with~\eqref{eq:ms_pre_update_bound} yields the pathwise bound
\begin{equation}\label{eq:ms_directional_pathwise}
  \bigl|\langle \Delta_t(X,X';r_t), u_{\ell}\rangle\bigr|\leq
  \frac{2 L_{t,i} C}{B_t}\,K_t(X,X';r_t)
\end{equation}
for all $i=1,\dots,m$ and all $\eta_t$.

\noindent\textbf{Step 2.}
Squaring both sides of (\ref{eq:ms_directional_pathwise}) and taking expectations with respect to
$\eta_t\sim\mathbb{P}_{\eta}$, we obtain, for each $i=1,\dots,m$,
\[
  \mathbb{E}_{\eta_t}\bigl[
    \bigl|\langle \Delta_t(X,X';r_t), u_i\rangle\bigr|^2
  \bigr]\leq
  \Bigl(\frac{2 L_{t,i} C}{B_t}\Bigr)^2
  \mathbb{E}_{\eta_t}\bigl[ K_t(X,X';r_t)^2\bigr].
\]
By the definition of the mean-square discrepancy cap $\overline{K}_t^2$ in (\ref{eq:ms_disc_cap}),
we have, for $\gamma$-almost every $(X,X')$,
\[
  \mathbb{E}_{\eta_t}\bigl[ K_t(X,X';r_t)^2\bigr]\leq
  \overline{K}_t^2.
\]
Therefore, for $\gamma$-almost every $(X,X')$,
\[
  \mathbb{E}_{\eta_t}\bigl[
    \bigl|\langle \Delta_t(X,X';r_t), u_{\ell}\rangle\bigr|^2
  \bigr]\leq 
  \Bigl(\frac{2 L_{t,i} C}{B_t}\Bigr)^2 \overline{K}_t^2
  = h_{t,i}^{\mathsf{sa}},
  \quad i=1,\dots,m.
\]

\noindent\textbf{Step 3.}
Since $\theta\in\Theta$, $(s_i,s_j)\in\mathcal{Q}$, $y_{<t}$, and
$\gamma\in\Pi(\mu^{\theta}_{s_i},\mu^{\theta}_{s_j})$ were arbitrary, the above bound holds for all
admissible priors, secret pairs, histories, and couplings, and for $\gamma$-almost every
$(X,X')$.
By Definition \ref{def:msHUC} of a mean-square history-uniform cap, this exactly means that
\[
  h_t^{\mathsf{sa}} = \bigl(h_{t,i}^{\mathsf{sa}}\bigr)_{i=1}^m 
  \quad \textup{and} \quad
  h_{t,i}^{\mathsf{sa}}
  =
  \Bigl(\frac{2 L_{t,i} C}{B_t}\Bigr)^2 \overline{K}_t^2,
\]
is a valid sa-HUC at iteration $t$.
\qed

\section{Proof of Lemma~\ref{lemma:gaussian_SRE}}

Fix $t$, a slice $u_i\in\mathbb{S}^{d-1}$, a prior $\theta\in\Theta$, and a secret pair
$(s_i,s_j)\in\mathcal{Q}$ with $P^S_\theta(s_i),P^S_\theta(s_j)>0$.
Let $\gamma\in\Pi(\mu^\theta_{s_i},\mu^\theta_{s_j})$ and draw
$((X,X'),R_t)\sim \gamma\times \mathbb{P}_{\eta,\rho}$.
Let $N_t\sim\mathcal{N}(0,\Sigma_t)$ be independent of $((X,X'),R_t)$ and set
\[
\begin{aligned}
    &\widetilde{Y}_t := \langle f_t(X,y_{<t};R_t),u_i\rangle + \langle N_t,u_i\rangle,\\&
\widetilde{Y}_t' := \langle f_t(X',y_{<t};R_t),u_i\rangle + \langle N_t,u_i\rangle .
\end{aligned}
\]
Then $\langle N_t,u_i\rangle\sim\mathcal N(0,v_{t,i})$ with
$v_{t,i}=u_i^\top\Sigma_t u_i>0$.

\medskip
\noindent\textbf{Step 1. }
Let $S:=(X,X',R_t)$ and define the (random) mean shift
\[
a(S):=\langle f_t(X,y_{<t};R_t)-f_t(X',y_{<t};R_t),u_i\rangle .
\]
Conditioned on $S$, we have
\[
\widetilde{Y}_t\mid S \sim \mathcal{N}(m(S)+a(S),\,v_{t,i}),
\quad
\widetilde{Y}_t'\mid S \sim \mathcal{N}(m(S),\,v_{t,i}),
\]
for some $m(S)\in\mathbb{R}$ (which cancels in the divergence). Hence for every realization $S$,
\[
\begin{aligned}
    \mathtt{D}_\alpha\!\left(\widetilde{Y}_t\mid S \ \big\|\ \widetilde{Y}_t'\mid S\right)
&=
\mathtt{D}_\alpha\!\left(\mathcal{N}(m+a,v_{t,i})\,\big\|\,\mathcal{N}(m,v_{t,i})\right)\\&
= \frac{\alpha\,a(S)^2}{2v_{t,i}}.
\end{aligned}
\]
By the HUC property (Definition~\ref{def:HUC}), we have
$a(S)^2\le h_{t,i}$ $\gamma\times\mathbb{P}_{\eta,\rho}$-a.s., therefore
\[
\mathtt{D}_\alpha\!\left(\widetilde{Y}_t\mid S \ \big\|\ \widetilde{Y}_t'\mid S\right)
\leq \frac{\alpha}{2v_{t,i}}\,h_{t,i}
\quad \gamma\times\mathbb{P}_{\eta,\rho}\text{-a.s.}
\]

\medskip
\noindent\textbf{Step 2. }
Let $\nu$ be the law of $S$ under $\gamma\times\mathbb{P}_{\eta,\rho}$.
For each $s\in\mathrm{supp}(\nu)$, write $P_s$ as the probability distribution of $\widetilde{Y}_t$ given $s$ and $Q_s$ as the probability distribution of $\widetilde Y'_t$ given $s$.
Then the unconditional one-dimensional output laws can be written as mixtures
with the same weights:
\[
\begin{aligned}
&P := \Pr(\langle Y_t,u_i\rangle\mid s_i,\theta)=\int P_s\,\nu(ds),
\\&
Q := \Pr(\langle Y_t,u_i\rangle\mid s_j,\theta)=\int Q_s\,\nu(ds).
\end{aligned}
\]
By the standard Rényi-divergence bound for mixtures with common weights
(equivalently, by applying data processing to the joint laws
$\nu(ds)P_s(dy)$ and $\nu(ds)Q_s(dy)$ and then marginalizing out $S$),
for any $\alpha>1$,
\[
\mathtt{D}_\alpha(P\|Q)\leq \operatorname*{ess\,sup}_{s} \mathtt{D}_\alpha(P_s\|Q_s).
\]
Combining with the bound from Step 1 yields
\[
\mathtt{D}_\alpha\!\left(\Pr(\langle Y_t,u_i\rangle\mid s_i,\theta)\ \big\|\ 
\Pr(\langle Y_t,u_i\rangle\mid s_j,\theta)\right)
\leq \frac{\alpha}{2}\frac{h_{t,i}}{v_{t,i}}.
\]
Since $\theta$ and $(s_i,s_j)$ were arbitrary, the claim follows.
\qed

\section{Proof of Theorem \ref{thm:HUC_SRPP_SGD}}

Fix a slicing profile $\{\mathcal{U},\omega\}$ with
$\mathcal{U} = \{u_\ell\}_{\ell=1}^m \subset \mathbb{S}^{d-1}$ and $\omega \in \Delta(\mathcal{U})$.
Let $h = \{h_t\}_{t=1}^T$ be a sequence of valid HUC vectors, with $h_t = (h_{t,1},\dots,h_{t,m})$ for $t=1,\dots,T$.
That is, 
for every prior $\theta\in\Theta$,
every secret pair $(s_i,s_j)\in\mathcal{Q}$ with
$P^S_\theta(s_i),P^S_\theta(s_j)>0$, every history $y_{<t}$ in the support, and every coupling
$\gamma\in\Pi(\mu^{\theta}_{s_i},\mu^{\theta}_{s_j})$, we have
$\gamma\times\mathbb{P}_{\eta,\rho}$-almost surely
\begin{equation}\label{eq:HUC_again_here}
  \bigl|\langle f_t(X,y_{<t};R_t)-f_t(X',y_{<t};R_t),u_\ell\rangle\bigr|^2\leq  h_{t,\ell},
\end{equation}
for all $\ell=1,\dots,m$.

Let the SRPP-SGD mechanism use additive Gaussian noise
\[
  N_t \sim \mathcal{N}(0,\sigma^2 I_d), \quad t=1,\dots,T,
\]
independent across $t$ and independent of $(X,X',R_t)$,
and write the noisy update at step $t$ as
\[
  Y_t = f_t(X,y_{<t};R_t) + N_t.
\]

\medskip
\noindent\textbf{Step 1. }
Fix $\alpha>1$, $\theta\in\Theta$, and $(s_i,s_j)\in\mathcal{Q}$ with positive prior mass.
Fix an arbitrary history $y_{<t}$ in the support.
Fix an arbitrary coupling $\gamma\in\Pi(\mu^\theta_{s_i},\mu^\theta_{s_j})$.

Since $h_t$ is a valid HUC vector, for $\gamma\times\mathbb{P}_{\eta,\rho}$-a.s.\ in $((X,X'),R_t)$,
\[
\bigl|\langle f_t(X,y_{<t};R_t)-f_t(X',y_{<t};R_t),u_\ell\rangle\bigr|^2\leq h_{t,\ell},
\quad \forall \ell\in[m].
\]
By Fubini/Tonelli, there exists a measurable set $\mathcal{R}_t$ with $\Pr(R_t\in\mathcal{R}_t)=1$
such that for every $r\in\mathcal{R}_t$ the same bound holds $\gamma$-a.s.\ in $(X,X')$:
\[
\bigl|\langle f_t(X,y_{<t};r)-f_t(X',y_{<t};r),u_\ell\rangle\bigr|^2\leq h_{t,\ell}.
\]

Condition on $R_t=r\in\mathcal{R}_t$. Then $\langle Y_t,u_\ell\rangle$ is Gaussian with variance
$v_{t,\ell}=u_\ell^\top(\sigma^2 I_d)u_\ell=\sigma^2$ under both secrets, and the mean shift is
$\langle f_t(X,y_{<t};r)-f_t(X',y_{<t};r),u_\ell\rangle$.
Hence Lemma~\ref{lemma:gaussian_SRE} yields, for all $r\in\mathcal{R}_t$,
\begin{equation}\label{eq:per_step_slice_bound_cond_r}
\begin{aligned}
    \mathtt{D}_\alpha\!\Big(&
\Pr(\langle Y_t,u_\ell\rangle \mid s_i,\theta,y_{<t},R_t=r)\\&
\ \Big\|\ 
\Pr(\langle Y_t,u_\ell\rangle \mid s_j,\theta,y_{<t},R_t=r)
\Big)
\leq \frac{\alpha}{2\sigma^2}\,h_{t,\ell}.
\end{aligned}
\end{equation}

Now view $(R_t,\langle Y_t,u_\ell\rangle)$ as the released pair at round $t$.
Because $R_t$ is independent of the secret (and of $y_{<t}$), it has the same conditional law under $s_i$ and $s_j$.
Using \eqref{eq:per_step_slice_bound_cond_r} (which holds for $R_t$-a.e.\ $r$) and the definition of R\'enyi divergence,
we obtain
\[
\begin{aligned}
    &\mathtt{D}_\alpha\!\Big(
\Pr(R_t,\langle Y_t,u_\ell\rangle \mid s_i,\theta,y_{<t})
\ \Big\|\ 
\Pr(R_t,\langle Y_t,u_\ell\rangle \mid s_j,\theta,y_{<t})
\Big)\\&
\leq \frac{\alpha}{2\sigma^2}\,h_{t,\ell}.
\end{aligned}
\]
Finally, marginalizing out $R_t$ is post-processing, so by data processing,
\begin{equation}\label{eq:per_step_slice_bound}
\mathtt{D}_\alpha\!\Big(
\Pr(\langle Y_t,u_\ell\rangle \mid s_i,\theta,y_{<t})
\ \Big\|\ 
\Pr(\langle Y_t,u_\ell\rangle \mid s_j,\theta,y_{<t})
\Big)
\leq \frac{\alpha}{2\sigma^2}\,h_{t,\ell}.
\end{equation}
This bound holds uniformly over all histories $y_{<t}$ in the support.

\medskip
\noindent\textbf{Step 2.}
Fix a slice index $\ell\in[m]$ and consider the 1-D projected mechanism
\[
  \mathcal{M}^{(\ell)}: X  \longmapsto  Z^{(\ell)}_{1:T} :=\bigl(\langle Y_1,u_\ell\rangle,\dots,\langle Y_T,u_\ell\rangle\bigr).
\]
For each $t$, denote by $\mathrm{P}^{(\ell)}_{t,y_{<t}}$ and $\mathrm{Q}^{(\ell)}_{t,y_{<t}}$ the conditional probability measure of $\langle Y_t,u_\ell\rangle$ given
$Z^{(\ell)}_{<t}=y_{<t}$ under secrets $s_i$ and $s_j$, respectively.
From the per–step bound \eqref{eq:per_step_slice_bound}, we have, for
every history $y_{<t}$ in the support,
\[
  \mathtt{D}_\alpha\Bigl(\Pr\bigl(\langle Y_t,u_\ell\rangle \mid s_i,\theta,y_{<t}\bigr)\,\Big\|\,\Pr\bigl(\langle Y_t,u_\ell\rangle \mid s_j,\theta,y_{<t}\bigr)
  \Bigr)\leq\frac{\alpha}{2\sigma^2}\,h_{t,\ell}.
\]
Equivalently,
\begin{equation}\label{eq:per_step_conditional_slice}
  \mathtt{D}_\alpha\bigl(P^{(\ell)}_{t,y_{<t}}
    \,\big\|\,Q^{(\ell)}_{t,y_{<t}}\bigr)\leq
  \varepsilon_{t,\ell},
  \quad
  \varepsilon_{t,\ell}
  := \frac{\alpha}{2\sigma^2}\,h_{t,\ell}.
\end{equation}
Let
\[
  \mathrm{P}^{(\ell)} := \Pr\bigl(Z^{(\ell)}_{1:T} \mid s_i,\theta\bigr),
  \quad
  \mathrm{Q}^{(\ell)} := \Pr\bigl(Z^{(\ell)}_{1:T} \mid s_j,\theta\bigr)
\]
denote the trajectory probability measures of the projected mechanism along slice
$u_\ell$.

The family
$\{\mathrm{P}^{(\ell)}_{t,y_{<t}}, \mathrm{Q}^{(\ell)}_{t,y_{<t}}\}_{t,y_{<t}}$
satisfies the assumptions of Lemma \ref{lemma:adaptive_renyi_composition} (stated later)
with per–step bounds $\varepsilon_t = \varepsilon_{t,\ell}$.
Applying Lemma~\ref{lemma:adaptive_renyi_composition} yields
\begin{equation}\label{eq:slice_traj_cost}
  \begin{aligned}
      \mathtt{D}_\alpha\Bigl(\Pr\bigl(Z^{(\ell)}_{1:T} \mid s_i,\theta\bigr)\,\Big\|\,\Pr\bigl(Z^{(\ell)}_{1:T} \mid s_j,\theta\bigr)\Bigr)&=\mathtt{D}_{\alpha}\bigl(\mathrm{P}^{(\ell)} \,\big\|\, \mathrm{Q}^{(\ell)}\bigr)\\
      &\leq
  \sum_{t=1}^T \varepsilon_{t,\ell}=
  \frac{\alpha}{2\sigma^2} \sum_{t=1}^T h_{t,\ell}.
  \end{aligned}
\end{equation}

Note that (\ref{eq:slice_traj_cost}) is derived for the full projected trajectory $Z^{(\ell)}_{1:T}$.
If Algorithm \ref{alg:srpp-sgd} outputs only the final iterate (or any measurable function of the trajectory), the corresponding R\'enyi divergence can only decrease by data processing, so the same bound
(\ref{eq:slice_traj_cost}) applies to the actual output distributions.

\medskip
\noindent\textbf{Step 3.}
By the definition of $\omega$–Average Sliced R\'enyi Divergence (Definition~\ref{def:ave_srd}), applied to the output probability measure of the mechanism
(identified with the full projected trajectories), we have
\[
  \begin{aligned}
      &\mathtt{AveSD}^{\omega}_\alpha\Bigl(\Pr(M(X)\mid s_i,\theta)\,\Big\|\,\Pr(M(X)\mid s_j,\theta)\Bigr)\\
      &=
  \sum_{\ell=1}^m \omega_{\ell}
  \mathtt{D}_\alpha\Bigl(\Pr\bigl(Z^{(\ell)}_{1:T} \mid s_i,\theta\bigr)\,\Big\|\,\Pr\bigl(Z^{(\ell)}_{1:T} \mid s_j,\theta\bigr)
  \Bigr).
  \end{aligned}
\]
Using the per–slice bound (\ref{eq:slice_traj_cost}), we get
\[
\begin{aligned}
    &\mathtt{AveSD}^{\omega}_\alpha\Bigl(\Pr(M(X)\mid s_i,\theta)\,\Big\|\,\Pr(M(X)\mid s_j,\theta)
  \Bigr)\\&\leq\sum_{\ell=1}^m \omega_{\ell}
  \frac{\alpha}{2\sigma^2} \sum_{t=1}^T h_{t,\ell} \\
  &= \frac{\alpha}{2\sigma^2}\sum_{t=1}^T \sum_{\ell=1}^m \omega_{\ell} h_{t,\ell}.
\end{aligned}
\]
Therefore, if
\[
  \sigma^2 \geq \frac{\alpha}{2\epsilon}
  \sum_{t=1}^T \sum_{\ell=1}^m \omega_{\ell} h_{t,\ell},
\]
then for every prior $\theta$ and every $(s_i,s_j)\in\mathcal{Q}$ with
positive prior mass, the $\omega$–Ave-SRD between the corresponding
output distributions is at most $\epsilon$, i.e., Algorithm \ref{alg:srpp-sgd}
is $(\alpha,\epsilon,\omega)$–Ave-SRPP.

\medskip
\noindent\textbf{Step 4.}
For Joint-SRPP, we use the joint sliced R\'enyi divergence
(Definition \ref{def:joint_srd}), which for the two output distributions can be
written as
\[
  \begin{aligned}
      &\mathtt{JSD}^{\omega}_{\alpha}\Bigl(\Pr(M(X)\mid s_i,\theta)\,\Big\|\,\Pr(M(X)\mid s_j,\theta)
  \Bigr)\\
  &=
  \frac{1}{\alpha-1}
  \log \mathbb{E}_{U\sim\omega}\!\Bigl[
    \exp\bigl((\alpha-1)\mathtt{D}_\alpha(\mathrm{P}_U\|\mathrm{Q}_U)\bigr)
  \Bigr],
  \end{aligned}
\]
where, for each $u_\ell\in\mathcal{U}$, $\mathrm{P}_{u_\ell}$ and $\mathrm{Q}_{u_\ell}$
denote the distributions of $Z^{(\ell)}_{1:T}$ under $s_i$ and $s_j$, respectively,
and $\mathrm{P}_U,\mathrm{Q}_U$ are the corresponding random choices when $U\sim\omega$.

From (\ref{eq:slice_traj_cost}), for each fixed slice $u_\ell$ we have
\[
  \mathtt{D}_{\alpha}(\mathrm{P}_{u_\ell}\|\mathrm{Q}_{u_\ell})\leq
  \frac{\alpha}{2\sigma^2}\sum_{t=1}^T h_{t,\ell}.
\]
Thus, for the random $U\sim\omega$, it holds almost surely that
\[
  \mathtt{D}_\alpha(P_U\|Q_U)\leq
  \frac{\alpha}{2\sigma^2}\sum_{t=1}^T h_{t,U}\leq
  \frac{\alpha}{2\sigma^2}\sum_{t=1}^T \max_{\ell} h_{t,\ell}.
\]
Plugging this into the definition of Joint-SRD yields
\[
\begin{aligned}
    &\mathtt{JSD}^{\omega}_{\alpha}\Bigl(
    \Pr(M(X)\mid s_i,\theta)\,\Big\|\,
    \Pr(M(X)\mid s_j,\theta)\Bigr)\\
  &=\frac{1}{\alpha-1}
  \log \mathbb{E}_{U\sim\omega}\Bigl[
    \exp\bigl((\alpha-1)\mathtt{D}_\alpha(\mathrm{P}_U\|\mathrm{Q}_U)\bigr)
  \Bigr] \\
  &\leq
  \frac{1}{\alpha-1}
  \log \exp\Bigl(
    (\alpha-1)\frac{\alpha}{2\sigma^2}
    \sum_{t=1}^T \max_{\ell} h_{t,\ell}
  \Bigr) \\
  &= \frac{\alpha}{2\sigma^2}
     \sum_{t=1}^T \max_{\ell} h_{t,\ell}.
\end{aligned}
\]
Therefore, if
\[
  \sigma^2 \geq \frac{\alpha}{2\epsilon}
  \sum_{t=1}^T \max_{\ell} h_{t,\ell},
\]
then for every prior $\theta$ and every $(s_i,s_j)\in\mathcal Q$ the
Joint-SRD between the corresponding output distributions is at most $\epsilon$,
i.e., Algorithm~\ref{alg:srpp-sgd} is $(\alpha,\epsilon,\omega)$–Joint-SRPP.
\qed

\begin{lemma}[Adaptive composition of R\'enyi divergence]
\label{lemma:adaptive_renyi_composition}
Let $\alpha>1$ and let $\mathrm{P}$ and $\mathrm{Q}$ be probability measures on the
product space $\mathcal{Y}_1 \times \cdots \times \mathcal{Y}_T$ with
coordinate random variables $Y_1,\dots,Y_T$.
For each $t=1,\dots,T$, let
$\mathrm{P}_{t \mid y_{<t}}$ and $\mathrm{Q}_{t \mid y_{<t}}$ denote regular conditional
distributions of $Y_t$ given $Y_{<t}=y_{<t}$ under $P$ and $Q$,
respectively, where $y_{<t}=(y_1,\dots,y_{t-1})$.
Suppose there exist constants $\varepsilon_t \geq 0$ such that
\[
  \mathtt{D}_\alpha\bigl(P_{t \mid y_{<t}} \,\big\|\, Q_{t \mid y_{<t}}\bigr)\leq \varepsilon_t,
\]
for all $t=1,\dots,T$ and all histories $y_{<t}$ in the support.
Then the joint R\'enyi divergence satisfies
\[
  \mathtt{D}_\alpha(\mathrm{P} \,\|\, \mathrm{Q})\leq
  \sum_{t=1}^T \varepsilon_t.
\]
\end{lemma}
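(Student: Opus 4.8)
The plan is to pass to the multiplicative (``moment'') form of R\'enyi divergence and prove a chain rule there. Write
\[
  M_\alpha(\mathrm{P}\|\mathrm{Q}) := \mathbb{E}_{Y\sim\mathrm{Q}}\!\left[\left(\frac{d\mathrm{P}}{d\mathrm{Q}}(Y)\right)^{\!\alpha}\right] = \exp\!\bigl((\alpha-1)\,\mathtt{D}_\alpha(\mathrm{P}\|\mathrm{Q})\bigr),
\]
so that the claim $\mathtt{D}_\alpha(\mathrm{P}\|\mathrm{Q}) \le \sum_{t=1}^T \varepsilon_t$ is equivalent to the product bound $M_\alpha(\mathrm{P}\|\mathrm{Q}) \le \prod_{t=1}^T e^{(\alpha-1)\varepsilon_t}$. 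If some $\varepsilon_t = +\infty$ there is nothing to prove, so I assume all $\varepsilon_t$ finite; then the hypothesis forces $\mathrm{P}_{t\mid y_{<t}} \ll \mathrm{Q}_{t\mid y_{<t}}$ for every history $y_{<t}$ in the support, and, writing $\ell_t(y_t\mid y_{<t}) := \frac{d\mathrm{P}_{t\mid y_{<t}}}{d\mathrm{Q}_{t\mid y_{<t}}}(y_t)$, the disintegration of $\mathrm{P}$ and $\mathrm{Q}$ along the coordinates (which exists because the regular conditional distributions are assumed to exist) yields $\mathrm{P}\ll\mathrm{Q}$ with
\[
  \frac{d\mathrm{P}}{d\mathrm{Q}}(y_{1:T}) = \prod_{t=1}^T \ell_t(y_t\mid y_{<t}) \qquad \mathrm{Q}\text{-a.s.}
\]

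I would then induct on $T$, the case $T=1$ being exactly the hypothesis. For the inductive step, substitute the factorization into $M_\alpha$ and condition on $Y_{<T}$, integrating out the last coordinate under $\mathrm{Q}$ (the interchange being legitimate by Tonelli, as the integrand is non-negative):
\[
  M_\alpha(\mathrm{P}\|\mathrm{Q}) = \mathbb{E}_{Y_{<T}\sim\mathrm{Q}}\!\left[\;\prod_{t=1}^{T-1}\ell_t(Y_t\mid Y_{<t})^{\alpha}\;\cdot\;\mathbb{E}_{Y_T\sim\mathrm{Q}_{T\mid Y_{<T}}}\!\bigl[\ell_T(Y_T\mid Y_{<T})^{\alpha}\bigr]\right].
\]
The inner conditional expectation is precisely $M_\alpha\bigl(\mathrm{P}_{T\mid Y_{<T}}\,\|\,\mathrm{Q}_{T\mid Y_{<T}}\bigr)$, which by the hypothesis is at most $e^{(\alpha-1)\varepsilon_T}$ for every $Y_{<T}$ in the support, hence $\mathrm{Q}$-a.s.; factoring out this uniform constant leaves
\[
  M_\alpha(\mathrm{P}\|\mathrm{Q}) \le e^{(\alpha-1)\varepsilon_T}\, \mathbb{E}_{Y_{<T}\sim\mathrm{Q}}\!\left[\prod_{t=1}^{T-1}\ell_t(Y_t\mid Y_{<t})^{\alpha}\right] = e^{(\alpha-1)\varepsilon_T}\, M_\alpha\bigl(\mathrm{P}^{1:T-1}\,\big\|\,\mathrm{Q}^{1:T-1}\bigr),
\]
where $\mathrm{P}^{1:T-1},\mathrm{Q}^{1:T-1}$ are the marginals on the first $T-1$ coordinates (whose density ratio is the partial product $\prod_{t<T}\ell_t$, again by the chain rule). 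These marginals have the same coordinate conditionals as $\mathrm{P},\mathrm{Q}$, so they satisfy the hypotheses of the lemma with the constants $\varepsilon_1,\dots,\varepsilon_{T-1}$; the induction hypothesis then gives $M_\alpha(\mathrm{P}^{1:T-1}\|\mathrm{Q}^{1:T-1}) \le \prod_{t=1}^{T-1}e^{(\alpha-1)\varepsilon_t}$, and multiplying proves $M_\alpha(\mathrm{P}\|\mathrm{Q})\le\prod_{t=1}^T e^{(\alpha-1)\varepsilon_t}$. Taking $\tfrac{1}{\alpha-1}\log$ of both sides gives the stated bound.

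The substantive part is measure-theoretic rather than analytic: I expect the main obstacle to be making the coordinate-wise disintegration / density chain rule for $d\mathrm{P}/d\mathrm{Q}$ fully rigorous on abstract product spaces (this is where the assumed regular conditional distributions, and implicitly enough regularity of the $\mathcal{Y}_t$, are needed), together with two bookkeeping points: that finiteness of each per-step divergence propagates absolute continuity to the joint law (so the argument does not stall on a $+\infty$), and that the ``for all histories in the support'' qualifier indeed delivers a bound holding $\mathrm{Q}$-a.s.\ when the last coordinate is peeled off. Once those are in place, the probabilistic core — the tower property plus Tonelli plus the elementary identification of the inner expectation with a per-step moment — is routine.
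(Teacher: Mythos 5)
Your proof is correct and follows essentially the same route as the paper's: both pass to the moment form $\mathbb{E}_{\mathrm{Q}}[(d\mathrm{P}/d\mathrm{Q})^\alpha]$, factor the likelihood ratio via the chain rule $d\mathrm{P}/d\mathrm{Q} = \prod_t \ell_t$, condition on $Y_{<T}$ to bound the last factor's conditional moment by $e^{(\alpha-1)\varepsilon_T}$ uniformly, and recurse. The only difference is packaging — you phrase it as a formal induction on $T$, while the paper unwinds the iterated conditional expectations backward from $t=T$ to $t=1$ in one pass; your explicit remarks on measure-theoretic bookkeeping (propagation of absolute continuity, the a.s.\ reading of the per-history bound) are sound but address the same implicit hygiene the paper relies on.
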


\begin{proof}[Proof of Lemma \ref{lemma:adaptive_renyi_composition}]

Let $L := \frac{d\mathrm{P}}{d\mathrm{Q}}$ be the Radon--Nikodym derivative of $\mathrm{P}$ w.r.t.\ $\mathrm{Q}$,
assumed to exist (otherwise $\mathtt{D}_\alpha(\mathrm{P}\|\mathrm{Q})=+\infty$ and the
inequality is straightforward).
For each $t$, define the conditional likelihood ratio
\[
  L_t(y_{\leq t}):=\frac{d\mathrm{P}_{t \mid y_{<t}}}{d\mathrm{Q}_{t \mid y_{<t}}}(y_t), \quad y_{\leq t} = (y_1,\dots,y_t).
\]
By the chain rule for densities,
\[
  L(Y_{1:T})= \prod_{t=1}^T L_t(Y_{\leq t})
\]
$\mathrm{Q}$-almost surely.

By definition of R\'enyi divergence,
\[
  \begin{aligned}
      &\mathtt{D}_\alpha(\mathrm{P} \,\|\, \mathrm{Q})=\frac{1}{\alpha-1}
  \log \mathbb{E}_{\mathrm{Q}} \bigl[ L(Y_{1:T})^{\alpha} \bigr]\\&
  =\frac{1}{\alpha-1}
  \log \mathbb{E}_{\mathrm{Q}} \Bigl[ \prod_{t=1}^T L_t(Y_{\leq t})^{\alpha} \Bigr].
  \end{aligned}
\]
For each $t$, define
\[
  M_t:=\sup_{y_{<t}}\mathbb{E}_{\mathrm{Q}} \bigl[L_t(Y_{\leq t})^\alpha \,\big|\, Y_{<t}=y_{<t}\bigr].
\]
Under our assumption,
\[
  \frac{1}{\alpha-1}\log \mathbb{E}_{\mathrm{Q}}\bigl[L_t(Y_{\leq t})^\alpha \,\big|\, Y_{<t}=y_{<t}\bigr]=
  \mathtt{D}_\alpha\bigl(\mathrm{P}_{t \mid y_{<t}} \,\big\|\, \mathrm{Q}_{t \mid y_{<t}}\bigr)
  \leq \varepsilon_t.
\]
Thus, 
\[
  \mathbb{E}_{\mathrm{Q}}\bigl[
    L_t(Y_{\leq t})^\alpha \,\big|\, Y_{<t}=y_{<t}
  \bigr]\leq
  e^{(\alpha-1)\varepsilon_t},
\]
for all $y_{<t}$, and hence
\[
  M_t \leq e^{(\alpha-1)\varepsilon_t}.
\]
We now bound $\mathbb{E}_{\mathrm{Q}}[\prod_{t=1}^T L_t^{\alpha}]$ iteratively using the tower property:
\[
\begin{aligned}
    \mathbb{E}_{\mathrm{Q}}\Bigl[ \prod_{t=1}^T L_t^\alpha \Bigr]
  &= \mathbb{E}_{\mathrm{Q}}\Bigl[
      \mathbb{E}_{\mathrm{Q}}\bigl[ L_T^{\alpha} \,\big|\, Y_{<T} \bigr]
      \prod_{t=1}^{T-1} L_t^{\alpha}
    \Bigr] \\
  &\leq M_T \mathbb{E}_{\mathrm{Q}}\Bigl[ \prod_{t=1}^{T-1} L_t^{\alpha} \Bigr].
\end{aligned}
\]
Repeating this backward for $t=T-1,\dots,1$ yields
\[
  \mathbb{E}_{\mathrm{Q}}\Bigl[ \prod_{t=1}^T L_t^{\alpha} \Bigr]
  \leq \prod_{t=1}^T M_t\leq \prod_{t=1}^T e^{(\alpha-1)\varepsilon_t}=e^{(\alpha-1)\sum_{t=1}^T \varepsilon_t}.
\]
Taking $\frac{1}{\alpha-1}\log(\cdot)$ on both sides gives
\[
  \mathtt{D}_\alpha(\mathrm{P} \,\|\, \mathrm{Q})
  =
  \frac{1}{\alpha-1}
  \log \mathbb{E}_{\mathrm{Q}}\Bigl[ \prod_{t=1}^T L_t^\alpha \Bigr]\leq \sum_{t=1}^T \varepsilon_t,
\]
as claimed.

\end{proof}

\section{Proof of Theorem~\ref{thm:msHUC_SRPP_SGD}}

Fix a slice profile $(\mathcal{U},\omega)$ with $\mathcal{U}=\{u_\ell\}_{\ell=1}^m$ and
$\omega\in\Delta(\mathcal{U})$. Fix $\alpha>1$. Fix an arbitrary prior $\theta\in\Theta$
and an arbitrary secret pair $(s_i,s_j)\in\mathcal{Q}$.

Let $\eta=(\eta_1,\dots,\eta_T)$ denote the (possibly time-varying) subsampling parameters
used by Algorithm~\ref{alg:srpp-sgd} (e.g., sampling rates).
Let $R_t$ denote the \textit{realized subsampling outcome} at round $t$ (e.g., the sampled index multiset),
and let $\mathbf{R}:=(R_1,\dots,R_T)$.
For each realization $\mathbf{r}$ of $\mathbf{R}$, let $\mathcal{M}^{\theta,\mathbf{r}}_{s}$ denote the
conditional law of the final output of Algorithm~\ref{alg:srpp-sgd} under prior $\theta$ and secret $s$,
conditional on $\mathbf{R}=\mathbf{r}$.

Throughout, the Gaussian noises are i.i.d.\ $N_t\sim\mathcal{N}(0,\sigma^2 I_d)$, independent of all
data and of $\mathbf{R}$. Hence for every slice $u_\ell$,
\[
v_{t,\ell}:=u_\ell^\top(\sigma^2 I_d)u_\ell=\sigma^2.
\]

\begin{lemma}[Expected Gaussian sliced R\'enyi cost]\label{lem:sa_gaussian_slice}
Fix iteration $t$ and direction $u\in\mathbb{S}^{d-1}$.
Let $N_t\sim\mathcal{N}(0,\sigma^2 I_d)$ be independent of all data and of $R_t$.
Fix $\theta\in\Theta$, $(s_i,s_j)\in\mathcal{Q}$, a history $y_{<t}$ in the support, and a coupling
$\gamma\in\Pi(\mu_{s_i}^\theta,\mu_{s_j}^\theta)$.
For $\gamma$-a.e.\ $(X,X')$, define the random directional shift
\[
\Delta_t^u(X,X';R_t):=\big\langle f_t(X,y_{<t};R_t)-f_t(X',y_{<t};R_t),u\big\rangle.
\]
Then, for any $\alpha>1$,
\[
\begin{aligned}
    &\mathbb{E}_{R_t}\!\left[
\mathtt{D}_\alpha\!\left(
\Pr(\langle Y_t,u\rangle \mid s_i,\theta,R_t)\ \Big\|\ 
\Pr(\langle Y_t,u\rangle \mid s_j,\theta,R_t)
\right)\right]\\&
=
\frac{\alpha}{2\sigma^2}\,
\mathbb{E}_{R_t}\!\left[\big|\Delta_t^u(X,X';R_t)\big|^2\right],
\end{aligned}
\]
where $Y_t=f_t(X,y_{<t};R_t)+N_t$.
\end{lemma}

\begin{proof}
Fix $t,u,\theta,(s_i,s_j),y_{<t},\gamma$ as stated and take $\gamma$-a.e.\ $(X,X')$.
Condition on $R_t=r$. Then $\langle Y_t,u\rangle$ is Gaussian with variance $\sigma^2$ and mean
$\langle f_t(X,y_{<t};r),u\rangle$, while the corresponding mean under secret $s_j$ is
$\langle f_t(X',y_{<t};r),u\rangle$. Hence the conditional mean shift equals
$\Delta_t^u(X,X';r)$ and the standard Gaussian identity gives
\[
\begin{aligned}
    &\mathtt{D}_\alpha\!\left(
\Pr(\langle Y_t,u\rangle \mid s_i,\theta,R_t=r)\ \Big\|\ 
\Pr(\langle Y_t,u\rangle \mid s_j,\theta,R_t=r)
\right)\\&
=\frac{\alpha}{2\sigma^2}\,\big|\Delta_t^u(X,X';r)\big|^2.
\end{aligned}
\]
Taking expectation over $R_t$ yields the claim.
\end{proof}

\medskip
\noindent\textbf{Step 1. }
Fix an iteration $t\in\{1,\dots,T\}$ and a slice $u_\ell\in\mathcal{U}$.
Fix any history $y_{<t}$ in the support and any coupling
$\gamma\in\Pi(\mu^\theta_{s_i},\mu^\theta_{s_j})$.
Apply Lemma~\ref{lem:sa_gaussian_slice} with $u=u_\ell$:
for $\gamma$-a.e.\ $(X,X')$,
\[
\begin{aligned}
    &\mathbb{E}_{R_t}\!\left[
\mathtt{D}_\alpha\!\left(
\Pr(\langle Y_t,u_\ell\rangle \mid s_i,\theta,R_t)\ \Big\|\ 
\Pr(\langle Y_t,u_\ell\rangle \mid s_j,\theta,R_t)
\right)\right]\\&
=
\frac{\alpha}{2\sigma^2}\,
\mathbb{E}_{R_t}\!\left[
\big|\langle f_t(X,y_{<t};R_t)-f_t(X',y_{<t};R_t),u_\ell\rangle\big|^2
\right].
\end{aligned}
\]
Since $h_t^{\mathsf{sa}}$ is a valid sa-HUC vector (Definition~\ref{def:msHUC}), the expectation on the
right-hand side is at most $h^{\mathsf{sa}}_{t,\ell}$. Therefore,
\begin{equation}\label{eq:sa-perround-final}
\begin{aligned}
    &\mathbb{E}_{R_t}\!\left[
\mathtt{D}_\alpha\!\left(
\Pr(\langle Y_t,u_\ell\rangle \mid s_i,\theta,R_t)\ \Big\|\ 
\Pr(\langle Y_t,u_\ell\rangle \mid s_j,\theta,R_t)
\right)\right]\\&
\leq
\frac{\alpha}{2\sigma^2}\,h^{\mathsf{sa}}_{t,\ell}.
\end{aligned}
\end{equation}

\medskip
\noindent\textbf{Step 2 (sa-Ave-SRPP).}
Fix a realization $\mathbf{r}=(r_1,\dots,r_T)$ of the subsampling outcomes $\mathbf{R}$.
Apply the same conditional SRPP accountant (composition argument) as used for
Theorem~\ref{thm:HUC_SRPP_SGD} to the conditional mechanism $\mathcal{M}^{\theta,\mathbf{r}}$.
This yields the pointwise (in $\mathbf{r}$) bound
\begin{equation}\label{eq:cond-ave}
\begin{aligned}
    &\mathtt{AveSD}^{\omega}_{\alpha}\!\left(
\mathcal{M}^{\theta,\mathbf{r}}_{s_i}\ \Big\|\ \mathcal{M}^{\theta,\mathbf{r}}_{s_j}
\right)
\leq \sum_{t=1}^{T}\sum_{\ell=1}^{m}\omega_\ell\,\\&
\mathtt{D}_\alpha\!\left(
\Pr(\langle Y_t,u_\ell\rangle \mid s_i,\theta,R_t=r_t)\ \Big\|\ 
\Pr(\langle Y_t,u_\ell\rangle \mid s_j,\theta,R_t=r_t)
\right).
\end{aligned}
\end{equation}
Take expectation over $\eta$ and use linearity and the fact that the $t$-th summand depends only on $\eta_t$:
\[
\begin{aligned}
    &\mathbb{E}_{\eta}\!\left[
\mathtt{AveSD}^{\omega}_{\alpha}\!\left(
\mathcal{M}^{\theta,\eta}_{s_i}\ \Big\|\ \mathcal{M}^{\theta,\eta}_{s_j}
\right)\right]
\leq \sum_{t=1}^{T}\sum_{\ell=1}^{m}\omega_\ell\,
\mathbb{E}_{\eta_t}\!\Big[\\
&
\mathtt{D}_\alpha\!\left(
\Pr(\langle Y_t,u_\ell\rangle \mid s_i,\theta,R_t)\ \Big\|\ 
\Pr(\langle Y_t,u_\ell\rangle \mid s_j,\theta,R_t)
\right)\Big].
\end{aligned}
\]
Applying \eqref{eq:sa-perround-final} yields
\[
\mathbb{E}_{\eta}\!\left[
\mathtt{AveSD}^{\omega}_{\alpha}\!\left(
\mathcal{M}^{\theta,\eta}_{s_i}\ \Big\|\ \mathcal{M}^{\theta,\eta}_{s_j}
\right)\right]
\leq \frac{\alpha}{2\sigma^2}\sum_{t=1}^{T}\sum_{\ell=1}^{m}\omega_\ell\,h^{\mathsf{sa}}_{t,\ell}.
\]
Therefore, if
\[
\sigma^2 \geq \frac{\alpha}{2\epsilon}\sum_{t=1}^{T}\sum_{\ell=1}^{m}\omega_\ell\,h^{\mathsf{sa}}_{t,\ell},
\]
then the right-hand side is at most $\epsilon$. Since $\theta$ and $(s_i,s_j)$ were arbitrary,
Definition~\ref{def:ms_Ave_SRPP} implies that Algorithm~\ref{alg:srpp-sgd} satisfies
$(\alpha,\epsilon,\omega)$-sa-Ave-SRPP, proving (i).

\medskip
\noindent\textbf{Step 3 (sa-Joint-SRPP).}
Fix a realization $\mathbf{r}=(r_1,\dots,r_T)$ of the subsampling outcomes $\mathbf{R}$.
Apply the same conditional accountant as in Step~2 for the joint sliced divergence:
\begin{equation}\label{eq:cond-joint}
\begin{aligned}
    &\mathtt{JSD}^{\omega}_{\alpha}\!\left(
\mathcal{M}^{\theta,\mathbf{r}}_{s_i}\ \Big\|\ \mathcal{M}^{\theta,\mathbf{r}}_{s_j}
\right)
\leq \sum_{t=1}^{T}\max_{\ell\in[m]}\,
\mathtt{D}_\alpha\!\Big(
\Pr(\langle Y_t,u_\ell\rangle \mid s_i,\theta,R_t=r_t)\ \\&\Big\|\ 
\Pr(\langle Y_t,u_\ell\rangle \mid s_j,\theta,R_t=r_t)
\Big),
\end{aligned}
\end{equation}
Take expectation over $r_t$. and use linearity across $t$:
\[
\begin{aligned}
    &\mathbb{E}_{\eta}\!\left[
\mathtt{JSD}^{\omega}_{\alpha}\!\left(
\mathcal{M}^{\theta,\eta}_{s_i}\ \Big\|\ \mathcal{M}^{\theta,\eta}_{s_j}
\right)\right]
\leq \sum_{t=1}^{T}\mathbb{E}_{\eta_t}\!\Big[\\&
\max_{\ell\in[m]}\,
\mathtt{D}_\alpha\!\left(
\Pr(\langle Y_t,u_\ell\rangle \mid s_i,\theta,R_t)\ \Big\|\ 
\Pr(\langle Y_t,u_\ell\rangle \mid s_j,\theta,R_t)
\right)\Big].
\end{aligned}
\]

Now invoke the explicit SRPP--SGD sa-HUC instantiation from Proposition~\ref{prop:msHUC_from_K2}.
In that setting, for each $t$ and $\ell$ there is a deterministic coefficient
\[
a_{t,\ell}:=\Bigl(\frac{2L_{t,\ell}C}{B_t}\Bigr)^2
\]
and a (scalar) discrepancy random variable $K_t(R_t)$ (shared across slices at round $t$) such that
for $\gamma$-a.e.\ $(X,X')$ and all $\ell$,
\[
\big|\langle f_t(X,y_{<t};R_t)-f_t(X',y_{<t};R_t),u_\ell\rangle\big|^2
\leq a_{t,\ell}\,K_t(R_t)^2.
\]
Condition on $R_t=r$ and apply the same Gaussian identity as in Lemma~\ref{lem:sa_gaussian_slice} to get
\[
\begin{aligned}
    &\mathtt{D}_\alpha\!\left(
\Pr(\langle Y_t,u_\ell\rangle \mid s_i,\theta,R_t=r)\ \Big\|\ 
\Pr(\langle Y_t,u_\ell\rangle \mid s_j,\theta,R_t=r)
\right)\\&
=\frac{\alpha}{2\sigma^2}\,
\big|\langle f_t(X,y_{<t};r)-f_t(X',y_{<t};r),u_\ell\rangle\big|^2\\&
\leq \frac{\alpha}{2\sigma^2}\,a_{t,\ell}\,K_t(r)^2.
\end{aligned}
\]
Taking the maximum over $\ell$ yields, for each $r$,
\[
\max_{\ell\in[m]}\mathtt{D}_\alpha(\cdots\mid R_t=r)
\leq \frac{\alpha}{2\sigma^2}\,
\Bigl(\max_{\ell\in[m]}a_{t,\ell}\Bigr)\,K_t(r)^2.
\]
Now take expectation over $R_t$:
\[
\mathbb{E}_{R_t}\!\left[
\max_{\ell\in[m]}\mathtt{D}_\alpha(\cdots)\right]
\leq \frac{\alpha}{2\sigma^2}\,
\Bigl(\max_{\ell\in[m]}a_{t,\ell}\Bigr)\,
\mathbb{E}_{R_t}\!\left[K_t(R_t)^2\right].
\]
By the mean-square discrepancy cap \eqref{eq:ms_disc_cap},
$\mathbb{E}_{R_t}[K_t(R_t)^2]\le \overline{K}_t^2$, hence
\[
\mathbb{E}_{R_t}\!\left[
\max_{\ell\in[m]}\mathtt{D}_\alpha(\cdots)\right]
\leq \frac{\alpha}{2\sigma^2}\,
\Bigl(\max_{\ell\in[m]}a_{t,\ell}\Bigr)\,\overline{K}_t^2
=
\frac{\alpha}{2\sigma^2}\,\max_{\ell\in[m]}h^{\mathsf{sa}}_{t,\ell},
\]
where the last equality uses the closed form
$h^{\mathsf{sa}}_{t,\ell}=a_{t,\ell}\overline K_t^2$ from Proposition~\ref{prop:msHUC_from_K2}.
Substituting back,
\[
\mathbb{E}_{\eta}\!\left[
\mathtt{JSD}^{\omega}_{\alpha}\!\left(
\mathcal M^{\theta,\eta}_{s_i}\ \Big\|\ \mathcal{M}^{\theta,\eta}_{s_j}
\right)\right]
\leq \frac{\alpha}{2\sigma^2}\sum_{t=1}^{T}\max_{\ell\in[m]} h^{\mathsf{sa}}_{t,\ell}.
\]
Therefore, if
\[
\sigma^2 \geq \frac{\alpha}{2\epsilon}\sum_{t=1}^{T}\max_{\ell\in[m]} h^{\mathsf{sa}}_{t,\ell},
\]
then the right-hand side is at most $\epsilon$. Since $\theta$ and $(s_i,s_j)$ were arbitrary,
Definition~\ref{def:ms_Joint_SRPP} implies that Algorithm~\ref{alg:srpp-sgd} satisfies
$(\alpha,\epsilon,\omega)$-sa-Joint-SRPP, proving (ii).
\qed

\section{ Proof of Theorem \ref{thm:composition_srpp_sgd} }\label{app:proof_thm:composition_srpp_sgd}

Fix a slice profile $\{\mathcal{U},\omega\}$ and a Pufferfish scenario
$(\mathcal{S},\mathcal{Q},\Theta)$.
Let $\vec{\mathcal{M}}$ be the composed mechanism in~\eqref{eq:composition_mechanism}:
\[
  \vec{\mathcal{M}}(X)
  :=
  \bigl(\mathcal{M}_1(X),\dots,\mathcal{M}_J(X)\bigr),
\]
where $X$ is the underlying common dataset.
For every realized dataset $x$, we assume that the primitive randomness of
$\mathcal{M}_1,\dots,\mathcal{M}_J$ is independent across $j$ when the mechanisms are run on $x$.

Fix $\theta\in\Theta$ and $(s_i,s_j)\in\mathcal{Q}$ with
$P_\theta^S(s_i),P_\theta^S(s_j)>0$.
Let $\mu_{s_i}^\theta$ and $\mu_{s_j}^\theta$ be the corresponding dataset beliefs.
For any fixed dataset realization $x$ in the support of these beliefs, define the conditional output distribution
\[
\begin{aligned}
    &\mathrm{P}_{j}^{x}:= \Pr\bigl(\mathcal{M}_j(x)\in\cdot \,\big|\, s_i,\theta\bigr),\\
    &\mathrm{Q}_{j}^{x}:= \Pr\bigl(\mathcal{M}_j(x)\in\cdot \,\big|\, s_j,\theta\bigr),
\end{aligned}
\]
and
\[
\begin{aligned}
    &\mathrm{P}^{\mathrm{comp},x}:= \Pr\bigl(\vec{\mathcal{M}}(x)\in\cdot \,\big|\, s_i,\theta\bigr),\\
    &\mathrm{Q}^{\mathrm{comp},x}:= \Pr\bigl(\vec{\mathcal{M}}(x)\in\cdot \,\big|\, s_j,\theta\bigr).
\end{aligned}
\]
By independence of the primitive randomness across $j$ for this fixed $x$,
we have
\[
  \mathrm{P}^{\mathrm{comp},x}
  = \bigotimes_{j=1}^J \mathrm{P}_{j}^{x},
  \quad \textup{and}\quad
  \mathrm{Q}^{\mathrm{comp},x}
  = \bigotimes_{j=1}^J \mathrm{Q}_{j}^{x}.
\]
For any R\'enyi order $\alpha>1$ and product measures
$\bigotimes_{j=1}^J \mathrm{P}_j$ and $\bigotimes_{j=1}^J \mathrm{Q}_j$,
the R\'enyi divergence becomes
\begin{equation}\label{eq:renyi_tensorization}
  \mathtt{D}_{\alpha}\Bigl(
    {\textstyle\bigotimes_{j=1}^J} \mathrm{P}_j
    \,\Big\|\,
    {\textstyle\bigotimes_{j=1}^J} \mathrm{Q}_j
  \Bigr)=\sum_{j=1}^J \mathtt{D}_{\alpha}(\mathrm{P}_j\|\mathrm{Q}_j).
\end{equation}
Applying this with $\mathrm{P}_j = \mathrm{P}_{j}^{x}$ and $\mathrm{Q}_{j} = \mathrm{Q}_{j}^{x}$ yields, for every fixed
dataset realization $x$,
\[
\mathtt{D}_{\alpha}\bigl(\mathrm{P}^{\mathrm{comp},x}\big\|\mathrm{Q}^{\mathrm{comp},x}\bigr)=\sum_{j=1}^J \mathtt{D}_{\alpha}(\mathrm{P}_{j}^{d}\|\mathrm{Q}_{j}^{x}).
\]
We now prove the composition bound for each SRPP flavor $\Xi$.

\medskip\noindent\textbf{Case 1: $\Xi\in\{\textup{Ave},\textup{ms-Ave}\}$.}
Fix $\theta\in\Theta$ and $(s_i,s_j)\in\mathcal{Q}$ with $P_\theta^S(s_i),P_\theta^S(s_j)>0$.
Fix any coupling $\gamma\in\Pi(\mu_{s_i}^\theta,\mu_{s_j}^\theta)$ and take $(X,X')\sim\gamma$.
For each $j$, write $W_j:=\mathcal{M}_j(X)$ and $W'_j:=\mathcal{M}_j(X')$.

Fix a slice $u_\ell\in\mathcal{U}$ and consider the 1-D release sequence
\[
\begin{aligned}
    &Z^{(\ell)}_{1:J}:=\big(\langle W_1,u_\ell\rangle,\dots,\langle W_J,u_\ell\rangle\big),
\\&
{Z'}^{(\ell)}_{1:J}:=\big(\langle W'_1,u_\ell\rangle,\dots,\langle W'_J,u_\ell\rangle\big).
\end{aligned}
\]
Let $\mathrm{P}^{(\ell)}:=\Pr(Z^{(\ell)}_{1:J}\in\cdot\mid s_i,\theta)$ and
$\mathrm{Q}^{(\ell)}:=\Pr({Z'}^{(\ell)}_{1:J}\in\cdot\mid s_j,\theta)$.

\smallskip
\noindent\textit{Per-step conditional slice bounds.}
Because each $\mathcal{M}_j$ is an SRPP-SGD mechanism whose SRPP guarantee is obtained from an HUC/sa-HUC
sequence (Theorem~\ref{thm:HUC_SRPP_SGD} or Theorem~\ref{thm:msHUC_SRPP_SGD}),
its per-slice R\'enyi bound is uniform over all histories and (coupled) dataset pairs.
In particular, for every $j\in[J]$, every $\ell\in[m]$, and every history
$z^{(\ell)}_{<j}$ in the support, we have
\begin{equation}\label{eq:case1_cond_slice_j}
\begin{aligned}
    \mathtt{D}_\alpha\!\Big(&
\Pr(\langle W_j,u_\ell\rangle\in\cdot \mid s_i,\theta, Z^{(\ell)}_{<j}=z^{(\ell)}_{<j})
\\& \big\|\ 
\Pr(\langle W'_j,u_\ell\rangle\in\cdot \mid s_j,\theta, {Z'}^{(\ell)}_{<j}=z^{(\ell)}_{<j})
\Big)
\leq \varepsilon_{j,\ell},
\end{aligned}
\end{equation}
for some nonnegative numbers $\{\varepsilon_{j,\ell}\}_{\ell=1}^m$ satisfying
\begin{equation}\label{eq:case1_eps_avg_j}
\sum_{\ell=1}^m \omega_\ell\,\varepsilon_{j,\ell}\leq \epsilon_j.
\end{equation}
(Here $\varepsilon_{j,\ell}$ is the per-slice budget induced by the HUC/sa-HUC-based bound for $\mathcal M_j$.)

Applying Lemma~\ref{lemma:adaptive_renyi_composition} to the sequence $Z^{(\ell)}_{1:J}$
(using the per-step bounds \eqref{eq:case1_cond_slice_j}) yields, for each fixed $\ell$,
\[
\mathtt{D}_\alpha(\mathrm{P}^{(\ell)}\|\mathrm{Q}^{(\ell)})\leq \sum_{j=1}^J \varepsilon_{j,\ell}.
\]
Now average over slices:
\[
\begin{aligned}
    &\mathtt{AveSD}^{\omega}_{\alpha}\bigl(
\Pr(\vec{\mathcal M}(X)\mid s_i,\theta)\,\big\|\,
\Pr(\vec{\mathcal M}(X)\mid s_j,\theta)
\bigr)\\&
=
\sum_{\ell=1}^m \omega_\ell\,\mathtt{D}_\alpha(\mathrm P^{(\ell)}\|\mathrm Q^{(\ell)})
\leq
\sum_{\ell=1}^m \omega_\ell \sum_{j=1}^J \varepsilon_{j,\ell}\\&
=
\sum_{j=1}^J \sum_{\ell=1}^m \omega_\ell \varepsilon_{j,\ell}
\leq \sum_{j=1}^J \epsilon_j,
\end{aligned}
\]
where the last inequality uses \eqref{eq:case1_eps_avg_j}.
Since $\theta$ and $(s_i,s_j)$ were arbitrary, this proves
$(\alpha,\sum_{j=1}^J\epsilon_j,\omega)$-\textup{Ave}-SRPP for $\vec{\mathcal{M}}$.

For \textup{ms-Ave}-SRPP, the same argument applies verbatim after folding the subsampling randomness
into each mechanism's output (i.e., applying the above proof to the expanded-output channels),
so $\vec{\mathcal{M}}$ is also $(\alpha,\sum_{j=1}^J\epsilon_j,\omega)$-\textup{ms-Ave}-SRPP.

\medskip\noindent\textbf{Case 2: $\Xi\in\{\textup{Joint},\textup{ms-Joint}\}$.}
Fix $x$ in the support of $\mu_{s_i}^\theta$ and $\mu_{s_j}^\theta$.
Recall that Joint-SRD can be represented as the R\'enyi divergence of a randomized sliced channel.
Define the (one-shot) slicing operator
\[
  \mathtt{Slice}(z)
  := (U,\langle z,U\rangle),
  \qquad U\sim\omega,
\]
where $U$ is independent of all other randomness.
For a mechanism $\mathcal{M}$, write $\mathcal{N}:=\mathtt{Slice}\circ \mathcal{M}$.
Then, by the definition/representation of Joint-SRD (\ Definition~\ref{def:joint_srd}
and Appendix~\ref{app:ave-vs-joint-srpp}),
\begin{equation}\label{eq:jsd_as_rpp_slice}
\begin{aligned}
      &\mathtt{JSD}^{\omega}_{\alpha}\bigl(
    \Pr(\mathcal M(X)\mid s_i,\theta)\,\big\|\,
    \Pr(\mathcal M(X)\mid s_j,\theta)
  \bigr)\\&
  =
  \mathtt{D}_{\alpha}\bigl(
    \Pr(\mathcal N(X)\mid s_i,\theta)\,\big\|\,
    \Pr(\mathcal N(X)\mid s_j,\theta)
  \bigr).
\end{aligned}
\end{equation}

Now define, for each $j$, the sliced channel $\mathcal{N}_j:=\mathtt{Slice}\circ \mathcal{M}_j$.
Since $\mathcal{M}_j$ is $(\alpha,\epsilon_j,\omega)$-Joint-SRPP, \eqref{eq:jsd_as_rpp_slice} implies
\[
  \mathtt{D}_{\alpha}\bigl(
    \Pr(\mathcal{N}_j(X)\mid s_i,\theta)\,\big\|\,
    \Pr(\mathcal{N}_j(X)\mid s_j,\theta)
  \bigr)\leq \epsilon_j.
\]

Consider the \textit{componentwise} slicing of the composed mechanism:
sample $U_1,\dots,U_J\overset{\text{i.i.d.}}{\sim}\omega$ independently of everything, and define
\[
  \mathtt{Slice}^{J}\circ \vec{\mathcal M}(X)
  :=
  \bigl( (U_1,\langle \mathcal{M}_1(X),U_1\rangle),\dots,(U_J,\langle \mathcal{M}_J(X),U_J\rangle) \bigr).
\]
Equivalently, $\mathtt{Slice}^J\circ \vec{\mathcal{M}}(X)=(\mathcal{N}_1(X),\dots,\mathcal{N}_J(X))$.
By the assumed independence of the primitive randomness across $j$ (and independence of $U_j$),
the conditional laws under $s_i$ and $s_j$ factorize:
\[
  \Pr\bigl((\mathcal{N}_1(x),\dots,\mathcal{N}_J(x))\mid s_i,\theta\bigr)
  =\bigotimes_{j=1}^J \Pr\bigl(\mathcal{N}_j(x)\mid s_i,\theta\bigr),
\]
and similarly under $s_j$.
Therefore, by R\'enyi tensorization,
\[
\begin{aligned}
    &\mathtt{D}_{\alpha}\Bigl(
  \Pr\bigl(\mathtt{Slice}^J\circ\vec{\mathcal{M}}(x)\mid s_i,\theta\bigr)
  \,\Big\|\,
  \Pr\bigl(\mathtt{Slice}^J\circ\vec{\mathcal{M}}(x)\mid s_j,\theta\bigr)
\Bigr)\\&
=
\sum_{j=1}^J
\mathtt{D}_{\alpha}\Bigl(
  \Pr\bigl(\mathcal{N}_j(x)\mid s_i,\theta\bigr)
  \,\Big\|\,
  \Pr\bigl(\mathcal{N}_j(x)\mid s_j,\theta\bigr)
\Bigr)
\leq \sum_{j=1}^J \epsilon_j.
\end{aligned}
\]

Finally, by applying the representation \eqref{eq:jsd_as_rpp_slice} to the composed mechanism
(with $\mathtt{Slice}^J$ as the slicing operator on $(\mathbb{R}^d)^J$),
the left-hand side is exactly the Joint-SRD of $\vec{\mathcal{M}}$ under the same slice profile $\omega$.
Hence $\vec{\mathcal{M}}$ is $(\alpha,\sum_{j=1}^J\epsilon_j,\omega)$-Joint-SRPP.

For ms-Joint-SRPP, the same argument applies after folding the subsampling randomness into each
$\mathcal{M}_j$ (equivalently, applying the above proof to the corresponding expanded-output channels).

Combining Case 1 and Case 2, we conclude that for any
$\Xi\in\{\textup{Ave},\textup{Joint},\textup{ms-Ave},\textup{ms-Joint}\}$,
if each $\mathcal{M}_j$ is $(\alpha,\epsilon_j,\omega)$-$\Xi$-SRPP,
then the composed mechanism $\vec{\mathcal{M}}$ is
$(\alpha,\sum_{j=1}^J\epsilon_j,\omega)$-$\Xi$-SRPP.
\qed

\begin{figure*}[t]
    \centering

    \begin{subfigure}[b]{0.24\textwidth}
        \centering
        \includegraphics[width=\textwidth]{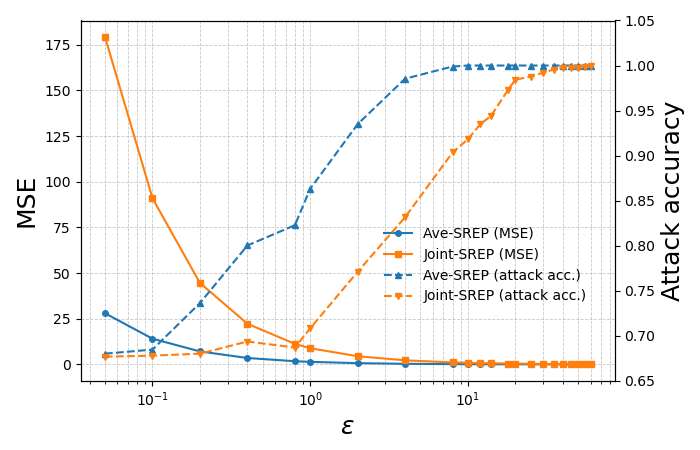}
        \subcaption{Logistic Regression}
    \end{subfigure}
    \begin{subfigure}[b]{0.24\textwidth}
        \centering
        \includegraphics[width=\textwidth]{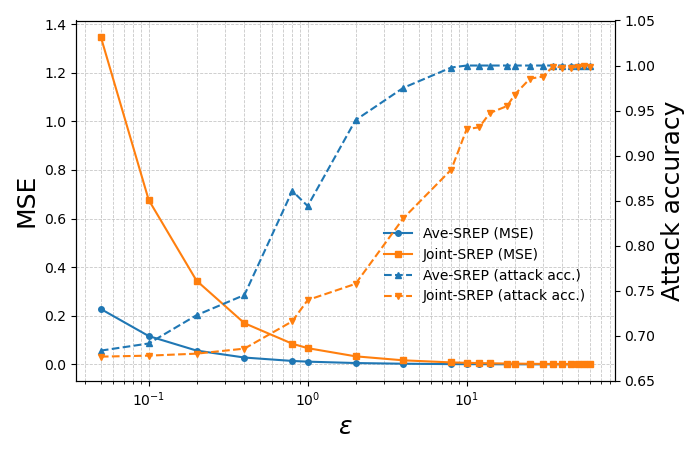}
        \subcaption{Random Forest}
    \end{subfigure}
    \begin{subfigure}[b]{0.24\textwidth}
        \centering
        \includegraphics[width=\textwidth]{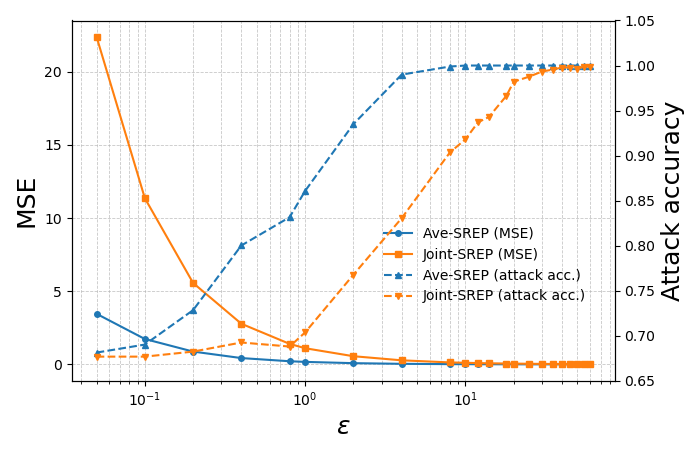}
        \subcaption{SVM}
    \end{subfigure}
    \begin{subfigure}[b]{0.24\textwidth}
        \centering
        \includegraphics[width=\textwidth]{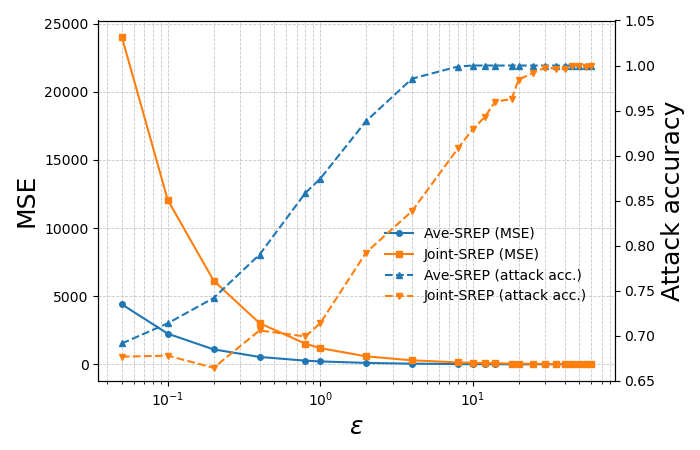}
        \subcaption{Summary statistics}
    \end{subfigure}

    \vspace{0.5em}

    \begin{subfigure}[b]{0.24\textwidth}
        \centering
        \includegraphics[width=\textwidth]{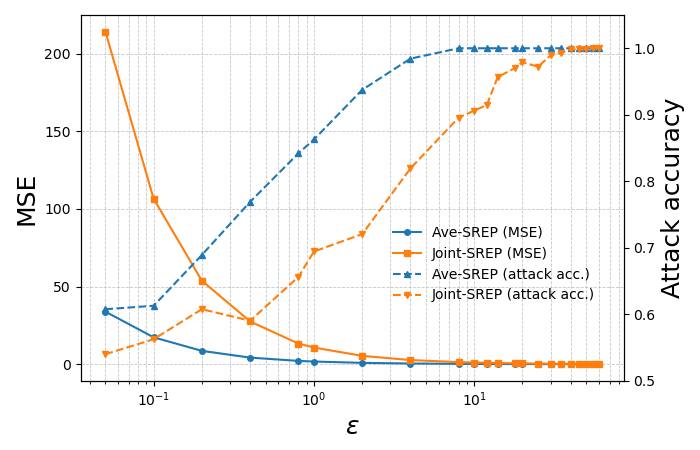}
        \subcaption{Logistic Regression}
    \end{subfigure}
    \begin{subfigure}[b]{0.24\textwidth}
        \centering
        \includegraphics[width=\textwidth]{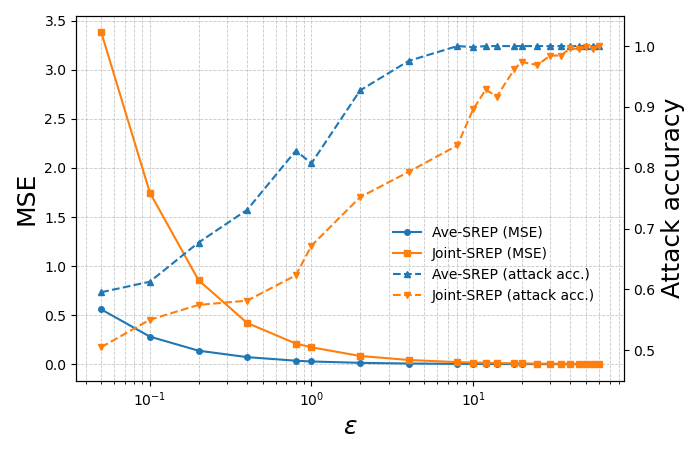}
        \subcaption{Random Forest}
    \end{subfigure}
    \begin{subfigure}[b]{0.24\textwidth}
        \centering
        \includegraphics[width=\textwidth]{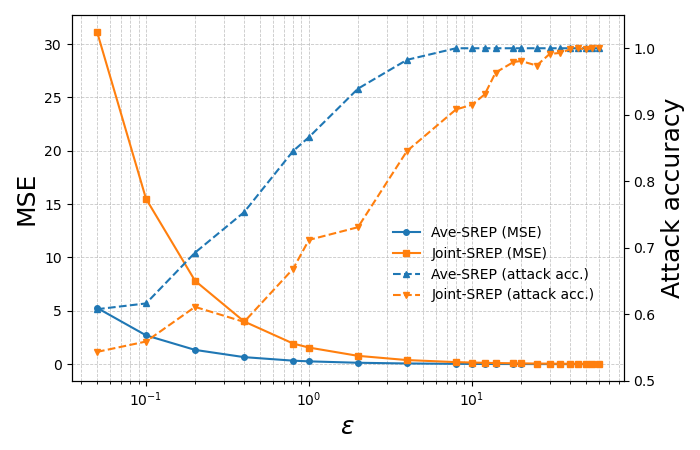}
        \subcaption{SVM}
    \end{subfigure}
    \begin{subfigure}[b]{0.24\textwidth}
        \centering
        \includegraphics[width=\textwidth]{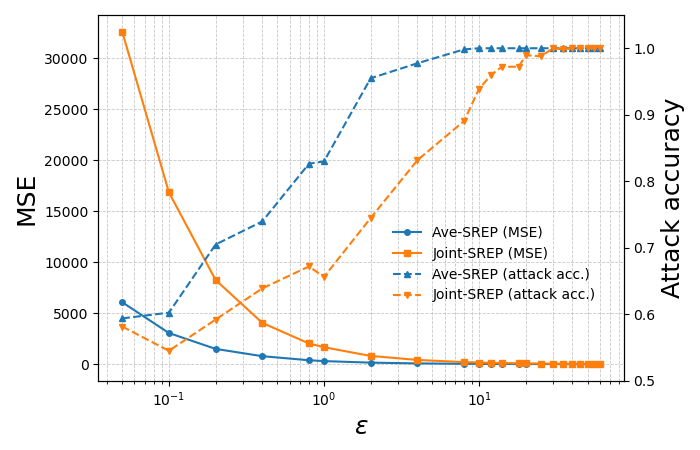}
        \subcaption{Summary statistics}
    \end{subfigure}

    \vspace{0.5em}

    \begin{subfigure}[b]{0.24\textwidth}
        \centering
        \includegraphics[width=\textwidth]{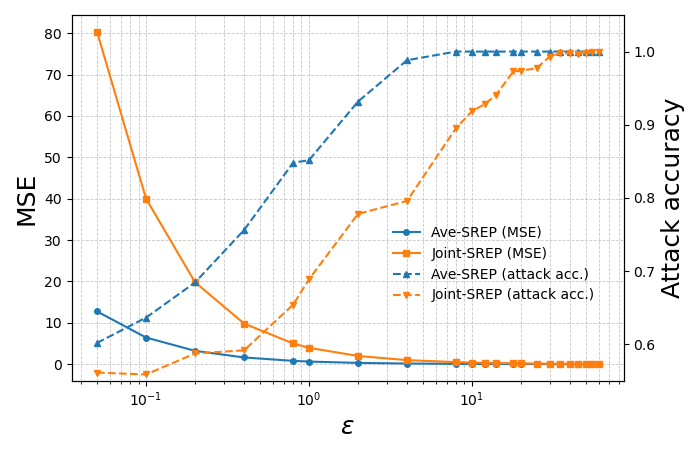}
        \subcaption{Logistic Regression}
    \end{subfigure}
    \begin{subfigure}[b]{0.24\textwidth}
        \centering
        \includegraphics[width=\textwidth]{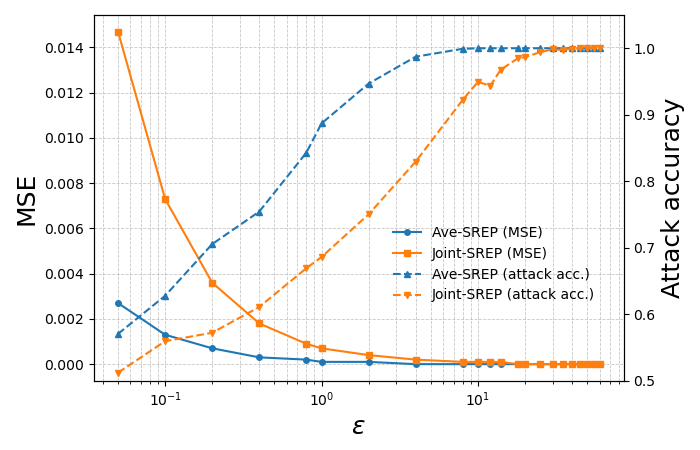}
        \subcaption{Random Forest}
    \end{subfigure}
    \begin{subfigure}[b]{0.24\textwidth}
        \centering
        \includegraphics[width=\textwidth]{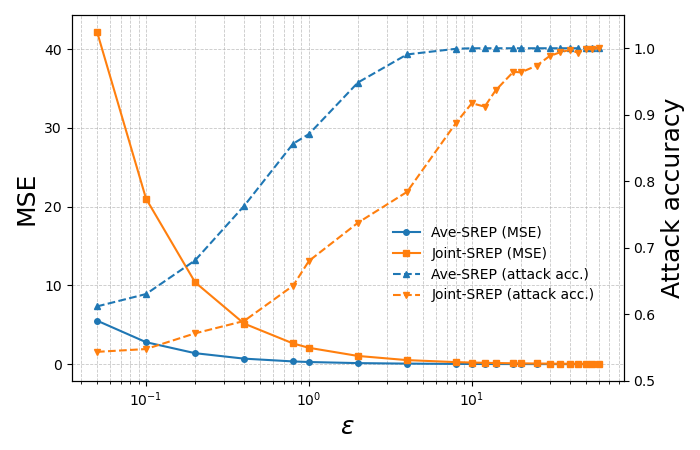}
        \subcaption{SVM}
    \end{subfigure}
    \begin{subfigure}[b]{0.24\textwidth}
        \centering
        \includegraphics[width=\textwidth]{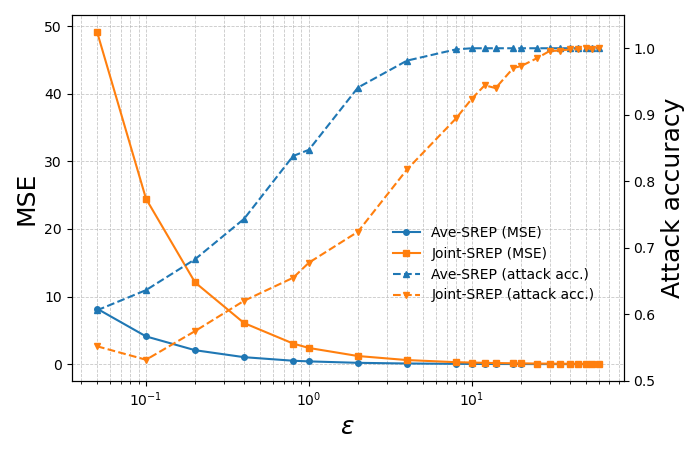}
        \subcaption{Summary Statistics}
    \end{subfigure}

    \begin{subfigure}[b]{0.24\textwidth}
        \centering
        \includegraphics[width=\textwidth]{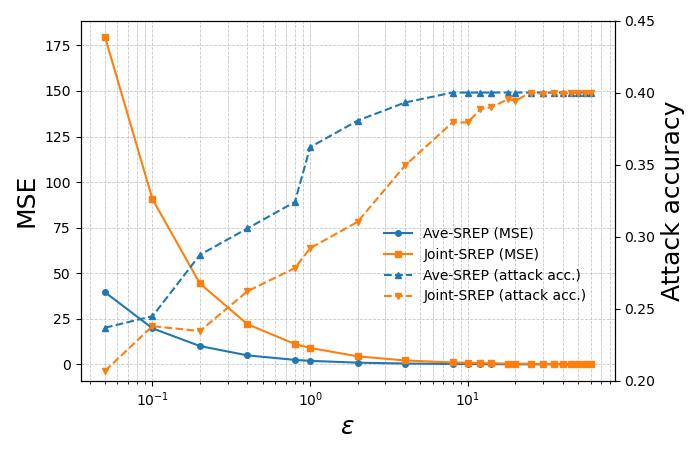}
        \subcaption{Logistic Regression}
    \end{subfigure}
    \begin{subfigure}[b]{0.24\textwidth}
        \centering
        \includegraphics[width=\textwidth]{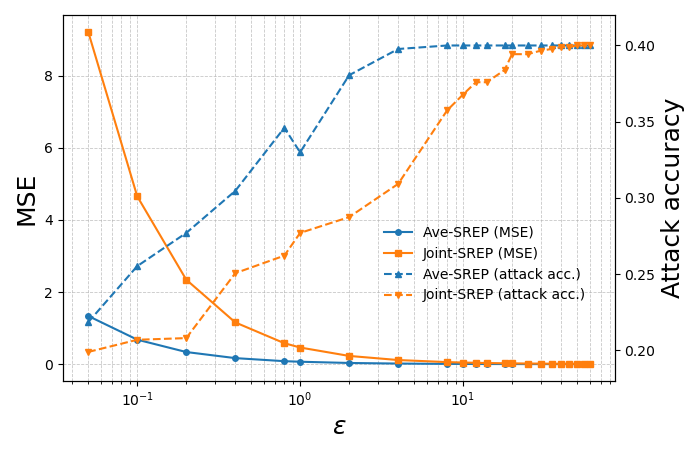}
        \subcaption{Random Forest}
    \end{subfigure}
    \begin{subfigure}[b]{0.24\textwidth}
        \centering
        \includegraphics[width=\textwidth]{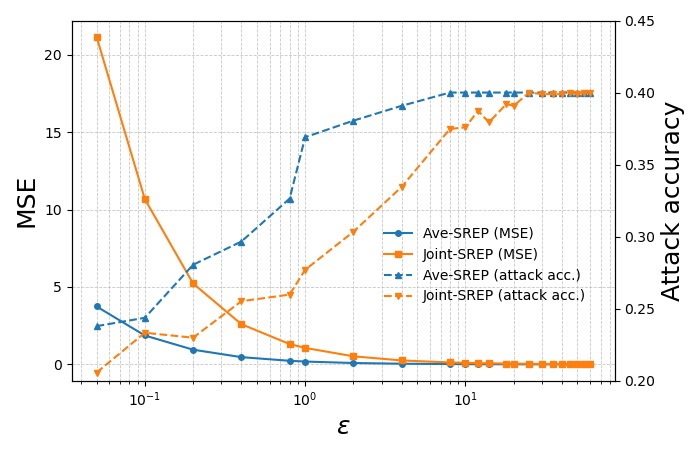}
        \subcaption{SVM}
    \end{subfigure}
    \begin{subfigure}[b]{0.24\textwidth}
        \centering
        \includegraphics[width=\textwidth]{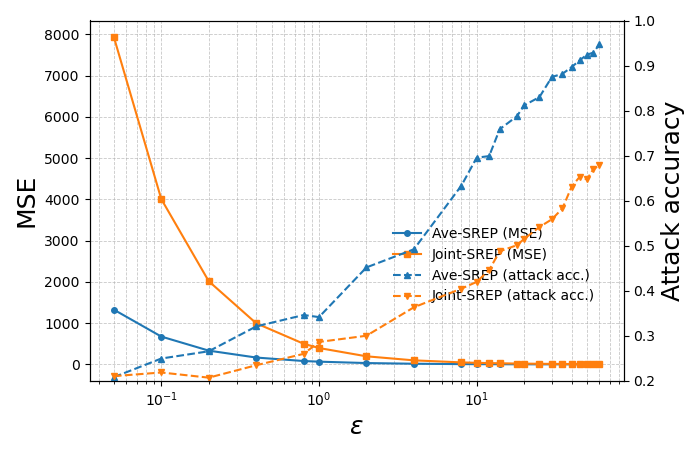}
        \subcaption{Summary Statistics}
    \end{subfigure}
    \caption{\textbf{Privacy-utility tradeoffs on Cleveland Heart Disease and Student Performance datasets.} 
(a)--(d): Cleveland dataset with imbalanced prior (majority-class baseline $\mathrm{Acc}_{\mathrm{maj}}^{0.7} = 0.7$). 
(e)--(h): Cleveland dataset with balanced prior ($\mathrm{Acc}_{\mathrm{maj}}^{0.5} = 0.5$). 
(i)--(l): Student Performance dataset with imbalanced prior (sex as secret, $\mathrm{Acc}_{\mathrm{maj}}^{0.6} = 0.6$). 
(m)--(p): Student Performance dataset with balanced prior (grade group as secret, $\mathrm{Acc}_{\mathrm{maj}}^{0.2} = 0.2$). 
Each configuration evaluates Ave-SRPP and Joint-SRPP mechanisms across privacy budgets $\epsilon \in \{0.05, 0.1, 0.2, 0.4, 0.8, 1.0, 2.0, 4.0, 8.0, 10.0, 12.0, 14.0, 18.0, 20.0, 25.0, 30.0, 35.0, 40.0, 45.0, 50.0, 55.0, 60.0\}$ with R\'enyi order $\alpha = 4$. 
Metrics shown: mean squared error (MSE), attack accuracy, and attack advantage over the priors.
    }
    \label{fig:additional_experiments}
\end{figure*}

\section{Experiment Details}\label{app:experiments}

In this appendix, we provide additional details for the experiments in Section \ref{sec:experiments} and present further empirical results.

\subsection{Experimental Setup: Gaussian Sliced Wasserstein Mechanisms for Static Privatization}

This section provides additional details on our experimental methodology beyond what is presented in Section~\ref{sec:experiments} of the main text.

\paragraph{\textbf{\textup{Dataset configurations and prior distributions.}}}
As described in the main text, we evaluate our sliced SRPP mechanisms on three standard tabular benchmarks: Adult Census (30,162 records), Cleveland Heart Disease (303 records), and Student Performance (395 records).
Each dataset is studied under two prior configurations to examine the impact of class imbalance on both privacy protection and utility.

For \textit{Adult}, the imbalanced configuration uses the natural race distribution from the full dataset, where the majority class (White) accounts for approximately $86\%$ of records, yielding a prior-only baseline accuracy of $\mathrm{Acc}_{\mathrm{maj}}^{0.86} = 0.86$ for attribute inference attacks.
The balanced configuration constructs synthetic groups via deterministic hash-based assignment, partitioning records into five approximately equal groups with uniform prior $\pi(s) \approx 0.20$ for each synthetic secret value $s \in \{G_1, \ldots, G_5\}$, giving $\mathrm{Acc}_{\mathrm{maj}}^{0.20} = 0.20$.

For \textit{Cleveland}, the imbalanced configuration uses the full dataset with its natural sex distribution (majority class probability $\approx 0.70$), while the balanced configuration performs stratified subsampling to obtain equal numbers of male and female records, yielding a uniform prior with $\mathrm{Acc}_{\mathrm{maj}}^{0.50} = 0.50$.

For \textit{Student Performance}, we similarly consider both the full dataset with its empirical sex distribution ($\mathrm{Acc}_{\mathrm{maj}}^{0.60} \approx 0.60$) and a balanced subsample with equal representation of each sex ($\mathrm{Acc}_{\mathrm{maj}}^{0.50} = 0.50$).

Detailed descriptions of each dataset configuration, including feature construction and sampling procedures, are provided in the following subsections.

\paragraph{\textbf{\textup{Query construction.}}}
We consider two families of queries, both of which aggregate information at the group level rather than releasing individual records.

\textit{Summary statistics queries.}
For each secret value $s \in \mathcal{S}$, we compute a $d$-dimensional aggregate statistic over all records with $S = s$.
For Adult ($d=9$), the query includes: means and standard deviations of age, education level, and hours worked per week, together with the fraction of records with positive capital gain, positive capital loss, and income exceeding \$50K.
For Cleveland ($d=9$), we compute means and standard deviations of age, resting blood pressure, and cholesterol level, along with rates of high cholesterol ($\geq 240$ mg/dL), high blood pressure ($\geq 140$ mmHg), and presence of heart disease.
For Student ($d=9$), we use means and standard deviations of grades G1, G2, and G3, together with rates of high study time ($\geq 3$ hours/week), any past failures, and high absences ($> 10$ days).

\textit{Machine learning model queries.}
For the Adult dataset, we also treat trained model parameters as query outputs.
For each race group $s \in \mathcal{S}$, we train a binary classifier to predict high income ($> \$50$K) using a 9-dimensional feature vector constructed from demographic and employment attributes.
The query output $X$ consists of model parameters: for linear SVM and logistic regression, we extract the $d=10$ coefficients (9 feature weights plus intercept); for random forest and gradient boosting, we extract the $d=9$ feature importance scores.
Each model is trained independently per secret group, so the query reflects group-specific predictive patterns.

\paragraph{\textbf{\textup{Privacy mechanism and calibration.}}}
All experiments use isotropic Gaussian noise mechanisms of the form
\[
Y = X + Z, \quad Z \sim \mathcal{N}(0, \sigma^2 I_d),
\]
where the noise scale $\sigma$ is calibrated via our sliced SRPP composition bounds.

For slicing, we generate $m = 200$ directions $\{u_\ell\}_{\ell=1}^m$ sampled uniformly from the unit sphere $\mathbb{S}^{d-1}$ with equal weights $\omega_\ell = 1/m$.
Along each direction $u_\ell$, we compute the one-dimensional Wasserstein-$\infty$ sensitivity
\[
\Delta^{u_\ell}_{\infty} = \max_{(s_i, s_j) \in \mathcal{Q}} W_{\infty}\bigl(P^{\theta}_{u_\ell^\top X \mid S=s_i}, P^{\theta}_{u_\ell^\top X \mid S=s_j}\bigr),
\]
where $P^{\theta}_{u_\ell^\top X \mid S=s}$ is the empirical distribution of the projected query values for secret $s$ under world $\theta$.
These per-slice sensitivities are then aggregated according to the Ave-SRPP and Joint-SRPP definitions to determine the required noise level for target privacy parameters $(\alpha, \epsilon)$.

We fix R\'enyi order $\alpha = 4$ throughout and sweep the privacy budget $\epsilon$ over the range $[0.05, 60]$ to study the privacy-utility tradeoff.
Each configuration is evaluated over $N_{\mathrm{trials}} = 3$ independent trials with different random seeds to account for Monte Carlo variability in the slicing directions and noise realization.

\paragraph{\textbf{\textup{Utility and privacy evaluation.}}}
For utility, we report mean squared error (MSE), mean absolute error (MAE), average $\ell_2$ distance per record, relative Frobenius error, and signal-to-noise ratio (SNR in dB).
The primary utility metric is MSE:
\[
\mathrm{MSE} = \frac{1}{n} \sum_{i=1}^{n} \lVert X_i - Y_i \rVert_2^2.
\]

For empirical privacy auditing, we implement a prior-aware Gaussian maximum a posteriori (MAP) attack.
Given a privatized output $y_i$, the attacker estimates the secret as
\[
\hat{s}(y_i) = \arg\max_{s \in \mathcal{S}} \Bigl\{ \log \pi(s) + \log \phi\bigl(y_i; \mu_s, \sigma^2 I_d\bigr) \Bigr\},
\]
where $\pi(s)$ is the prior distribution over secrets (either the empirical distribution from the dataset or the uniform prior for balanced configurations), $\mu_s = \mathbb{E}[X \mid S=s]$ is the mean query output for secret $s$ computed from the unprivatized data, and $\phi(\cdot; \mu, \Sigma)$ denotes the Gaussian density with mean $\mu$ and covariance $\Sigma = \sigma^2 I_d$.

We report two metrics: (i) overall attack accuracy, and (ii) advantage over the prior-only baseline, defined as $\mathrm{Adv} = \mathrm{Acc} - \mathrm{Acc}_{\mathrm{maj}}$.
For imbalanced Adult, the baseline is $\mathrm{Acc}_{\mathrm{maj}}^{0.86} = 0.86$; for balanced Adult, Cleveland, and Student (balanced), the baseline is $\mathrm{Acc}_{\mathrm{maj}}^{0.20} = 0.20$, $\mathrm{Acc}_{\mathrm{maj}}^{0.50} = 0.50$, and $\mathrm{Acc}_{\mathrm{maj}}^{0.50} = 0.50$, respectively; for imbalanced Cleveland and Student, the baselines are $\mathrm{Acc}_{\mathrm{maj}}^{0.70} = 0.70$ and $\mathrm{Acc}_{\mathrm{maj}}^{0.60} = 0.60$.
An advantage close to zero indicates that the privacy mechanism successfully prevents the attacker from learning substantially more than what the prior distribution already reveals.

The following subsections provide detailed specifications of each dataset configuration.

\subsubsection{Adult Census (race as secret)}
We use the \textit{Adult Census Income} dataset (30,162 records after preprocessing) in an attribute-privacy setting where the sensitive attribute is \textit{race}.
Let
\[
\begin{aligned}
    S \in \mathcal{S} &= \{\text{White}, \text{Black}, \text{Asian-Pac-Islander}, \\&\text{Amer-Indian-Eskimo}, \text{Other}\}, 
\end{aligned}
\]
with $k = |\mathcal{S}| = 5$, and let $X \in \mathbb{R}^{d}$ denote the per-record query vector constructed from demographic and socioeconomic features (see below for the specific query families).
The Pufferfish scenario specifies the discriminatory set
\[
  \mathcal{Q} 
  = \{(s_i, s_j) : s_i, s_j \in \mathcal{S},\, s_i \neq s_j\},
\]
so that every ordered pair of distinct secret values is protected.
We consider two configurations of the dataset that differ only in the prior over $S$.

\paragraph{Adult–0.86 prior (imbalanced).}
We use the full processed dataset with its empirical race distribution. 
Writing
\[
\pi_{0.86}(s) = \frac{1}{n} \sum_{i=1}^{n} \mathbf{1}\{S_i = s\},\quad s \in \mathcal{S},
\]
for the empirical prior over $S$, the corresponding \textit{majority-class baseline} for any attribute-inference attack is
\[
\mathrm{Acc}_{\mathrm{maj}}^{0.86} = \max_{s \in \mathcal{S}} \pi_{0.86}(s) \approx 0.86.
\]
The associated Pufferfish world $\theta_{D}^{0.86}$ is given by the empirical joint distribution of $(X, S)$ over all records in the full dataset.

\paragraph{Adult–0.20 prior (balanced).}
To study a uniform prior, we construct a balanced configuration by creating synthetic balanced groups via hash-based assignment.
We partition the full dataset into $k=5$ groups of approximately equal size using a deterministic hash function applied to each record's feature vector, yielding a subsample with an approximately uniform prior
\[
\pi_{0.20}(s) = \frac{1}{n'} \sum_{i=1}^{n'} \mathbf{1}\{S_i = s\} \approx 0.20,
\quad s \in \mathcal{S}.
\]
In this configuration, the majority-class baseline is
\[
\mathrm{Acc}_{\mathrm{maj}}^{0.20} = \max_{s \in \mathcal{S}} \pi_{0.20}(s) \approx 0.20.
\]
The corresponding Pufferfish world $\theta_{D}^{0.20}$ is the empirical joint distribution of $(X,S)$ restricted to this balanced subsample.

In both configurations, the Pufferfish family is instantiated by the empirical conditionals 
$\{P^{\theta}_{X \mid S = s}\}_{s \in \mathcal{S}}$ under the chosen world 
$\theta \in \{\theta_{D}^{0.86}, \theta_{D}^{0.20}\}$,
together with the discriminatory set $\mathcal{Q}$ defined above.
Our sliced SRPP mechanisms are calibrated to protect all pairs in $\mathcal{Q}$ for the specified R\'enyi order and privacy budget.

\subsubsection{Cleveland Heart Disease (sex as secret)}
We use the \textit{processed Cleveland Heart Disease} dataset (303 records) in an attribute-privacy setting where the sensitive attribute is \textit{sex}.  
Let
\[
S \in \mathcal{S} = \{\text{female}, \text{male}\}, 
\qquad k = |\mathcal{S}| = 2,
\]
and let $X \in \mathbb{R}^{d}$ denote the per-record query vector constructed from clinical features (see below for the specific query families).
The Pufferfish scenario specifies the discriminatory set
\[
  \begin{aligned}
      \mathcal{Q} 
  &= \{(s_i, s_j) : s_i, s_j \in \mathcal{S},\, s_i \neq s_j\}
  \\&= \{(\text{female}, \text{male}), (\text{male}, \text{female})\},
  \end{aligned}
\]
so that every ordered pair of distinct secret values is protected.
We consider two configurations of the dataset that differ only in the prior over $S$.

\paragraph{Cleveland–0.7 prior.}
We use the full processed dataset with its empirical sex distribution. 
Writing
\[
\pi_{0.7}(s) = \frac{1}{n} \sum_{i=1}^{n} \mathbf{1}\{S_i = s\},\quad s \in \mathcal{S},
\]
for the empirical prior over $S$, the corresponding \textit{majority-class baseline} for any attribute-inference attack is
\[
\mathrm{Acc}_{\mathrm{maj}}^{0.7} = \max_{s \in \mathcal{S}} \pi_{0.7}(s) \approx 0.7.
\]
The associated Pufferfish world $\theta_{D}^{0.7}$ is given by the empirical joint distribution of $(X, S)$ over all records in the full dataset.

\paragraph{Cleveland–0.5 prior.}
To study a uniform prior, we construct a balanced configuration by stratified subsampling on sex.
Let $n_{\mathrm{f}}$ and $n_{\mathrm{m}}$ denote the numbers of female and male records in the full dataset.  
We draw $n' = \min(n_{\mathrm{f}}, n_{\mathrm{m}})$ records uniformly at random without replacement from each sex subset, yielding a subsample of size $2n'$ with an approximately uniform prior
\[
\pi_{0.5}(s) = \frac{1}{2n'} \sum_{i=1}^{2n'} \mathbf{1}\{S_i = s\} \approx 0.5,
\quad s \in \mathcal{S}.
\]
In this configuration, the majority-class baseline is
\[
\mathrm{Acc}_{\mathrm{maj}}^{0.5} = \max_{s \in \mathcal{S}} \pi_{0.5}(s) \approx 0.5.
\]
The corresponding Pufferfish world $\theta_{D}^{0.5}$ is the empirical joint distribution of $(X,S)$ restricted to this balanced subsample.

In both configurations, the Pufferfish family is instantiated by the empirical conditionals 
$\{P^{\theta}_{X \mid S = s}\}_{s \in \mathcal{S}}$ under the chosen world 
$\theta \in \{\theta_{D}^{0.7}, \theta_{D}^{0.5}\}$,
together with the discriminatory set $\mathcal{Q}$ defined above.
Our sliced SRPP mechanisms are calibrated to protect all pairs in $\mathcal{Q}$ for the specified R\'enyi order and privacy budget.

\subsubsection{Student Performance (grade group as secret)}
We use the \textit{Student Performance} dataset \cite{silva2008using} (395 records) in an attribute-privacy setting where the sensitive attribute is a discretized \textit{grade group} derived from the final grade $G3 \in \{0,\dots,20\}$.  
We partition students into $k=5$ grade groups using quantile-based binning:
\[
S \in \mathcal{S} = \{\text{G1}, \text{G2}, \text{G3}, \text{G4}, \text{G5}\}, 
\qquad k = |\mathcal{S}| = 5,
\]
where each group corresponds to a quintile of the final grade distribution (G1 = lowest quintile, G5 = highest quintile).
Let $X \in \mathbb{R}^{d}$ denote the per-record query vector constructed from grade and study-related features.
The Pufferfish scenario specifies the discriminatory set
\[
  \mathcal{Q} 
  = \{(s_i, s_j) : s_i, s_j \in \mathcal{S},\, s_i \neq s_j\},
\]
so that every ordered pair of distinct grade groups is protected.
We consider two configurations of the dataset that differ in the prior over $S$.

\paragraph{Student (full dataset).}
We use all 395 records with grade groups assigned via 5-quantile binning.
Since quantile-based binning attempts to create approximately equal-sized groups by construction, the empirical prior is
\[
\pi(s) = \frac{1}{n} \sum_{i=1}^{n} \mathbf{1}\{S_i = s\},
\quad s \in \mathcal{S},
\]
with majority-class baseline
\[
\mathrm{Acc}_{\mathrm{maj}} = \max_{s \in \mathcal{S}} \pi(s).
\]
Due to tied final grade values preventing perfect equal division, the empirical distribution may deviate slightly from uniform.
The Pufferfish world $\theta_{D}^{\text{full}}$ is the empirical joint distribution of $(X, S)$ over the full dataset.

\paragraph{Student (balanced subsample).}
To ensure an exactly uniform prior, we perform stratified subsampling: let $n_s$ denote the number of records in grade group $s$ in the full dataset, and set $n' = \min_{s \in \mathcal{S}} n_s$. 
We draw $n'$ records uniformly at random without replacement from each grade group, yielding a subsample of size $5n'$ with exactly uniform prior
\[
\pi^{\text{bal}}(s) = \frac{1}{5n'} \sum_{i=1}^{5n'} \mathbf{1}\{S_i = s\} = 0.2,
\quad \forall s \in \mathcal{S}.
\]
The corresponding majority-class baseline is
\[
\mathrm{Acc}_{\mathrm{maj}}^{0.2} = \max_{s \in \mathcal{S}} \pi^{\text{bal}}(s) = 0.2.
\]
The Pufferfish world $\theta_{D}^{\text{bal}}$ is the empirical joint distribution of $(X,S)$ restricted to this balanced subsample.

In both configurations, the Pufferfish family is instantiated by the empirical conditionals 
$\{P^{\theta}_{X \mid S = s}\}_{s \in \mathcal{S}}$ under the chosen world 
$\theta \in \{\theta_{D}^{\text{full}}, \theta_{D}^{\text{bal}}\}$,
together with the discriminatory set $\mathcal{Q}$ defined above.
Our sliced SRPP mechanisms are calibrated to protect all pairs in $\mathcal{Q}$ for the specified R\'enyi order and privacy budget.

\subsubsection{Additional Results.}\label{app:cleveland_student_results}

Figure~\ref{fig:additional_experiments} presents analogous privacy-utility tradeoff curves for the Cleveland Heart Disease and Student Performance datasets, evaluating the same four query types (logistic regression, random forest, SVM, and summary statistics) under both balanced and imbalanced prior configurations.

The results exhibit qualitatively similar behavior to the Adult dataset experiments shown in Section \ref{sec:experiments}.
Across all configurations, Ave-SRPP and Joint-SRPP demonstrate the expected monotonic tradeoff: as the privacy budget $\epsilon$ increases, MSE decreases while attack accuracy rises from the prior baseline toward perfect inference.
At small $\epsilon$, both mechanisms successfully limit attack accuracy to near the prior baselines (Cleveland: $0.7$ and $0.5$; Student: $0.6$ and $0.2$), confirming effective privacy protection.
At any fixed $\epsilon$, Joint-SRPP remains more conservative than Ave-SRPP, inducing higher MSE but lower attack accuracy, with the gap most pronounced in the moderate privacy regime.
These additional experiments validate that the comparative performance of Ave-SRPP and Joint-SRPP generalizes across datasets of varying size (Cleveland: 303 records, Student: 395 records versus Adult: 30,162 records), different numbers of secret classes ($k=2$ for Cleveland and Student-sex versus $k=5$ for Adult and Student-grade).

\subsection{SRPP- sa-SRPP-SGD with Gradient Clipping}\label{app-sub:SGD}

For our CIFAR-10 experiments (Figures \ref{subfig:gep_resnet}-\ref{subfig:overfit_attack}) in Section \ref{sec:experiments},  we use the ResNet22, adapted for compatibility with Opacus \cite{yousefpour2021opacus} by following the architecture of \cite{xiao2023geometry},

The network is a 22-layer residual network with three stages of channel sizes $(16,32,64)$ and $(3,4,3)$ residual blocks per stage. The first layer is a $3\times 3$ convolution (16 channels, stride 1, padding 1) followed by group normalization and ELU. Stages 2--3 downsample via stride 2 in their first blocks; shortcuts use spatial subsampling with channel zero-padding.

Each residual block has the form
\begin{align*}
x &\mapsto \mathrm{GN}_1(\mathrm{ELU}(\mathrm{Conv}_1(x))) \\
  &\mapsto \mathrm{GN}_2(\mathrm{Conv}_2(\cdot)) + \text{shortcut}(x) \\
  &\mapsto \mathrm{GN}_3(\mathrm{ELU}(\cdot)),
\end{align*}
where both convolutions are $3\times 3$ with padding 1; $\mathrm{Conv}_1$ uses stride 2 for downsampling blocks and stride 1 otherwise, while $\mathrm{Conv}_2$ always uses stride 1. Group normalization layers $\mathrm{GN}_1,\mathrm{GN}_2,\mathrm{GN}_3$ use at most four groups without affine parameters.

After the final stage, we apply $3\times 3$ adaptive average pooling and flatten to $z\in\mathbb{R}^{576}$. Before classification, we apply per-example standardization $\tilde z = (z - \mu(z))/(\sigma(z) + 10^{-6})$ where $\mu(z)$ and $\sigma(z)$ are the sample mean and standard deviation, then compute logits via $\ell = W \tilde z + b$ with $W\in\mathbb{R}^{10\times 576}$ and $b\in\mathbb{R}^{10}$. Weights use Kaiming-normal initialization.
The model contains approximately 270,000 parameters.
This architecture is fully compatible with per-example gradient computation in Opacus; in particular, we replace non-picklable lambda-based shortcut layers with explicit \texttt{PaddingShortcut} modules.

\paragraph{Standard SGD}
With this ResNet22 model, we consider a \textit{global Lipschitz constant} as a special case of Assumption~\ref{assp:slicewise_Lipschitz}: the update map $T_t(z; y_{<t}) = \xi_{t-1} - \eta_t(\mu v_{t-1} + z)$ for SGD with momentum satisfies
\[
\bigl|u_i^{\top}\bigl(T_t(z; y_{<t}) - T_t(z'; y_{<t})\bigr)\bigr| = \eta_t \bigl|u_i^{\top}(z - z')\bigr|
\]
for all slices $u_i \in \mathcal{U}$ and all histories $y_{<t}$, yielding the \textit{uniform slice-wise Lipschitz constant} $L_{t,i} = \eta_t$ for all $i \in [m]$. This simplifies the HUC from Proposition~\ref{prop:exist_HUC} to
\[
h_{t,i} = \left(\frac{2K_t \eta_t C}{B_t}\right)^2, \quad \forall i \in [m],
\]
which depends only on the learning rate $\eta_t$, clipping threshold $C$, batch size $B_t$, and discrepancy cap $K_t$, but is \textit{independent of the slice direction}.
As a result, Ave-SRPP and Joint-SRPP require the same noise calibration in this uniform case, though their privacy quantification schemes remain different.

\subsection{Experiment Setup}
\label{app:sgd_setup}

Now, we provide detailed specifications for our SRPP-SGD and sa-SRPP-SGD experiments (Figures \ref{subfig:gep_resnet}-\ref{subfig:overfit_attack}) in Section \ref{sec:experiments} of the main text.

\paragraph{\textbf{\textup{Pufferfish two-world construction.}}}
We construct a Pufferfish privacy scenario on CIFAR-10 by creating two aligned worlds that differ minimally in the prevalence of a designated secret class.
Specifically, we select \texttt{cat} as the secret class and define two target prevalences: $p_{\mathrm{low}} = 0.10000$ and $p_{\mathrm{high}} = 0.10004$.

The world construction proceeds in two steps to ensure minimal edit distance:
\begin{enumerate}[leftmargin=*,nosep]
\item \textit{Base labels $\to$ world $s_0$}: Starting from CIFAR-10's original training labels (50,000 examples), we minimally relabel examples to achieve exactly $n_0 = \lfloor p_{\mathrm{low}} \cdot N \rfloor$ \texttt{cat} instances. If the original dataset contains fewer than $n_0$ cats, we randomly select non-cat examples and relabel them as cats; if it contains more, we randomly select cat examples and relabel them to other (non-cat) classes.

\item \textit{World $s_0 \to$ world $s_1$}: From $s_0$, we perform minimal edits to reach exactly $n_1 = \lfloor p_{\mathrm{high}} \cdot N \rfloor$ cat instances, producing world $s_1$. The edit operations are analogous to step 1.
\end{enumerate}

This construction yields two label vectors $y_0$ and $y_1$ (both of length $N = 50{,}000$) that differ on exactly $\Delta = |n_1 - n_0| = 20$ examples, while the image data remains identical across both worlds.
The mismatch rate between worlds is $p_{\mathrm{realized}} = \Delta / N = 0.0004$.

For training, we fix a single \textit{realized world}---either $s_0$ or $s_1$---and overwrite the CIFAR-10 training labels with the corresponding realized label vector.
All subsequent training (private and non-private) operates exclusively on this realized world, ensuring that the privacy guarantee protects against an adversary who might observe the other world.

\paragraph{\textbf{\textup{Privacy accounting: SRPP-SGD, sa-SRPP-SGD, and group-DP-SGD.}}}
We compare three privacy accounting methods.

\textit{(1) SRPP-SGD (deterministic worst-case cap):}
For each iteration $t$, we bound the number of differing examples in a batch by the deterministic cap $K_{\mathrm{cap}} = \min(L, \Delta) = 20$.
Under isotropic Gaussian noise with single-block clipping, the per-step HUC is
\[
h_t^* = \Bigl(\frac{2 \eta_t C K_{\mathrm{cap}}}{L}\Bigr)^2 = \Bigl(\frac{2 \eta_t \cdot 4.0 \cdot 20}{512}\Bigr)^2.
\]
The total HUC over all $T$ steps is $H_{\mathrm{total}} = \sum_{t=1}^{T} h_t^*$, and the noise scale is calibrated via:
\[
\sigma^2(\epsilon) = \frac{\alpha}{2\epsilon} \cdot H_{\mathrm{total}},
\]
where $\alpha = 16$ is the R\'enyi order. The Opacus noise multiplier is $\sigma(\epsilon) \cdot L / C$.

\textit{(2) sa-SRPP-SGD (expected $K^2$ via hypergeometric model):}
We model the number of differing examples $K_t$ in each batch as a random variable following a hypergeometric distribution: $K_t \sim \mathrm{Hypergeometric}(N, \Delta, L)$ under fixed-size sampling without replacement.
For this distribution,
\[
\mathbb{E}[K_t] = L \cdot \frac{\Delta}{N}, \quad \mathrm{Var}[K_t] = L \cdot \frac{\Delta}{N} \cdot \Bigl(1 - \frac{\Delta}{N}\Bigr) \cdot \frac{N - L}{N - 1},
\]
and thus $\mathbb{E}[K_t^2] = \mathrm{Var}[K_t] + \mathbb{E}[K_t]^2$.
We define
\[
\gamma^2 = a^2 \cdot \frac{\mathbb{E}[K_t^2]}{L^2},
\]
where $a = 1$ for add-remove adjacency (our default) or $a = 2$ for replace adjacency.
The total HUC is then
\[
H_{\mathrm{total}} = 4 C^2 \gamma^2 \sum_{t=1}^{T} \eta_t^2,
\]
and the noise scale is calibrated identically to SRPP-SGD via $\sigma^2(\epsilon) = (\alpha / 2\epsilon) \cdot H_{\mathrm{total}}$.

\textit{(3) group-DP-SGD (baseline):}
As a baseline, we also implement group-DP-SGD using the standard DP-SGD accountant \cite{Abadi2016} with group size $G = \lceil \mathbb{E}_{\eta}[K_t^2]^{1/2} \rceil$ (derived from the hypergeometric model).
The DP-SGD accountant computes a per-step privacy loss, which we then convert to group DP by scaling with the group size. Appendix \ref{app:group-dp-vs-srpp} provides detailed discussions of the relationship between SRPP/sa-SRPP-SGD and group DP-SGD.

In the experiments for Figures \ref{subfig:gep_resnet} and \ref{subfig:group_dp}, we sweep the privacy budget $\epsilon$ over the range $\{$ 80, 70, 65, 60, 55, 50, 45, 40, 35, 30, 25, 20, 18, 14, 12, 10, 8, 4, 2, 1, 0.5 $\}$ and report test accuracy at each $\epsilon$.
We evaluate two sampling rates corresponding to batch sizes $L = 512$ (sampling rate $q \approx 0.01$) and $L = 1024$ (sampling rate $q \approx 0.02$) on the full 50,000-example training set.
Training uses 40 epochs for $L = 512$ and 40 epochs for $L = 1024$, with clipping norm $C = 4.0$ (for $L = 512$) or $C = 5.0$ (for $L = 1024$), learning rate $\eta_0 = 0.2$ (for $L = 512$) or $\eta_0 \approx 0.14$ (for $L = 1024$) with cosine decay and 5\% warmup, momentum $0.9$, and weight decay $5 \times 10^{-4}$.
All privacy accounting uses R\'enyi order $\alpha = 16$ and add-remove adjacency.

\paragraph{\textbf{\textup{Evaluation metrics.}}}
For each privacy budget $\epsilon$, we train the model to convergence (50 epochs) and report:
\begin{itemize}[leftmargin=*,nosep]
\item \textit{Test accuracy}: Classification accuracy on the CIFAR-10 test set (10,000 examples).
\item \textit{Training accuracy}: Classification accuracy on the training set (for diagnostic purposes).
\item \textit{Reference $\epsilon$}: The privacy budget computed by Opacus's built-in accountant (at $\delta = 10^{-5}$) for comparison.
\end{itemize}

\paragraph{\textbf{\textup{Overfitted regime and membership inference attacks.}}}
To perform empirical privacy auditing, we conduct additional experiments in an overfitted regime where membership attacks are most effective.
We randomly subsample 1,000 examples from the CIFAR-10 training set and train models for 200 epochs with reduced regularization (weight decay set to zero) to induce overfitting.
We use batch size $L = 256$, clipping norm $C = 5.0$, learning rate $\eta_0 = 0.1$ with cosine decay, and evaluate privacy budgets $\epsilon \in \{2, 8, 20, 50, 70, 100, 200, 300, 500\}$ plus a non-private baseline.

On these overfitted models, we apply the loss-threshold membership inference attack of \cite{yeom2018privacy}.
For each example $x$ with true label $y$, we compute the cross-entropy loss $\ell(x) = -\log p_{\theta}(y \mid x)$ and use the negative loss as a membership score: lower loss suggests the example is more likely a training member.
We compute the ROC AUC for distinguishing the 1,000 training members from an equal-sized set of non-members sampled from the remaining CIFAR-10 training data.
ROC AUC of 0.5 indicates random guessing (perfect privacy), while higher values indicate successful membership inference.
We report ROC AUC as a function of $\epsilon$ to demonstrate how privacy protection degrades membership leakage.

\end{document}